\newtheorem*{Lem}{Lemma}
\newcommand*\xbar[1]{
   \hbox{%
     \vbox{%
       \hrule height 0.5pt 
       \kern0.5ex
       \hbox{%
         \kern-0.1em
         \ensuremath{#1}%
         \kern-0.1em
       }%
     }%
   }%
}
\DeclareRobustCommand*{\bfseries}{%
   \not@math@alphabet\bfseries\mathbf
   \fontseries\bfdefault\selectfont
   \boldmath
}
\font\afont=cmssbx10 scaled \magstep5     
\font\dfont=cmssbx10 scaled \magstep2     
\let\oldtheequation\theequation
\def\tagform@#1{\maketag@@@{\ignorespaces#1\unskip\@@italiccorr}}
\renewcommand{\theequation}{(\oldtheequation)}
\renewcommand\paragraph{\@startsection{paragraph}{4}{\z@}%
  {-3.25ex\@plus -1ex \@minus -.2ex}%
  {1.5ex \@plus .2ex}%
  {\normalfont\normalsize\bfseries}
}
\begin{document}

\begin{titlepage}
  \vspace*{6mm}
  \begin{center}
     {\afont Revisiting Hadronic Mass Relations\\}\vspace{0.5cm}
     {\afont from First Order Flavor Symmetry\\}\vspace{0.5cm}
     {\afont Breaking}
     \\[3.5cm]
     {\large by}
     \\[3.5cm]
     {\dfont Luiz Frederic Wagner}
     \\[2cm]
     {\large Master's Thesis in Theoretical Physics\/\\
        Handed in to the Johannes Gutenberg-Universit\"at Mainz on March 18, 2020\/\\
        Revised for publication on March 28, 2021\/\\}
   \end{center}
\end{titlepage}


\newpage
\pagenumbering{roman}
\tableofcontents
\newpage
\pagenumbering{arabic}

\chapter*{Introduction}
\addcontentsline{toc}{chapter}{Introduction}
\markboth{}{Introduction}

The Gell-Mann--Okubo (GMO) mass relations following from $\text{SU}(3)$-flavor symmetry breaking in the strong sector have been proven to be of great success for describing hadron masses and classifying them into multiplets. The GMO mass relations of the decuplet, for instance, allowed Gell-Mann to predict the mass of the $\Omega^-$-particle, before it was discovered (cf. \cite{Zee2016} and \cite{Langacker2017}). Nowadays, it is widely believed that baryons being fermions have to obey GMO relations which are linear in the baryon masses, whilst mesons as bosons follow quadratic GMO relations (cf. \cite{Langacker2017}). This distinction -- allegedly first introduced by R.P. Feynman (cf. \cite{DeSwart1963}) -- is commonly justified with the observation that the mass enters linearly in a fermionic Lagrangian, but quadratically in a bosonic Lagrangian. In a supersymmetrical world, however, one expects a symmetry between fermionic and bosonic multiplets in a supermultiplet. This symmetry implies that every mass relation satisfied by a fermionic multiplet has to be satisfied by its bosonic supersymmetrical counterpart and vice versa. As the argument for distinguishing fermionic and bosonic mass relations should apply to all quantum field theories (QFTs) independent of the question whether the theory is realized in Nature, it should also apply to a supersymmetrical theory. 
Clearly, the symmetry between fermion and boson masses conflicts with Feynman's distinction\footnote{``Feynman's distinction'' is used throughout this thesis as a phrase for the distinction of baryons and mesons into linear and quadratic GMO relations (or linear and quadratic hadronic mass relations, in general), respectively.} in this case. Since this reasoning is independent of the question whether Nature is supersymmetrical, the problem with Feynman's distinction must already arise on a theoretical level rendering the distinction itself questionable. Therefore, I challenge Feynman's distinction in this master's thesis. In the course of the thesis, I will further support this claim by presenting an in-depth analysis of the GMO relations on a theoretical and experimental level.\par
The theoretical side of this thesis concerns the derivation of the mass relations. In the course of this work, we will develop two approaches to the GMO relations: The first is the description of hadron masses within the framework of a hadronic effective field theory (EFT). An example for this EFT approach is chiral perturbation theory (cf. \cite{Scherer2011}). The second approach which we will call state formalism describes the hadrons and their masses as eigenstates and -values of the Hamilton operator, respectively. In both the EFT approach and the state formalism, we will obtain the GMO mass relations as a first order result of the $\text{SU}(3)$-flavor symmetry breaking to the isospin symmetry group $\text{SU}(2)\times\text{U}(1)$.\par
One difference between the EFT approach and the state formalism is central for our investigations: While Feynman's distinction seems to arise naturally in the EFT approach, the state formalism does not exhibit this distinction. At first glance, that seems to indicate that the EFT approach and the state formalism are incompatible, nevertheless, we will see that the discrepancy can easily be explained: The state formalism is only applicable, if the flavor symmetry breaking is small and can be treated as a perturbation. In that case, both linear and quadratic GMO mass relations are actually equivalent and Feynman's distinction is artificial.\par
Additional to the $\text{SU}(3)\rightarrow\text{SU}(2)\times\text{U}(1)$ symmetry breaking, we will also incorporate isospin symmetry breaking, electromagnetic contributions, and heavy quark symmetry into the model of hadron masses. These effects give rise to additional mass relations in the hadronic sector which we will use for further investigations of Feynman's distinction.\par
The part of this work related to experimental data services as a ``proof of concept'' for the theoretical considerations: I will support my claims about Feynman's distinction by analyzing experimental data on hadron masses. In particular, we will see that the experimental hadron masses indicate that both linear and quadratic mass relations are applicable in most cases rendering Feynman's distinction artificial. Even if we need to distinguish between linear and quadratic mass relations, the distinction does not originate from the difference between baryons and mesons.\\
The thesis is organized as follows:\par
\autoref{chap:hadron_masses} provides an instructive introduction to the basic concepts of this work. For the sake of clarity, the mathematical details are postponed to \autoref{sec:GMO_formula}. At the beginning of the chapter (\autoref{sec:mass_matrix}), we will explore an easily understandable access to the GMO relations by considering the \text{SU}(3)-flavor transformation properties of mass matrices corresponding to singlets, triplets, sextets, octets, and decuplets. As a next step (\autoref{sec:Trafo_QCD}), we will consider the transformation properties of a Lagrangian involving three light quark flavors with a flavor symmetric interaction of the quarks exemplified by quantum chromodynamics (QCD). In particular, we will discuss which terms in the Lagrangian may spoil the \text{SU}(3)-flavor symmetry between the quarks. At the end of the chapter (\autoref{sec:EFT+H_Pert}), we will see how the transformation behavior of the Lagrangian from \autoref{sec:Trafo_QCD} dictates the symmetry structure of the hadron masses. We will find that the transition from the quark Lagrangian to the hadron masses can be achieved in two ways, namely the EFT approach and the state formalism. In the context of these two approaches, Feynman's distinction will be addressed.\par
In \autoref{chap:GMO_formula}, the description of hadron masses will be elevated to a higher mathematical level (\autoref{sec:GMO_formula}) and expanded to isospin symmetry breaking, electromagnetic interaction (\autoref{sec:add_con}), and heavy quark symmetry (\autoref{sec:heavy_quark}).\par
\autoref{chap:mass_relations} gives an overview over the relations that follow from the mass formulae of \autoref{chap:GMO_formula} for all multiplets realized in Nature. Additionally, the order of magnitude of the dominant correction(s) is calculated for each mass relation.\par
In \autoref{chap:data}, we check both linear and quadratic mass relations against experimentally determined hadron masses to verify our interpretation of the mass relations. Moreover, we use current scientific results and phenomenological observations to classify unassigned resonances into multiplets and utilize the data of yet incomplete multiplets to predict the masses of the missing particles.

\newpage
\chapter{Hadron Masses under \text{SU}(3)-Flavor Transformations}
\label{chap:hadron_masses}

The derivation of the GMO mass relations features many complex and noteworthy steps with regard to both mathematics and physics. A bottom-up derivation, starting with a Lagrangian and presenting every point in detail, may easily be convoluted and fails to be instructive. Therefore, in this chapter, we only point out the main features that lead to the GMO mass relations. In particular, we cover how the \text{SU}(3)-flavor transformations of hadrons in a multiplet, namely singlet, triplet, sextet, octet, and decuplet, define the transformation of the hadron mass matrices and how \text{SU}(3)-flavor symmetry breaking leads to mass relations in those multiplets. Furthermore, we explain how \text{SU}(3)-flavor symmetry breaking enters a Lagrangian with three quark flavors and how this symmetry breaking can be linked to the hadron masses. The last point is demonstrated in two different ways to show how Feynman's distinction may arise and how it can be understood. Throughout this chapter, many assumptions and mathematical claims are made without deeper reasoning for the sake of clarity. A detailed discussion of those points is postponed to \autoref{sec:GMO_formula}.

\section{Decomposition of Mass Matrices and GMO Mass Relations}\label{sec:mass_matrix}

In the early 1960s, when people began to realize that hadrons show a symmetry structure related to the Lie group \text{SU}(3) (cf. \cite{Gell-Mann1961}), it was observed that hadrons form multiplets of \text{SU}(3). The mathematical definition of the term ``multiplet'' is connected to the term ``representation''. A real or complex \textit{representation} $D^{(\rho)}$ -- sometimes simply written as $\rho$ or denoted by its vector space $V$ -- of a topological group $G$, for instance \text{SU}(3), on a real or complex Hilbert space $V$ is a group homomorphism
\begin{gather*}
D^{(\rho)}:G\rightarrow\text{GL}(V)
\end{gather*}
with $\text{GL}(V)\coloneqq \{A:V\rightarrow V\text{ linear and bounded}\mid A^{-1}\text{ exists and is bounded}\}$ such that the map $\Phi^{(\rho)}:G\times V\rightarrow V, (g,v)\mapsto D^{(\rho)}(g)(v)$ is continuous (cf. \cite{Knapp2001}). A subspace $W\subset V$ is called \textit{invariant}, if $D^{(\rho)}(g)(W)\subset W\ \forall g\in G$, and the representation $D^{(\rho)}$ is called \textit{irreducible}, if the only closed invariant subspaces of $V$ are $\{0\}$ and $V$ (cf. \cite{Knapp2001}). With these definitions in mind, we define a \textit{multiplet} of a group (mostly \text{SU}(3) in this work) to be an irreducible representation of that group.\par
The expression ``hadrons form multiplets of \text{SU}(3)" can now be understood in multiple ways: In the sense of a QFT, we may assume that each hadron which we want to group into one multiplet is described by a field and that the fields transform as the components of a vector in the vector space $V$ of the multiplet $D^{(\rho)}$. For example, if $V$ is the complex vector space $\mathbb{C}^n$ and the hadrons at hand are spin-$\frac{1}{2}$ fermions described by the fields $\psi_a$, then the fields $\psi_a$ transform under \text{SU}(3) as:
\begin{align*}
\begin{pmatrix}\psi_1\\ \vdots\\ \psi_n\end{pmatrix}&\xrightarrow{A\in\,\text{SU}(3)}\begin{pmatrix}\psi^\prime_1\\ \vdots\\ \psi^\prime_n\end{pmatrix} =  D^{(\rho)}(A)\begin{pmatrix}\psi_1\\ \vdots\\ \psi_n\end{pmatrix}\\
\Leftrightarrow \psi_a &\xrightarrow{A\in\,\text{SU}(3)}\psi^\prime_a = \sum\limits^n_{b=1}\left(D^{(\rho)}(A)\right)_{ab}\psi_b.
\end{align*}
Note that if the fields $\psi_a$ can be decomposed in terms of creation and annihilation operators, the transformed fields $\psi^\prime_a$ cannot, in general, be decomposed in the same way. The reason for this is that we cannot, in general, pass down the transformation of the fields to the creation and annihilation operators (cf. \autoref{app:stateform}).\par
Alternatively, we may think of the hadrons in one multiplet as quantum mechanical states $\Ket{a}$ and assume that these states transform under {${A\in\text{SU}(3)}$} as:
\begin{gather*}
\Ket{a}\xrightarrow{A\in\,\text{SU}(3)}\Ket{a^\prime} = \sum\limits_{b} \left(D^{(\rho)}(A)\right)^\ast_{ab}\Ket{b}.
\end{gather*}
Note that the transformation coefficients of $\Ket{b}$ are complex conjugated in regard to the transformation of $\psi_a$. This is due to the fact that the states $\Ket{a}$ are related to $\overline{\psi}_a$.\par
In both cases, we may now identify mass matrices, i.e., self-adjoint matrices whose eigenvalues are the hadron masses. The definition of the mass corresponding to a particle or resonance and its relation to operators and parameters of the Lagrangian are by no means trivial and require an in-depth discussion (cf. \autoref{sec:polemass} and \autoref{app:stateform}). However, we skip this discussion for now and use very simplistic pictures for both cases. In the case of fields, we identify the mass matrix (neglecting loop corrections) with the coefficients of the quadratic fields terms in a(n) (effective) Lagrangian:
\begin{gather*}
\mathcal{L}_M = -\sum\limits_{a,b}\overline{\psi}_a M_{ab}\psi_b.
\end{gather*}
Note that the mass matrix defined by the equation above has to be replaced by the squared mass matrix in the case of bosonic fields. We will discuss this point in more detail in \autoref{sec:EFT+H_Pert}.\par
In the case of states, the mass, as a physical observable, is given by a self-adjoint operator $\hat M$ and the mass matrix can be defined via the matrix elements of $\hat M$:
\begin{gather*}
M_{ab}\coloneqq \Bra{a}\hat M\Ket{b}.
\end{gather*}
If we now assume that $M_{ab} = 0$ if $a$ is a hadron in the multiplet and $b$ is not, only the mass matrix elements of the multiplet contribute to the mass of a hadron in that multiplet. Thus, it is sufficient to only consider the mass matrix elements of the multiplet for the transformation of the hadron masses in that multiplet and we are able to define the transformation of the mass matrix under \text{SU}(3) for both cases:
\begin{align*}
\sum\limits_{a,b}\overline{\psi}_a M_{ab}\psi_b&\xrightarrow{A\in\, \text{SU}(3)}\sum\limits_{a,b}\overline{\psi^\prime}_a M^\prime_{ab}\psi^\prime_b\coloneqq \sum\limits_{a,b}\overline{\psi}_a M_{ab}\psi_b;\\
M_{ab}&\xrightarrow{A\in\, \text{SU}(3)}M^\prime_{ab}\coloneqq \Bra{a^\prime}\hat M\Ket{b^\prime}.
\end{align*}
A quick calculation shows:
\begin{align}
\sum\limits_{a,b}\overline{\psi^\prime}_a M^\prime_{ab}\psi^\prime_b &= \sum\limits_{a,b}\left(\sum\limits_{c}\overline{\psi}_c \left(D^{(\rho)}(A)\right)_{ac}^\ast \right) M^\prime_{ab} \left(\sum\limits_{d}\left(D^{(\rho)}(A)\right)_{bd}\psi_d\right)\nonumber\\
& = \sum\limits_{c,d}\overline{\psi}_c\left(\sum\limits_{a,b}\left(D^{(\rho)}(A)\right)_{ac}^\ast M^\prime_{ab} \left(D^{(\rho)}(A)\right)_{bd}\right)\psi_d\nonumber\\
& = \sum\limits_{c,d}\overline{\psi}_c M_{cd}\psi_d\nonumber\\
\Rightarrow M^\prime_{ab} &= \sum\limits_{c,d}\left(D^{(\rho)}(A^{-1})\right)_{ca}^\ast M_{cd} \left(D^{(\rho)}(A^{-1})\right)_{db};\label{eq:M_compon_field}\\
M^\prime_{ab} &= \Bra{a^\prime}\hat M\Ket{b^\prime}\nonumber\\
&= \left(\sum\limits_{c}\Bra{c}\left(D^{(\rho)}(A)\right)_{ac}\right)\hat M\left(\sum\limits_{d}\left(D^{(\rho)}(A)\right)^\ast_{bd}\Ket{d}\right)\nonumber\\
&= \sum\limits_{c,d}\left(D^{(\rho)}(A)\right)_{ac} M_{cd} \left(D^{(\rho)}(A)\right)^\ast_{bd}\nonumber\\
&= \sum\limits_{c,d}\left(D^{(\rho)}(A)^\dagger\right)^\ast_{ca} M_{cd} \left(D^{(\rho)}(A)^\dagger\right)_{db}\label{eq:M_compon_state}.
\end{align}
In \autoref{eq:M_compon_field}, we have used that $D^{(\rho)}$ is a group homomorphism:\linebreak {${D^{(\rho)}(A)^{-1} = D^{(\rho)}\left(A^{-1}\right)}$}. Without loss of generality, we can assume that the matrix $D^{(\rho)}(A)$ is always unitary, as for every representation of the compact Lie group \text{SU}(3), there exists an Hermitian scalar product for which the representation is unitary (cf. \cite{Hamilton2017}). With this property, we can see that the mass matrix in both cases transforms in the same way, even though $\psi_a$ and $\Ket{a}$ transform differently. If we rewrite the transformation of $M_{ab}$ in matrix notation, we find:
\begin{gather*}
M^\prime = D^{(\rho)}(A)\cdot M\cdot D^{(\rho)}(A)^\dagger.
\end{gather*}
Note that $M^\prime$ is self-adjoint, as $M$ is a mass matrix and, therefore, self-adjoint.\par 
Until now, the discussion of the transformation behavior of $M$ was rather abstract. To make the flavor transformation properties of hadronic mass matrices more concrete, it is instructive to consider examples of multiplets. In addition to clarity, examples of multiplets also provide an easy access to group theoretical mass parameterizations and relations. The assumptions and considerations we need to make in order to find the GMO mass formula and relations in \autoref{chap:GMO_formula} and \autoref{chap:mass_relations} are easily applied to singlets, triplets, sextets, octets, and decuplets. The structures these multiplets exhibit guide our expectation of the flavor transformation behavior of general multiplets and provide us with a list of (mathematical) observations we will prove for all \text{SU}(3)-multiplets in \autoref{sec:GMO_formula}.

\subsection*{Singlet}

First, take the multiplet $D^{(\rho)}$ to be the trivial representation or, equivalently, a singlet of \text{SU}(3), denoted by $\rho = 1$. The singlet operates on an one-dimensional vector space and is simply given by:
\begin{gather*}
D^{(1)}(A) = \mathbb{1}\quad\forall\, A\in\text{SU}(3).
\end{gather*}
This implies that $M$ does not transform at all under \text{SU}(3), meaning that $M$ itself transforms as $1$ under \text{SU}(3), and is just given by one component. Since $M$ is self-adjoint, this component is real and corresponds to the mass of the hadron in the singlet. In this case, the transformation behavior of $M$ does not reveal any information about the mass of the hadron.

\subsection*{Triplet}

A more interesting transformation behavior arises when we take $D^{(\rho)}$ to be a triplet or, equivalently, the fundamental representation of \text{SU}(3), denoted by $\rho = 3$. It operates on the three-dimensional vector space $\mathbb{C}^3$ and is given by:
\begin{gather*}
D^{(3)}(A) = A\quad\forall\, A\in\text{SU}(3).
\end{gather*}
The triplet describes three hadrons which we can identify with the vectors
\begin{gather*}
\begin{pmatrix}1\\0\\0\end{pmatrix},\ \begin{pmatrix}0\\1\\0\end{pmatrix},\ \begin{pmatrix}0\\0\\1\end{pmatrix}.
\end{gather*}
Hence, the mass matrix $M$ is, in this case, a complex self-adjoint $(3\times 3)$-matrix that transforms under $A\in\text{SU}(3)$ as:
\begin{gather*}
M^\prime = A\cdot M\cdot A^\dagger.
\end{gather*}
Note that, while the triplet $3$ is a complex representation acting on $\mathbb{C}^3$, the transformation of $M$ is a real representation of \text{SU}(3), as the space of complex self-adjoint $(3\times 3)$-matrices is a real nine-dimensional vector space.\par
First, let us consider the case where $M = m\mathbb{1}$ is a real factor times the identity. Then:
\begin{gather*}
M^\prime = A\cdot m\mathbb{1}\cdot A^\dagger = mA A^\dagger = m\mathbb{1} = M\quad\text{for } A\in\text{SU}(3).
\end{gather*}
As we can see, $M = m\mathbb{1}$ transforms trivially under \text{SU}(3). Therefore, the multiples of the identity are an invariant subspace under the transformation of $M$ and, since this subspace is one-dimensional, they even furnish an irreducible representation.\par
Next, consider $M$ to be a traceless self-adjoint matrix, i.e., $M^\dagger = M$ and\linebreak $\text{Tr}(M) = 0$. As stated earlier, $M^\prime$ is self-adjoint, if $M$ is self-adjoint. Furthermore, we find:
\begin{gather*}
\text{Tr}(M^\prime) = \text{Tr}(A\cdot M\cdot A^\dagger) = \text{Tr}(M\cdot A^\dagger\cdot A) = \text{Tr}(M) = 0\quad\text{for } A\in\text{SU}(3).
\end{gather*}
This shows that the real eight-dimensional vector space of traceless self-adjoint $(3\times 3)$-matrices is also an invariant subspace under the transformation of $M$. One can show that this subspace even furnishes an irreducible representation under the transformation of $M$. This irreducible representation is called octet or, equivalently, the adjoint representation of \text{SU}(3), denoted by $8$. A basis for the octet, i.e., a basis for the traceless self-adjoint $(3\times 3)$-matrices is given by the Gell-Mann matrices (cf. \cite{Langacker2017}):
\begin{align*}
\lambda_1 &= \begin{pmatrix}0&1&0\\1&0&0\\0&0&0\end{pmatrix},\ \lambda_2 = \begin{pmatrix}0&-i&0\\i&0&0\\0&0&0\end{pmatrix},\ \lambda_3 = \begin{pmatrix}1&0&0\\0&-1&0\\0&0&0\end{pmatrix}\\
\lambda_4 &= \begin{pmatrix}0&0&1\\0&0&0\\1&0&0\end{pmatrix},\ \lambda_5 = \begin{pmatrix}0&0&-i\\0&0&0\\i&0&0\end{pmatrix},\ \lambda_6 = \begin{pmatrix}0&0&0\\0&0&1\\0&1&0\end{pmatrix}\\
\lambda_7 &= \begin{pmatrix}0&0&0\\0&0&-i\\0&i&0\end{pmatrix},\ \lambda_8 = \frac{1}{\sqrt{3}}\begin{pmatrix}1&0&0\\0&1&0\\0&0&-2\end{pmatrix}.
\end{align*}
With these considerations in mind, we can now decompose a general self-adjoint $(3\times 3)$-mass matrix $M$ as:
\begin{gather*}
M = \frac{\text{Tr}(M)}{3}\mathbb{1} + \tilde{M}
\end{gather*}
with $\tilde{M}\coloneqq M - \frac{\text{Tr}(M)}{3}\mathbb{1}$. As we can see, the first term in the decomposition is just a real factor times the identity, while the second term is a traceless self-adjoint matrix. Therefore, we have found a decomposition of $M$ into terms that transform as irreducible representations under the transformation of $M$.\par
There is also a more general, but abstract way that leads to this decomposition. If we take a look at the transformation of $M$ in components (\autoref{eq:M_compon_field} and \autoref{eq:M_compon_state}), we can see that the components of the matrix $D^{(M)}(A)$ that transforms $M$ is a product of the components of the (complex) conjugate representation $\bar{\rho}$ of $\rho$ and of the components of the representation $\rho$ itself:
\begin{gather*}
\left(D^{(M)}(A)\right)_{(ab)(cd)} = \left(D^{(\rho)}(A)\right)_{ac}\cdot \left(D^{(\rho)}(A)\right)^\ast_{bd}\eqqcolon \left(D^{(\rho)}(A)\right)_{ac}\cdot \left(D^{(\bar{\rho})}(A)\right)_{bd}.
\end{gather*}
This equation implies that the representation $D^{(M)}$ is (equivalent to) the tensor product $\rho\otimes\bar{\rho}$. Since every finite-dimensional representation of a compact Lie group decomposes into a (direct) sum of irreducible representations (cf. \cite{Hamilton2017}), $\rho\otimes \bar{\rho}$ as a representation of \text{SU}(3) decomposes into irreducible representations. This decomposition of tensor products of representations into irreducible representations is called \textit{Clebsch-Gordan series} (cf. \cite{DeSwart1963}).\par
In the case of the triplet, this means that the transformation of $M$ is given by $3\otimes \bar{3}$. The Clebsch-Gordan series of $3\otimes \bar{3}$ can be calculated in multiple ways, for instance by using Young tableaux (cf. \cite{Lichtenberg}). One obtains:
\begin{gather*}
3\otimes \bar{3} = 1 \oplus 8.\label{eq:3ast3}
\end{gather*}
The Clebsch-Gordan series of $3\otimes \bar{3}$ implies that the transformation of $M$ contains one singlet and one octet. This result coincides with our prior analysis of the mass matrix $M$, where we identified the singlet with the multiples of the identity and the octet with the traceless self-adjoint $(3\times 3)$-matrices.\par
So far, we have only rewritten a nine-dimensional vector space as a sum of an one-dimensional and an eight-dimensional vector space with interesting transformation properties, but not gained any information about hadron masses or the entries of $M$. In order to do this, we have to make additional assumptions about the transformation behavior of $M$. For instance, we may assume that \text{SU}(3) is an exact symmetry, i.e., that $M$ is invariant under any \text{SU}(3)-transformation: $M^\prime = M\ \text{for } A\in\text{SU}(3)$. Then, we can directly deduce that $M$ has to be an element of the singlet and, therefore, a multiple of the identity, because the octet as an irreducible representation cannot contain an element that transforms trivially under \text{SU}(3). If the octet contained such an element, the span of this element would be a singlet and, hence, an invariant subspace of the octet which contradicts the fact that the octet is irreducible. In this regard, exact \text{SU}(3)-symmetry implies that all hadrons in the triplet have to have the same mass.\par
However, we often find that symmetries are broken in Nature. Therefore, let us consider the case where \text{SU}(3) is not taken to be an exact, but only an approximate symmetry. This means that the term which explicitly breaks the \text{SU}(3)-symmetry, i.e., the octet is small in comparison to the \text{SU}(3)-invariant term, i.e., the singlet:
\begin{gather*}
M = m_0\mathbb{1} + \sum\limits^8_{i=1}m_i\frac{\lambda_i}{2}
\end{gather*}
with $m_i\ll m_0$. Nonetheless, we have to have some residual symmetry, since without it we cannot make any further statement about the masses of the hadrons in the triplet. Hence, we require the subgroup of \text{SU}(3)-transformations that at most mix two specific hadrons, say the hadrons described by $(1,0,0)^\text{T}$ and $(0,1,0)^\text{T}$, to leave the mass matrix $M$ invariant. This subgroup contains \text{SU}(2)-transformations mixing $(1,0,0)^\text{T}$ and $(0,1,0)^\text{T}$:
\begin{gather*}
A = \begin{pmatrix}\tilde{A}&0\\0&1\end{pmatrix}\quad\text{for }\tilde{A}\in\text{SU}(2)
\end{gather*}
and phase modulations not mixing any hadrons:
\begin{gather*}
A = \begin{pmatrix}e^{i\alpha}&0&0\\0&e^{i\alpha}&0\\0&0&e^{-2i\alpha}\end{pmatrix}\quad\text{for }\alpha\in\mathbb{R}.
\end{gather*}
Therefore, all transformations that shall leave $M$ invariant are given by the subgroup
\begin{gather*}
E\coloneqq \left\{\begin{pmatrix}e^{i\alpha}\tilde{A}&0\\0&e^{-2i\alpha}\end{pmatrix}\mid\tilde{A}\in\text{SU}(2),\ \alpha\in\mathbb{R}\right\}.
\end{gather*}
Considering the map
\begin{gather*}
f:\text{SU}(2)\times\text{U}(1)\rightarrow E,\ (\tilde{A},\, e^{i\alpha})\mapsto\begin{pmatrix}e^{i\alpha}\tilde{A}&0\\0&e^{-2i\alpha}\end{pmatrix},
\end{gather*}
we can see that $\text{SU}(2)\times\text{U}(1)$ is the covering group of $E$ covering $E$ twice, as $f(\tilde{A},\, e^{i\alpha})$ is equal to $f(-\tilde{A},\, -e^{i\alpha})$. However, we will ignore this in the future and simply identify $\text{SU}(2)\times\text{U}(1)$ with $E$, because they have the same or similar properties regarding our purposes\footnote{We are mostly interested in the irreducible representations of $E$. Since the map $f$ connecting $\text{SU}(2)\times\text{U}(1)$ and $E$ is a surjective Lie group homomorphism, we can promote every irreducible representation of $E$ via $f$ to an irreducible representation of $\text{SU}(2)\times\text{U}(1)$. This allows us to identify the irreducible representations of $E$ with irreducible representations of $\text{SU}(2)\times\text{U}(1)$. The identification of the irreducible representations is unique, i.e., there are no two distinct irreducible representations of $E$ which are identified with the same irreducible representation of {${\text{SU}(2)\times\text{U}(1)}$}. Note, however, that not every irreducible representation of $\text{SU}(2)\times\text{U}(1)$ corresponds to a representation of $E$. Also, note that $E\cong \text{U}(2)$.}.\par
In this sense, we consider a \text{SU}(3)-symmetry breaking to its conserved subgroup $\text{SU}(2)\times\text{U}(1)$ here. To see the impact of the symmetry breaking on the hadron masses, we have to parametrize all mass matrices $M$ that are invariant under $\text{SU}(2)\times\text{U}(1)$. Clearly, the singlet is invariant under $\text{SU}(2)\times\text{U}(1)$, as it is already invariant under \text{SU}(3). Furthermore, the Gell-Mann matrix $\lambda_8$ is invariant under $\text{SU}(2)\times\text{U}(1)$, since it leaves the first two components unchanged, aside from a factor. Indeed, no other Gell-Mann matrix satisfies this condition and, hence, can be invariant under $\text{SU}(2)\times\text{U}(1)$. This means that $M$ is only parametrized by $\mathbb{1}$ and $\lambda_8$:
\begin{gather*}
M = m_0\mathbb{1} + m_8\frac{\lambda_8}{2} = \begin{pmatrix}m_0 + m_8/2\sqrt{3}&0&0\\0&m_0 + m_8/2\sqrt{3}&0\\0&0&m_0 - m_8/\sqrt{3}\end{pmatrix}.
\end{gather*}
As $M$ is already diagonal, we can directly read off the hadron masses: The masses of $(1,0,0)^\text{T}$ and $(0,1,0)^\text{T}$ -- the two hadrons that mix under $\text{SU}(2)\times\text{U}(1)$ -- are degenerate, while the third hadron has a different mass. However, the difference $\sqrt{3}m_8/2$ between the two non-degenerate masses is small in comparison to the average mass $m_0$, as we assume that the breaking is small.\par
The triplet as a model for hadron masses is realized in both the mesonic and baryonic sector. The $B$- and $D$-mesons show a mass structure very similar to the one described above and likewise the $\Lambda_c/\Xi_c$- and $\Lambda_b/\Xi_b$-baryons. The only difference for both mesons and baryons is that the hadrons corresponding to $(1,0,0)^\text{T}$ and $(0,1,0)^\text{T}$ are not exactly degenerate and show a very small deviation. This difference is related to isospin symmetry breaking.

\subsection*{Sextet}

Although the triplet exhibits a non-trivial mass structure, the triplet as well as the singlet do not lead to any mass relation. The lowest dimensional multiplet $D^{(\rho)}$ which does lead to mass relations is the sextet, denoted by $\rho=6$. It operates on the six-dimensional vector space $\mathbb{C}^6$ and, therefore, describes six hadrons. Like all representations of \text{SU}(3), the sextet can be chosen to be unitary, however, its explicit dependence on $A\in\text{SU}(3)$ is not as simple as in the case of the singlet or triplet. The transformation of the mass matrix $M$, a self-adjoint $(6\times 6)$-matrix now, is obviously given by:
\begin{gather*}
M^\prime = D^{(6)}(A)\cdot M\cdot D^{(6)}(A)^\dagger.
\end{gather*}
Similar to the triplet, the multiples of the identity and the traceless self-adjoint matrices are invariant subspaces of the transformation of $M$. Again, the multiples of the identity furnish an irreducible representation equivalent to the singlet. However, the traceless self-adjoint matrices only furnish a reducible representation, i.e., they contain non-trivial invariant subspaces. We can see this by calculating the Clebsch-Gordan series of $6\otimes \bar{6}$, as explained for the triplet:
\begin{gather*}
6\otimes \bar{6} = 1\oplus 8\oplus 27.
\end{gather*}
The 35 dimensional space of traceless self-adjoint $(6\times 6)$-matrices decomposes into an octet and a 27-dimensional irreducible representation of \text{SU}(3), denoted by $27$. By choosing a basis of the singlet, the octet, and $27$, we can parametrize the mass matrix $M$ in terms of irreducible representations.\par
As for the triplet, we can now consider exact and approximate \text{SU}(3)-symmetry. If the \text{SU}(3)-symmetry is exact, the mass matrix $M$ has to transform trivially and, hence, be an element of the singlet. This means that $M$ is a multiple of the identity and all hadrons in the sextet have the same mass.\par
If we take \text{SU}(3) to be an approximate symmetry, the \text{SU}(3)-breaking contributions of the octet and $27$ to the mass matrix are small in comparison to the contribution of the \text{SU}(3)-invariant singlet. Even though \text{SU}(3)-symmetry is broken, we want to have some residual symmetry. Again, we consider the case where the subgroup $\text{SU}(2)\times \text{U}(1)$ of \text{SU}(3) is an exact symmetry. Group theoretical considerations\footnote{Every representation of a group is also a representation of any subgroup of that group. We can use this here for \text{SU}(3) and $\text{SU}(2)\times\text{U}(1)\subset\text{SU}(3)$: 1, 8, and 27 are representations of \text{SU}(3) and, thus, furnish representations of $\text{SU}(2)\times\text{U}(1)$. These representations decompose into irreducible representations of $\text{SU}(2)\times\text{U}(1)$. Each decomposition of 1, 8, or 27 contains exactly one trivial representation of $\text{SU}(2)\times\text{U}(1)$, each corresponding to one $\text{SU}(2)\times\text{U}(1)$-invariant matrix. One can show this by noting that the sextet 6 decomposes into three different irreducible representations of $\text{SU}(2)\times\text{U}(1)$ and by using a lemma we prove in \autoref{sec:GMO_formula}.} then imply that the mass matrix $M$ is parametrized by three linearly independent matrices, one from each representation 1, 8, and 27.\par
At this point, there is no hierarchy between the contributions of 8 and 27, both could be of the same order of magnitude. However, it was observed in the 1960s that, for hadrons, the contribution from octets is dominant in comparison to contributions from other non-trivial representations. This phenomenological observation, known as \textit{octet enhancement} (cf. \cite{Lichtenberg}), was not understood at that time (cf. \cite{Langacker2017}), but could be explained years later in the framework of QCD. Since we want to apply our considerations to hadron masses, we assume octet enhancement from now on, meaning that we take all contributions to the mass matrix aside from singlets and octets to be negligible. The origin of octet enhancement will be the concern of the last two sections (\autoref{sec:Trafo_QCD} and \autoref{sec:EFT+H_Pert}) in this chapter. Note that the singlet and triplet are trivially subject to octet enhancement, as there are no other contributions aside from singlets and octets in these cases.\par
The mass matrix $M$ can now be parametrized by only two matrices, one corresponding to the singlet and one corresponding to the octet. If we denote the octet matrix with $F^{(6\otimes\bar{6})}_8$, we can write:
\begin{gather*}
M = m_0\mathbb{1} + m_8F^{(6\otimes\bar{6})}_8
\end{gather*}
with $m_8\ll m_0$. There are a lot of unitary representations acting on $\mathbb{C}^6$ that are equivalent to the sextet, all linked by similarity transformations. Every one of these representations is a possible candidate for $D^{(6)}$. In general, $F^{(6\otimes\bar{6})}_8$ is not diagonal for an arbitrary choice of $D^{(6)}$, but we will see in \autoref{sec:GMO_formula} (and by using the weight diagrams in \autoref{chap:mass_relations}) that one can choose $D^{(6)}$ such that $F^{(6\otimes\bar{6})}_8$ is given by:
\begin{gather*}
F^{(6\otimes\bar{6})}_8 = \text{diag}\left(\frac{1}{\sqrt{3}},\frac{1}{\sqrt{3}},\frac{1}{\sqrt{3}},\frac{-1}{2\sqrt{3}},\frac{-1}{2\sqrt{3}},\frac{-2}{\sqrt{3}}\right).
\end{gather*}
For this choice, $M$ is diagonal and we can read off the hadron masses. There are three different masses: $m_1 = m_0 + \frac{m_8}{\sqrt{3}}$ is the mass of three hadrons in the sextet, $m_2 = m_0 - \frac{m_8}{2\sqrt{3}}$ is the mass of two hadrons, and $m_3 = m_0 - \frac{2m_8}{\sqrt{3}}$ is the mass of the remaining hadron. As for the triplet, the difference between two hadron masses is small in comparison to the average mass $m_0$ of the hadrons in the sextet.\par
On top of that, we see that there are three distinct hadron masses which are given by only two parameters. This means that we can find a relation between the hadron masses. The existence of a mass relation follows from assuming $\text{SU}(3)\rightarrow \text{SU}(2)\times \text{U}(1)$ symmetry breaking and octet enhancement. In this work, we call mass relations that follow from these two assumptions \textit{Gell-Mann--Okubo mass relations}. The relation for the sextet is given by:
\begin{gather*}
m_1 - m_2 = \frac{\sqrt{3}}{2}m_8 = m_2 - m_3.
\end{gather*}
This relation states that the difference between two neighboring hadron masses is the same for the entire sextet. Mass relations of this kind are known as \textit{equal spacing rules} and common for totally symmetric multiplets like sextets, decuplets etc.\par
The sextet as a model for hadrons is, to our knowledge, only realized in the baryonic sector. Baryons like $\Sigma_c,\ \Xi^\prime_c,\ \Omega_c$ and $\Sigma_b,\ \Xi^\prime_b,\ \Omega_b$, for instance, exhibit a structure very similar to the sextet structure described above. However, the equal spacing rule is not exactly satisfied, but only holds true to good approximation. As for triplets, the mass degeneracy is lifted in Nature due to isospin symmetry breaking.

\subsection*{Octet}

The octet $D^{(8)}$ operates on an eight-dimensional vector space. As shown in the discussion of the triplet, we can take this vector space to be the space of traceless self-adjoint $(3\times 3)$-matrices. In this case, $D^{(8)}$ is a real representation of \text{SU}(3) as this vector space is a real eight-dimensional space. If we let \text{SU}(3) act on the complexification of this space, i.e., the space of traceless complex $(3\times 3)$-matrices, in the same way, we obtain the octet $D^{(8)}$ as a complex representation of \text{SU}(3). We now identify the space of traceless complex $(3\times 3)$-matrices with the space $\mathbb{C}^8$ to get the octet $D^{(8)}$ as a unitary representation acting on $\mathbb{C}^8$. With this identification, the mass matrix $M$ is a self-adjoint $(8\times 8)$-matrix with the transformation:
\begin{gather*}
M^\prime = D^{(8)}(A)\cdot M\cdot D^{(8)}(A)^\dagger.
\end{gather*}
As for the other multiplets, the decomposition of $M$ into irreducible representations is given by the Clebsch-Gordan series of $8\otimes \bar{8}$. Note, however, that the octet is equivalent to a purely real representation, like explained above. This means that $\bar{8}$ and $8$ are equivalent and it is sufficient to consider $8\otimes 8$ instead of $8\otimes \bar{8}$:
\begin{gather*}
8\otimes 8 = 1\oplus 8\oplus 8\oplus 10\oplus \overline{10}\oplus 27,
\end{gather*}
where $10$ is a ten-dimensional irreducible representation of \text{SU}(3) called decuplet and $\overline{10}$ is its conjugate representation. Again, the singlet $1$ is given by the multiples of the identity which transform trivially and choosing a basis of each irreducible representation gives a parametrization of $M$ in terms of irreducible representations.\par
Let us consider exact \text{SU}(3)-symmetry now. In this case, $M$ has to be an element of the singlet and, hence, a multiple of the identity which means that all hadrons in the octet have the same mass.\par
If we only have approximate \text{SU}(3)-symmetry where the subgroup $\text{SU}(2)\times \text{U}(1)$ is an exact symmetry, group theory dictates that the mass matrix $M$ is parametrized by four linearly independent matrices, each from one of the representations 1, 8, 8, and 27. Imposing octet enhancement, we are left with only three contributions, one from the singlet and two from the two octets. In contrast to the triplet and the sextet, $M$ is given by three terms now instead of two after imposing octet enhancement. Actually, the mass matrix $M$ is only given by either two or three terms for arbitrary multiplets (except for the singlet), if octet enhancement is assumed. Depending on this, the multiplet satisfies different mass relations. We will show this in \autoref{sec:GMO_formula}.\par
If we denote the contribution from the first octet with $F^{(8\otimes\bar{8})}_8$ and the contribution from the second with $D^{(8\otimes\bar{8})}_8$, we can write:
\begin{gather*}
M = m_0\mathbb{1} + m^F_8F^{(8\otimes\bar{8})}_8 + m^D_8D^{(8\otimes\bar{8})}_8,
\end{gather*}
where $m^{F/D}_8\ll m_0$. One can choose $D^{(8)}$ such that $F^{(8\otimes\bar{8})}_8$ and $D^{(8\otimes\bar{8})}_8$ are diagonal, as we will see in \autoref{sec:GMO_formula}. If we make such a choice for $D^{(8)}$, we find:
\begin{align*}
F^{(8\otimes\bar{8})}_8 &= \text{diag}\left(\frac{\sqrt{3}}{2},\frac{\sqrt{3}}{2},0,0,0,0,\frac{-\sqrt{3}}{2},\frac{-\sqrt{3}}{2}\right),\\
D^{(8\otimes\bar{8})}_8 &= \text{diag}\left(\frac{-1}{2\sqrt{3}},\frac{-1}{2\sqrt{3}},\frac{1}{\sqrt{3}},\frac{1}{\sqrt{3}},\frac{1}{\sqrt{3}},\frac{-1}{\sqrt{3}},\frac{-1}{2\sqrt{3}},\frac{-1}{2\sqrt{3}}\right).
\end{align*}
For this choice, $M$ is diagonal and we can read off the hadron masses:
\begin{align*}
m_1 &= m_0 + \frac{\sqrt{3}}{2}m^F_8 - \frac{1}{2\sqrt{3}}m^D_8,\\
m_2 &= m_0 + \frac{1}{\sqrt{3}}m^D_8,\\
m_3 &= m_0 - \frac{1}{\sqrt{3}}m^D_8,\\
m_4 &= m_0 - \frac{\sqrt{3}}{2}m^F_8 - \frac{1}{2\sqrt{3}}m^D_8,
\end{align*}
where $m_1$ is the mass of the first two hadrons, $m_2$ is the mass of the following three hadrons, $m_3$ is the mass of the only non-degenerate hadron, and $m_4$ is the mass of the last two hadrons. Again, the average mass $m_0$ of the hadrons is way bigger than the difference of any two hadron masses in the octet and, like for the sextet, there is one relation between the masses, as four masses are parametrized by only three quantities. The relation can be expressed as:
\begin{gather*}
2m_1 + 2m_4 = 4m_0 - \frac{2}{\sqrt{3}}m^D_8 = m_2 + 3m_3.
\end{gather*}
This equation is the mass relation Gell-Mann wrote down when examining \text{SU}(3)-symmetries in the strong sector (cf. \cite{Gell-Mann1961}) and the original GMO mass relation.\par
The octet as a model for hadrons is very much present in both the mesonic and baryonic sector. Examples include the baryon octet with $J^P = 1/2^+$ containing proton and neutron and the pseudoscalar meson octet containing the pions. However, both of these octets deviate from the described model. While, for the baryon octet, the deviations are small and the GMO mass relation applies with a precision of few percents (cf. \autoref{sec:mass_testing}), the pseudoscalar meson octet shows a large discrepancy from the octet model described above: The average mass of the mesons in the pseudoscalar meson octet -- roughly \SI{400}{MeV} -- is in the same order of magnitude as the difference between meson masses and the GMO mass relation is rather strongly violated (broken by about 15\%; cf. \autoref{sec:mass_testing}). It is also noteworthy that the quadratic GMO relation, i.e., the mass relation where the masses in the GMO relation above are replaced by squared masses, is much better satisfied by the pseudoscalar meson octet than the linear GMO relation, even though it is still slightly broken (cf. \autoref{sec:mass_testing}). Both relations are broken due to $\eta$-$\eta^\prime$-mixing. The difference between linear and quadratic relations regarding the pseudoscalar meson octet will be the concern of discussions in \autoref{sec:EFT+H_Pert} and \autoref{sec:mass_testing}.

\subsection*{Decuplet}

The decuplet $D^{(10)}$ operates, as already stated, on a ten-dimensional vector space. The mass matrix $M$, a self-adjoint $(10\times 10)$-matrix now, is subject to the transformation
\begin{gather*}
M^\prime = D^{(10)}(A)\cdot M\cdot D^{(10)}(A)^\dagger
\end{gather*}
under $A\in \text{SU}(3)$. Its decomposition into irreducible representations is given by the Clebsch-Gordan series of $10\otimes \overline{10}$:
\begin{gather*}
10\otimes \overline{10} = 1\oplus 8\oplus 27\oplus 64,
\end{gather*}
where $64$ is a 64-dimensional irreducible representation of \text{SU}(3). Once again, the singlet $1$ is the space spanned by the identity. If \text{SU}(3) is an exact symmetry, $M$ can only be an element of this singlet which means that, yet again, all hadrons in the decuplet have the same mass.\par
If we consider approximate \text{SU}(3)-symmetry with exact $\text{SU}(2)\times\text{U}(1)$-symmetry, group theoretical considerations show that $M$ is a linear combination of four linearly independent matrices, one from each representation in the Clebsch-Gordan series. Octet enhancement limits the number of free parameters to two, since only the singlet and octet contribution remain. As for the triplet and sextet, the mass matrix is given by two contributions. Denoting the octet contribution with $F^{(10\otimes\overline{10})}_8$, we find:
\begin{gather*}
M = m_0\mathbb{1} + m_8F^{(10\otimes\overline{10})}_8
\end{gather*}
with $m_8\ll m_0$. If $D^{(10)}$ is chosen such that $F^{(10\otimes\overline{10})}_8$ is diagonal, one obtains:
\begin{gather*}
F^{(10\otimes\overline{10})}_8 = \text{diag}\left(\frac{\sqrt{3}}{2},\frac{\sqrt{3}}{2},\frac{\sqrt{3}}{2},\frac{\sqrt{3}}{2},0,0,0,\frac{-\sqrt{3}}{2},\frac{-\sqrt{3}}{2},-\sqrt{3}\right).
\end{gather*}
We can directly read off the hadron masses:
\begin{align*}
m_1 &= m_0 + \frac{\sqrt{3}}{2}m_8,\\
m_2 &= m_0,\\
m_3 &= m_0 - \frac{\sqrt{3}}{2}m_8,\\
m_4 &= m_0 - \sqrt{3}m_8,
\end{align*}
where $m_1$ is the mass of four hadrons, $m_2$ is the mass of three hadrons, $m_3$ is the mass of two hadrons, and $m_4$ is the mass of the only non-degenerate hadron. As two parameters describe four masses, there are two mass relations this time:
\begin{align*}
m_1 - m_2 &= \frac{\sqrt{3}}{2}m_8 = m_2 - m_3,\\
m_2 - m_3 &= \frac{\sqrt{3}}{2}m_8 = m_3 - m_4.
\end{align*}
Like in the case of the sextet, these two relations are equivalent to an equal spacing rule:
\begin{gather*}
m_1 - m_2 = m_2 - m_3 = m_3 - m_4.
\end{gather*}
To our knowledge, decuplets as a model for hadrons are only realized in the baryonic sector. The baryon decuplet with $J^P = 3/2^+$ made up out of the $\Delta$-, $\Sigma^\ast$-, $\Xi^\ast$-, and $\Omega$-baryons is the most prominent example for this. In this decuplet, the equal spacing rule holds true to good approximation. Like for the other multiplets, the degeneracy is lifted due to isospin symmetry breaking.\\\par
The discussion of the hadronic multiplets and the GMO mass relations revealed some reoccurring features. In particular, the previous segments indicate that:
\begin{itemize}
\item The mass matrix of every multiplet contains exactly one singlet $1$. This singlet is given by the multiples of the identity. For exact \text{SU}(3)-symmetry, this implies that every hadron in the multiplet has to have the same mass.
\item The mass matrix of every non-trivial multiplet contains at least one, but at most two octets.
\item The multiplets whose mass matrix contains only one octet are exactly the totally symmetric multiplets like triplets, sextets, decuplets, and so on.
\item Aside from the triplet, all totally symmetric multiplets satisfy equal spacing rules.
\end{itemize}
We will prove these properties in \autoref{sec:GMO_formula}.

\newpage
\section{Transformation Behavior of a 3-Flavors Lagrangian}\label{sec:Trafo_QCD}

As we have seen in the previous section, assuming $\text{SU}(3)\rightarrow\text{SU}(2)\times\text{U}(1)$ symmetry breaking and octet enhancement is crucial for the derivation of the GMO mass relations. The origin of these two assumptions lies in the Lagrangian governing the dynamics of the hadrons and their constituents. In order to understand how \text{SU}(3)-symmetry breaking and octet enhancement arise from a Lagrangian, we have to specify the Lagrangian first. In principle, the Lagrangian we need to investigate would be the Lagrangian of the Standard Model (SM), if we assume that the SM Lagrangian describes hadrons and quarks as their constituents to good approximation. For the sake of simplicity, let us drop everything in the SM Lagrangian which is not of immediate interest for the consideration at hand, meaning that we only keep the three light quarks u, d, and s and the strong interaction. By doing this, we are left with a QCD Lagrangian describing three flavors:\footnote{In the following calculations, all indices like spinor and color indices that are unchanged under flavor transformations are suppressed for the sake of clarity. Only the flavor is given explicitly.}
\begin{gather*}
\mathcal{L}_{\text{QCD}}(\bar{q},q) = \sum\limits_{q\in\{\text{u,d,s}\}}\bar{q}\left(i\slashed{D} -  m_q\right)q + \mathcal{L}_\text{YM},
\end{gather*}
where $q\in\{\text{u, d, s}\}$ ($q\coloneqq q_\text{L} + q_\text{R}$) is the field of the u, d, or s quark, $D_\mu$ is the covariant derivative of QCD containing $\text{SU}(3)_\text{c}$-gauge fields\footnote{$\text{SU}(3)_\text{c}$ is the gauge group of QCD.}, $m_q$ is -- at this stage, i.e., without formulating a renormalization scheme -- the bare quark mass of u, d, or s generated by spontaneous symmetry breaking of the Higgs field, and $\mathcal{L}_\text{YM}$ is the Yang-Mills Lagrangian of QCD containing the kinetic terms and self-interaction of the gauge fields. For physical applications of the theories we present, we need finite values of quark masses, thus, we need to choose a renormalization scheme. The quark mass values we use and reference throughout this work are $\overline{\text{MS}}$ masses at some scale $\mu$ ($\mu = \SI{2}{GeV}$ for u, d, and s quark and $\mu = \overline{m}_Q$ for heavy quarks $Q$) and taken from review \textit{66. Quark Masses} in \cite{PDG}.\par
We immediately see that all terms in the Lagrangian besides the mass term of the quarks are flavor symmetric\footnote{We will see later on in this section how we have to understand this statement.}. For our investigations in \autoref{chap:hadron_masses} and \autoref{chap:GMO_formula}, we only use that the interaction governing the dynamics of the quarks is flavor symmetric. Indeed, the considerations we make in the course of this work (mostly) apply to all theories where the interaction is flavor symmetric. This is the reason why we suppressed all indices but the flavor indices. If we want to restore the color indices and write the gluon fields explicitly, we have to make the following replacements:
\begin{gather*}
\bar{q}\rightarrow\bar{q}_i,\quad q\rightarrow q_j,\\
\slashed{D}\rightarrow \slashed{D}_{ij} = \gamma^\mu\partial_\mu\delta_{ij} - ig_s\gamma^\mu A^a_\mu T^a_{ij},\\
m_q\rightarrow m_q\delta_{ij},\\
\mathcal{L}_\text{YM} = -\frac{1}{4}G^{\mu\nu;\, a}G^a_{\mu\nu}\quad\text{with}\\
G^{a}_{\mu\nu} \coloneqq \partial_\mu A^a_\nu - \partial_\nu A^a_\mu + g_sf^{abc}A^b_\mu A^c_\nu,
\end{gather*}
where we sum over doubly occurring indices, $i$ and $j$ are color indices of the quarks, i.e., ($i$) $j$ is an index transforming under the (anti)fundamental representation of $\text{SU}(3)_\text{c}$, $a$, $b$, and $c$ are color indices of the gluon, i.e., $a$, $b$, and $c$ are indices transforming under the adjoint representation of $\text{SU}(3)_\text{c}$, $g_s$ is the coupling constant of QCD, the fields $A^a_\mu$ are the gauge fields of QCD, the matrices $T^a_{ij}$ are the generators of the adjoint representation of $\text{SU}(3)_\text{c}$, and the constants $f^{abc}$ are the structure constants of $\text{SU}(3)_\text{c}$. Even though the particular kind of interaction is not important for the following considerations as long as the interaction is flavor symmetric, we still label the Lagrangian $\mathcal{L}_\text{QCD}$ with ``QCD'' to indicate that, for our purposes, this Lagrangian already takes the strong interaction into account.\par
Let us now turn to the transformation behavior of $\mathcal{L}_\text{QCD}$ under flavor transformations. By rewriting the Lagrangian as
\begin{gather*}
\mathcal{L}_{\text{QCD}} = \sum\limits_{p,q\in\{\text{u,d,s}\}}\bar{p}\left(i\slashed{D}\delta_{pq} - \mathscr{M}_{pq}\right)q + \mathcal{L}_\text{YM},
\end{gather*}
we see that the quark mass matrix $\mathscr{M}$ is given by:
\begin{gather*}
\mathscr{M} = \begin{pmatrix} m_\text{u} & 0 & 0 \\ 0 & m_\text{d} & 0 \\ 0 & 0 & m_\text{s} \end{pmatrix}.
\end{gather*}
As $\mathscr{M}$ is a diagonal, self-adjoint $(3\times 3)$-matrix, we can express it via $\mathbb{1},\ \lambda_3$ and $\lambda_8$:
\begin{gather}\label{eq:quark_mass_matrix}
\mathscr{M} = \frac{m_\text{u} + m_\text{d} + m_\text{s}}{3}\cdot\mathbb{1} + (m_\text{u} - m_\text{d})\cdot\frac{\lambda_3}{2} + \frac{m_\text{u} + m_\text{d} - 2m_\text{s}}{\sqrt{3}}\cdot\frac{\lambda_8}{2}.
\end{gather}
With this in mind, we can define a flavor transformation of the fields $q$ and the matrix $\mathscr{M}$ under $A\in\text{SU}(3)$:
\begin{align*}
q^\prime &\coloneqq \sum\limits_{\tilde{q}\in\{\text{u,d,s}\}} A_{q\tilde{q}}\, \tilde{q},\\
\sum\limits_{p,q\in\{\text{u,d,s}\}}\bar{p}^{\, \prime}\,\mathscr{M}^\prime_{pq}\, q^\prime &\coloneqq \sum\limits_{p,q\in\{\text{u,d,s}\}}\bar{p}\,\mathscr{M}_{pq}\, q.
\end{align*}
The transformation of $\mathscr{M}$ is then given by:
\begin{gather*}
\mathscr{M}^\prime = A\cdot\mathscr{M}\cdot A^\dagger.
\end{gather*}
The quark mass matrix transformation coincides with the transformation of the triplet mass matrix from \autoref{sec:mass_matrix}. This means that the quark mass matrix decomposes under \text{SU}(3)-flavor transformations into a singlet and an octet, similar to what we have seen in the previous section. The other terms in $\mathcal{L}_{\text{QCD}}$, i.e., $\sum_{q}\bar{q}i\slashed{D}q$ and $\mathcal{L}_\text{YM}$, do not change at all under flavor transformations of the fields $q$ and, hence, transform as a singlet of $\text{SU}(3)$. In this sense, we can say that the Lagrangian decomposes under global flavor transformations, given by
\begin{gather*}
\mathcal{L}^\prime_{\text{QCD}}(\bar{q}^{\, \prime}, q^\prime) \coloneqq \mathcal{L}_{\text{QCD}}(\bar{q}, q),
\end{gather*}
into a singlet and an octet. Under appropriate assumptions, this flavor transformation behavior of the Lagrangian (or of any Lagrangian describing hadrons that decomposes similarly) can be linked to octet enhancement. We will explore this in greater detail in \autoref{sec:EFT+H_Pert}.\par
But before we move on, we take a moment to consider different quark mass configurations and the symmetries/symmetry breakings that arise from these configurations. Let us start by considering the case where all quark masses are equal, i.e., $m_\text{u} = m_\text{d} = m_\text{s}$. In this case, the matrix $\mathscr{M}$ is just the quark mass times the identity and, therefore, a singlet under \text{SU}(3)-flavor transformations. This means that the \text{SU}(3)-flavor transformations are an exact, global symmetry of the Lagrangian.\par
Next, let us consider the case where only two quarks have the same mass. Typically, these quarks are chosen to be u and d meaning $m_\text{u} = m_\text{d}$. Then, according to \autoref{eq:quark_mass_matrix}, $\mathscr{M}$ is a linear combination of $\mathbb{1}$ and $\lambda_8$. Similar to the case of the triplet, the \text{SU}(3)-flavor symmetry is broken now. However, there is still a residual flavor symmetry, namely $\text{SU}(2)\times\text{U}(1)$. Hence, the case where two quark masses are equal is described by $\text{SU}(3)\rightarrow\text{SU}(2)\times\text{U}(1)$ symmetry breaking and leads to the GMO mass relations, as we will see in \autoref{sec:EFT+H_Pert}.\par
Lastly, let us consider the case where all quark masses are taken to be different. Even though $\mathscr{M}$ is a linear combination of $\mathbb{1}$, $\lambda_3$ and $\lambda_8$ and the \text{SU}(3)-flavor symmetry is clearly broken now, a residual symmetry remains. The transformations of $\mathscr{M}$ where $A\in\text{SU}(3)$ is chosen to describe a change of phases along the diagonal,
\begin{gather*}
A = \begin{pmatrix} e^{i\alpha} & 0 & 0 \\ 0 & e^{i\beta} & 0 \\ 0 & 0 & e^{-i(\alpha + \beta)} \end{pmatrix},
\end{gather*}
leave the quark mass matrix invariant. This group is equivalent to $\text{U}(1)\times \text{U}(1)$. In this sense, the case of the most general parametrization of $\mathscr{M}$ deals with\linebreak $\text{SU}(3)\rightarrow \text{U}(1)\times \text{U}(1)$ symmetry breaking. This case will be important in \autoref{sec:add_con}.

\newpage
\section{\text{SU}(3)-Flavor Symmetry Breaking of Hadron Masses}
\label{sec:EFT+H_Pert}

In this section, we want to obtain the hadron mass patterns and relations discussed in \autoref{sec:mass_matrix} from the Lagrangian
of \autoref{sec:Trafo_QCD}. To do so, we have to link the Lagrangian and its transformation behavior under \text{SU}(3)-flavor transformations to hadron masses. We will investigate two approaches to this problem. For the first approach, one assumes that the hadrons we want to consider are described by fields of an EFT and that the transformation behavior of $\mathcal{L}_\text{QCD}$ ``carries over'' to the EFT Lagrangian. This EFT approach is especially interesting as Feynman's distinction between baryons and mesons arises naturally in this approach. The second approach, which we will call \textit{state formalism} for the remainder of this thesis, is based on the assumption that there are states describing the hadrons. We will see that the state formalism predicts GMO relations without distinguishing between baryons and mesons. We conclude this section with a discussion of Feynman's distinction and problems each approach faces.

\subsection*{EFT Approach}

In the EFT approach, the hadrons are described by fields in an EFT Lagrangian $\mathcal{L}$. For the sake of simplicity, we restrict ourselves to a finite number of scalar mesons represented by scalar fields $\phi_i$ and spin-$\frac{1}{2}$ baryons represented by Dirac fields $\psi_i$. $\mathcal{L}$ contains kinetic terms for the fields and an interaction part $\mathcal{L}_\text{Int}$:
\begin{gather*}
\mathcal{L} = \sum^{n_\text{M}}_{i,j=1}\left(\delta_{ij}\left(\partial_\mu\phi_i\right)^\dagger\left(\partial^\mu\phi_j\right) - \left(M^2_\text{M}\right)_{ij}\phi_i^\dagger\phi_j\right) + \sum^{n_\text{B}}_{i,j=1}\xbar\psi_i\left(\delta_{ij}i\slashed\partial - \left(M_\text{B}\right)_{ij}\right)\psi_j + \mathcal{L}_\text{Int},
\end{gather*}
where $M^2_{\text{M}}$ and $M_{\text{B}}$ are the mass matrices of the mesons and baryons and $n_M$ and $n_B$ are the numbers of mesons and baryons described by the EFT, respectively. We assume now that the fields $\phi_i$ and $\psi_i$ transform via unitary representations\linebreak $D^{(\rho_\text{M})}:\text{SU}(3)\rightarrow \text{GL}(\mathbb{C}^{n_\text{M}})$ and $D^{(\rho_\text{B})}:\text{SU}(3)\rightarrow \text{GL}(\mathbb{C}^{n_\text{B}})$ of \text{SU}(3):
\begin{gather*}
\phi_i \xrightarrow{A\,\in\,\text{SU}(3)} \phi^\prime_i = \sum^{n_\text{M}}_{j=1} \left(D^{(\rho_\text{M})}(A)\right)_{ij}\phi_j\quad\text{and}\quad\psi_i \xrightarrow{A\,\in\,\text{SU}(3)} \psi^\prime_i = \sum^{n_\text{B}}_{j=1} \left(D^{(\rho_\text{B})}(A)\right)_{ij}\psi_j.
\end{gather*}
This allows us to define a transformation of $\mathcal{L}$:
\begin{gather*}
\mathcal{L}^\prime(\phi_i^\prime, \psi_i^\prime) \coloneqq \mathcal{L}(\phi_i, \psi_i).
\end{gather*}
Let us now assume that the transformation behavior of $\mathcal{L}_\text{QCD}$ under $A\in\text{SU}(3)$ ``carries over'' to $\mathcal{L}$. As explained in \autoref{sec:Trafo_QCD}, $\mathcal{L}_\text{QCD}$ decomposes under \text{SU}(3)-flavor transformations into a singlet and an octet:
\begin{gather}\label{eq:L_QCD_decomp}
\mathcal{L}_\text{QCD} = \mathcal{L}^{0}_\text{QCD} + \varepsilon_3\cdot\mathcal{L}^{8}_{\text{QCD};\, 3} + \varepsilon_8\cdot\mathcal{L}^{8}_{\text{QCD};\, 8},
\end{gather}
where $\varepsilon_3 \coloneqq m_\text{u} - m_\text{d}$ and $\varepsilon_8 \coloneqq \frac{m_\text{u} + m_\text{d} - 2m_\text{s}}{\sqrt{3}}$ and
\begin{align*}
\mathcal{L}^{0}_{\text{QCD}} &\coloneqq \sum\limits_{q\in\{\text{u,d,s}\}}\bar{q}\left(i\slashed{D} -  \frac{m_\text{u} + m_\text{d} + m_\text{s}}{3}\right)q + \mathcal{L}_\text{YM},\\
\mathcal{L}^{8}_{\text{QCD};\, k} &\coloneqq -\sum\limits_{p,q\in\{\text{u,d,s}\}}\frac{\bar{p}\left(\lambda_k\right)_{pq}q}{2}\quad\forall k\in\{1,\ldots, 8\}.
\end{align*}
The singlet term of $\mathcal{L}_{\text{QCD}}$ is given by $\mathcal{L}^{0}_{\text{QCD}}$, while the octet term is a linear combination of $\mathcal{L}^{8}_{\text{QCD};\, 3}$ and $\mathcal{L}^{8}_{\text{QCD};\, 8}$. We assume now that $\mathcal{L}$ transforms to first order in $\text{SU}(3)$-symmetry breaking in a similar way under \text{SU}(3):
\begin{gather}\label{eq:Sing+Oct}
\mathcal{L} = \mathcal{L}^{0} + \tilde{\varepsilon}_3\cdot\mathcal{L}^{8}_{3} + \tilde{\varepsilon}_8\cdot\mathcal{L}^{8}_{8} + \mathcal{O}(\tilde{\varepsilon}_i\tilde{\varepsilon}_j),
\end{gather}
where $\mathcal{L}^{0}$ is a singlet under \text{SU}(3), $\mathcal{L}^{8}_{3}$ and $\mathcal{L}^{8}_{8}$ are the 3rd and 8th component of an octet, and $\tilde{\varepsilon}_3\propto \varepsilon_3$ and $\tilde{\varepsilon}_8\propto \varepsilon_8$ are parameters governing how strongly \text{SU}(3) is broken. This equation with $\tilde{\varepsilon}_3 = 0$ corresponds to $\text{SU}(3)\rightarrow\text{SU}(2)\times \text{U}(1)$ symmetry breaking to first order in $\tilde{\varepsilon}_8$. Since the kinetic terms in $\mathcal{L}$ for both $\phi_i$ and $\psi_i$ and the interaction term $\mathcal{L}_\text{Int}$ do not mix into each other under \text{SU}(3)-transformations of the fields, both kinetic terms and $\mathcal{L}_\text{Int}$ have to transform analogously to \autoref{eq:Sing+Oct} under \text{SU}(3)-transformations. The part containing derivatives of fields in each kinetic term is a singlet under \text{SU}(3), hence, the mass terms have the same decomposition as in \autoref{eq:Sing+Oct}:
\begin{align*}
\sum^{n_\text{M}}_{i,j=1} \left(M^2_\text{M}\right)_{ij}\phi_i^\dagger\phi_j &= \mathcal{L}^{0}_{\text{M}} + \tilde{\varepsilon}_3\cdot\mathcal{L}^{8}_{\text{M};\, 3} + \tilde{\varepsilon}_8\cdot\mathcal{L}^{8}_{\text{M};\, 8} + \mathcal{O}(\tilde{\varepsilon}_i\tilde{\varepsilon}_j),\\
\sum^{n_\text{B}}_{i,j=1}\bar{\psi}_i \left(M_\text{B}\right)_{ij}\psi_j &= \mathcal{L}^{0}_{\text{B}} + \tilde{\varepsilon}_3\cdot\mathcal{L}^{8}_{\text{B};\, 3} + \tilde{\varepsilon}_8\cdot\mathcal{L}^{8}_{\text{B};\, 8} + \mathcal{O}(\tilde{\varepsilon}_i\tilde{\varepsilon}_j),
\end{align*}
where $\mathcal{L}^{0}_{\text{M}/\text{B}}$, $\mathcal{L}^{8}_{\text{M}/\text{B};\, 3}$, and $\mathcal{L}^{8}_{\text{M}/\text{B};\, 8}$ behave similar to \autoref{eq:Sing+Oct}. The transformation of the mass terms under $A\in \text{SU}(3)$ can be expressed as a transformation of the mass matrices:
\begin{align*}
\sum^{n_\text{M}}_{i,j=1} \left(M^{2\,\prime}_\text{M}\right)_{ij}\phi_i^{\prime\, \dagger}\phi^\prime_j = \sum^{n_\text{M}}_{i,j=1} \left(M^{2}_\text{M}\right)_{ij}\phi_i^{\dagger}\phi_j &\Leftrightarrow M^{2\,\prime}_\text{M} = D^{(\rho_\text{M})}(A)\cdot M^2_\text{M}\cdot D^{(\rho_\text{M})}(A)^\dagger,\\
\sum^{n_\text{B}}_{i,j=1}\bar{\psi}^\prime_i \left(M^\prime_\text{B}\right)_{ij}\psi^\prime_j = \sum^{n_\text{B}}_{i,j=1}\bar{\psi}_i \left(M_\text{B}\right)_{ij}\psi_j &\Leftrightarrow M^{\prime}_\text{B} = D^{(\rho_\text{B})}(A)\cdot M_\text{B}\cdot D^{(\rho_\text{B})}(A)^\dagger.
\end{align*}
It is now easy to see that the mass matrices have to satisfy the following structure:
\begin{align*}
M^2_\text{M} &= M^{2;\, 0}_\text{M} + \tilde{\varepsilon}_3\cdot M^{2;\, 8}_{\text{M};\, 3} + \tilde{\varepsilon}_8\cdot M^{2;\, 8}_{\text{M};\, 8} + \mathcal{O}(\tilde{\varepsilon}_i\tilde{\varepsilon}_j),\\
M_\text{B} &= M^{0}_\text{B} + \tilde{\varepsilon}_3\cdot M^{8}_{\text{B};\, 3} + \tilde{\varepsilon}_8\cdot M^{8}_{\text{B};\, 8} + \mathcal{O}(\tilde{\varepsilon}_i\tilde{\varepsilon}_j),\\
\end{align*}
with $M^{2;\, 0}_\text{M}$/$M^{0}_\text{B}$ being singlets of \text{SU}(3) and $M^{2;\, 8}_{\text{M};\, 3}$/$M^{8}_{\text{B};\, 3}$ and $M^{2;\, 8}_{\text{M};\, 8}$/$M^{8}_{\text{B};\, 8}$ being the 3rd and 8th component of an octet of \text{SU}(3), respectively.\par
To proceed, we need to require that \text{SU}(3) is an approximate symmetry, i.e., the \text{SU}(3)-invariant singlet contribution is much larger than the \text{SU}(3)-breaking contribution, in particular the octet contribution. For most hadrons like baryons or heavy mesons, this holds true. Usually, the octet contribution in \text{SU}(3)-multiplets, roughly given by their mass splitting, is of the order of \SI{100}{MeV}, while the singlet contribution given by the average of the masses in the multiplet is about \SI{1}{GeV} or higher. Furthermore, we set $\tilde{\varepsilon}_3$ equal to 0, as the 3rd component of the octet leads to isospin symmetry breaking which we want to discuss in \autoref{sec:add_con}.\par
With this, we can calculate the eigenvalues of the mass matrices, i.e., the hadron masses, by treating the \text{SU}(3)-breaking term as a small perturbation. In order to do so, we need to know the eigenstates and eigenvalues of the unperturbed matrices first. The unperturbed terms are just the singlets. Since $D^{(\rho_{\text{M}/\text{B}})}$ is unitary, we find:
\begin{align*}
D^{(\rho_\text{M})}(A)\cdot M^{2;\, 0}_\text{M}\cdot D^{(\rho_\text{M})}(A)^{-1} &= M^{2;\, 0}_\text{M}\quad\forall A\in \text{SU}(3),\\
D^{(\rho_\text{B})}(A)\cdot M^0_\text{B}\cdot D^{(\rho_\text{B})}(A)^{-1} &= M^0_\text{B}\quad\forall A\in \text{SU}(3)
\end{align*}
or equivalently:
\begin{gather*}
\left[M^{2;\, 0}_\text{M}, D^{(\rho_\text{M})}(A)\right] = 0\quad\text{and}\quad\left[M^0_\text{B}, D^{(\rho_\text{B})}(A)\right] = 0\quad\forall A\in \text{SU}(3).
\end{gather*}
As $D^{(\rho_\text{M})}$ and $D^{(\rho_\text{B})}$ are finite-dimensional, unitary representations, they decompose completely into a direct sum of irreducible representations or multiplets of \text{SU}(3). We will show in \autoref{sec:GMO_formula} that the commutation relations imply the existence of complete orthonormal eigenbases {${\{v^0_{\text{M};\, i}\mid i = 1,\ldots, n_M\}}$} of $M^{2;\, 0}_\text{M}$ and {${\{v^0_{\text{B};\, i}\mid i = 1,\ldots, n_B\}}$} of $M^0_\text{B}$ such that each eigenbasis consists of bases for the multiplets from the decomposition of $D^{(\rho_\text{M})}$ and $D^{(\rho_\text{B})}$ and that the eigenvalues of eigenvectors in the same multiplet are equal. Hence, $M^{2;\, 0}_\text{M}$ and $M^0_\text{B}$ are constant on those multiplets. This implies that the hadron masses are degenerate in each \text{SU}(3)-flavor multiplet, if there is no perturbation, i.e., if \text{SU}(3) is an exact symmetry.\par
With this knowledge, we can calculate the hadron masses to first order in perturbation theory:
\begin{align*}
m^2_{\text{M};\, i} &= v^{0\, \dagger}_{\text{M};\, i}M^{2;\, 0}_\text{M} v^{0}_{\text{M};\, i} + \tilde{\varepsilon}_8\cdot v^{0\, \dagger}_{\text{M};\, i} M^{2;\, 8}_{\text{M};\, 8} v^{0}_{\text{M};\, i} + \mathcal{O}(\tilde{\varepsilon}^{\, 2}_8),\\
m_{\text{B};\, i} &= v^{0\, \dagger}_{\text{B};\, i}M^{0}_\text{B} v^{0}_{\text{B};\, i} + \tilde{\varepsilon}_8\cdot v^{0\, \dagger}_{\text{B};\, i} M^{8}_{\text{B};\, 8} v^{0}_{\text{B};\, i} + \mathcal{O}(\tilde{\varepsilon}^{\, 2}_8),
\end{align*}
where $m^2_{\text{M};\, i}$ is the mass squared of the meson $i$ and $m_{\text{B};\, i}$ is the mass of the baryon $i$. To obtain these formulae, we assumed that it is sufficient to consider the \text{SU}(3)-multiplets separately. We will see in \autoref{sec:GMO_formula} how we can understand this statement. At this point, it should be mentioned that these two formulae are only correct if the eigenbases $\{v^0_{\text{M};\, i}\mid i = 1,\ldots, n_M\}$ and $\{v^0_{\text{B};\, i}\mid i = 1,\ldots, n_B\}$ are chosen such that
\begin{align*}
v^{0\, \dagger}_{\text{M};\, i} M^{2;\, 8}_{\text{M};\, 8} v^{0}_{\text{M};\, j} &= \delta_{ij}\cdot v^{0\, \dagger}_{\text{M};\, i} M^{2;\, 8}_{\text{M};\, 8} v^{0}_{\text{M};\, i}\quad\text{and}\\
v^{0\, \dagger}_{\text{B};\, i} M^{8}_{\text{B};\, 8} v^{0}_{\text{B};\, j} &= \delta_{ij}\cdot v^{0\, \dagger}_{\text{B};\, i} M^{8}_{\text{B};\, 8} v^{0}_{\text{B};\, i},
\end{align*}
if $v^{0}_{\text{M/B};\, i}$ and $v^{0}_{\text{M/B};\, j}$ are elements of the same multiplet. These two conditions are a consequent of degenerate perturbation theory (cf. \autoref{sec:GMO_formula}).\par
Now we need to convince ourselves that the given expressions for the masses actually reflect the patterns explained in \autoref{sec:mass_matrix}. We only consider the baryonic case for this, the mesonic case works analogously. To do so, we consider a multiplet $D^{(\sigma)}$ from the decomposition of $D^{(\rho_\text{B})}$. Without loss of generality, we can take the first $d = \text{dim}(V^{(\sigma)})$ vectors of $\{v^0_{\text{B};\, i}\mid i = 1,\ldots, n_B\}$ to be a basis of the vector space $V^{(\sigma)}$ on which $D^{(\sigma)}$ acts. Let us define the matrix $m^{(\sigma)}_\text{B}$ by:
\begin{gather*}
m^{(\sigma)}_{\text{B};\, kl} \coloneqq v^{0\, \dagger}_{\text{B};\, k}M^{0}_\text{B} v^{0}_{\text{B};\, l} + \tilde{\varepsilon}_8\cdot v^{0\, \dagger}_{\text{B};\, k} M^{8}_{\text{B};\, 8} v^{0}_{\text{B};\, l}
\end{gather*}
with $k,l\in\{1,\ldots,d\}$. The diagonal elements of the already diagonalized matrix $m^{(\sigma)}_\text{B}$ coincide with the baryon masses of the multiplet $D^{(\sigma)}$ to first order in $\tilde{\varepsilon}_8$. We can define a transformation of $m^{(\sigma)}_\text{B}$ under $A\in\text{SU}(3)$:
\begin{gather*}
m^{(\sigma)\,\prime}_\text{B} \coloneqq D^{(\sigma)}(A)\cdot m^{(\sigma)}_\text{B}\cdot D^{(\sigma)}(A)^\dagger\\
\text{with } \left(D^{(\sigma)}(A)\right)_{kl} \coloneqq v^{0\,\dagger}_{\text{B};\, k}D^{(\rho_\text{B})}(A)v^{0}_{\text{B};\, l}.
\end{gather*}
This transformation corresponds to the transformation of mass matrices defined in \autoref{sec:mass_matrix}. A quick calculation now shows that this transformation is equivalent to a transformation of $M^0_\text{B} + \tilde{\varepsilon}_8\cdot M^{8}_{\text{B};\, 8}$ under $A\in\text{SU}(3)$:
\begin{align*}
m^{(\sigma)\,\prime}_{\text{B};\, kl} &= \sum\limits^d_{r,s = 1} \left(D^{(\sigma)}(A)\right)_{kr}\cdot m^{(\sigma)}_{\text{B};\, rs}\cdot \left(D^{(\sigma)}(A)\right)^\ast_{ls}\\
&= \left[\sum\limits^d_{r=1}\left(D^{(\sigma)}(A)\right)^\ast_{kr} v^{0}_{\text{B};\, r}\right]^\dagger M^0_\text{B} + \tilde{\varepsilon}_8\cdot M^{8}_{\text{B};\, 8}\left[\sum\limits^d_{s=1}\left(D^{(\sigma)}(A)\right)^\ast_{ls} v^{0}_{\text{B};\, s}\right]\\
&= v^{0\,\dagger}_{\text{B};\, k}D^{(\rho_\text{B})}(A)\left(M^0_\text{B} + \tilde{\varepsilon}_8\cdot M^{8}_{\text{B};\, 8}\right)D^{(\rho_\text{B})}(A)^\dagger v^{0}_{\text{B};\, l}\\
&= v^{0\,\dagger}_{\text{B};\, k}\left(M^{0\,\prime}_\text{B} + \tilde{\varepsilon}_8\cdot M^{8\,\prime}_{\text{B};\, 8}\right)v^{0}_{\text{B};\, l},
\end{align*}
where we used
\begin{align*}
\sum\limits^d_{r=1}\left(D^{(\sigma)}(A)\right)^\ast_{kr} v^{0}_{\text{B};\, r} &= \sum\limits^d_{r=1}\left(v^{0\,\dagger}_{\text{B};\, k}D^{(\rho_\text{B})}(A)v^{0}_{\text{B};\, r}\right)^\ast v^{0}_{\text{B};\, r}\\
&= \sum\limits^{n_\text{B}}_{r=1}\left(v^{0\,\dagger}_{\text{B};\, k}D^{(\rho_\text{B})}(A)v^{0}_{\text{B};\, r}\right)^\dagger v^{0}_{\text{B};\, r}\\
&= \sum\limits^{n_\text{B}}_{r=1}\left(v^{0\,\dagger}_{\text{B};\, r}D^{(\rho_\text{B})}(A)^\dagger v^{0}_{\text{B};\, k}\right) v^{0}_{\text{B};\, r}\\
&= \left(\sum\limits^{n_\text{B}}_{r=1} v^{0}_{\text{B};\, r}\cdot v^{0\,\dagger}_{\text{B};\, r}\right)D^{(\rho_\text{B})}(A)^\dagger v^{0}_{\text{B};\, k}\\
&= D^{(\rho_\text{B})}(A)^\dagger v^{0}_{\text{B};\, k}.
\end{align*}
In the second line of the second calculation, we used the fact that the first $d$ eigenvectors $v^0_{\text{B};\, k}$ live in the same invariant subspace and that the eigenbasis is orthogonal, meaning $v^{0\,\dagger}_{\text{B};\, k}D^{(\rho_\text{B})}(A)v^{0}_{\text{B};\, r} = 0$ for $k\leq d$ and $r> d$. In the last line, we used the completeness of the eigenbasis.\par
As $M^0_\text{B}$ transforms as a singlet and $M^8_{\text{B};\, 8}$ transforms as the 8th component of an octet, $m^{(\sigma)}_\text{B}$ transforms as a singlet plus an octet under \text{SU}(3). This means that octet enhancement applies to the transformation of $m^{(\sigma)}_\text{B}$. Furthermore, the 8th component of an octet is invariant under $\text{SU}(2)\times \text{U}(1)$, as we have seen in \autoref{sec:mass_matrix}. Therefore, $m^{(\sigma)}_\text{B}$ is invariant under $\text{SU}(2)\times \text{U}(1)$ as well, since $M^0_\text{B}$ is invariant under any \text{SU}(3)-transformation. Lastly, the \text{SU}(3)-breaking contribution to $m^{(\sigma)}_\text{B}$ is small in comparison to the \text{SU}(3)-invariant term, hence, \text{SU}(3) is an approximate symmetry. In total, the mass matrix $m^{(\sigma)}_\text{B}$ is subject to $\text{SU}(3)\rightarrow \text{SU}(2)\times \text{U}(1)$ symmetry breaking with octet enhancement and approximate \text{SU}(3)-symmetry. As seen in \autoref{sec:mass_matrix}, these properties lead to the GMO mass relations for sextets, octets, and decuplets.\par
Let us summarize what we have found: By describing the hadrons in an EFT and requiring that the transformation behavior of the EFT Lagrangian coincides with the transformation behavior of the fundamental Lagrangian $\mathcal{L}_\text{QCD}$ to first order in flavor symmetry breaking, we found that the baryon masses obey linear GMO relations, whilst meson masses obey quadratic GMO relations to first order in \text{SU}(3)-symmetry breaking. The difference between baryons and mesons can be traced back to their treatment in the EFT Lagrangian: Since baryons are fermions, their mass enters linearly into the Lagrangian. In contrast to that, mesons are bosons and, hence, their mass enters quadratically into the Lagrangian. This means that Feynman's distinction arises naturally in the EFT approach. However, this does not imply that, in Nature, baryons are only subject to linear relations, while mesons only follow quadratic relations. We will discuss this point in detail at the end of this section.

\subsection*{State Formalism}

Alternative to the EFT approach, we can try to link the masses of the hadrons to $\mathcal{L}_\text{QCD}$ via the corresponding Hamilton operator $H_\text{QCD}$. In this approach, the state formalism, we describe the hadrons by eigenstates of $H_\text{QCD}$ where the eigenvalues of these states correspond to the masses of the hadrons. We then calculate these eigenvalues in a perturbative treatment of $H_\text{QCD}$. $H_\text{QCD}$ is given by:
\begin{gather*}
H_\text{QCD} = \int d^3x\, \mathcal{H}_\text{QCD};\quad \mathcal{H}_\text{QCD} = \sum_{q\in\{\text{u,d,s}\}} \frac{\partial \mathcal{L}_\text{QCD}}{\partial \dot{q}}\dot{q} - (\mathcal{L}_\text{QCD} - \mathcal{L}_\text{YM}) + \mathcal{H}_\text{YM},
\end{gather*}
where $\mathcal{H}_\text{YM}$ is the Hamiltonian of $\mathcal{L}_\text{YM}$. With $\frac{\partial \mathcal{L}_\text{QCD}}{\partial \dot{q}} = iq^\dagger$, we obtain:
\begin{gather*}
\mathcal{H}_\text{QCD} = \sum\limits_{q\in\{\text{u,d,s}\}} iq^\dagger\dot{q} + \mathcal{L}_\text{YM} + \mathcal{H}_\text{YM} - \mathcal{L}_\text{QCD}.
\end{gather*}
Since $\sum\limits_{q\in\{\text{u,d,s}\}} iq^\dagger\dot{q} + \mathcal{L}_\text{YM} + \mathcal{H}_\text{YM}$ transforms as a singlet under \text{SU}(3)-transformations and $\mathcal{L}_\text{QCD}$ transforms as a singlet plus an octet, $\mathcal{H}_\text{QCD}$ transforms as a singlet plus an octet as well. Flavor transformations are global transformations, therefore, $H_\text{QCD}$ exhibits the same transformation behavior as $\mathcal{H}_\text{QCD}$ and, in turn, $\mathcal{L}_\text{QCD}$:
\begin{gather*}
H_\text{QCD} = H^{0}_\text{QCD} + \varepsilon_3\cdot H^{8}_{\text{QCD};\, 3} + \varepsilon_8\cdot H^{8}_{\text{QCD};\, 8},
\end{gather*}
where $H^{0}_\text{QCD}$, $H^{8}_{\text{QCD};\, 3}$, and $H^{8}_{\text{QCD};\, 8}$ are defined analogously to \autoref{eq:L_QCD_decomp}.\par
Now, in order to make statements about hadron masses by applying a perturbative treatment to the Hamilton operator, we need to make three assumptions:\footnote{The assumptions listed here do not apply to QFTs in a rigorous mathematical sense. However, they may hold true to good approximation. One can find a discussion of this point among others in \autoref{app:stateform}.}
\begin{itemize}
 \item[1)] For every hadron $a$, there exists an eigenstate $\Ket{a}$ with $\Braket{a|a} = 1$ of the Hamilton operator $H_\text{QCD}$ from which the vacuum energy is already subtracted such that the mass $m_a$ of the hadron $a$ is given by
 \begin{gather*}
  m_a = \Bra{a}H_\text{QCD} \Ket{a}.
 \end{gather*}
 \item[2)] The subspace $V$ of the physical states which is spanned by the states $\Ket{a}$ from 1), i.e., {${V:= \overline{\text{Span}\left\{\Ket{a}\mid a\text{ hadron}\right\}}}$}, is a Hilbert space.
 \item[3)] There is a unitary representation $D^{(\rho)}:V\rightarrow V$ of \text{SU}(3) on $V$ such that the following equation holds for every $A\in\text{SU}(3)$:
 \begin{gather*}
  \Bra{a} D^{(\rho)}(A)^\dagger\circ H_\text{QCD}\left(\bar{q},\, q\right)\circ D^{(\rho)}(A) \Ket{b} = \Bra{a}H_\text{QCD}\left(\bar{q}^{\, \prime},\, q^\prime\right) \Ket{b}\ \ \forall\Ket{a},\Ket{b}\in V
 \end{gather*}
 where $q^\prime\coloneqq \sum\limits_{\tilde{q}\in\{\text{u,d,s}\}}A_{q \tilde{q}}\cdot \tilde{q}$.
\end{itemize}
I cannot prove these assumptions, but they are explored and motivated in \autoref{app:stateform}. It should be noted at this point that we have yet to give a proper definition of the hadron mass. We postpone this discussion to \autoref{sec:polemass}.\par
Given these assumptions and assuming that the subtraction of vacuum energy does not spoil the transformation behavior of $H_\text{QCD}$, we can calculate the hadron masses. The following steps are very similar to the calculation of hadron masses in the EFT approach: For the computation of the masses, we make use of perturbation theory again. Like for the EFT approach, we take the \text{SU}(3)-symmetry breaking to be small such that \text{SU}(3) is an approximate symmetry. This means that the eigenvalues of $H^0_\text{QCD}$ are much larger than the eigenvalues of $\varepsilon_3\cdot H^{8}_{\text{QCD};\, 3} + \varepsilon_8\cdot H^{8}_{\text{QCD};\, 8}$. Furthermore, we take $\varepsilon_3$ to be zero, as, for now, we do not consider isospin symmetry breaking.\par
For a perturbative treatment, we first have to know the eigenvalues and -vectors of the unperturbed operator, i.e., $H^0_\text{QCD}$. However, we are not interested in the entire spectrum of $H_\text{QCD}$, but only in its eigenvalues on $V$. Therefore, we can restrict $H_\text{QCD}$ and likewise $H^0_\text{QCD}$ and $H^8_{\text{QCD};8}$ to $V$. Hereby, we understand the restriction\footnote{The steps performed here may not be rigorous in a mathematical sense. Their purpose is rather to guide our intuition.} $H_\text{QCD}\vert_V$ of $H_\text{QCD}$ to $V$ to be
\begin{gather*}
H_\text{QCD}\vert_V\ket{b} \coloneqq\sum_a \Braket{a|H_\text{QCD}|b}\cdot\ket{a}\quad\forall\ket{b}\in V,
\end{gather*}
where $\{\ket{a}\}$ is any complete orthonormal basis of $V$. We define $H^0_\text{QCD}\vert_V$ and $H^8_{\text{QCD};8}\vert_V$ in a similar way. For the remainder of this section, we will only consider the restricted operators, therefore, we are going to drop ``$\vert_V$'' from now on. Note that we already had to understand $H_\text{QCD}$ in assumption 3) as $H_\text{QCD}\vert_V$, but suppressed ``$\vert_V$''.\par
Using assumption 3) and the fact that $H^0_\text{QCD}$ is a singlet under \text{SU}(3), we find for the unperturbed operator:
\begin{gather*}
D^{(\rho)}(A)\, H^0_\text{QCD}\, D^{(\rho)}(A)^\dagger = H^0_\text{QCD}\  \Leftrightarrow\  \left[D^{(\rho)}(A),\, H^0_\text{QCD}\right] = 0\quad\forall A\in\text{SU}(3).
\end{gather*}
As we will see in \autoref{sec:GMO_formula}, this implies that (the closure of) each eigenspace of $H^0_\text{QCD}$ is an invariant subspace of $D^{(\rho)}$. Similar to the EFT approach, $D^{(\rho)}$ decomposes into multiplets of \text{SU}(3) on this space. This means that $H^0_\text{QCD}$ is constant on multiplets of \text{SU}(3) which again implies that the hadron masses are degenerate in each \text{SU}(3)-flavor multiplet, if there is no perturbation, i.e., if \text{SU}(3) is an exact symmetry.\par
This allows us to compute the hadron masses to first order in perturbation theory. If we consider each multiplet separately (cf. \autoref{sec:GMO_formula}), we find for the hadron masses in a multiplet $D^{(\sigma)}$:
\begin{gather*}
m_{a} = m^{(0)} + \varepsilon_8\cdot\Braket{a^{(\sigma)}|H^8_{\text{QCD};8}|a^{(\sigma)}} + \mathcal{O}\left(\varepsilon_8^2\right),
\end{gather*}
where $m_a$ is the mass of the hadron $a$ in the multiplet $D^{(\sigma)}$, $m^{(0)}$ is the eigenvalue of $H^0_\text{QCD}$ corresponding to the multiplet $D^{(\sigma)}$ and {${\left\{\Ket{a^{(\sigma)}}\mid a\text{ is a hadron in }D^{(\sigma)}\right\}}$} is an orthonormal basis of the multiplet $D^{(\sigma)}$. As for the EFT approach, we made a special choice for the basis $\{\ket{a^{(\sigma)}}\}$ such that $H^8_{\text{QCD};8}$ is diagonal on the multiplet $D^{(\sigma)}$ in this basis.\par
Lastly, it remains to show that the given mass expressions exhibit the very same symmetry structures that led us to the GMO relations in \autoref{sec:mass_matrix}. We do this in the same way, as we did it in the EFT approach.
Let us define the mass matrix $m^{(\sigma)}$ by:
\begin{gather*}
m^{(\sigma)}_{ab} \coloneqq m^{(0)}\cdot\delta_{ab} + \varepsilon_8\cdot\Braket{a^{(\sigma)}|H^8_{\text{QCD};8}|b^{(\sigma)}}.
\end{gather*}
Earlier, we have chosen the basis $\{\ket{a^{(\sigma)}}\}$ such that $m^{(\sigma)}$ is diagonal. To first order in perturbation theory, the eigenvalues of $m^{(\sigma)}$, i.e., its diagonal entries, coincide with the hadron masses. We can define a transformation of $m^{(\sigma)}$ under $A\in\text{SU}(3)$:
\begin{gather*}
m^{(\sigma)\,\prime} \coloneqq D^{(\sigma)}(A)\cdot m^{(\sigma)}\cdot D^{(\sigma)}(A)^\dagger\\
\text{with } \left(D^{(\sigma)}(A)\right)_{ab} \coloneqq \Braket{a^{(\sigma)}|D^{(\rho)}(A)|b^{(\sigma)}}.
\end{gather*}
This transformation corresponds to the transformation of mass matrices defined in \autoref{sec:mass_matrix}. It is fairly easy to see that the transformation of $m^{(\sigma)}$ is given by the transformation of $m^{(0)}\mathbb{1} + \varepsilon_8\cdot H^{8}_{\text{QCD};\, 8}$ (cf. \autoref{sec:GMO_formula}):
\begin{gather*}
m^{(\sigma)\,\prime}_{ab} \coloneqq \Braket{a^{(\sigma)}|m^{(0)}\mathbb{1} + \varepsilon_8\cdot H^{8\,\prime}_{\text{QCD};\, 8}|b^{(\sigma)}}.
\end{gather*}
This means that the mass matrix $m^{(\sigma)}$ transforms as a singlet plus the 8th component of an octet. As the 8th component of an octet is $\text{SU}(2)\times\text{U}(1)$-invariant, $m^{(\sigma)}$ is subject to $\text{SU}(3)\rightarrow\text{SU}(2)\times\text{U}(1)$ symmetry breaking with octet enhancement and approximate \text{SU}(3)-symmetry. In \autoref{sec:mass_matrix}, we saw that these properties are sufficient to obtain the GMO mass relations. Note that the state formalism makes no difference between fermions and bosons, meaning between baryons and mesons. All hadrons satisfy linear GMO relations in this approach.

\subsection*{Is Feynman's Distinction Physical?}

We have seen that Feynman's distinction arises naturally in the EFT approach. However, this alone is no compelling reason to distinguish the GMO relations for baryons and mesons, that is to say that baryons only satisfy linear relations, while mesons are only described by quadratic relations. In particular, such a statement faces three problems:\par
First of all, it is questionable whether the EFT approach represents a valid model for hadron masses and, if it does, how precise this model is. Hadrons are composite, often unstable particles which are mostly just experimentally accessible as resonances. It is not clear at all that these resonances are ``good'' degrees of freedom, i.e., that it is possible to describe the hadrons as fields. Additionally, it is unclear whether the mass parameters appearing in the EFT Lagrangian are connected to the mass-like quantities of these resonances that are determined in experiments and, if so, how they are connected. Lastly, it is unknown how to obtain the EFT Lagrangian from the fundamental SM Lagrangian or from $\mathcal{L}_\text{QCD}$. It is especially not obvious whether this process spoils the symmetry structure and transformation behavior of the fundamental Lagrangian. If it spoils the symmetry structure, the transformation behavior of the fundamental Lagrangian does not ``carry over'' to the EFT Lagrangian and the EFT approach is invalid. For some multiplets like the pseudoscalar meson octet, one can find in the framework of theories like chiral perturbation theory (cf. \cite{Scherer2011}) that the transformation behavior is actually preserved. Nonetheless, it is not clear whether this holds true for all multiplets. In total, the validity of the EFT approach is at least debatable.\par
Secondly, we have already seen that there is an alternative way of deriving the GMO mass relations which does not lead to Feynman's distinction, namely the state formalism. Truly, this model also faces several problems. There are technical difficulties like the subtraction of vacuum energy that may spoil the symmetry structure of the Hamilton operator and unproven assumptions like the assumptions 1)-3) which enter the state formalism. However, we can at least motivate these assumptions (cf. \autoref{app:stateform}).\par
Lastly, we have to note that both quadratic and linear GMO mass relations are equivalent approximations, if the \text{SU}(3)-symmetry breaking contribution is small enough. This is illustrated by the following consideration: In both the EFT approach and the state formalism, we were able to boil the calculation of the hadron masses down to the computation of eigenvalues of some matrix $N$ that transforms with
\begin{gather*}
N^\prime = D^{(\rho)}(A)\cdot N\cdot D^{(\rho)}(A)^\dagger\quad\text{for } A\in\text{SU}(3),
\end{gather*}
where $D^{(\rho)}$ is a multiplet of \text{SU}(3). This matrix $N$ was in both cases subject to $\text{SU}(3)\rightarrow\text{SU}(2)\times\text{U}(1)$ symmetry breaking and octet enhancement to first order in some small parameter $\epsilon$ governing the symmetry breaking:
\begin{gather*}
N = N_0 + \epsilon\cdot N^8_8 + \mathcal{O}(\epsilon^2),
\end{gather*}
where $N_0$ is a singlet and $N^8_8$ is the 8th component of an octet. Depending on the approach and hadron multiplet, the matrix $N$ was either the hadron mass matrix or its square. If $N$ is the hadron mass matrix, its square is given by:
\begin{gather*}
N^2 = N^2_0 + \epsilon\left(N_0\cdot N^8_8 + N^8_8\cdot N_0\right) + \mathcal{O}(\epsilon^2).
\end{gather*}
Now, the transformation of $N$ implies a transformation of $N^2$:
\begin{gather*}
N^{2\,\prime} \coloneqq N^\prime\cdot N^\prime = D^{(\rho)}(A)\cdot N^2\cdot D^{(\rho)}(A)^\dagger\quad\text{for } A\in\text{SU}(3)
\end{gather*}
using the unitary of $D^{(\rho)}$. Under this transformation, $N^2_0$ transforms as a singlet and $N_0\cdot N^8_8 + N^8_8\cdot N_0$ transforms as the 8th component of an octet. This means that the eigenvalues of $N^2$, in our example the hadron masses squared, also satisfy GMO relations to first order in symmetry breaking. Hence, the hadrons in a multiplet satisfy quadratic GMO relations to first order in symmetry breaking, if they satisfy linear GMO relations to first order in symmetry breaking and the symmetry breaking in this multiplet is small. Likewise, if $N$ is the hadron mass matrix squared, we can take the square root of the eigenvalues of $N$ and perform a Taylor expansion of the square root about the \text{SU}(3)-invariant term. Note that the symmetry breaking term has to be small for the Taylor expansion to converge. We then find that the square roots of the eigenvalues of $N$, i.e., the hadron masses, coincide with the eigenvalues of the matrix $\sqrt{N}$:
\begin{gather*}
\sqrt{N}\coloneqq \sqrt{n_0}\mathbb{1} + \frac{\epsilon}{2\sqrt{n_0}}\cdot N^8_8 + \mathcal{O}(\epsilon^2),
\end{gather*}
where $N_0 \eqqcolon n_0\mathbb{1}$. Since $\sqrt{N}$ is a singlet plus the 8th component of an octet to first order in $\epsilon$, its eigenvalues and, therefore, the hadron masses satisfy GMO relations to first order in symmetry breaking. Again, this means that the hadrons in a multiplet satisfy linear GMO relations to first order in symmetry breaking, if they satisfy quadratic GMO relations to first order in symmetry breaking and the symmetry breaking in this multiplet is small. With this, we arrive at the statement that both quadratic and linear GMO relations are equivalent approximations in a hadronic multiplet, if the \text{SU}(3)-symmetry breaking contribution in this multiplet is small enough. We will see in \autoref{sec:mass_testing} that this is the case for most known hadronic multiplets.\par
With these three problems regarding Feynman's distinction in mind, we arrive at a rather surprising result: Most of the time, it does not appear to be sensible to ask which kind of mass relation, linear or quadratic, applies to which kind of particle, baryons or mesons, as most hadrons satisfy both relations within the same range of validity. For those hadrons, both kinds of GMO relations are satisfied to first order in flavor symmetry breaking. In this sense, we can say that the EFT approach and the state formalism are not contradicting considerations, but rather complementary and equivalent to some extent. Still, we have to note that one multiplet poses an exception to this statement: For the pseudoscalar meson octet, the quadratic GMO relation is clearly favored over the linear GMO relation (cf. \autoref{sec:mass_testing}). The linear and quadratic GMO mass relation are not equivalent in this case, as the symmetry breaking contribution in this multiplet is roughly of the same size as the \text{SU}(3)-invariant contribution (cf. \autoref{sec:mass_testing}). Nevertheless, the pseudoscalar meson octet cannot be seen as a confirmation of Feynman's distinction, since the linear and quadratic GMO mass relation are not inequivalent because of the spin, but because of the size of the flavor symmetry breaking: For heavier meson octets like the vector meson octet, the equivalence of the linear and quadratic GMO mass relation is restored (cf. \autoref{sec:mass_testing}). To this end, one might say that Feynman's distinction is artificial as a distinction of baryons and mesons into linear and quadratic GMO mass relations is not observable\footnote{We also have to exclude the mass relations following from heavy quark symmetry from this statement (cf. \autoref{sec:heavy_quark} and \autoref{sec:mass_testing}). However, these mass relations are not GMO mass relations as defined in \autoref{sec:mass_matrix}.} for the hadronic multiplets aside from the pseudoscalar meson octet.\par
Even though a global distinction of baryons and mesons into linear and quadratic relations, respectively, cannot be observed, the question remains whether there are reasons to prefer one relation over the other for certain multiplets or for a certain class of multiplets that is not only characterized by spin. We will investigate this question, among other questions, in the following chapters, but in particular in \autoref{sec:mass_testing}.

\newpage
\chapter{Mathematical Derivation of Hadronic Mass Formulae}
\label{chap:GMO_formula}

The main aspects of the derivation of the GMO mass relations were outlined in \autoref{chap:hadron_masses}, but a lot of mostly mathematical details were only glanced over in favor of clarity and simplicity. In this chapter, we want to take a closer look at these aspects on a deeper mathematical level. To do this, we repeat the derivation of the GMO mass relations from \autoref{chap:hadron_masses} in \autoref{sec:GMO_formula}. This time, we start with $\mathcal{L}_\text{QCD}$ and then proceed by linking this Lagrangian to the hadron masses and calculating them in a perturbative treatment. The perturbative description of hadron masses leads us to the notion of multiplets. We find that the hadron masses in a multiplet $\sigma$ transform as $\sigma\otimes\bar{\sigma}$ under $\text{SU}(3)$-flavor transformations and are subject to $\text{SU}(3)\rightarrow\text{SU}(2)\times\text{U}(1)$ symmetry breaking and octet enhancement. These properties allow us to parametrize the hadron masses in a multiplet.
The mass parametrization gives rise to the GMO mass formula which implies the GMO mass relations. However, we only repeat the derivation of the GMO mass relations for the state formalism and not for the EFT approach. The mathematical statements we derive apply almost in the same manner to the EFT approach as they do to the state formalism, so an additional derivation for the EFT approach would be superficial. The differences between the EFT approach and the state formalism were already presented and discussed in \autoref{sec:EFT+H_Pert}.\par
In \autoref{sec:add_con}, we want to consider additional contributions to the GMO mass formula. In particular, we want to consider isospin symmetry breaking and electromagnetic contributions. In \autoref{sec:heavy_quark}, we will see how we can use heavy quark symmetry to link hadronic mass formulae of different multiplets that only differ by the exchange of a charm with a bottom quark.

\section{The Gell-Mann--Okubo Mass Formula}
\label{sec:GMO_formula}

As already stated, we want to repeat the derivation of the GMO mass relations by starting with $\mathcal{L}_\text{QCD}$:
\begin{gather*}
\mathcal{L}_{\text{QCD}}(\bar{q},q) = \sum\limits_{q\in\{\text{u,d,s}\}}\bar{q}\left(i\slashed{D} -  m_q\right)q + \mathcal{L}_\text{YM},
\end{gather*}
where we used the same notations as in \autoref{sec:Trafo_QCD}. Defining \text{SU}(3)-flavor transformations of the fields $q$ implies \text{SU}(3)-transformations of $\mathcal{L}_\text{QCD}$:
\begin{align*}
q&\xrightarrow{A\in\text{SU}(3)}q^\prime \coloneqq \sum\limits_{p\in\{\text{u,d,s}\}} A_{qp}p,\\
\bar{q}&\xrightarrow{A\in\text{SU}(3)}\bar{q}^\prime \coloneqq \sum\limits_{p\in\{\text{u,d,s}\}} A^\ast_{qp}\bar{p}\\
\Rightarrow\mathcal{L}_\text{QCD}&\xrightarrow{A\in\text{SU}(3)}\mathcal{L}^\prime_\text{QCD}\text{ with } \mathcal{L}^\prime_\text{QCD}(\bar{q}^\prime, q^\prime) \coloneqq \mathcal{L}_\text{QCD}(\bar{q},q).
\end{align*}
We have seen in \autoref{sec:Trafo_QCD} and \autoref{sec:EFT+H_Pert} that $\mathcal{L}_\text{QCD}$ transforms under \text{SU}(3)-flavor transformations as a singlet plus an octet:
\begin{gather*}
\mathcal{L}_\text{QCD} = \mathcal{L}^{0}_\text{QCD} + \varepsilon_3\cdot\mathcal{L}^{8}_{\text{QCD};\, 3} + \varepsilon_8\cdot\mathcal{L}^{8}_{\text{QCD};\, 8},
\end{gather*}
where $\varepsilon_3 \coloneqq m_\text{u} - m_\text{d}$ and $\varepsilon_8 \coloneqq \frac{m_\text{u} + m_\text{d} - 2m_\text{s}}{\sqrt{3}}$ and
\begin{align*}
\mathcal{L}^{0}_{\text{QCD}} &\coloneqq \sum\limits_{q\in\{\text{u,d,s}\}}\bar{q}\left(i\slashed{D} -  \frac{m_\text{u} + m_\text{d} + m_\text{s}}{3}\right)q + \mathcal{L}_\text{YM},\\
\mathcal{L}^{8}_{\text{QCD};\, k} &\coloneqq -\sum\limits_{p,q\in\{\text{u,d,s}\}}\frac{\bar{p}\left(\lambda_k\right)_{pq}q}{2}\quad\forall k\in\{1,\ldots, 8\}.
\end{align*}
In \autoref{sec:EFT+H_Pert}, we have seen that this transformation behavior of $\mathcal{L}_\text{QCD}$ also applies to the corresponding Hamilton operator $H_\text{QCD}$ of $\mathcal{L}_\text{QCD}$:
\begin{gather*}
H_\text{QCD} = H^{0}_\text{QCD} + \varepsilon_3\cdot H^{8}_{\text{QCD};\, 3} + \varepsilon_8\cdot H^{8}_{\text{QCD};\, 8},
\end{gather*}
where $H^{0}_\text{QCD}$ is a singlet of \text{SU}(3) and $H^{8}_{\text{QCD};\, 3}$ and $H^{8}_{\text{QCD};\, 8}$ are the 3rd and 8th component of an octet, respectively. For now, we set $m_\text{u} = m_\text{d}$. This implies that $\varepsilon_3 = 0$. In this case, $\text{SU}(2)\times\text{U}(1)\subset\text{SU}(3)$ is an exact global symmetry of $H_\text{QCD}$. We are going to investigate the case of $m_\text{u}\neq m_\text{d}$, i.e., the case of isospin symmetry breaking, in \autoref{sec:add_con}.\par
Like explained, we mainly want to consider the state formalism in this section. The state formalism makes the following three assumptions:
\begin{itemize}
 \item[1)] For every hadron $a$, there exists an eigenstate $\Ket{a}$ with $\Braket{a|a} = 1$ of the Hamilton operator $H_\text{QCD}$ from which the vacuum energy is already subtracted such that the mass $m_a$ of the hadron $a$ is given by
 \begin{gather*}
  m_a = \Bra{a}H_\text{QCD} \Ket{a}.
 \end{gather*}
 \item[2)] The subspace $V$ of the physical states which is spanned by the states $\Ket{a}$ from 1), i.e., {${V:= \overline{\text{Span}\left\{\Ket{a}\mid a\text{ hadron}\right\}}}$}, is a Hilbert space.
 \item[3)] There is a unitary representation $D^{(\rho)}:V\rightarrow V$ of \text{SU}(3) on $V$ such that the following equation holds for every $A\in\text{SU}(3)$:
 \begin{gather*}
  \Bra{a} D^{(\rho)}(A)^\dagger\circ H_\text{QCD}\left(\bar{q},\, q\right)\circ D^{(\rho)}(A) \Ket{b} = \Bra{a}H_\text{QCD}\left(\bar{q}^{\, \prime},\, q^\prime\right) \Ket{b}\ \ \forall\Ket{a},\Ket{b}\in V
 \end{gather*}
 where $q^\prime\coloneqq \sum\limits_{\tilde{q}\in\{\text{u,d,s}\}}A_{q \tilde{q}}\cdot \tilde{q}$.
\end{itemize}
Again, we understand $H_\text{QCD}$, $H^0_\text{QCD}$, $H^8_{\text{QCD};3}$, and $H^8_{\text{QCD};8}$ in assumption 3) and from now on as operators restricted and projected down to $V$, like explained in \autoref{sec:EFT+H_Pert}. A motivation of these assumptions is given in \autoref{app:stateform}. Before we can apply a perturbative treatment to $H_\text{QCD}$ to calculate the hadron masses, we need two more statements. The first one is a statement about the theory at hand and depends on the free parameters in $\mathcal{L}_\text{QCD}$. It is the assumption that the octet term $\varepsilon_8\cdot H^8_{\text{QCD};8}$ is actually just a small correction to the Hamilton operator $H_\text{QCD}$ such that we can treat it as a small perturbation. In Nature, we find this to be a sensible assumption for most hadron masses. The second statement is the assumption that the subtraction of vacuum energy does not spoil the transformation behavior of $H_\text{QCD}$. Like the assumptions 1)-3), I cannot prove this, as, in general, the subtraction of vacuum energy for an arbitrary QFT is not known. However, the subtraction of vacuum energy for a free QFT is accomplished by taking the normal ordered product of the Hamilton operator. The normal ordered product does not spoil the transformation behavior of the Hamilton operator, hence, the second statement is at least true for free QFTs. This makes it seem likely that the second statement is also true for other QFTs.\par
Now, we want to calculate the hadron masses in a perturbative treatment of $H_\text{QCD}$ with $\varepsilon_8\cdot H^8_{\text{QCD};8}$ as a small perturbation. Obviously, we need to know (degenerate) perturbation theory for this. At this point, it is instructive to quickly recapitulate perturbation theory.

\subsection*{Degenerate Perturbation Theory}

Let us consider linear operators $H(\varepsilon):V\rightarrow V$ on a Hilbert space $V$. Suppose that $H(\varepsilon)$ is given by:
\begin{gather*}
H(\varepsilon) = H_0 + \varepsilon\cdot\Delta H,
\end{gather*}
where $H_0$ is a self-adjoint linear operator on $V$, $\Delta H$ is a linear operator on $V$, and $\varepsilon$ is a ``small'' parameter, i.e., the contribution of $\varepsilon\cdot\Delta H$ to the eigenvectors and -values of $H(\varepsilon)$ is small and can be written as a Taylor series in $\varepsilon$. Now suppose that $\Ket{a(\varepsilon)}$ is such an eigenvector of $H(\varepsilon)$ with eigenvalue $m(\varepsilon)$ which can be expanded in a Taylor series:
\begin{align*}
\Ket{a(\varepsilon)} &= \ket{a^{(0)}} + \varepsilon\cdot \ket{a^{(1)}} + \mathcal{O}(\varepsilon^2),\\
m(\varepsilon) &= m^{(0)} + \varepsilon\cdot m^{(1)} + \mathcal{O}(\varepsilon^2).
\end{align*}
The eigenvector equation for $\Ket{a(\varepsilon)}$ then reads:
\begin{align*}
&H(\varepsilon)\Ket{a(\varepsilon)} = m(\varepsilon) \Ket{a(\varepsilon)}\\
\Rightarrow\ &H_0\ket{a^{(0)}} + \varepsilon\cdot (\Delta H\ket{a^{(0)}} + H_0\ket{a^{(1)}}) + \mathcal{O}(\varepsilon^2)\\
&= m^{(0)}\ket{a^{(0)}} + \varepsilon\cdot (m^{(1)}\ket{a^{(0)}} + m^{(0)}\ket{a^{(1)}}) + \mathcal{O}(\varepsilon^2).
\end{align*}
Note that the last equation holds true if we can commute $H_0$ and $\Delta H$ with the infinite sum of the Taylor expansion. This is, for instance, possible, if $H_0$ and $\Delta H$ are continuous or, equivalently, bounded which is always the case, if $V$ is finite-dimensional. As the coefficients of a Taylor series are unique, the equation above has to be satisfied to every order in $\varepsilon$. The equation for the zeroth order reads:
\begin{gather*}
H_0\ket{a^{(0)}} = m^{(0)}\ket{a^{(0)}}.
\end{gather*}
This equation states that $\ket{a^{(0)}}$ is an eigenvector of $H_0$ with eigenvalue $m^{(0)}$. The first order equation is given by:
\begin{gather}\label{eq:pert_e1}
\Delta H\ket{a^{(0)}} + H_0\ket{a^{(1)}} = m^{(1)}\ket{a^{(0)}} + m^{(0)}\ket{a^{(1)}}.
\end{gather}
We want to solve \autoref{eq:pert_e1} for $m^{(1)}$ now. This task is rather easy, if $\ket{a^{(0)}}$ is non-degenerate, i.e., if the eigenspace $V_{m^{(0)}}$ of $H_0$ with eigenvalue $m^{(0)}$ is just given by $\text{Span}\{\ket{a^{(0)}}\}$. In this case, we simply act with $\bra{a^{(0)}}$ on \autoref{eq:pert_e1}:
\begin{align*}
\braket{a^{(0)}|\Delta H|a^{(0)}} + \braket{a^{(0)}|H_0|a^{(1)}} &= \braket{a^{(0)}|\Delta H|a^{(0)}} + m^{(0)}\braket{a^{(0)}|a^{(1)}}\\
&= m^{(1)}\braket{a^{(0)}|a^{(0)}} + m^{(0)}\braket{a^{(0)}|a^{(1)}}\\
\Rightarrow m^{(1)} &= \frac{\braket{a^{(0)}|\Delta H|a^{(0)}}}{\braket{a^{(0)}|a^{(0)}}},
\end{align*}
where we used the hermicity of $H_0$. If we insert $\ket{\tilde{a}^{(0)}} = c\cdot \ket{a^{(0)}}$ ($c\in\mathbb{C}\backslash\{0\}$) instead of $\ket{a^{(0)}}$ into the expression for $m^{(1)}$, the equation does not change. This means that we can use any non-zero vector from the eigenspace $V_{m^{(0)}} = \text{Span}\{\ket{a^{(0)}}\}$ to calculate $m^{(1)}$.\par
However, if $\ket{a^{(0)}}$ is degenerate, which, in general, it is, the eigenspace $V_{m^{(0)}}$ is not just one-dimensional and it is not immediately obvious how to choose $\ket{a^{(0)}}$. In this case, let us suppose that there is an orthonormal basis $\{\ket{\alpha}\mid \alpha\in I\}$ of $V_{m^{(0)}}$ for some index set $I$.
We then have:
\begin{gather*}
\ket{a^{(0)}} = \sum\limits_{\alpha\in I} \braket{\alpha|a^{(0)}}\cdot \ket{\alpha}.
\end{gather*}
Let us now act with $\bra{\alpha}$ on \autoref{eq:pert_e1}:
\begin{gather*}
\braket{\alpha|\Delta H|a^{(0)}} + \braket{\alpha|H_0|a^{(1)}} = \braket{\alpha|\Delta H|a^{(0)}} + m^{(0)}\braket{\alpha|a^{(1)}} = m^{(1)}\braket{\alpha|a^{(0)}} + m^{(0)}\braket{\alpha|a^{(1)}}\\
\Rightarrow \braket{\alpha|\Delta H|a^{(0)}} = m^{(1)}\braket{\alpha|a^{(0)}}\\
\Rightarrow \sum\limits_{\alpha\in I} \braket{\alpha|\Delta H|a^{(0)}}\cdot \ket{\alpha} = m^{(1)}\sum\limits_{\alpha\in I} \braket{\alpha|a^{(0)}}\cdot \ket{\alpha} = m^{(1)}\ket{a^{(0)}}.
\end{gather*}
If we define the linear operator\footnote{Of course, this only makes sense, if the operator $\Delta H\vert_{\overline{V_{m^{(0)}}}}$ exists, which is not clear at this stage.} $\Delta H\vert_{\overline{V_{m^{(0)}}}}:\overline{V_{m^{(0)}}}\rightarrow\overline{V_{m^{(0)}}}, \ket{b}\mapsto \sum\limits_{\alpha\in I} \braket{\alpha|\Delta H|b}\cdot \ket{\alpha}$ on the closure $\overline{V_{m^{(0)}}}$ of $V_{m^{(0)}}$, the equation above states that $\ket{a^{(0)}}$ is an eigenvector of $\Delta H\vert_{\overline{V_{m^{(0)}}}}$ with eigenvalue $m^{(1)}$.\par
These considerations give us a method on how to calculate the eigenvalue $m(\varepsilon)$ to first order in $\varepsilon$: First, we have to calculate the eigenvalues and -spaces of $H_0$ to obtain the possible values for the zeroth order contribution $m^{(0)}$ of $m(\varepsilon)$. Next, we have to diagonalize the restriction $\Delta H\vert_{\overline{V_{m^{(0)}}}}$ of $\Delta H$ to (the closure of) each eigenspace $\overline{V_{m^{(0)}}}$ to determine the possible values for the first order contribution $m^{(1)}$ of $m(\varepsilon)$. Again, we are going to drop the restriction ``$\vert_{\overline{V_{m^{(0)}}}}$'' in further discussions. We should note at this point that we have to be cautious, if there are eigenstates of $H_0$ outside of $V_{m^{(0)}}$ that are ``almost degenerate'' to $V_{m^{(0)}}$, i.e, if the difference between the eigenvalues of these eigenstates and the eigenvalue $m^{(0)}$ of $V_{m^{(0)}}$ is in the order of or small in comparison to $m^{(1)}$ originating from the perturbation $\varepsilon\cdot\Delta H$. In this case, we can split up $H_0$ into $H^\prime_0$ and $\Delta H^\prime$ such that all ``nearly degenerate'' states of $H_0$ are now exactly degenerate for $H^\prime_0$ and $\Delta H^\prime$ is a small perturbation in the order of the difference between the eigenvalues. We then treat $H^\prime_0$ as the large contribution and $\Delta H^\prime + \varepsilon\cdot\Delta H$ as the perturbation.\\\par
We want to apply this now to $H_\text{QCD} = H^0_\text{QCD} + \varepsilon_8
\cdot H^8_{\text{QCD};8}$, where we treat $\varepsilon_8\cdot H^8_{\text{QCD};8}$ as a small perturbation. For this, we have to investigate the eigenvalues and -spaces of $H^0_\text{QCD}$.

\subsection*{Eigenspaces of $H^0_\text{\normalfont{QCD}}$}

$H^0_\text{QCD}$ is a singlet under \text{SU}(3). Together with assumption 3), we can rewrite this property as:
\begin{gather*}
D^{(\rho)}(A)\, H^0_\text{QCD}\, D^{(\rho)}(A)^\dagger = H^0_\text{QCD}\  \Leftrightarrow\  \left[D^{(\rho)}(A),\, H^0_\text{QCD}\right] = 0\quad\forall A\in\text{SU}(3).
\end{gather*}
Now let $V_m$ be the eigenspace of $H^0_\text{QCD}$ to the eigenvalue $m$ and $\ket{\alpha}$ an element of $V_m$, then:
\begin{gather*}
H^0_\text{QCD}D^{(\rho)}(A)\ket{\alpha} = D^{(\rho)}(A)H^0_\text{QCD}\ket{\alpha} = D^{(\rho)}(A)m\ket{\alpha} = m D^{(\rho)}(A)\ket{\alpha}\quad\forall A\in\text{SU}(3).
\end{gather*}
This equation states that $D^{(\rho)}(A)\ket{\alpha}$ is also an eigenvector of $H^0_\text{QCD}$ with eigenvalue $m$, if $\ket{\alpha}$ is an eigenvector with eigenvalue $m$, hence, $D^{(\rho)}(A)(V_m) = V_m\quad\forall A\in\text{SU}(3)$. Now consider the closure $\overline{V_m}$ of $V_m$. Let $\ket{\alpha}$ be an element of $\overline{V_m}$, then we can write $\ket{\alpha} = \lim_{n\to\infty} \ket{\alpha_n}$ with $(\ket{\alpha_n})_{n\in\mathbb{N}}$ being a Cauchy sequence in $V_m$. For every $A\in\text{SU}(3)$, $D^{(\rho)}(A)$ is unitary, hence, $D^{(\rho)}(A)$ is bounded and, therefore, continuous. With this, we obtain:
\begin{gather*}
D^{(\rho)}(A)\ket{\alpha} = D^{(\rho)}(A)(\lim_{n\to\infty} \ket{\alpha_n}) = \lim_{n\to\infty} D^{(\rho)}(A)\ket{\alpha_n}\quad\forall A\in\text{SU}(3).
\end{gather*}
This means that $D^{(\rho)}(A)\ket{\alpha}$ is an element of $\overline{V_m}$ for every $A\in\text{SU}(3)$ which implies that $D^{(\rho)}(A)(\overline{V_m}) = \overline{V_m}\ \forall A\in\text{SU}(3)$. As a closed subspace of the Hilbert space $V$, $\overline{V_m}$ itself is a Hilbert space. Therefore, the restriction $D^{(\rho)}\vert_{\overline{V_m}}:\text{SU}(3)\rightarrow\text{GL}(\overline{V_m})$, $A\mapsto D^{(\rho)}(A)\vert_{\overline{V_m}}$ of $D^{(\rho)}$ is a unitary representation of the compact Lie group \text{SU}(3) on the Hilbert space $\overline{V_m}$. Applying the Peter-Weyl theorem (cf. \cite{Knapp2001}), $\overline{V_m}$ can be written as the closure of the direct sum of irreducible, finite-dimensional, and orthogonal spaces $W_i$:
\begin{gather*}
\overline{V_m} = \overline{\bigoplus_i W_i}.
\end{gather*}
This means that if we choose an orthonormal basis for every $W_i$ (which is possible as they all are finite-dimensional) and combine all these bases into one set, we obtain a complete orthonormal basis for $\overline{V_m}$ only consisting of complete multiplets of \text{SU}(3).\par
Note that all multiplets in the complete orthonormal basis are complex representations, as $D^{(\rho)}$ is a complex representation. Also note that either $W_i\cap V_m = \{0\}$ or $W_i\subset V_m$ for every i, because if there is an element $\ket{\alpha}$ of $W_i\backslash\{0\}$ that is an eigenvector with eigenvalue $m$, then $W\coloneqq \text{Span}\{D^{(\rho)}(A)\ket{\alpha}\mid A\in\text{SU}(3)\}$ is a subset of $V_m$ and a closed invariant subspace of $W_i$. However, the only closed invariant subspaces of the irreducible representation $W_i$ are $\{0\}$ and $W_i$, hence, $W_i = W$ because of $0\neq \ket{\alpha}\in W$. If $V_m$ is closed, all $W_i$ are trivially contained in the eigenspace $V_m$ as, in this case, $\overline{V_m} = V_m$. $V_m$ is closed, if $H^0_\text{QCD}\vert_{\overline{V_m}}:\overline{V_m}\rightarrow H^0_\text{QCD}(\overline{V_m})$ is bounded (or, equivalently, continuous) or if $V_m$ is finite-dimensional.\par
Now suppose that there is a complete orthonormal eigenbasis $\{\ket{a}\mid a\in I\}$ of $H^0_\text{QCD}$ for some index set $I$, then every $\ket{a}$ is contained in an eigenspace $V_{m_a}$ for some eigenvalue $m_a$. Now choose an orthonormal basis for every eigenspace $\overline{V_{m_a}}$ such that each basis only consists of complete multiplets of \text{SU}(3). Combine all these bases into one set and denote it by $\{\ket{\alpha}\mid \alpha\in\tilde{I}\}$ for some index set $\tilde{I}$. All vectors of {${\{\ket{\alpha}\mid \alpha\in\tilde{I}\}}$} are pairwise orthogonal, as the spaces $\overline{V_{m_a}}$ are pairwise orthogonal because of the hermicity of $H^0_\text{QCD}$. Furthermore, $\ket{a}\in\overline{\text{Span}\{\ket{\alpha}\mid \alpha\in\tilde{I}\}}$ for every {${a\in I}$}, as {${\{\ket{\alpha}\mid \alpha\in\tilde{I}\}}$} contains an orthonormal basis for every $\overline{V_{m_a}}$ and {${\ket{a}\in V_{m_a}\subset\overline{V_{m_a}}}$}. This implies:
\begin{gather*}
V = \overline{\text{Span}\{\ket{a}\mid a\in I\}} \subset \overline{\text{Span}\left(\overline{\text{Span}\{\ket{\alpha}\mid \alpha\in\tilde{I}\}}\right)} = \overline{\text{Span}\{\ket{\alpha}\mid \alpha\in\tilde{I}\}}\subset V\\
\Rightarrow V = \overline{\text{Span}\{\ket{\alpha}\mid \alpha\in\tilde{I}\}}.
\end{gather*}
This means that $\{\ket{\alpha}\mid \alpha\in\tilde{I}\}$ is a complete orthonormal basis of $V$ that only consists of complete multiplets of \text{SU}(3) where every such multiplet is completely contained in one eigenspace $\overline{V_{m_a}}$.\\\par
Let us summarize what we found: There is a complete orthonormal basis for every closure $\overline{V_m}$ of an eigenspace $V_m$ of $H^0_\text{QCD}$ where this basis only consists of complete finite-dimensional multiplets of \text{SU}(3). Furthermore, if $H^0_\text{QCD}$ is diagonalizable, i.e., if there is a complete orthonormal eigenbasis of $H^0_\text{QCD}$, there is a complete orthonormal basis of $V$ only consisting of complete finite-dimensional multiplets of \text{SU}(3) where every such multiplet is completely contained in the closure of some eigenspace of $H^0_\text{QCD}$. Before we move on, let us make some remarks:
\begin{enumerate}
\item We only need the hermicity and the \text{SU}(3)-invariance of $H^0_\text{QCD}$ for the eigenspace analysis given above. Therefore, the very same statements apply to $M^{2;0}_\text{M}$ and $M^0_\text{B}$ from the EFT approach in \autoref{sec:EFT+H_Pert}. For both matrices $M^{2;0}_\text{M}$ and $M^0_\text{B}$, the underlying Hilbert spaces were assumed to be finite-dimensional, so all eigenspaces are closed and $M^{2;0}_\text{M}$ and $M^0_\text{B}$ are diagonalizable. This means that there is a complete orthonormal eigenbasis for each matrix $M^{2;0}_\text{M}$ and $M^0_\text{B}$ such that each basis consists of complete multiplets of \text{SU}(3) which each are completely contained in some eigenspace.
\item If there are multiple ``nearly degenerate'' eigenspaces of $H^0_\text{QCD}$ with respect to the perturbation $\varepsilon_8\cdot H^8_{\text{QCD};8}$, we need to diagonalize the perturbation on the closure of the direct sum of the ``nearly degenerate'' eigenspaces. One can show in similar fashion to the considerations above that there is a complete orthonormal basis for the closure of any direct sum of eigenspaces only consisting of complete finite-dimensional multiplets of \text{SU}(3).
\end{enumerate}
Now, we want to calculate the contribution of the perturbation $\varepsilon_8\cdot H^8_{\text{QCD};8}$ to the hadron masses in first order. In order to do so, we have to diagonalize $H^8_{\text{QCD};8}$ on (the closure of the direct sum of) some (``nearly degenerate'') eigenspace(s) of $H^0_\text{QCD}$. Let us call this space $W$. As we have just seen, there exists a complete orthonormal basis of $W$ only consisting of complete finite-dimensional multiplets of \text{SU}(3). In general, however, $H^8_{\text{QCD};8}$ is not diagonal in such a basis. It is possible that multiple multiplets in $W$ contribute to the same hadron mass. Nevertheless, we only want to consider the case where the diagonalization of $H^8_{\text{QCD};8}$ on $W$ is compatible with the multiplet structure of $W$, i.e, $H^8_{\text{QCD};8}$ on $W$ is diagonal in multiplets of $\text{SU}(3)$. In this case, there exists a complete orthonormal basis of $W$ only consisting of complete finite-dimensional multiplets of \text{SU}(3) such that this basis is also an eigenbasis of $H^8_{\text{QCD};8}$ on $W$. This assumption is a good approximation in multiple scenarios: Firstly, it is possible that the space $W$ at hand only consists of one multiplet. Then, $H^8_{\text{QCD};8}$ on $W$ is trivially diagonal in multiplets of \text{SU}(3). In Nature, this scenario applies when the mass difference between the average mass of the hadron multiplet at hand and any other hadron multiplet is larger than the mass contribution of the perturbation $\varepsilon_8\cdot H^8_{\text{QCD};8}$. Secondly, even if the space $W$ consists of several multiplets, the mixing of these multiplets in $H^8_{\text{QCD};8}$ might be negligibly small or suppressed such that we can diagonalize $H^8_{\text{QCD};8}$ on each multiplet separately. This suppression may originate from quantum numbers that are preserved under $H^8_{\text{QCD};8}$ like baryon number, total angular momentum, parity, and so on. For instance, the contribution of mesons to baryon masses should be zero, as mesons and baryons have different total angular momenta and, therefore, cannot mix.\par
The assumption that $H^8_{\text{QCD};8}$ on $W$ is diagonal in \text{SU}(3)-multiplets allows us to compute the hadron masses in each multiplet individually. Thus, we only have to parametrize the hadron masses in each of these multiplets in order to find the GMO mass formula. However, this task is still quite difficult and it is not clear yet how we can accomplish this. Therefore, we wish to rephrase the problem to clearly see what we have to do. For this, we consider the transformation behavior of the hadron masses in \text{SU}(3)-multiplets.

\subsection*{Flavor Transformation Behavior of Hadronic Mass Multiplets}

Applying the assumption that $H^8_{\text{QCD};8}$ on $W$ is diagonal in multiplets of \text{SU}(3), we can diagonalize $H^8_{\text{QCD};8}$ on each of these multiplets separately. Let us now pick out one of these finite-dimensional multiplets and call it $D^{(\sigma)}$. Then, the mass of the hadron $a$ in the multiplet $D^{(\sigma)}$ is given by:
\begin{gather*}
m_{a} = m^{(0)} + \varepsilon_8\cdot\Braket{a^{(\sigma)}|H^8_{\text{QCD};8}|a^{(\sigma)}} + \mathcal{O}\left(\varepsilon_8^2\right),
\end{gather*}
where $m^{(0)}$ is the eigenvalue of $H^0_\text{QCD}$ corresponding to the multiplet $D^{(\sigma)}$ and\linebreak $\left\{\Ket{a^{(\sigma)}}\mid a\text{ is a hadron in }D^{(\sigma)}\right\}$ is a basis of the multiplet $D^{(\sigma)}$ chosen such that it is an orthonormal eigenbasis of $H^8_{\text{QCD};8}$ restricted to the multiplet $D^{(\sigma)}$.\par
Although we have found this formula, we still need to parametrize\linebreak $\Braket{a^{(\sigma)}|H^8_{\text{QCD};8}|a^{(\sigma)}}$ to obtain the GMO mass formula. It is easier to solve this problem, if we rephrase it. Consider the matrix $m^{(\sigma)}$:
\begin{gather*}
m^{(\sigma)}_{ab} \coloneqq m^{(0)}\cdot\delta_{ab} + \varepsilon_8\cdot\Braket{a^{(\sigma)}|H^8_{\text{QCD};8}|b^{(\sigma)}}.
\end{gather*}
Since $\left\{\Ket{a^{(\sigma)}}\mid a\text{ is a hadron in }D^{(\sigma)}\right\}$ is an eigenbasis of $H^8_{\text{QCD};8}$ on $D^{(\sigma)}$, $m^{(\sigma)}$ is diagonal. This means that the eigenvalues of $m^{(\sigma)}$, i.e., its diagonal entries coincide with the hadron masses of the hadrons in $D^{(\sigma)}$ to first order. Furthermore, we can define the following transformation of $m^{(\sigma)}$ under $A\in\text{SU}(3)$:
\begin{gather*}
m^{(\sigma)\,\prime} \coloneqq D^{(\sigma)}(A)\cdot m^{(\sigma)}\cdot D^{(\sigma)}(A)^\dagger\\
\text{with } \left(D^{(\sigma)}(A)\right)_{ab} \coloneqq \Braket{a^{(\sigma)}|D^{(\rho)}(A)|b^{(\sigma)}}.
\end{gather*}
This makes $m^{(\sigma)}$ a hadronic mass matrix transforming under $\sigma\otimes\bar{\sigma}$. We investigated such mass matrices in \autoref{sec:mass_matrix}. $\left\{\Ket{a^{(\sigma)}}\mid a\text{ is a hadron in }D^{(\sigma)}\right\}$ is a complete orthonormal basis of the multiplet $D^{(\sigma)}$. We can extend such a basis into a complete orthonormal basis $\{\Ket{\alpha}\}$ of the entire Hilbert space $V$ $D^{(\rho)}$ is acting on. With this, we see that $m^{(\sigma)}$ transforms as a singlet plus the 8th component of an octet under $A\in\text{SU}(3)$:
\begin{align*}
m^{(\sigma)\,\prime}_{ab} &= \sum\limits_{c,d} \left(D^{(\sigma)}(A)\right)_{ac}\cdot m^{(\sigma)}_{cd}\cdot \left(D^{(\sigma)}(A)\right)^\ast_{bd}\\
&= \left[\sum\limits_{c}\left(D^{(\sigma)}(A)\right)^\ast_{ac} \Ket{c^{(\sigma)}}\right]^\dagger m^{(0)}\mathbb{1} + \varepsilon_8\cdot H^8_{\text{QCD};8}\left[\sum\limits_{d}\left(D^{(\sigma)}(A)\right)^\ast_{bd} \Ket{d^{(\sigma)}}\right]\\
&= \Braket{a^{(\sigma)}|D^{(\rho)}(A)\left(m^{(0)}\mathbb{1} + \varepsilon_8\cdot H^8_{\text{QCD};8}\right)D^{(\rho)}(A)^\dagger|b^{(\sigma)}}\\
&= \Braket{a^{(\sigma)}|m^{(0)}\mathbb{1} + \varepsilon_8\cdot H^{8\,\prime}_{\text{QCD};8}|b^{(\sigma)}},
\end{align*}
where we used
\begin{align*}
\sum\limits_{c}\left(D^{(\sigma)}(A)\right)^\ast_{ac} \Ket{c^{(\sigma)}} &= \sum\limits_{c}\Braket{a^{(\sigma)}|D^{(\rho)}(A)|c^{(\sigma)}}^\ast \Ket{c^{(\sigma)}}\\
&= \sum\limits_{c}\Braket{a^{(\sigma)}|D^{(\rho)}(A)|c^{(\sigma)}}^\dagger \Ket{c^{(\sigma)}}\\
&= \sum\limits_{c}\Ket{c^{(\sigma)}}\Braket{c^{(\sigma)}|D^{(\rho)}(A)^\dagger|a^{(\sigma)}}\\
&= \sum\limits_{\alpha}\left(\Ket{\alpha^{\color{white}(}}\Bra{\alpha^{\color{white}(}}\right)D^{(\rho)}(A)^\dagger\Ket{a^{(\sigma)}}\\
&= D^{(\rho)}(A)^\dagger\Ket{a^{(\sigma)}}.
\end{align*}
In the fourth line of the calculation, we used the unitarity of $D^{(\rho)}$ and the fact that $D^{(\sigma)}$ as a multiplet in $D^{(\rho)}$ is an invariant subspace of $D^{(\rho)}$. In the last line, we used the completeness of $\{\Ket{\alpha}\}$.\par
With this consideration, we find that the masses of the hadrons in the multiplet $D^{(\sigma)}$ are given to first order in flavor symmetry breaking by the eigenvalues of a mass matrix transforming under $\sigma\otimes\bar{\sigma}$ and decomposing into a singlet plus the 8th component of an octet under this transformation. As we have seen in \autoref{chap:hadron_masses}, this implies octet enhancement and $\text{SU}(3)\rightarrow\text{SU}(2)\times\text{U}(1)$ symmetry breaking. Therefore, we now want to parametrize and diagonalize mass matrices transforming under $\sigma\otimes\bar{\sigma}$ for arbitrary complex finite-dimensional multiplets $\sigma$ which are subject to octet enhancement and $\text{SU}(3)\rightarrow\text{SU}(2)\times\text{U}(1)$ symmetry breaking.

\subsection*{Multiplet Classification}

In \autoref{sec:mass_matrix}, we already studied mass matrices transforming under $\sigma\otimes\bar{\sigma}$ for a small selection of multiplets $\sigma$, namely for singlets, triplets, sextets, octets, and decuplets. In the course of this investigation, we observed some reoccurring features which we listed at the end of \autoref{sec:mass_matrix}. This list provides a guideline for the remainder of this section and tells us which aspects we need to investigate. We start by examining how many singlets and octets occur in the Clebsch-Gordan series of $\sigma\otimes\bar{\sigma}$ for arbitrary complex finite-dimensional multiplets $\sigma$. To answer this question, we derive the following lemma:
\begin{Lem}
Let $G$ be a compact Lie group, $\sigma$ a complex, finite-dimensional, and irreducible representation of $G$, and $\rho$ a complex finite-dimensional representation of $G$. Then $n_\sigma(\rho) = n_1(\bar{\sigma}\otimes\rho),$ where $1$ is the trivial representation of $G$ and $n_\mu(\nu)$ denotes the multiplicity of an irreducible representation $\mu$ of $G$ in the decomposition of a finite-dimensional representation $\nu$ of $G$ into irreducible representations, i.e., $n_\mu(\nu)$ denotes how often $\mu$ occurs in a decomposition of $\nu$ into irreducible representations.
\end{Lem}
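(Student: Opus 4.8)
The plan is to reduce both multiplicities to integrals of characters over $G$ and then observe that the two integrals literally coincide. Because $G$ is compact, the Peter--Weyl theorem already invoked above guarantees that every finite-dimensional representation decomposes into a direct sum of irreducibles and that the irreducible characters are orthonormal with respect to the normalized Haar measure $dg$. Consequently, for any irreducible $\mu$ and any finite-dimensional $\nu$, the multiplicity is recovered as the Hermitian inner product of characters,
\[
n_\mu(\nu) = \langle \chi_\mu, \chi_\nu\rangle = \int_G \overline{\chi_\mu(g)}\,\chi_\nu(g)\, dg,
\]
where $\chi_\mu$ denotes the character of $\mu$. Since $\chi_\mu$ is continuous and $G$ is compact, $\chi_\mu$ is bounded and the integral is well defined.

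Next I would evaluate each side of the claim with this formula. For the left-hand side, $\sigma$ is irreducible, so directly
\[
n_\sigma(\rho) = \int_G \overline{\chi_\sigma(g)}\,\chi_\rho(g)\, dg.
\]
For the right-hand side I use three elementary character identities: the trivial representation has constant character $\chi_1 \equiv 1$; the complex conjugate representation $\bar{\sigma}$ has character $\chi_{\bar{\sigma}} = \overline{\chi_\sigma}$ (its representing matrices are the entrywise conjugates of those of $\sigma$); and the character of a tensor product is the product of characters, $\chi_{\bar{\sigma}\otimes\rho} = \chi_{\bar{\sigma}}\,\chi_\rho$. Applying the multiplicity formula with $\mu = 1$ and $\nu = \bar{\sigma}\otimes\rho$ then yields
\[
n_1(\bar{\sigma}\otimes\rho) = \int_G \overline{\chi_1(g)}\,\chi_{\bar{\sigma}}(g)\,\chi_\rho(g)\, dg = \int_G \overline{\chi_\sigma(g)}\,\chi_\rho(g)\, dg.
\]
The two integrals are manifestly the same, so $n_\sigma(\rho) = n_1(\bar{\sigma}\otimes\rho)$, as claimed.

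An equivalent, character-free route would instead identify $n_\mu(\nu)$ with $\dim\mathrm{Hom}_G(\mu,\nu)$ via Schur's lemma and complete reducibility, and then use the tensor--hom adjunction together with $\bar{\sigma}\cong\sigma^{\ast}$ (valid because every finite-dimensional representation of a compact group is unitarizable) to obtain $\mathrm{Hom}_G(\sigma,\rho)\cong\mathrm{Hom}_G(1,\bar{\sigma}\otimes\rho)$; this is just a form of Frobenius reciprocity. I expect the real content to lie not in the algebra but in citing the correct analytic input: that $G$ compact admits a normalized bi-invariant Haar measure, that finite-dimensional characters are continuous and hence square-integrable, and that the Schur orthogonality relations hold. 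All of these follow from the Peter--Weyl theorem already used above, so the main obstacle is essentially bookkeeping --- in particular keeping the conjugation conventions consistent, i.e.\ ensuring that $\bar{\sigma}$ really denotes the representation whose character is $\overline{\chi_\sigma}$ rather than the contragredient, the two being isomorphic here precisely by unitarizability.
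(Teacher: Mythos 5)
Your proof is correct, but it takes a genuinely different route from the paper. You compute both multiplicities via the Schur orthogonality relations for characters, using $\chi_{\bar{\sigma}}=\overline{\chi_\sigma}$ and $\chi_{\bar{\sigma}\otimes\rho}=\chi_{\bar{\sigma}}\chi_\rho$, so that both sides reduce to the single integral $\int_G\overline{\chi_\sigma(g)}\chi_\rho(g)\,dg$. The paper instead argues directly at the level of vector spaces: it decomposes $\rho=\bigoplus_i\rho_i$ into irreducibles, realizes $\bar{\sigma}\otimes\rho_i$ as $\mathrm{Hom}\left(V^{(\sigma)},V^{(\rho_i)}\right)$ with the action $L\mapsto D^{(\rho_i)}(g)\circ L\circ D^{(\sigma)}(g)^\dagger$, observes that the $G$-invariant elements are exactly the intertwiners $L\circ D^{(\sigma)}(g)=D^{(\rho_i)}(g)\circ L$, and applies Schur's lemma twice to conclude that the space of invariants is one-dimensional when $\rho_i\cong\sigma$ and zero otherwise. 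This is precisely the ``character-free route'' you sketch in your closing paragraph (Frobenius reciprocity for $\mathrm{Hom}_G(\sigma,\rho)\cong\mathrm{Hom}_G(1,\bar{\sigma}\otimes\rho)$), so you have in effect anticipated the paper's argument as well. Your character computation is shorter and mechanical once orthogonality is granted, but it leans on a somewhat heavier analytic input (the full orthogonality relations rather than just Schur's lemma); the paper's version is more elementary and, as a side benefit, exhibits the invariant vectors concretely as scalar multiples of an intertwiner, which dovetails with the later identification of the unique singlet in $\sigma\otimes\bar{\sigma}$ with the multiples of the identity. Your caveat about distinguishing the conjugate representation from the contragredient is well taken and is resolved, as you say, by unitarizability.
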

\begin{proof}
Let $G$, $\sigma$, and $\rho$ be like above. As $\rho$ is a finite-dimensional representation of a compact Lie group, it decomposes completely into a direct sum of (finite-dimensional) irreducible representations of $G$ (cf. \cite{Knapp2001}):
\begin{gather*}
\rho = \bigoplus^n_{i=1}\rho_i,
\end{gather*}
where $\rho_i$ is an irreducible representation of $G$ for every $i\in\{1,\ldots,n\}$ and $n\in\mathbb{N}$. Then, the decomposition of $\bar{\sigma}\otimes\rho$ into irreducible representations can be obtained by decomposing $\bar{\sigma}\otimes\rho_i$ into irreducible representations for every $i$:
\begin{gather*}
\bar{\sigma}\otimes\rho = \bigoplus^n_{i=1}\bar{\sigma}\otimes\rho_i.
\end{gather*}
Denote the vector spaces $\sigma$ and $\rho_i$ are acting on with $V^{(\sigma)}$ and $V^{(\rho_i)}$, respectively. Then, $\bar{\sigma}\otimes\rho_i$ can be understood as:
\begin{gather*}
D^{(\bar{\sigma}\otimes\rho_i)}:G\rightarrow\text{GL}\left(\text{Hom}\left(V^{(\sigma)},V^{(\rho_i)}\right)\right),\\ g\mapsto\left(L\mapsto D^{(\bar{\sigma}\otimes\rho_i)}(g)(L) = D^{(\rho_i)}(g)\circ L\circ D^{(\sigma)}(g)^\dagger\right)
\end{gather*}
with Hom$\left(V^{(\sigma)}, V^{(\rho_i)}\right)\coloneqq\left\{L\mid L:V^{(\sigma)}\rightarrow V^{(\rho_i)}\text{ linear}\right\}$ being the vector space of linear maps from $V^{(\sigma)}$ to $V^{(\rho_i)}$. Suppose there is a $L\in\text{Hom}\left(V^{(\sigma)}, V^{(\rho_i)}\right)$ such that $D^{(\bar{\sigma}\otimes\rho_i)}(g)(L) = L\ \forall g\in G$, then:
\begin{gather*}
D^{(\rho_i)}(g)\circ L = L\circ D^{(\sigma)}(g)\ \forall g\in G,
\end{gather*}
since $\sigma$ can be chosen to be unitary, as it is finite-dimensional (cf. \cite{Knapp2001}). By Schur's lemma (cf. \cite{Knapp2001}), $\sigma$ and $\rho_i$ are either equivalent or $L = 0$. If $\sigma$ and $\rho_i$ are equivalent, we can assume $\sigma = \rho_i$ without loss of generality. Then:
\begin{gather*}
D^{(\sigma)}(g)\circ L = L\circ D^{(\sigma)}(g)\ \forall g\in G
\end{gather*}
Again, by Schur's lemma (cf. \cite{Knapp2001}), all maps $L$ that satisfy this equation are the multiples of the identity. This means that $\bar{\sigma}\otimes\rho_i$ contains exactly one trivial representation, if $\sigma$ and $\rho_i$ are equivalent, and no trivial representation, if they are not. Since $\bar{\sigma}\otimes\rho$ is the direct sum of all $\bar{\sigma}\otimes\rho_i$, the trivial representation $1$ occurs as often in the decomposition of $\bar{\sigma}\otimes\rho$ as $\sigma$ in the decomposition of $\rho$:
\begin{gather*}
n_\sigma(\rho) = n_1(\bar{\sigma}\otimes\rho).
\end{gather*}
\end{proof}
Using this lemma, we can easily calculate the number of singlets in the Clebsch-Gordan series of $\sigma\otimes\bar{\sigma}$ for arbitrary finite-dimensional multiplets $\sigma$ of \text{SU}(3):
\begin{gather*}
n_1(\sigma\otimes\bar{\sigma}) = n_1(\bar{\sigma}\otimes\sigma) = n_\sigma(\sigma) = 1.
\end{gather*}
This means that there is exactly one singlet in $\sigma\otimes\bar{\sigma}$. If we choose $\sigma$ to be unitary, which we always can and will do, the multiples of the identity form trivially a singlet and, therefore, are the only singlet in $\sigma\otimes\bar{\sigma}$.\par
Now, we want to determine the number of octets in $\sigma\otimes\bar{\sigma}$. Although this task is more laborious than calculating the number of singlets, we can again use the previous lemma:
\begin{gather*}
n_8(\sigma\otimes\bar{\sigma}) = n_1(\bar{8}\otimes(\sigma\otimes\bar{\sigma})) = n_1(\bar{\sigma}\otimes(\sigma\otimes\bar{8})) = n_\sigma(\sigma\otimes\bar{8}) = n_\sigma(\sigma\otimes 8),
\end{gather*}
where we used the fact that $8$ and $\bar{8}$ are equivalent in the last step. This equation allows us to calculate the number of multiplets $\sigma$ in $\sigma\otimes 8$ to obtain the number of octets in $\sigma\otimes\bar{\sigma}$. We will accomplish this by determining the Clebsch-Gordan series of $\sigma\otimes 8$ for arbitrary complex finite-dimensional multiplets $\sigma$. However, in order to do this, we need to classify the complex finite-dimensional multiplets of \text{SU}(3) first.\par
All complex finite-dimensional multiplets of \text{SU}(3) can be uniquely characterized by two non-negative integers $p$ and $q$ (cf. \cite{DeSwart1963}, \cite{Lichtenberg}, and review \textit{46. $\text{SU}(n)$ Multiplets and Young Diagrams} in \cite{PDG}). Given two numbers $p$ and $q$, we denote the corresponding multiplet by $D(p,q)$ following \cite{DeSwart1963}. In this notation, the complex conjugate representation of $D(p,q)$ is equivalent to $D(q,p)$, $\overline{D(p,q)} = D(q,p)$, and the dimension of $D(p,q)$ is given by $(p+1)(q+1)\frac{p+q+2}{2}$ (cf. {\cite{DeSwart1963}}). There is also a graphical representation of this classification. We can assign each multiplet $D(p,q)$ a Young tableau (cf. \cite{Lichtenberg}, \cite{sternberg1995}, and review \textit{46. $\text{SU}(n)$ Multiplets and Young Diagrams} in \cite{PDG}). A Young tableau is a diagram consisting of $n\in\mathbb{N}_0$ boxes which are ordered in rows and columns such that all rows are left-aligned. The number of boxes in one row cannot increase from top to bottom. For \text{SU}(3), a Young tableau has no more than three rows, but is not restricted in the number of columns. However, any column containing three boxes can be dropped (cf. \cite{Lichtenberg}). The Young tableau corresponding to $D(p,q)$ consists of $p+2q$ boxes arranged in the following way:
\begin{gather*}
D(p,q) = \ytableausetup{boxsize = 2.5em}\begin{ytableau} \, & \none[\dots] & \, & \, & \none[\dots] & \, \\ \, & \none[\dots] & \,\\ \none[1] & \none[\dots] & \none[q] & \none[q+1] & \none[\dots] & \none[q+p]\end{ytableau}
\end{gather*}
Of course, we can express the multiplets of \text{SU}(3) that already appeared in the course of this thesis, mainly the singlet, triplet, sextet, octet, and decuplet, in terms of $p$ and $q$ and Young tableaux:
\begin{gather*}
1 = D(0,0) = \emptyset\\
3 = D(1,0) = \ytableausetup{boxsize = 1em, centertableaux}\ydiagram{1},\quad \bar{3} = D(0,1) = \ydiagram{1,1}\\
6 = D(2,0) = \ydiagram{2},\quad \bar{6} = D(0,2) = \ydiagram{2,2}\\
8 = D(1,1) = \ydiagram{2,1}\\
10 = D(3,0) = \ydiagram{3},\quad \overline{10} = D(0,3) = \ydiagram{3,3}\\
27 = D(2,2) = \ydiagram{4,2}\\
64 = D(3,3) = \ydiagram{6,3},
\end{gather*}
where we denoted the Young tableau consisting of no boxes with $\emptyset$. Before we move on, we want to single out one important class of multiplets. We will see that the mass matrix decomposition for non-trivial multiplets with $p = 0$ or $q = 0$ is simpler than for the other non-trivial multiplets and all follow the same structure. Therefore, we want to denote them separately. We call a multiplet of the type $D(p,0)$ with $p\in\mathbb{N}$ or $D(0,q)$ with $q\in\mathbb{N}$ \textit{totally symmetric}.\par
One interesting aspect of Young tableaux is that they are very well suited for the calculation of the Clebsch-Gordan series of tensor product representations. One can derive a collection of rules for the Young tableaux -- readily available in literature (cf. \cite{Lichtenberg} and review \textit{46. $\text{SU}(n)$ Multiplets and Young Diagrams} in \cite{PDG}, for instance) -- on how to perform this calculation. We use these rules for the Young tableaux to determine the Clebsch-Gordan series of $\sigma\otimes 8$ for arbitrary complex finite-dimensional multiplets $\sigma$. For this, we represent $8$ by $D(1,1)$ and $\sigma$ by $D(p,q)$ with $p$ and $q$ being non-negative integers. As the actual calculations are rather lengthy, we only present the results here and display the full computation in \autoref{app:Young}. We find:\\\\
\textbf{Singlet}
\begin{flalign*}
D(0,0)\otimes D(1,1) &= D(1,1)&&
\end{flalign*}
\textbf{Totally symmetric multiplets}
\begin{flalign*}
D(1,0)\otimes D(1,1) &= D(1,0)\oplus D(0,2)\oplus D(2,1)&&\\
D(p,0)\otimes D(1,1) &= D(p-2,1)\oplus D(p,0)\oplus D(p-1,2)\oplus D(p+1,1),\quad p\geq 2&&\\
D(0,1)\otimes D(1,1) &= D(0,1)\oplus D(2,0)\oplus D(1,2)&&\\
D(0,q)\otimes D(1,1) &= D(1,q-2)\oplus D(0,q)\oplus D(2,q-1)\oplus D(1,q+1),\quad q\geq 2&&
\end{flalign*}
\textbf{Remaining multiplets}
\begin{flalign*}
D(1,1)\otimes D(1,1) &= D(0,0)\oplus D(1,1)\oplus D(1,1)\oplus D(3,0)\oplus D(0,3)\oplus D(2,2)&&\\
D(p,1)\otimes D(1,1) &= D(p-1,0)\oplus D(p-2,2)\oplus D(p,1)\oplus D(p,1)\oplus D(p+2,0)&&\\
&\ \ \ \oplus D(p-1,3)\oplus D(p+1,2),\quad p\geq 2&&\\
D(1,q)\otimes D(1,1) &= D(0,q-1)\oplus D(2,q-2)\oplus D(1,q)\oplus D(1,q)\oplus D(0,q+2)&&\\
&\ \ \ \oplus D(3,q-1)\oplus D(2,q+1),\quad q\geq 2&&\\
D(p,q)\otimes D(1,1) &= D(p-1,q-1)\oplus D(p+1,q-2)\oplus D(p-2,q+1)&&\\
&\ \ \ \oplus D(p,q)\oplus D(p,q)\oplus D(p+2,q-1)\oplus D(p-1,q+2)&&\\
&\ \ \ \oplus D(p+1,q+1),\quad p,q\geq 2&&
\end{flalign*}
Looking at the Clebsch-Gordan series, we see that the singlet $1$ does not occur in the decomposition of $1\otimes 8$, that every totally symmetric multiplet $\sigma$ appears exactly once in the decomposition of $\sigma\otimes 8$, and that every remaining multiplet $\sigma$ appears exactly twice in the decomposition of $\sigma\otimes 8$. Therefore, the Clebsch-Gordan series of $\sigma\otimes\bar{\sigma}$ contains no octet for $\sigma$ being the singlet, exactly one octet for totally symmetric multiplets $\sigma$, and exactly two octets for the remaining multiplets $\sigma$.\par
Although we have found the number of octets in every hadronic mass matrix, we still need to identify the $\text{SU}(2)\times\text{U}(1)$-symmetric elements of these octets and parametrize them. When discussing the structure of the octet in \autoref{sec:rel_within_multiplets}, we will observe that the octet contains exactly one $\text{SU}(2)\times\text{U}(1)$-singlet. This means that all $\text{SU}(2)\times\text{U}(1)$-symmetric elements in the octet are given by one non-zero element, aside from multiplication with scalars. Hence, every octet in the decomposition of the mass matrix gives rise to one contribution to the hadron masses. We now describe these contributions by parametrizing every octet appearing in the hadronic mass matrices transforming under $\sigma\otimes\bar{\sigma}$.

\subsection*{Mass matrix Parametrization \Romannum{1}: Constructing Octets}

For the parametrization of the octets in $\sigma\otimes\bar{\sigma}$, we need some mathematical background knowledge of Lie group theory. Consider a Lie group $G$, for instance \text{SU}(3). As $G$ is a differentiable manifold, we can assign every point $g\in G$ a tangent space $T_gG$, a real finite-dimensional vector space, and define the tangent bundle {${TG\coloneqq \bigcup_{g\in G}T_gG}$}. With this, we can define vector fields on $G$ as smooth sections of the tangent bundle $TG$. There exists a Lie bracket on the space of vector fields on $G$, called the commutator. Since $G$ is also a Lie group, one can define an important class of vector fields on $G$, the left-invariant vector fields (cf. \cite{Hamilton2017}). They span a real finite-dimensional subspace of the vector fields on $G$ and are closed under the commutator, i.e., the commutator of two left-invariant vector fields is also left-invariant. This turns the restriction of the commutator to the left-invariant vector fields into a Lie bracket on the left-invariant vector fields. Furthermore, the subspace of left-invariant vector fields is isomorphic to the tangent space $T_eG$ of $G$ at the neutral element $e\in G$. With this, we can define a Lie bracket on $T_eG$ by identifying the Lie bracket/commutator on $T_eG$ with the commutator on the left-invariant vector fields via the canonical isomorphism. We call the real finite-dimensional vector space $T_eG$ equipped with this Lie bracket/commutator the Lie algebra of $G$ (cf. \cite{Hamilton2017}). Usually, we denote the Lie algebra of a Lie group with lower case letters, often in Fraktur, for short. For instance, we denote the Lie algebra of $G$ with $\mathfrak{g}$ and the Lie algebra of \text{SU}(3) with $\mathfrak{su}(3)$.\par
If $G$ is a matrix group, i.e., if $G$ is a(n) (embedded) Lie subgroup of the Lie group $\text{GL}(V)$ for a finite-dimensional, real or complex vector space $V$, we can make an additional identification of the Lie algebra $\mathfrak{g}$: For matrix groups $G\subset\text{GL}(V)$, the tangent space $T_eG$ is isomorphic to a real subspace of {${\text{End}(V)\coloneqq\{A:V\rightarrow V\mid \text{A linear}\}}$}. Using the canonical isomorphism between this subspace and $T_eG$, we can equip this subspace with a Lie bracket. This Lie bracket then coincides with the standard commutator for matrices (cf. \cite{Hamilton2017}):
\begin{gather*}
[A,B]\equiv A\circ B - B\circ A\quad\forall A,B\in\text{End}(V).
\end{gather*}
Therefore, we also call this real subspace of $\text{End}(V)$ equipped with the commutator the Lie algebra $\mathfrak{g}$ of the Lie group $G$, if $G\subset\text{GL}(V)$ is a matrix group. In this sense, the Lie algebra $\mathfrak{gl}(V)$ of the Lie group $\text{GL}(V)$ and $\mathfrak{su}(3)$ of \text{SU}(3) are given by (cf. \cite{Hamilton2017}):
\begin{align*}
\mathfrak{gl}(V) &= \text{End}(V),\\
\mathfrak{su}(3) &= \{A\in\text{Mat}(3\times 3,\mathbb{C})\mid A^\dagger = -A\text{ and }\text{Tr}(A) = 0\}.
\end{align*}
We will use both identifications of the Lie algebra -- the Lie algebra as the tangent space at the neutral element and as a subspace of $\text{End}(V)$ -- in this work.\par
Let us now consider a multiplet $D^{(\sigma)}:\text{SU}(3)\rightarrow\text{GL}(V)$ of \text{SU}(3) on a finite-dimensional, complex vector space $V$.
By the definition of a representation, the map $\Phi^{(\sigma)}:\text{SU}(3)\times V\rightarrow V, (A,v)\mapsto D^{(\sigma)}(A)(v)$ is continuous. As $V$ is a finite-dimensional vector space, the continuity of $\Phi^{(\sigma)}$ implies the continuity of $D^{(\sigma)}$. $D^{(\sigma)}$ is a continuous group homomorphism between Lie groups, hence, $D^{(\sigma)}$ is also smooth and a Lie group homomorphism (cf. \cite{Hamilton2017}). \text{SU}(3) and $\text{GL}(V)$ are differentiable manifolds and $D^{(\sigma)}$ is smooth, therefore, we can define a differential $DD^{(\sigma)}\vert_A$ of $D^{(\sigma)}$ for every $A\in\text{SU}(3)$ that maps vectors from the tangent space $T_A\text{SU}(3)$ linearly into the tangent space $T_{D^{(\sigma)}(A)}\text{GL}(V)$. Hence, the differential $DD^{(\sigma)}\vert_\mathbb{1}$ at $\mathbb{1}\in\text{SU}(3)$ is a linear map between Lie algebras:
\begin{gather*}
DD^{(\sigma)}\vert_\mathbb{1}:\mathfrak{su}(3)\rightarrow\mathfrak{gl}(V).
\end{gather*}
As $D^{(\sigma)}$ is a Lie group homomorphism, $DD^{(\sigma)}\vert_\mathbb{1}$ is a Lie algebra homomorphism (cf. \cite{Hamilton2017}), i.e.:
\begin{gather*}
\left[DD^{(\sigma)}\vert_\mathbb{1}(X),DD^{(\sigma)}\vert_\mathbb{1}(Y)\right] = DD^{(\sigma)}\vert_\mathbb{1}\left(\left[X,Y\right]\right)\quad\forall X,Y\in\mathfrak{su}(3).
\end{gather*}
Furthermore, the image {${\text{Im}(D^{(\sigma)})\coloneqq\{D^{(\sigma)}(A)\mid A\in\text{SU}(3)\}}$} of $D^{(\sigma)}$ is a compact subgroup of $\text{GL}(V)$, since \text{SU}(3) is compact and $D^{(\sigma)}$ is a continuous group homomorphism. By Cartan's theorem (cf. \cite{Hamilton2017}), this makes $\text{Im}(D^{(\sigma)})$ an (embedded) Lie subgroup of $\text{GL}(V)$. The Lie algebra $\mathfrak{im}(D^{(\sigma)})$ of $\text{Im}(D^{(\sigma)})$ is, therefore, a real subspace of the Lie algebra $\mathfrak{gl}(V)$ of $\text{GL}(V)$. The differential $DD^{(\sigma)}\vert_\mathbb{1}$ only maps into $\mathfrak{im}(D^{(\sigma)})$, hence, $DD^{(\sigma)}\vert_\mathbb{1}$ is a Lie algebra homomorphism between $\mathfrak{su}(3)$ and $\mathfrak{im}(D^{(\sigma)})$:
\begin{gather*}
DD^{(\sigma)}\vert_\mathbb{1}:\mathfrak{su}(3)\rightarrow\mathfrak{im}(D^{(\sigma)}).
\end{gather*}
One can further show that the map $DD^{(\sigma)}\vert_\mathbb{1}$ is surjective on $\mathfrak{im}(D^{(\sigma)})$\footnote{One can see this in the following way: If $\sigma$ is the trivial representation, the statement is obvious. If $\sigma$ is not trivial, one can apply Sard's theorem to show that $DD^{(\sigma)}\vert_A$ has to be surjective on $T_{D^{(\sigma)}(A)}\text{Im}(D^{(\sigma)})$ for at least one $A\in\text{SU}(3)$. As $D^{(\sigma)}$ is a Lie group homomorphism, this has to be true for every $A\in\text{SU}(3)$, so in particular for $A = \mathbb{1}$.}\linebreak (cf. Ch. \Romannum{3}, § 3, no. 8, Proposition 28 in \cite{Bourbaki}). Now consider the kernel\linebreak {${K\coloneqq\left\{X\in\mathfrak{su}(3)\mid DD^{(\sigma)}\vert_\mathbb{1}(X)=0\right\}}$} of $DD^{(\sigma)}\vert_\mathbb{1}$. The kernel $K$ is a subspace of $\mathfrak{su}(3)$ satisfying:
\begin{gather*}
DD^{(\sigma)}\vert_\mathbb{1}\left(\left[X,Y\right]\right) = \left[DD^{(\sigma)}\vert_\mathbb{1}(X),0\right] = 0\quad\forall X\in\mathfrak{su}(3)\,\forall Y\in K\\
\Rightarrow \left[X,Y\right]\in K\quad\forall X\in\mathfrak{su}(3)\,\forall Y\in K.
\end{gather*}
This implies that $K$ is an ideal of $\mathfrak{su}(3)$. However, \text{SU}(3) is a simple Lie group, hence, its Lie algebra $\mathfrak{su}(3)$ is also simple. A simple Lie algebra has no proper ideals, i.e., every ideal is either $\{0\}$ or the entire Lie algebra. Therefore, we find $K=\{0\}$ or $K=\mathfrak{su}(3)$. If $K$ equates to $\mathfrak{su}(3)$, $DD^{(\sigma)}\vert_\mathbb{1}(X)$ is zero for every $X\in\mathfrak{su}(3)$. It is easy to see that this implies $DD^{(\sigma)}\vert_A = 0\forall A\in\text{SU}(3)$. Using the connectedness of \text{SU}(3) and the fact that $D^{(\sigma)}$ is a Lie group homomorphism, we thus obtain $D^{(\sigma)}(A)=\mathbb{1}\forall A\in\text{SU}(3)$. Therefore, every finite-dimensional complex multiplet $D^{(\sigma)}$ of \text{SU}(3) for which $K$ is equal to $\mathfrak{su}(3)$ is equivalent to the trivial representation of \text{SU}(3).\par
If $D^{(\sigma)}$ is not trivial or, equivalently, if the kernel $K$ is $\{0\}$, the Lie algebra homomorphism $DD^{(\sigma)}\vert_\mathbb{1}$ is bijective, thus, the Lie algebras $\mathfrak{su}(3)$ and $\mathfrak{im}(D^{(\sigma)})$ are isomorphic. This implies that $D^{(\sigma)}$ is a local diffeo- and isomorphism on $\text{Im}(D^{(\sigma)})$ and that the Lie groups \text{SU}(3) and $\text{Im}(D^{(\sigma)})$ are locally diffeo- and isomorphic.\par
From now on, we want to choose $D^{(\sigma)}$ such that it is unitary. As \text{SU}(3) is compact, we can always achieve this by a similarity transformation of $D^{(\sigma)}$ (cf. \cite{Knapp2001}). If $D^{(\sigma)}$ is not unitary, one has to transform the octets in $\sigma\otimes\bar{\sigma}$ we are going to identify with the help of such a similarity transformation. Define the representation $D^{(\sigma\otimes\bar{\sigma})}$:
\begin{gather*}
D^{(\sigma\otimes\bar{\sigma})}:\text{SU}(3)\rightarrow\text{GL}\left(\text{End}(V)\right),\\
D^{(\sigma\otimes\bar{\sigma})}(A)(X)\coloneqq D^{(\sigma)}(A)\circ X\circ D^{(\sigma)}(A)^\dagger = D^{(\sigma)}(A)\circ X\circ D^{(\sigma)}(A)^{-1}.
\end{gather*}
As $\mathfrak{gl}(V)=\text{End}(V)$ is the Lie algebra of $\text{GL}(V)$, $\mathfrak{im}(D^{(\sigma)})$ is a real subspace of $\text{End}(V)$. Let $\tilde{X}$ be an element of $\mathfrak{im}(D^{(\sigma)})$, then we can express it as $\tilde{X} = DD^{(\sigma)}\vert_\mathbb{1}(X)$ with $X\in\mathfrak{su}(3)$. If we insert $\tilde{X}$ in $D^{(\sigma\otimes\bar{\sigma})}$, we find for $A\in\text{SU}(3)$:
\begin{align*}
D^{(\sigma\otimes\bar{\sigma})}(A)(\tilde{X}) &= \left(D^{(\sigma\otimes\bar{\sigma})}(A)\circ DD^{(\sigma)}\vert_\mathbb{1}\right)(X)\\
&= D^{(\sigma)}(A)\circ DD^{(\sigma)}\vert_\mathbb{1}(X)\circ D^{(\sigma)}(A)^{-1}\\
&= D^{(\sigma)}(A)\circ \left(\left.\frac{d}{dt}\right\vert_{t=0}D^{(\sigma)}\left(\exp(tX)\right)\right)\circ D^{(\sigma)}(A^{-1})\\
&= \left.\frac{d}{dt}\right\vert_{t=0} D^{(\sigma)}(A)\circ D^{(\sigma)}\left(\exp(tX)\right)\circ D^{(\sigma)}(A^{-1})\\
&= \left.\frac{d}{dt}\right\vert_{t=0}D^{(\sigma)}\left(A\exp(tX)A^{-1}\right)\\
&= DD^{(\sigma)}\vert_\mathbb{1}\left(AXA^{-1}\right)\\
&= DD^{(\sigma)}\vert_\mathbb{1}\left(D^{(3\otimes\bar{3})}(A)(X)\right)\\
&= DD^{(\sigma)}\vert_\mathbb{1}\left(D^{(8)}(A)(X)\right)\\
&= \left(DD^{(\sigma)}\vert_\mathbb{1}\circ D^{(8)}(A)\right)(X),
\end{align*}
where $D^{(3\otimes\bar{3})}(A)(X)\coloneqq AXA^\dagger = AXA^{-1}$ is the tensor product representation of the fundamental representation $3$ of \text{SU}(3) with its conjugate representation $\bar{3}$. Before we discuss the last two lines of this calculation, we have to pay attention to the difference between real and complex vector spaces: The calculation shows that every element of $\mathfrak{im}(D^{(\sigma)})$ is mapped into $\mathfrak{im}(D^{(\sigma)})$ under $D^{(\sigma\otimes\bar{\sigma})}$. However, $\mathfrak{im}(D^{(\sigma)})$ is not an invariant subspace of $\sigma\otimes\bar{\sigma}$, as $\mathfrak{im}(D^{(\sigma)})$ is a real vector space and $\sigma\otimes\bar{\sigma}$ is a complex representation, i.e., operates on a complex vector space. Nevertheless, we can extend $\mathfrak{im}(D^{(\sigma)})$ to an invariant subspace of $\sigma\otimes\bar{\sigma}$ via complexification. If we take the complexification of $\mathfrak{im}(D^{(\sigma)})$ to be {${\mathfrak{im}(D^{(\sigma)}) + i\cdot\mathfrak{im}(D^{(\sigma)})}$}, the calculation above and the linearity of $D^{(\sigma\otimes\bar{\sigma})}(A)$ are sufficient to show that {${\mathfrak{im}(D^{(\sigma)}) + i\cdot\mathfrak{im}(D^{(\sigma)})}$} is an invariant subspace of $\sigma\otimes\bar{\sigma}$.\par
Similar remarks are important for the understanding of the last two lines: We have seen in \autoref{sec:mass_matrix} that $3\otimes\bar{3}$ as a real representation, i.e., $3\otimes\bar{3}$ only acting on Hermitian $(3\times 3)$-matrices, admits a non-trivial invariant subspace, the space of traceless, Hermitian $(3\times 3)$-matrices. $3\otimes\bar{3}$ restricted to this subspace is a real irreducible representation which we identified as the (real) octet $8$. If we complexify all the spaces, similar statements apply. $3\otimes\bar{3}$ as a complex representation, i.e., $3\otimes\bar{3}$ acting on all complex $(3\times 3)$-matrices, admits the space of traceless, complex $(3\times 3)$-matrices as an invariant subspace and $3\otimes\bar{3}$ restricted to this subspace is a complex multiplet, the (complex) octet $8$. We can easily see that the complexification of $\mathfrak{su}(3)$ is just the space of traceless, complex $(3\times 3)$-matrices. Thus, $3\otimes\bar{3}$ on $\mathfrak{su}(3)$ is just (equivalent to) the real octet and $3\otimes\bar{3}$ on the complexification of $\mathfrak{su}(3)$ is just the complex octet. This fact was used in the last two lines.\par
If we complexify the map $DD^{(\sigma)}\vert_\mathbb{1}$:
\begin{gather*}
DD^{(\sigma)}\vert_\mathbb{1}:\mathfrak{su}(3)+i\cdot\mathfrak{su}(3)\rightarrow\mathfrak{im}(D^{(\sigma)}) + i\cdot\mathfrak{im}(D^{(\sigma)}),\\
DD^{(\sigma)}\vert_\mathbb{1}(X+iY)\coloneqq DD^{(\sigma)}\vert_\mathbb{1}(X) + iDD^{(\sigma)}\vert_\mathbb{1}(Y)\quad\forall X,Y\in\mathfrak{su}(3),
\end{gather*}
we can now see that the prior calculation also holds true, if we take $\tilde{X}$ and $X$ to be elements of the complexified Lie algebras. With this, we find an interesting result: If $\sigma$ is trivial, the invariant subspace $\mathfrak{im}(D^{(\sigma)}) + i\cdot\mathfrak{im}(D^{(\sigma)})$ of $\sigma\otimes\bar{\sigma}$ is trivial as well. However, if $\sigma$ is non-trivial, the complexified map $DD^{(\sigma)}\vert_\mathbb{1}$ is an isomorphism, as the real map $DD^{(\sigma)}\vert_\mathbb{1}$ is an isomorphism. If we now compare the first and last line of the prior calculation, we see that $\sigma\otimes\bar{\sigma}$ restricted to $\mathfrak{im}(D^{(\sigma)}) + i\cdot\mathfrak{im}(D^{(\sigma)})$ is equivalent to the octet $8$, i.e, $D^{(8)}(A) = DD^{(\sigma)}\vert^{-1}_\mathbb{1}\circ D^{(\sigma\otimes\bar{\sigma})}(A)\circ DD^{(\sigma)}\vert_\mathbb{1}$ for every $A\in\text{SU}(3)$. This means that we can parametrize one octet in $\sigma\otimes\bar{\sigma}$ with a (complex) basis of $\mathfrak{im}(D^{(\sigma)}) + i\cdot\mathfrak{im}(D^{(\sigma)})$ for every non-trivial, complex, and finite-dimensional multiplet $\sigma$.\par
Let us now choose a (complex) basis for $\mathfrak{su}(3) + i\cdot\mathfrak{su}(3)$. As explained in \autoref{sec:mass_matrix}, the Gell-Mann matrices $\lambda_k$, $k\in\{1,\ldots, 8\}$, constitute a (real) basis of the space of traceless, Hermitian $(3\times 3)$-matrices, thus, they constitute a (complex) basis of its complexification which coincides with $\mathfrak{su}(3) + i\cdot\mathfrak{su}(3)$. We modify the Gell-Mann matrices to obtain the basis $\{F_k\mid k\in\{1,\ldots, 8\}\}$ of $\mathfrak{su}(3) + i\cdot\mathfrak{su}(3)$ following \cite{Lichtenberg}:
\begin{gather*}
F_k\coloneqq \frac{\lambda_k}{2}\quad\forall k\in\{1,\ldots,8\}.
\end{gather*}
This gives us a (complex) basis $\{F^{(\sigma\otimes\bar{\sigma})}_k\mid k\in\{1,\ldots,8\}\}$ of $\mathfrak{im}(D^{(\sigma)}) + i\cdot\mathfrak{im}(D^{(\sigma)})$ for non-trivial multiplets $\sigma$:
\begin{gather*}
F^{(\sigma\otimes\bar{\sigma})}_k\coloneqq DD^{(\sigma)}\vert_\mathbb{1}\left(F_k\right)\quad\forall k\in\{1,\ldots,8\}.
\end{gather*}
Note that $F^{(3\otimes\bar{3})}_k \equiv F_k$. The matrices $F^{(\sigma\otimes\bar{\sigma})}_k$ parametrize one octet in $\sigma\otimes\bar{\sigma}$ for non-trivial multiplets $\sigma$.\par
However, as we have seen prior in this section, $\sigma\otimes\bar{\sigma}$ contains, in general, two octets. The remaining octet can be constructed out of products of $F^{(\sigma\otimes\bar{\sigma})}_k$. Nevertheless, not any product of $F^{(\sigma\otimes\bar{\sigma})}_k$ is an element of an octet. In this regard, it is instructive to first consider products of $F_k$ to guide our intuition on how to form products of $F^{(\sigma\otimes\bar{\sigma})}_k$. Consider the product of two matrices $F_k$.
We can write any product of operators in terms of commutators and anticommutators:
\begin{gather*}
F_kF_l = \frac{1}{2}[F_k,F_l] + \frac{1}{2}\{F_k,F_l\}\quad\forall k,l\in\{1,\ldots,8\}
\end{gather*}
with $\{A,B\}\coloneqq AB + BA$ being the anticommutator. The commutator is closed in $\mathfrak{su}(3)$, therefore, we can write the commutator $[F_k,F_l]$ as a linear combination of $F_m$ with purely imaginary coefficients:
\begin{gather*}
[F_k,F_l] = \sum^{8}_{m=1} if_{klm}F_m\quad\forall k,l\in\{1,\ldots,8\},
\end{gather*}
where $f_{klm}$ are real constants defining the commutation relations of the $F_k$-matrices. $f_{klm}$ are called structure constants. Using the orthogonality of the $F_k$-matrices under the Hermitian form $\text{Tr}\left(A^\dagger B\right)$ (cf. \cite{Lichtenberg}) and the cyclicity of the trace, one can easily see that the structure constants $f_{klm}$ are totally antisymmetric in $k,\ l,\text{ and }m$:
\begin{gather*}
\text{Tr}(F_kF_l) = \frac{\delta_{kl}}{2}\quad\Rightarrow f_{klm} = -2i\text{Tr}\left([F_k,F_l]F_m\right)\quad\forall k,l,m\in\{1,\ldots,8\}.
\end{gather*}
The anticommutator $\{F_k,F_l\}$ is just an Hermitian $(3\times 3)$-matrix, hence, can be written as a real linear combination of $\mathbb{1}$ and $F_m$:
\begin{gather*}
\{F_k,F_l\} = a_{kl}\mathbb{1} + \sum^{8}_{m=1} d_{klm}F_m\quad\forall k,l\in\{1,\ldots, 8\},
\end{gather*}
where $a_{kl}$ and $d_{klm}$ are real constants. Again, using the orthogonality of the $F_k$-matrices and the fact that they are traceless, we can determine $a_{kl}$ and $d_{klm}$:
\begin{gather*}
3a_{kl} = \text{Tr}\left(\{F_k,F_l\}\right) = \delta_{kl}\Rightarrow a_{kl} = \frac{\delta_{kl}}{3}\quad\forall k,l\in\{1,\ldots, 8\},\\
d_{klm} = 2\text{Tr}\left(\{F_k,F_l\}F_m\right)\quad\forall k,l\in\{1,\ldots, 8\}.
\end{gather*}
Similar to $f_{klm}$, we can directly follow from this that $d_{klm}$ is totally symmetric in $k,\ l,\text{ and }m$ by using the cyclicity of the trace.\par
The analysis of products of $F_k$ has provided us with three tensors: $\delta_{kl} = 3a_{kl}$, $f_{klm}$, and $d_{klm}$. We can now try to contract these tensors with products of $F^{(\sigma\otimes\bar{\sigma})}_k$ to see what kind of objects we can form. First, consider the contraction with $\delta_{kl}$:
\begin{gather*}
C^{(\sigma\otimes\bar{\sigma})}\coloneqq \sum^{8}_{k,l = 1}\delta_{kl}F^{(\sigma\otimes\bar{\sigma})}_kF^{(\sigma\otimes\bar{\sigma})}_l.
\end{gather*}
$C^{(\sigma\otimes\bar{\sigma})}$ commutes with $F^{(\sigma\otimes\bar{\sigma})}_k$ for every $k\in\{1,\ldots,8\}$ (cf. \autoref{app:F-and_D-symbols}). Therefore, $C^{(\sigma\otimes\bar{\sigma})}$ is a Casimir operator. It is known as the quadratic Casimir operator. Casimir operators are constant on multiplets, i.e., $C^{(\sigma\otimes\bar{\sigma})} = c^{(\sigma\otimes\bar{\sigma})}\cdot\mathbb{1}$ for multiplets $\sigma$ (cf. Schur's lemma for Lie algebra representations in \cite{Hall2015}). This means that $C^{(\sigma\otimes\bar{\sigma})}$ spans the singlet in $\sigma\otimes\bar{\sigma}$ for non-trivial multiplets $\sigma$ (cf. \autoref{app:F-and_D-symbols}).\par
Now consider the contraction with $f_{klm}$:
\begin{gather*}
\tilde{F}^{(\sigma\otimes\bar{\sigma})}_k\coloneqq \sum^{8}_{l,m=1}f_{klm}F^{(\sigma\otimes\bar{\sigma})}_lF^{(\sigma\otimes\bar{\sigma})}_m\quad\forall k\in\{1,\ldots,8\}.
\end{gather*}
As the tensor $f_{klm}$ is totally antisymmetric, we have:
\begin{gather*}
\tilde{F}^{(\sigma\otimes\bar{\sigma})}_k = \sum^{8}_{l,m=1}\frac{f_{klm}}{2}\left[F^{(\sigma\otimes\bar{\sigma})}_l, F^{(\sigma\otimes\bar{\sigma})}_m\right]\quad\forall k\in\{1,\ldots,8\}.
\end{gather*}
The map $DD^{(\sigma)}\vert_\mathbb{1}$ that connects $F_k$ and $F^{(\sigma\otimes\bar{\sigma})}_k$ is (the complexification of) a Lie algebra homomorphism, hence:
\begin{gather*}
\left[F^{(\sigma\otimes\bar{\sigma})}_l, F^{(\sigma\otimes\bar{\sigma})}_m\right] = \sum^8_{n=1}if_{lmn}F^{(\sigma\otimes\bar{\sigma})}_n\quad\forall l,m\in\{1,\ldots,8\}.
\end{gather*}
We obtain:
\begin{gather*}
\tilde{F}^{(\sigma\otimes\bar{\sigma})}_k = \sum^{8}_{l,m,n=1}\frac{if_{klm}f_{lmn}}{2}F^{(\sigma\otimes\bar{\sigma})}_n\quad\forall k\in\{1,\ldots,8\}.
\end{gather*}
We can see that $\tilde{F}^{(\sigma\otimes\bar{\sigma})}_k$ is just a linear combination of $F^{(\sigma\otimes\bar{\sigma})}_k$. It is easy to show that the matrices $\tilde{F}^{(\sigma\otimes\bar{\sigma})}_k$ are all linearly independent, if $\sigma$ is non-trivial (cf. \autoref{app:F-and_D-symbols}). As the matrices $F^{(\sigma\otimes\bar{\sigma})}_k$ span an octet in $\sigma\otimes\bar{\sigma}$ for non-trivial multiplets $\sigma$, the matrices $\tilde{F}^{(\sigma\otimes\bar{\sigma})}_k$ form the very same octet.\par
This leaves us with the contraction with the last tensor $d_{klm}$:
\begin{gather*}
D^{(\sigma\otimes\bar{\sigma})}_k\coloneqq \frac{2}{3}\sum^{8}_{l,m=1}d_{klm}F^{(\sigma\otimes\bar{\sigma})}_lF^{(\sigma\otimes\bar{\sigma})}_m\quad\forall k\in\{1,\ldots,8\}.
\end{gather*}
As the products of $F^{(\sigma\otimes\bar{\sigma})}_k$ contracted with the tensors $\delta_{kl}$ and $f_{klm}$ give rise to multiplets in $\sigma\otimes\bar{\sigma}$, it seems reasonable to conjecture that the same applies to $d_{klm}$, i.e, the matrices $D^{(\sigma\otimes\bar{\sigma})}_k$ form a multiplet in $\sigma\otimes\bar{\sigma}$. As we will see, this is the case for non-trivial multiplets $\sigma$. However, the full proof of this requires some formulae whose derivations involve lengthy calculations. We will skip a part of these calculations in this section, but display them in \autoref{app:F-and_D-symbols}. Partially, these computations can be found in \cite{Lichtenberg}.\par
First, we want to consider the commutator of $F^{(\sigma\otimes\bar{\sigma})}_k$ and $D^{(\sigma\otimes\bar{\sigma})}_l$. To compute the commutator of these matrices, we need the following formula (cf. \autoref{app:F-and_D-symbols} and \cite{Lichtenberg}):
\begin{gather*}
\sum^8_{m=1}f_{klm}d_{nrm} = -\sum^8_{m=1}\left(f_{nlm}d_{krm} +f_{rlm}d_{nkm}\right)\quad\forall k,l,n,r\in\{1,\ldots,8\}.
\end{gather*}
Using this formula, we can calculate (cf. \autoref{app:F-and_D-symbols} and \cite{Lichtenberg}):
\begin{gather}
\left[F^{(\sigma\otimes\bar{\sigma})}_k,D^{(\sigma\otimes\bar{\sigma})}_l\right] = \sum^8_{m=1}if_{klm}D^{(\sigma\otimes\bar{\sigma})}_m\quad\forall k,l\in\{1,\ldots,8\}.\label{eq:D-F_com}
\end{gather}
Now define the real vector space $V_D\coloneqq\left\{\sum^8_{k=1}ia_kD^{(\sigma\otimes\bar{\sigma})}_k\mid a_i\in\mathbb{R}\, \forall i\in\{1,\ldots,8\}\right\}$ and consider the (unique) $\mathbb{R}$-linear map $f$ defined by:
\begin{gather*}
f:\mathfrak{su}(3)\rightarrow V_D,\\
f(-iF_k)\coloneqq -iD^{(\sigma\otimes\bar{\sigma})}_k\quad\forall k\in\{1,\ldots,8\}.
\end{gather*}
Consider the kernel $K_D\coloneqq\text{ker}(f)$ of $f$ and let $X\in\mathfrak{su}(3)$ and $Y\in K_D$. We then find:
\begin{align*}
f\left(\left[-iF_k, -iF_l\right]\right) &= f\left(\sum^8_{m=1}f_{klm}(-iF_m)\right) = -\sum^8_{m=1}if_{klm}D^{(\sigma\otimes\bar{\sigma})}_m\\
&= -\left[F^{(\sigma\otimes\bar{\sigma})}_k,D^{(\sigma\otimes\bar{\sigma})}_l\right]\\
&= \left[DD^{(\sigma)}\vert_\mathbb{1}\left(-iF_k\right),f\left(-iF_l\right)\right]\quad\forall k,l\in\{1,\ldots,8\}\\
\Rightarrow f([X,Y]) &= [DD^{(\sigma)}\vert_\mathbb{1}(X),f(Y)] = [DD^{(\sigma)}\vert_\mathbb{1}(X),0] = 0\\
&\Rightarrow [X,Y]\in K_D.
\end{align*}
This means that the kernel $K_D$ is an ideal of $\mathfrak{su}(3)$. However, $\mathfrak{su}(3)$ is simple, therefore, $K_D$ is either $\{0\}$ or $\mathfrak{su}(3)$. Hence, $V_D$ is either isomorphic to $\mathfrak{su}(3)$ or $\{0\}$. Thus, all matrices $D^{(\sigma\otimes\bar{\sigma})}_k$ are either linearly independent 
in the sense of a real vector space or zero. Moreover, we can show for unitary multiplets $\sigma$ that all matrices $D^{(\sigma\otimes\bar{\sigma})}_k$ are also linearly independent in the sense of a complex vector space, if they all are linearly independent in the sense of a real vector space: First, note that the Lie group $\text{Im}(D^{(\sigma)})$ is a(n) (embedded) Lie subgroup of the Lie group $\text{U}(V)\coloneqq{\{A:V\rightarrow V\mid A\text{ linear and unitary}\}}$, if $D^{(\sigma)}$ is a unitary representation. Therefore:
\begin{gather*}
\mathfrak{im}(D^{(\sigma)})\subset\mathfrak{u}(V) = \{A:V\rightarrow V\mid A^\dagger = -A\}.
\end{gather*}
Hence, the matrices $iF^{(\sigma\otimes\bar{\sigma})}_k$ are antihermitian:
\begin{gather*}
iF_k\in\mathfrak{su}(3)\Rightarrow iF^{(\sigma\otimes\bar{\sigma})}_k = DD^{(\sigma)}\vert_\mathbb{1}(iF_k)\in\mathfrak{im}(D^{(\sigma)})\subset\mathfrak{u}(V)\quad\forall k\in\{1,\ldots,8\}.
\end{gather*}
Thus, the matrices $F^{(\sigma\otimes\bar{\sigma})}_k$ and $D^{(\sigma\otimes\bar{\sigma})}_k$ are Hermitian. Now let $a_k$ and $b_k$ be real numbers and define $c_k\coloneqq a_k +ib_k$ for every $k\in\{1,\ldots,8\}$. Choose $a_k$ and $b_k$ such that:
\begin{gather*}
\sum^8_{k=1}c_kD^{(\sigma\otimes\bar{\sigma})}_k = \sum^8_{k=1}a_kD^{(\sigma\otimes\bar{\sigma})}_k + i\sum^8_{k=1}b_kD^{(\sigma\otimes\bar{\sigma})}_k = 0\\
\Rightarrow\left(\sum^8_{k=1}c_kD^{(\sigma\otimes\bar{\sigma})}_k\right)^\dagger = \sum^8_{k=1}a_kD^{(\sigma\otimes\bar{\sigma})}_k - i\sum^8_{k=1}b_kD^{(\sigma\otimes\bar{\sigma})}_k = 0\\
\Rightarrow \sum^8_{k=1}a_kD^{(\sigma\otimes\bar{\sigma})}_k = 0\ \wedge\ \sum^8_{k=1}b_kD^{(\sigma\otimes\bar{\sigma})}_k = 0.
\end{gather*}
Since the matrices $D^{(\sigma\otimes\bar{\sigma})}_k$ are linearly independent in the sense of a real vector space, the coefficients $a_k$ and $b_k$ have to be all zero. This makes the matrices $D^{(\sigma\otimes\bar{\sigma})}_k$ linearly independent in the sense of a complex vector space.\par
Prior in this section, we have seen that the matrices $F_k$ transform as an octet under $A\in\text{SU}(3)$:
\begin{gather*}
\sum^{8}_{l=1}\left(D^{(8)}(A)\right)_{kl}F_l \coloneqq D^{(8)}(A)(F_k) = AF_kA^\dagger\quad\forall k\in\{1,\ldots,8\},
\end{gather*}
where $\left(D^{(8)}(A)\right)_{kl}$ are the coefficients mediating the octet transformation of $F_k$. We show in \autoref{app:F-and_D-symbols} that the same coefficients mediate the transformation of $D^{(\sigma\otimes\bar{\sigma})}_k$ under $A\in\text{SU}(3)$:
\begin{gather*}
D^{(\sigma\otimes\bar{\sigma})}(A)\left(D^{(\sigma\otimes\bar{\sigma})}_k\right) = \sum^{8}_{l=1}\left(D^{(8)}(A)\right)_{kl}D^{(\sigma\otimes\bar{\sigma})}_l\quad\forall k\in\{1,\ldots,8\}.
\end{gather*}
Thus, the matrices $D^{(\sigma\otimes\bar{\sigma})}_k$ form an invariant subspace in $\sigma\otimes\bar{\sigma}$. Furthermore, we can see that they either span an octet or are all zero, as the matrices $D^{(\sigma\otimes\bar{\sigma})}_k$ are either linearly independent or zero and as their transformation coefficients correspond to the coefficients of an octet.

\subsection*{Mass matrix Parametrization \Romannum{2}: Multiplet Classification using Octets}

To summarize our results so far, we have found that both the $F^{(\sigma\otimes\bar{\sigma})}_k$- and $D^{(\sigma\otimes\bar{\sigma})}_k$-matrices span an invariant subspace of $\sigma\otimes\bar{\sigma}$ for \text{SU}(3)-multiplets $\sigma$ that either is $\{0\}$ or an octet. If $\sigma$ is a non-trivial multiplet, the matrices $F^{(\sigma\otimes\bar{\sigma})}_k$ span an octet. What we do not know so far is for which cases the matrices $D^{(\sigma\otimes\bar{\sigma})}_k$ span an octet and, if they do, when this octet is different from the octet of the matrices $F^{(\sigma\otimes\bar{\sigma})}_k$. Therefore, in the next step, we want to proof the following classification for \text{SU}(3)-multiplets $\sigma$:
\begin{align*}
\text{1. }&F^{(\sigma\otimes\bar{\sigma})}_k=0\,\forall k\in\{1,\ldots,8\}\Leftrightarrow D^{(\sigma\otimes\bar{\sigma})}_k=0\,\forall k\in\{1,\ldots,8\}\Leftrightarrow \sigma\text{ trivial}\\
\text{2. }&F^{(\sigma\otimes\bar{\sigma})}_k = c\cdot D^{(\sigma\otimes\bar{\sigma})}_k\neq0\,\forall k\in\{1,\ldots,8\}\text{ for }c\in\mathbb{R}\backslash\{0\}\Leftrightarrow \sigma\text{ totally symmetric}\\
\text{3. }&F^{(\sigma\otimes\bar{\sigma})}_k\text{ and }D^{(\sigma\otimes\bar{\sigma})}_k\text{ are all linearly independent}\Leftrightarrow \sigma\text{ neither trivial nor tot. sym.}
\end{align*}
To show this classification, we need to know how the matrices $F^{(\sigma\otimes\bar{\sigma})}_k$ and $D^{(\sigma\otimes\bar{\sigma})}_k$ act on $V$ which is the vector space the multiplet $D^{(\sigma)}$ acts on. For this, it is helpful to choose an orthonormal basis of $V$ such that the greatest possible number of matrices $F^{(\sigma\otimes\bar{\sigma})}_k$ is diagonal in this basis. To find this basis, we need to introduce the mathematical concept of weights. We will only provide a short overview over the most important results for weights here. These results are mostly taken from \cite{Lichtenberg} where a more in-depth discussion of weights can be found. Let us start by determining the greatest possible number of simultaneously diagonalizable matrices $F^{(\sigma\otimes\bar{\sigma})}_k$. Clearly, the condition that two Hermitian matrices can be diagonalized in the same basis is equivalent to requirement that their commutator has to be zero. Thus, we are looking for the maximal number of mutually commuting matrices $F^{(\sigma\otimes\bar{\sigma})}_k$. As the map $DD^{(\sigma)}\vert_\mathbb{1}$ connecting the matrices $F_k$ and $F^{(\sigma\otimes\bar{\sigma})}_k$ is a Lie algebra isomorphism for non-trivial multiplets $\sigma$, two matrices $F^{(\sigma\otimes\bar{\sigma})}_k$ and $F^{(\sigma\otimes\bar{\sigma})}_l$ commute for non-trivial multiplets $\sigma$ if and only if the matrices $F_k$ and $F_l$ commute. Therefore, the maximal number of commuting matrices $F^{(\sigma\otimes\bar{\sigma})}_k$ is the same as the maximal number of commuting matrices $F_k$. The maximal number of mutually commuting basis elements in a Lie algebra of a Lie group is called the \textit{rank} of that group. The rank of \text{SU}(3) is 2 meaning that at most two matrices $F_k$ and, in return, $F^{(\sigma\otimes\bar{\sigma})}_k$ can mutually commute. As we already saw in \autoref{sec:mass_matrix}, the Gell-Mann matrices $\lambda_3$ and $\lambda_8$ are diagonal and, therefore, commute. This implies that also the matrices $F_3$ and $F_8$ and likewise the matrices $F^{(\sigma\otimes\bar{\sigma})}_3$ and $F^{(\sigma\otimes\bar{\sigma})}_8$ commute. With this, we have found the set of matrices we need to diagonalize simultaneously. Now consider a non-zero vector $v$ in $V$ such that $v$ is both an eigenvector of $F^{(\sigma\otimes\bar{\sigma})}_3$ and of $F^{(\sigma\otimes\bar{\sigma})}_8$, i.e.:
\begin{gather*}
F^{(\sigma\otimes\bar{\sigma})}_3(v) = w^{(\sigma\otimes\bar{\sigma})}_3\cdot v\quad\text{and}\quad F^{(\sigma\otimes\bar{\sigma})}_8(v) = w^{(\sigma\otimes\bar{\sigma})}_8\cdot v,
\end{gather*}
where $w^{(\sigma\otimes\bar{\sigma})}_3$ and $w^{(\sigma\otimes\bar{\sigma})}_8$ are the corresponding eigenvalues of $F^{(\sigma\otimes\bar{\sigma})}_3$ and $F^{(\sigma\otimes\bar{\sigma})}_8$, respectively. The two-dimensional vector $w^{(\sigma\otimes\bar{\sigma})}\coloneqq (w^{(\sigma\otimes\bar{\sigma})}_3,w^{(\sigma\otimes\bar{\sigma})}_8)^\text{T}$ and, sometimes, its components are called \textit{weights} (cf. \cite{Lichtenberg}). The set of all weights for a \text{SU}(3)-multiplet $\sigma$ completely determines the multiplet $\sigma$. Therefore, multiplets are often just characterized by a graphical representation of their weights. This graphical representation, called \textit{weight diagram}, is usually just a coordinate system where each weight is indicated by a dot. Often, the multiplicity of each weight $w^{(\sigma\otimes\bar{\sigma})}$, i.e., the dimension of the set of vectors with weight $w^{(\sigma\otimes\bar{\sigma})}$, is encoded in a weight diagram, too. We will consider the weight diagrams for the triplet, sextet, octet, and decuplet in \autoref{chap:mass_relations}.\par
Physically, weights correspond to quantum numbers of a state. To explore this statement, consider the matrices $F_1,\ F_2\text{, and }F_3$. Under the commutator, they form a subalgebra of $\mathfrak{su}(3) + i\cdot\mathfrak{su}(3)$. This subalgebra is isomorphic to the (complexified) Lie algebra $\mathfrak{su}(2)+i\cdot\mathfrak{su}(2)$. As the Lie algebra $\mathfrak{su}(2)$ of the group \text{SU}(2) represents the spin algebra, we say that $F_1,\ F_2\text{, and }F_3$ also form a spin algebra. This spin algebra follows from flavor transformations and not from a spacetime symmetry, therefore, the matrices $F_1,\ F_2\text{, and }F_3$ are said to constitute the \textit{isospin} instead of the spin. Isospin related properties are usually denoted by a capital ``$I$'', for instance, the isospin generators $I_k$ and the total isospin $I^2$:
\begin{gather*}
I_k \coloneqq F_k\quad\forall k\in\{1,2,3\}\quad\text{and}\quad I^2 \coloneqq \sum^3_{k=1} I^2_k.
\end{gather*}
As the map $DD^{(\sigma)}\vert_\mathbb{1}$ mediating the transition from $F_k$ to $F^{(\sigma\otimes\bar{\sigma})}_k$ is a Lie algebra isomorphism for non-trivial multiplets $\sigma$, the isospin algebra for a non-trivial multiplet $\sigma$ is described by:
\begin{gather*}
I^{(\sigma\otimes\bar{\sigma})}_k \coloneqq F^{(\sigma\otimes\bar{\sigma})}_k\quad\forall k\in\{1,2,3\}\quad\text{and}\quad I^{2;\,(\sigma\otimes\bar{\sigma})} \coloneqq \sum^3_{k=1} \left(I^{(\sigma\otimes\bar{\sigma})}_k\right)^2.
\end{gather*}
In this sense, the weight $w^{(\sigma\otimes\bar{\sigma})}_3$ is the third component of the isospin and, hence, a spin-like quantum number.\par
To interpret the weight $w^{(\sigma\otimes\bar{\sigma})}_8$ in a similar way, we need to consider what kind of algebra $F_1,\ F_2,\ F_3\text{, and }F_8$ form. By looking at the Gell-Mann matrices from \autoref{sec:mass_matrix}, we can easily see that the matrix $F_8$ commutes with every matrix $F_k$ for $k\in\{1,2,3,\}$. Of course, this also applies to the matrices $F^{(\sigma\otimes\bar{\sigma})}_1,$ $F^{(\sigma\otimes\bar{\sigma})}_2,$ $F^{(\sigma\otimes\bar{\sigma})}_3\text{, and }F^{(\sigma\otimes\bar{\sigma})}_8$.
This means that $F^{(\sigma\otimes\bar{\sigma})}_8$ has to be constant on the isospin multiplets making up the multiplet $\sigma$. The operator $F^{(\sigma\otimes\bar{\sigma})}_8$ behaves like a \text{U}(1)-charge in the sense that its eigenvalues are an integer times a constant that is independent of the multiplet $\sigma$. Therefore, we can enforce the eigenvalues to be 1/3 times an integer by changing the normalization of $F^{(\sigma\otimes\bar{\sigma})}_8$ (cf. \cite{Lichtenberg}):
\begin{gather*}
Y\coloneqq \frac{2}{\sqrt{3}}F_8\quad\text{and}\quad Y^{(\sigma\otimes\bar{\sigma})}\coloneqq \frac{2}{\sqrt{3}}F^{(\sigma\otimes\bar{\sigma})}_8.
\end{gather*}
The third-integer-valued quantum number corresponding to $Y$ and $Y^{(\sigma\otimes\bar{\sigma})}$ is called \textit{hypercharge}. Note that the hypercharge follows from flavor transformations and not from a gauge symmetry like a lot of other charges. Now we can see that the weight $w^{(\sigma\otimes\bar{\sigma})}_8$ is related to the hypercharge, a charge-like quantum number.\par
In total, this means that a non-zero vector $v$ in $V$ with weight $w^{(\sigma\otimes\bar{\sigma})}$ has the following isospin and hypercharge quantum numbers:
\begin{gather*}
I_3 = w^{(\sigma\otimes\bar{\sigma})}_3\quad\text{and}\quad Y = \frac{2}{\sqrt{3}}w^{(\sigma\otimes\bar{\sigma})}_8.
\end{gather*}
Now consider an orthonormal basis of $V$ such that $F^{(\sigma\otimes\bar{\sigma})}_3$ and $F^{(\sigma\otimes\bar{\sigma})}_8$ are diagonal in this basis. Then, every vector in this basis is an eigenvector of both $F^{(\sigma\otimes\bar{\sigma})}_3$ and $F^{(\sigma\otimes\bar{\sigma})}_8$ and, thus, has a weight $w^{(\sigma\otimes\bar{\sigma})}$. For the trivial multiplet and for totally symmetric multiplets, such a basis is unique up to phases. For the remaining multiplets, this is not the case. To obtain a unique basis (up to phases), one additionally requires the basis of $V$ to be an eigenbasis of $I^{2;\,(\sigma\otimes\bar{\sigma})}$. This is always possible as the matrices $F^{(\sigma\otimes\bar{\sigma})}_3$ and $F^{(\sigma\otimes\bar{\sigma})}_8$ commute with $I^{2;\,(\sigma\otimes\bar{\sigma})}$. In the following, we are going to label the elements $v$ of such a basis of $V$ by three quantum numbers:
\begin{gather*}
v\equiv\Ket{Y,I,I_3}
\end{gather*}
where $Y$ is the hypercharge of $v$, $I$ the isospin quantum number of $v$, and $I_3$ the third component of the isospin of $v$. In summary, the three quantum numbers $Y$, $I$, and $I_3$ give rise to an orthonormal basis of $V$ that is unique up to phases.\par
One can define a special order on the weights of the multiplet $\sigma$. The first weight in this order is called the highest weight (cf. \cite{Lichtenberg}). The highest weight completely characterizes a multiplet in the sense that two multiplets are equivalent, if they have the same highest weight. If the multiplet $\sigma$ is equivalent to the multiplet $D(p,q)$ for two non-negative integers $p$ and $q$, its highest weight $w^{(\sigma\otimes\bar{\sigma})}_h$ is given by (cf. \cite{Lichtenberg}):
\begin{gather*}
w^{(\sigma\otimes\bar{\sigma})}_h = p\cdot\begin{pmatrix}\frac{1}{2}\\\frac{1}{2\sqrt{3}}\end{pmatrix} + q\cdot\begin{pmatrix}0\\\frac{1}{\sqrt{3}}\end{pmatrix}.
\end{gather*}
For every multiplet, a normalized vector $v_1\in V$ whose weight is the highest weight is unique up to a phase. For the multiplet $D(p,q)$, it is given by:
\begin{gather*}
v_1 = \Ket{Y = \frac{p+2q}{3}, I=\frac{p}{2}, I_3=\frac{p}{2}}.
\end{gather*}
As every weight diagram of \text{SU}(3) is invariant under rotations by $\frac{2\pi}{3}$ (cf. \cite{Lichtenberg}), the highest weight in a multiplet generates two more weights in the multiplet by rotation. For the multiplet $D(p,q)$, the corresponding vectors $v_2$ and $v_3$ of these rotated weights are:
\begin{align*}
v_2 &= \Ket{Y = \frac{p-q}{3}, I=\frac{p+q}{2}, I_3=-\frac{p+q}{2}},\\
v_3 &= \Ket{Y = -\frac{2p+q}{3}, I=\frac{q}{2}, I_3=\frac{q}{2}}.
\end{align*}
Let us now consider the matrix $D^{(\sigma\otimes\bar{\sigma})}_8$. By using the definition of the matrices $D^{(\sigma\otimes\bar{\sigma})}_k$ and a list of the constants $d_{klm}$ (cf. \cite{Lichtenberg}), we find:
\begin{align*}
D^{(\sigma\otimes\bar{\sigma})}_8 &= \frac{2}{3}\left(\frac{1}{\sqrt{3}}\sum^3_{k=1}\left(F^{(\sigma\otimes\bar{\sigma})}_k\right)^2 - \frac{1}{2\sqrt{3}}\sum^7_{k=4}\left(F^{(\sigma\otimes\bar{\sigma})}_k\right)^2 - \frac{1}{\sqrt{3}}\left(F^{(\sigma\otimes\bar{\sigma})}_8\right)^2\right)\\
&= \frac{2}{3}\left(\frac{\sqrt{3}}{2}\sum^3_{k=1}\left(F^{(\sigma\otimes\bar{\sigma})}_k\right)^2 - \frac{1}{2\sqrt{3}}\left(F^{(\sigma\otimes\bar{\sigma})}_8\right)^2 - \frac{1}{2\sqrt{3}}\sum^8_{k=1}\left(F^{(\sigma\otimes\bar{\sigma})}_k\right)^2\right)\\
&= \frac{1}{\sqrt{3}}\left(I^{2;\, (\sigma\otimes\bar{\sigma})} - \frac{1}{4}\left(Y^{(\sigma\otimes\bar{\sigma})}\right)^2 - \frac{1}{3}C^{(\sigma\otimes\bar{\sigma})}\right),
\end{align*}
where $C^{(\sigma\otimes\bar{\sigma})}$ is the quadratic Casimir operator. As already mentioned, the quadratic Casimir operator is constant on multiplets. For $\sigma$ being the multiplet $D(p,q)$, one finds (cf. \cite{Pais1966})\footnote{In this article, $p$ is used differently. The integer $p$ in \cite{Pais1966} has to be understood as $p+q$ in our notation.}:
\begin{gather*}
C^{(\sigma\otimes\bar{\sigma})} = \frac{1}{3}\left(p^2 + q^2 + pq + 3p + 3q\right)\mathbb{1}.
\end{gather*}
With this, we find for multiplets $\sigma = D(p,q)$:
\begin{align}
v^\dagger_1D^{(\sigma\otimes\bar{\sigma})}_8v_1 &= \frac{1}{\sqrt{3}}\left(\frac{p}{2}\left(\frac{p}{2}+1\right) - \frac{1}{4}\cdot\frac{(p+2q)^2}{9} - \frac{1}{3}\cdot\frac{p^2 + q^2 + pq + 3p + 3q}{3}\right)\nonumber\\
\Rightarrow v^\dagger_1D^{(\sigma\otimes\bar{\sigma})}_8v_1 &= \frac{1}{3\sqrt{3}}\left(\frac{p^2 -2pq -2q^2}{3} + \frac{p}{2} - q\right),\label{eq:v_1}\\
v^\dagger_2D^{(\sigma\otimes\bar{\sigma})}_8v_2 &= \frac{1}{3\sqrt{3}}\left(\frac{p^2 -4pq +q^2}{3} + \frac{p+q}{2}\right),\label{eq:v_2}\\
v^\dagger_3D^{(\sigma\otimes\bar{\sigma})}_8v_3 &=  \frac{1}{3\sqrt{3}}\left(\frac{q^2 -2pq -2p^2}{3} + \frac{q}{2} - p\right)\label{eq:v_3}.
\end{align}
\autoref{eq:v_1}, \autoref{eq:v_2}, and \autoref{eq:v_3} are three equations that have to be satisfied for every multiplet $\sigma = D(p,q)$. These equations allow us now to prove the classification for \text{SU}(3)-multiplets mentioned above. Let us start with the first statement of the classification:
\begin{gather*}
\text{1. }F^{(\sigma\otimes\bar{\sigma})}_k=0\,\forall k\in\{1,\ldots,8\}\Leftrightarrow D^{(\sigma\otimes\bar{\sigma})}_k=0\,\forall k\in\{1,\ldots,8\}\Leftrightarrow \sigma\text{ trivial.}
\end{gather*}
First, suppose $\sigma$ is trivial. Then, all matrices $F^{(\sigma\otimes\bar{\sigma})}_k$ and, thus, $D^{(\sigma\otimes\bar{\sigma})}_k$ are trivially zero. Next, suppose all matrices $F^{(\sigma\otimes\bar{\sigma})}_k$ are zero. Then, $DD^{(\sigma)}\vert_\mathbb{1}$ has to be zero. We have already seen that $\sigma$ needs to be trivial in this case. To conclude the proof of the first statement, we need to show that the multiplet $\sigma$ is trivial, if all matrices $D^{(\sigma\otimes\bar{\sigma})}_k$ are zero. Suppose that $\sigma$ is a multiplet such that all matrices $D^{(\sigma\otimes\bar{\sigma})}_k$ are zero. As $\sigma$ is a multiplet of \text{SU}(3), there are non-negative integers $p$ and $q$ such that $\sigma$ is equivalent to $D(p,q)$. All matrices $D^{(\sigma\otimes\bar{\sigma})}_k$ are zero, so, in particular, $D^{(\sigma\otimes\bar{\sigma})}_8$ is zero. Thus, the left-hand sides of \autoref{eq:v_1}, \autoref{eq:v_2}, and \autoref{eq:v_3} are equal to zero:
\begin{align*}
0 &= \frac{1}{3\sqrt{3}}\left(\frac{p^2 -2pq -2q^2}{3} + \frac{p}{2} - q\right),\\
0 &= \frac{1}{3\sqrt{3}}\left(\frac{p^2 -4pq +q^2}{3} + \frac{p+q}{2}\right),\\
0 &=  \frac{1}{3\sqrt{3}}\left(\frac{q^2 -2pq -2p^2}{3} + \frac{q}{2} - p\right).
\end{align*}
Subtracting the third from the first equation and multiplying the result with $3\sqrt{3}$ yields:
\begin{gather*}
0 = p^2 - q^2 + \frac{3}{2}\left(p-q\right) = (p-q)\left(p+q+\frac{3}{2}\right).
\end{gather*}
As $p$ and $q$ are non-negative numbers, we can divide by $p+q+3/2$ to find that $p$ is equal to $q$. Inserting this into the second equation gives:
\begin{gather*}
0 = \frac{1}{3\sqrt{3}}\left(\frac{-2}{3}p^2 + p\right).
\end{gather*}
The only integer-valued solution of this equation for $p$ is zero. Hence, $p$ and $q$ are zero implying that $\sigma$ is trivial. This concludes the proof for the first statement. Let us now turn to the proof of the second statement:
\begin{gather*}
\text{2. }F^{(\sigma\otimes\bar{\sigma})}_k = c\cdot D^{(\sigma\otimes\bar{\sigma})}_k\neq0\,\forall k\in\{1,\ldots,8\}\text{ for }c\in\mathbb{R}\backslash\{0\}\Leftrightarrow \sigma\text{ totally symmetric.}
\end{gather*}
First, suppose that $\sigma$ is a totally symmetric multiplet. We have seen prior in this section that the Clebsch-Gordan series of $\sigma\otimes\bar{\sigma}$ contains exactly one octet in this case. Furthermore, we know that the matrices $F^{(\sigma\otimes\bar{\sigma})}_k$ are either all linearly independent or zero. Likewise, the same statement applies to the matrices $D^{(\sigma\otimes\bar{\sigma})}_k$. However, no matrix $F^{(\sigma\otimes\bar{\sigma})}_k$ or $D^{(\sigma\otimes\bar{\sigma})}_k$ can be zero, as then all matrices $F^{(\sigma\otimes\bar{\sigma})}_k$ or $D^{(\sigma\otimes\bar{\sigma})}_k$ would have to be zero. By the first statement of the classification, this would imply that $\sigma$ is trivial which would be a contradiction. This means that the matrices $F^{(\sigma\otimes\bar{\sigma})}_k$ (and, likewise, $D^{(\sigma\otimes\bar{\sigma})}_k$) are linearly independent and span an octet in $\sigma\otimes\bar{\sigma}$. As $\sigma\otimes\bar{\sigma}$ only contains one octet, $F^{(\sigma\otimes\bar{\sigma})}_k$ and $D^{(\sigma\otimes\bar{\sigma})}_k$ have to span the same octet. We identified this octet with $\mathfrak{im}(D^{(\sigma)}) + i\cdot\mathfrak{im}(D^{(\sigma)})$ prior in this section. Now consider the linear map $f$ that is uniquely defined by:
\begin{gather*}
f:\mathfrak{im}(D^{(\sigma)}) + i\cdot\mathfrak{im}(D^{(\sigma)})\rightarrow\mathfrak{im}(D^{(\sigma)}) + i\cdot\mathfrak{im}(D^{(\sigma)}),\\
f(D^{(\sigma\otimes\bar{\sigma})}_k)\coloneqq F^{(\sigma\otimes\bar{\sigma})}_k\quad\forall k\in\{1,\ldots,8\}.
\end{gather*}
As both matrices $F^{(\sigma\otimes\bar{\sigma})}_k$ and $D^{(\sigma\otimes\bar{\sigma})}_k$ transform with the same transformation coefficients under $A\in\text{SU}(3)$ (cf. \autoref{app:F-and_D-symbols}):
\begin{align*}
&D^{(\sigma\otimes\bar{\sigma})}(A)\left(F^{(\sigma\otimes\bar{\sigma})}_k\right) = \sum^8_{l=1} \left(D^{(8)}(A)\right)_{kl} F^{(\sigma\otimes\bar{\sigma})}_l\quad\forall k\in\{1,\ldots,8\}\quad\text{and}\\
&D^{(\sigma\otimes\bar{\sigma})}(A)\left(D^{(\sigma\otimes\bar{\sigma})}_k\right) = \sum^8_{l=1} \left(D^{(8)}(A)\right)_{kl} D^{(\sigma\otimes\bar{\sigma})}_l\quad\forall k\in\{1,\ldots,8\},
\end{align*}
the map $f$ commutes with the octet transformation $D^{(8)}(A)$:
\begin{align*}
f\left(D^{(\sigma\otimes\bar{\sigma})}(A)\left(D^{(\sigma\otimes\bar{\sigma})}_k\right)\right) &= \sum^8_{l=1} \left(D^{(8)}(A)\right)_{kl} f\left(D^{(\sigma\otimes\bar{\sigma})}_l\right) = \sum^8_{l=1} \left(D^{(8)}(A)\right)_{kl} F^{(\sigma\otimes\bar{\sigma})}_l\\
&= D^{(\sigma\otimes\bar{\sigma})}(A)\left(F^{(\sigma\otimes\bar{\sigma})}_k\right)\\
&= D^{(\sigma\otimes\bar{\sigma})}(A)\left(f\left(D^{(\sigma\otimes\bar{\sigma})}_k\right)\right)\quad\forall k\in\{1,\ldots,8\}\\
\Rightarrow f\circ D^{(8)}(A) &= D^{(8)}(A)\circ f,
\end{align*}
where $D^{(8)}(A)\coloneqq D^{(\sigma\otimes\bar{\sigma})}(A)\vert_{\mathfrak{im}(D^{(\sigma)}) + i\cdot\mathfrak{im}(D^{(\sigma)})}$. By Schur's lemma (cf. \cite{Knapp2001}), every map that commutes with all transformations of a finite-dimensional, unitary, and irreducible representation of a compact Lie group is a multiple of the identity. Thus, $f=c\cdot\mathbb{1}$ for some constant $c\in\mathbb{C}$. However, $c$ has to be real, as all matrices $F^{(\sigma\otimes\bar{\sigma})}_k$ and $D^{(\sigma\otimes\bar{\sigma})}_k$ are Hermitian. Moreover, $c$ cannot be zero, as the matrices $F^{(\sigma\otimes\bar{\sigma})}_k$ cannot be zero. With this, we find:
\begin{gather*}
F^{(\sigma\otimes\bar{\sigma})}_k = f\left(D^{(\sigma\otimes\bar{\sigma})}_k\right) = c\cdot D^{(\sigma\otimes\bar{\sigma})}_k\neq 0\ \forall k\in\{1,\ldots,8\}\text{ for some constant }c\in\mathbb{R}\backslash\{0\}.
\end{gather*}
Let us now suppose that $\sigma$ is a \text{SU}(3)-multiplet such that $F^{(\sigma\otimes\bar{\sigma})}_k = c\cdot D^{(\sigma\otimes\bar{\sigma})}_k\neq 0$ for every $k\in\{1,\ldots,8\}$ and some constant $c\in\mathbb{R}\backslash\{0\}$. As $\sigma$ is a multiplet of \text{SU}(3), there are non-negative integers $p$ and $q$ such that $\sigma$ is equivalent to $D(p,q)$. Consider \autoref{eq:v_1}, \autoref{eq:v_2}, and \autoref{eq:v_3}. We can calculate the left-hand sides by using $D^{(\sigma\otimes\bar{\sigma})}_8 = c^{-1}F^{(\sigma\otimes\bar{\sigma})}_8 =2/\sqrt{3}c\cdot Y^{(\sigma\otimes\bar{\sigma})}$:
\begin{align*}
\frac{2}{3\sqrt{3}c}(p+2q) &= \frac{1}{3\sqrt{3}}\left(\frac{p^2 -2pq -2q^2}{3} + \frac{p}{2} - q\right),\\
\frac{2}{3\sqrt{3}c}(p-q) &= \frac{1}{3\sqrt{3}}\left(\frac{p^2 -4pq +q^2}{3} + \frac{p+q}{2}\right),\\
\frac{-2}{3\sqrt{3}c}(2p+q) &=  \frac{1}{3\sqrt{3}}\left(\frac{q^2 -2pq -2p^2}{3} + \frac{q}{2} - p\right).
\end{align*}
Subtracting the third from the first equation and multiplying the result with $3\sqrt{3}$ yields:
\begin{gather*}
\frac{6}{c}(p+q) = (p-q)\left(p+q+\frac{3}{2}\right).
\end{gather*}
Likewise, subtracting the third from the second equation and multiplying the result with $3\sqrt{3}$ yields:
\begin{gather*}
\frac{6}{c}p = p\left(p-\frac{2}{3}q+\frac{3}{2}\right).
\end{gather*}
Using the first statement of the classification, we know that $\sigma$ cannot be trivial, as the matrices $F^{(\sigma\otimes\bar{\sigma})}_k$ are not zero. Thus, $p$ or $q$ has to be greater than zero. First, suppose that $p$ is greater than zero. Then, $p>0$ and $p+q>0$. This allows us to solve the last two equations for $6/c$ and set the results equal to each other:
\begin{align*}
\frac{p-q}{p+q}\left(p+q+\frac{3}{2}\right) &= p -\frac{2}{3}q + \frac{3}{2}\\
\Rightarrow (p-q)\left(p+q+\frac{3}{2}\right) &= (p+q)\left(p -\frac{2}{3}q + \frac{3}{2}\right)\\
\Rightarrow q\left(\frac{p+q}{3} + 3\right) = 0.
\end{align*}
As $p$ and $q$ are non-negative, $(p+q)/3 + 3$ is greater than zero. This implies that $q$ has to be zero to satisfy the last equation. In this case, $\sigma$ is just $D(p,0)$ with $p>0$, i.e., $\sigma$ is totally symmetric. Now, consider the remaining case of $p=0$. Then, $q$ has to be greater than zero, as $\sigma$ is not trivial. Therefore, $\sigma$ is just $D(0,q)$ with $q>0$, i.e., $\sigma$ is totally symmetric. This concludes the proof of the second statement of the classification. Let us now turn to the proof of the last statement:
\begin{gather*}
\text{3. }F^{(\sigma\otimes\bar{\sigma})}_k\text{ and }D^{(\sigma\otimes\bar{\sigma})}_k\text{ are all linearly independent}\Leftrightarrow \sigma\text{ neither trivial nor tot. sym.}
\end{gather*}
First, suppose that $\sigma$ is a \text{SU}(3)-multiplet that is neither trivial nor totally symmetric. Then, by the first statement of the classification, none of the matrices $F^{(\sigma\otimes\bar{\sigma})}_k$ or $D^{(\sigma\otimes\bar{\sigma})}_k$ can be zero. Therefore, both the matrices $F^{(\sigma\otimes\bar{\sigma})}_k$ and the matrices $D^{(\sigma\otimes\bar{\sigma})}_k$ have to span an octet in $\sigma\otimes\bar{\sigma}$. Let us call the octet spanned by $F^{(\sigma\otimes\bar{\sigma})}_k$ $V_F$ and the octet spanned by $D^{(\sigma\otimes\bar{\sigma})}_k$ $V_D$. Then, we either have $V_F = V_D$ or $V_F\cap V_D = \{0\}$, since the vector space $V_F\cap V_D$ is an invariant subspace of the irreducible representation $V_{F/D}$, so it has to be $V_{F/D}$ or $\{0\}$. However, $V_F$ cannot be equal to $V_D$, as, in this case, we would be able to repeat the first part of the proof of the second statement to show that $F^{(\sigma\otimes\bar{\sigma})}_k$ and $D^{(\sigma\otimes\bar{\sigma})}_k$ are proportional to each other. But if $F^{(\sigma\otimes\bar{\sigma})}_k$ and $D^{(\sigma\otimes\bar{\sigma})}_k$ were proportional to each other, we would find with the second statement that $\sigma$ is totally symmetric contradicting our assumption. Hence, $V_F\cap V_D$ has to be $\{0\}$. This implies that combining a basis of $V_F$ and a basis of $V_D$ gives a set of linearly independent vectors. With this, we obtain that the matrices $F^{(\sigma\otimes\bar{\sigma})}_k$ and $D^{(\sigma\otimes\bar{\sigma})}_k$ are all linearly independent. To conclude this proof, we need to show the converse. Therefore, suppose that $\sigma$ is a \text{SU}(3)-multiplet such that the matrices $F^{(\sigma\otimes\bar{\sigma})}_k$ and $D^{(\sigma\otimes\bar{\sigma})}_k$ are all linearly independent. This implies that both the matrices $F^{(\sigma\otimes\bar{\sigma})}_k$ and the matrices $D^{(\sigma\otimes\bar{\sigma})}_k$ span an octet in $\sigma\otimes\bar{\sigma}$. These octets cannot be equal, as the matrices $F^{(\sigma\otimes\bar{\sigma})}_k$ and $D^{(\sigma\otimes\bar{\sigma})}_k$ are all linearly independent. Hence, the Clebsch-Gordan series of $\sigma\otimes\bar{\sigma}$ contains at least two different octets. However, neither the trivial multiplet nor any totally symmetric multiplet contains more than one octet. Therefore, $\sigma$ is neither trivial nor totally symmetric. This concludes the proof of the classification.\par
Our initial motivation for classifying and parametrizing \text{SU}(3)-multiplets was to describe mass matrices transforming under $\sigma\otimes\bar{\sigma}$ for arbitrary complex finite-dimensional \text{SU}(3)-multiplets $\sigma$, where these mass matrices are, among others, subject to octet enhancement, i.e., are merely a sum of singlets and octets. In the course of our investigations, we have found that $\sigma\otimes\bar{\sigma}$ contains exactly one singlet for every \text{SU}(3)-multiplet $\sigma$, namely the multiples of the identity. Furthermore, we have seen that the Clebsch-Gordan series of $\sigma\otimes\bar{\sigma}$ contains no octet for $\sigma$ being the trivial multiplet, exactly one octet for $\sigma$ being totally symmetric, and exactly two octets for the remaining multiplets $\sigma$. After this classification, we aimed to parametrize the octets in $\sigma\otimes\bar{\sigma}$. We accomplished this by introducing matrices $F^{(\sigma\otimes\bar{\sigma})}_k$ and $D^{(\sigma\otimes\bar{\sigma})}_k$. We have found that each span of the matrices $F^{(\sigma\otimes\bar{\sigma})}_k$ and $D^{(\sigma\otimes\bar{\sigma})}_k$ is a candidate for an octet in $\sigma\otimes\bar{\sigma}$, i.e., is an octet or $\{0\}$. Lastly, we have proven a classification of \text{SU}(3)-multiplets $\sigma$ in terms of the matrices $F^{(\sigma\otimes\bar{\sigma})}_k$ and $D^{(\sigma\otimes\bar{\sigma})}_k$:
\begin{align*}
\text{1. }&F^{(\sigma\otimes\bar{\sigma})}_k=0\,\forall k\in\{1,\ldots,8\}\Leftrightarrow D^{(\sigma\otimes\bar{\sigma})}_k=0\,\forall k\in\{1,\ldots,8\}\Leftrightarrow \sigma\text{ trivial}\\
\text{2. }&F^{(\sigma\otimes\bar{\sigma})}_k = c\cdot D^{(\sigma\otimes\bar{\sigma})}_k\neq0\,\forall k\in\{1,\ldots,8\}\text{ for }c\in\mathbb{R}\backslash\{0\}\Leftrightarrow \sigma\text{ totally symmetric}\\
\text{3. }&F^{(\sigma\otimes\bar{\sigma})}_k\text{ and }D^{(\sigma\otimes\bar{\sigma})}_k\text{ are all linearly independent}\Leftrightarrow \sigma\text{ neither trivial nor tot. sym.}
\end{align*}
We can now sum up this classification by saying that each set of matrices $F^{(\sigma\otimes\bar{\sigma})}_k$ and $D^{(\sigma\otimes\bar{\sigma})}_k$ spans no octet for $\sigma$ being trivial, spans one (and the same) octet for $\sigma$ being totally symmetric, and each spans a different octet for the remaining multiplets $\sigma$. Together with the number of octets in $\sigma\otimes\bar{\sigma}$, we arrive at the conclusion that every matrix in $\sigma\otimes\bar{\sigma}$ only consisting of a sum of octets is a linear combination of the matrices $F^{(\sigma\otimes\bar{\sigma})}_k$ and $D^{(\sigma\otimes\bar{\sigma})}_k$ for complex finite-dimensional \text{SU}(3)-multiplets $\sigma$. In total, this means that every mass matrix transforming under $\sigma\otimes\bar{\sigma}$ for arbitrary complex finite-dimensional \text{SU}(3)-multiplets $\sigma$ and subject to octet enhancement is a linear combination of the 17 matrices $\mathbb{1}$, $F^{(\sigma\otimes\bar{\sigma})}_k$ ($k\in\{1,\ldots,8\}$), and $D^{(\sigma\otimes\bar{\sigma})}_l$ ({${l\in\{1,\ldots,8\}}$}).

\subsection*{GMO Mass Formula}

Now, we have all tools at hand to write down the Gell-Mann--Okubo mass formula. Prior in \autoref{sec:GMO_formula}, we have seen that we can group hadrons into complex finite-dimensional \text{SU}(3)-multiplets $\sigma$ such that the masses of these hadrons are given to first order in perturbation theory (neglecting isospin symmetry breaking) by the eigenvalues of a mass matrix transforming under $\sigma\otimes\bar{\sigma}$ which is subject to octet enhancement and $\text{SU}(3)\rightarrow\text{SU}(2)\times\text{U}(1)$ symmetry breaking. We have just shown that every matrix transforming under $\sigma\otimes\bar{\sigma}$ and subject to octet enhancement is a linear combination of $\mathbb{1}$, $F^{(\sigma\otimes\bar{\sigma})}_k$, and $D^{(\sigma\otimes\bar{\sigma})}_l$. However, we are only interested in the $\text{SU}(2)\times\text{U}(1)$-invariant part of this parametrization. When discussing the weight diagram of the octet in \autoref{sec:rel_within_multiplets}, we will see that the octet only contains one $\text{SU}(2)\times\text{U}(1)$-invariant element, aside from multiplication with scalars. Therefore, only one matrix from each set of matrices $F^{(\sigma\otimes\bar{\sigma})}_k$ and $D^{(\sigma\otimes\bar{\sigma})}_k$ contributes to the hadronic mass matrix. The matrices $F^{(\sigma\otimes\bar{\sigma})}_k$ and $D^{(\sigma\otimes\bar{\sigma})}_k$ transform in the same way as $F_k$ under \text{SU}(3). As $F_8$ is invariant under $\text{SU}(2)\times\text{U}(1)$ (cf. \autoref{sec:mass_matrix}), $F^{(\sigma\otimes\bar{\sigma})}_8$ and $D^{(\sigma\otimes\bar{\sigma})}_8$ are invariant under $\text{SU}(2)\times\text{U}(1)$ as well. This means that the masses of hadrons in a \text{SU}(3)-multiplet $\sigma$ are given to first order in perturbation theory (neglecting isospin symmetry breaking) by the eigenvalues of the matrix $m^{(\sigma\otimes\bar{\sigma})}$:
\begin{gather*}
m^{(\sigma\otimes\bar{\sigma})} = m_0\mathbb{1} + m^F_8F^{(\sigma\otimes\bar{\sigma})}_8 + m^D_8D^{(\sigma\otimes\bar{\sigma})}_8,
\end{gather*}
where $m_0$, $m^F_8$, and $m^D_8$ are parameters which can, in general, be different for different multiplets $\sigma$. We can rewrite this in terms of isospin and hypercharge operators:
\begin{align*}
m^{(\sigma\otimes\bar{\sigma})} &= \left(m_0 - \frac{c^{(\sigma\otimes\bar{\sigma})}}{3\sqrt{3}}m^D_8\right)\mathbb{1} + \frac{\sqrt{3}m^F_8}{2}Y^{(\sigma\otimes\bar{\sigma})} + \frac{m^D_8}{\sqrt{3}}\left(I^{2;\, (\sigma\otimes\bar{\sigma})} - \frac{1}{4}\left(Y^{(\sigma\otimes\bar{\sigma})}\right)^2\right)\\
&\eqqcolon \tilde{m}_0\mathbb{1} + \tilde{m}^F_8Y^{(\sigma\otimes\bar{\sigma})} + \tilde{m}^D_8\left(I^{2;\, (\sigma\otimes\bar{\sigma})} - \frac{1}{4}\left(Y^{(\sigma\otimes\bar{\sigma})}\right)^2\right),
\end{align*}
where $c^{(\sigma\otimes\bar{\sigma})}$ is the constant describing the Casimir operator $C^{(\sigma\otimes\bar{\sigma})}\eqqcolon c^{(\sigma\otimes\bar{\sigma})}\mathbb{1}$. All that is left to do is to diagonalize the matrix $m^{(\sigma\otimes\bar{\sigma})}$ to obtain its eigenvalues. However, this is rather easy as we have already introduced the orthonormal basis $\{\Ket{Y,I,I_3}\}$ of $\sigma$. In this basis, $Y^{(\sigma\otimes\bar{\sigma})}$ and $I^{2;\, (\sigma\otimes\bar{\sigma})}$ are diagonal, hence, $m^{(\sigma\otimes\bar{\sigma})}$ is also diagonal in this basis. Let us denote the mass of the hadron in the multiplet $\sigma$ with hypercharge $Y$, total isospin $I$, third component $I_3$ of the isospin by $m(Y,I,I_3)$. Neglecting isospin symmetry breaking, $m(Y,I,I_3)$ is given by:
\begin{gather}
m(Y,I,I_3) = \tilde{m}_0 + \tilde{m}^F_8\cdot Y + \tilde{m}^D_8\left(I(I+1) - \frac{Y^2}{4}\right) + \mathcal{O}(\varepsilon^2_8).\label{eq:GMO_mass_formula}
\end{gather}
This is the famous \textit{Gell-Mann--Okubo mass formula} (cf. \cite{Okubo}). It implies all hadronic mass formulae and relations from \autoref{sec:mass_matrix}. In particular, the GMO mass formula predicts equal spacing rules for totally symmetric multiplets $\sigma$. As the Clebsch-Gordan series of $\sigma\otimes\bar{\sigma}$ for totally symmetric multiplets $\sigma$ contains exactly one octet, the parametrization of $m^{(\sigma\otimes\bar{\sigma})}$ with both $F^{(\sigma\otimes\bar{\sigma})}_8$ and $D^{(\sigma\otimes\bar{\sigma})}_8$ is redundant and we can freely choose one of the parameters $m^F_8$ and $m^D_8$. For instance, we can choose $m^D_8 = 0$ for totally symmetric multiplets $\sigma$. We then find for totally symmetric multiplets $\sigma$:
\begin{gather*}
m(Y,I,I_3) = \tilde{m}_0 + \tilde{m}^F_8\cdot Y + \mathcal{O}(\varepsilon^2_8).
\end{gather*}
In this case, the mass of a hadron in $\sigma$ only depends its hypercharge. If $Y_\text{min}$ and $Y_\text{max}$ are the minimal and maximal hypercharge in a totally symmetric multiplet $\sigma$, the other hypercharges occurring in $\sigma$ are element of $\{Y_\text{min}, Y_\text{min}+1,\ldots, Y_\text{max}-1, Y_\text{max}\}$ (cf. \cite{Lichtenberg}). Hence, all hadron masses in a totally symmetric multiplet $\sigma$ are equidistant and, thus, subject to an equal spacing rule.

\newpage
\section{Additional Mass Contributions}
\label{sec:add_con}

In \autoref{sec:GMO_formula}, we made a lot of assumptions and used a lot of approximations to arrive at the GMO mass formula. In particular, we considered a Lagrangian only consisting of kinetic and mass terms of three quark flavors (up, down, and strange) and of flavor symmetric interaction terms as a starting point for our investigations. However, the only (quark) flavor symmetric interaction in the SM is the strong interaction. Following such an approach, we clearly neglected the non-flavor symmetric electroweak interaction and the remaining particle content of the SM. Furthermore, we took the masses of the up and down quark to be equal, but we observe in Nature that the masses of the up and down quark are most likely to be different. Obviously, these effects give rise to corrections to the GMO mass formula. The goal of this section is to describe additional contributions to the GMO mass formula originating from these effects. In particular, we aim to derive expressions for the isospin symmetry breaking induced by the difference of the up and down quark mass and for electromagnetic corrections in first order perturbation theory.

\subsection*{Isospin Symmetry Breaking induced by the Up-Down-Mass Difference}

In \autoref{chap:hadron_masses}, we considered $\mathcal{L}_\text{QCD}$, a Lagrangian describing three quark flavors and a flavor symmetric interaction between them, and showed that its corresponding Hamilton operator $H_\text{QCD}$ decomposes as follows under \text{SU}(3)-flavor transformations:
\begin{gather*}
H_\text{QCD} = H^{0}_\text{QCD} + \varepsilon_3\cdot H^{8}_{\text{QCD};\, 3} + \varepsilon_8\cdot H^{8}_{\text{QCD};\, 8},
\end{gather*}
where $H^{0}_\text{QCD}$ is a singlet of \text{SU}(3), $H^{8}_{\text{QCD};\, 3}$ and $H^{8}_{\text{QCD};\, 8}$ are the 3rd and 8th component of an octet, respectively, $\varepsilon_3\coloneqq m_u-m_d$, and $\varepsilon_8\coloneqq (m_u+m_d-2m_s)/\sqrt{3}$. In \autoref{sec:GMO_formula}, we only investigated the case of $\varepsilon_3 = 0\Leftrightarrow m_u=m_d$ to ensure that the Hamilton operator $H_\text{QCD}$ is invariant under $\text{SU}(2)\times\text{U}(1)$-flavor transformations. The generators of this $\text{SU}(2)\times\text{U}(1)$-symmetry are the three isospin operators $I_1$, $I_2$, $I_3$, and the hypercharge operator $Y$. Therefore, the $\text{SU}(2)\times\text{U}(1)$-flavor symmetry of $H_\text{QCD}$ (or its subsymmetry $\text{SU}(2)$ to be precise) is often referred to as \textit{isospin symmetry}. In the most general case of all quark masses being different, the isospin symmetry is broken by $H^{8}_{\text{QCD};\, 3}$, as the third component of an octet is not isospin symmetric. Nevertheless, there is still a residual symmetry left in this case. As we have seen in \autoref{sec:Trafo_QCD}, the quark mass matrix $\mathcal{M}\coloneqq\text{diag}(m_u,m_d,m_s)$ is still invariant under diagonal \text{SU}(3)-phase modulations, i.e.:
\begin{gather*}
A\mathcal{M}A^\dagger = \mathcal{M}\text{ with }A=\text{diag}(e^{i\alpha},e^{i\beta},e^{-i(\alpha+\beta)})\in\text{SU}(3)\ \forall\alpha,\beta\in\mathbb{R}.
\end{gather*}
We can identify this residual symmetry with the Lie group $\text{U}(1)\times\text{U}(1)$. The generators of this group are the third component $I_3$ of the isospin and the hypercharge operator $Y$.\par
Now, we want to examine how the introduction of isospin symmetry breaking in terms of $H^{8}_{\text{QCD};\, 3}$ affects the derivation of the GMO mass formula from \autoref{sec:GMO_formula}. Even though one might think that this is rather laborious task given the length of \autoref{sec:GMO_formula}, it is actually surprisingly simple. All statements from \autoref{sec:GMO_formula} apply and can be derived in a very similar fashion for a Hamilton operator $H_\text{QCD}$ including a non-vanishing $\varepsilon_3\cdot H^{8}_{\text{QCD};\, 3}$ term. We just have to be cautious to include the third component of an octet when needed and be aware that the flavor symmetry of $H_\text{QCD}$ is now just $\text{U}(1)\times\text{U}(1)$ instead of $\text{SU}(2)\times\text{U}(1)$. In sloppy terms, one might say we just have to replace the phrases ``singlet plus the 8th component of an octet'', ``$\text{SU}(2)\times\text{U}(1)$'', and ``$\mathcal{O}\left(\varepsilon^2_8\right)$'' in \autoref{sec:GMO_formula} by ``singlet plus the 3rd and 8th component of an octet'', ``$\text{U}(1)\times\text{U}(1)$'', and ``$\mathcal{O}\left(\varepsilon^2_8\right)+\mathcal{O}\left(\varepsilon^2_3\right)+\mathcal{O}\left(\varepsilon_3\varepsilon_8\right)$'', respectively. Repeating \autoref{sec:GMO_formula} with these comments in mind, we find that we can group hadrons into complex finite-dimensional \text{SU}(3)-multiplets $\sigma$ such that the masses of the hadrons in $\sigma$ are given to first order in perturbation theory (meaning to first order in both $\varepsilon_3$ and $\varepsilon_8$ this time) by the eigenvalues of a mass matrix transforming under $\sigma\otimes\bar{\sigma}$ which is subject to octet enhancement and $\text{SU}(3)\rightarrow\text{U}(1)\times\text{U}(1)$ symmetry breaking. We already know that every mass matrix transforming under $\sigma\otimes\bar{\sigma}$ which is subject to octet enhancement is a linear combination of the matrices $\mathbb{1}$, $F^{(\sigma\otimes\bar{\sigma})}_k$, and $D^{(\sigma\otimes\bar{\sigma})}_k$. Hence, all that is left to do is to find the $\text{U}(1)\times\text{U}(1)$-invariant elements of this decomposition. When discussing the weight diagram of the octet in \autoref{sec:rel_within_multiplets}, we will see that every octet contains exactly two linearly independent elements that are $\text{U}(1)\times\text{U}(1)$-invariant. These elements correspond to the Gell-Mann matrices $\lambda_3$ and $\lambda_8$, i.e., $F_3$ and $F_8$. This means the only $F^{(\sigma\otimes\bar{\sigma})}_k$- and $D^{(\sigma\otimes\bar{\sigma})}_k$-matrices that are $\text{U}(1)\times\text{U}(1)$-invariant are $F^{(\sigma\otimes\bar{\sigma})}_3$, $F^{(\sigma\otimes\bar{\sigma})}_8$, $D^{(\sigma\otimes\bar{\sigma})}_3$, and $D^{(\sigma\otimes\bar{\sigma})}_8$. Thus, the masses of hadrons in a \text{SU}(3)-multiplet $\sigma$ are given to first order in perturbation theory by the eigenvalues of the matrix $m^{(\sigma\otimes\bar{\sigma})}$:
\begin{gather*}
m^{(\sigma\otimes\bar{\sigma})} = m_0\mathbb{1} + m^F_3F^{(\sigma\otimes\bar{\sigma})}_3 + m^F_8F^{(\sigma\otimes\bar{\sigma})}_8 + m^D_3D^{(\sigma\otimes\bar{\sigma})}_3 + m^D_8D^{(\sigma\otimes\bar{\sigma})}_8,
\end{gather*}
where $m_0$, $m^F_3$, $m^F_8$, $m^D_3$, and $m^D_8$ are parameters which can, in general, be different for different multiplets $\sigma$. Again, we only need the $F^{(\sigma\otimes\bar{\sigma})}_k$- or the $D^{(\sigma\otimes\bar{\sigma})}_k$-matrices for the description of totally symmetric multiplets $\sigma$. For instance, we can set $m^D_3 = m^D_8 = 0$:
\begin{align*}
m^{(\sigma\otimes\bar{\sigma})} &= m_0\mathbb{1} + m^F_3F^{(\sigma\otimes\bar{\sigma})}_3 + m^F_8F^{(\sigma\otimes\bar{\sigma})}_8\\
&= m_0\mathbb{1} + m^F_3I^{(\sigma\otimes\bar{\sigma})}_3 + \tilde{m}^F_8Y^{(\sigma\otimes\bar{\sigma})}
\end{align*}
with $\tilde{m}^F_8\coloneqq\sqrt{3}m^F_8/2$.\par
To find the eigenvalues of $m^{(\sigma\otimes\bar{\sigma})}$, we need to diagonalize it. For totally symmetric multiplets $\sigma$, this poses no problem, as $I^{(\sigma\otimes\bar{\sigma})}_3$ and $Y^{(\sigma\otimes\bar{\sigma})}$ are diagonal in the basis $\{\Ket{Y,I,I_3}\}$. If we denote by $m(Y,I,I_3)$ the mass of the hadron in the multiplet $\sigma$ whose hypercharge is $Y$, whose total isospin is $I$, and whose third component of the isospin is $I_3$, we obtain for totally symmetric multiplets $\sigma$:
\begin{gather}
m(Y,I,I_3) = m_0 + m^F_3\cdot I_3 + \tilde{m}^F_8\cdot Y + \mathcal{O}\left(\varepsilon^2_3\right) + \mathcal{O}\left(\varepsilon_3\varepsilon_8\right) + \mathcal{O}\left(\varepsilon^2_8\right).\label{eq:iso_tot_sym}
\end{gather}
If $\sigma$ is not totally symmetric, we need to take $D^{(\sigma\otimes\bar{\sigma})}_3$ and $D^{(\sigma\otimes\bar{\sigma})}_8$ into account as well. We have already seen that $D^{(\sigma\otimes\bar{\sigma})}_8$ is diagonal in the basis $\{\Ket{Y,I,I_3}\}$. However, $D^{(\sigma\otimes\bar{\sigma})}_3$ does not need to be diagonal in this basis, as $I^{2;\,(\sigma\otimes\bar{\sigma})}$ and $D^{(\sigma\otimes\bar{\sigma})}_3$ do not, in general, commute. Nevertheless, $D^{(\sigma\otimes\bar{\sigma})}_3$ does commute with $I^{(\sigma\otimes\bar{\sigma})}_3$ and $Y^{(\sigma\otimes\bar{\sigma})}$, since $F^{(\sigma\otimes\bar{\sigma})}_3$ commutes with $I^{(\sigma\otimes\bar{\sigma})}_3$ and $Y^{(\sigma\otimes\bar{\sigma})}$ (cf. \autoref{eq:D-F_com}). This means that $D^{(\sigma\otimes\bar{\sigma})}_3$ links hadrons with same hypercharge $Y$ and third isospin component $I_3$, but different total isospin $I$. In this regard, $D^{(\sigma\otimes\bar{\sigma})}_3$ is a source for $\Lambda^0$-$\Sigma^0$-mixing in the spin-$\frac{1}{2}$ baryon octet\footnote{In fact, one can compute the mixing angle $\alpha$ of the $\Lambda^0$-$\Sigma^0$-mixing by calculating the entries of $D^{(8\otimes \bar{8})}_3$. One obtains:
\[\frac{1}{2}\tan (2\alpha) = \frac{m_{\Sigma} - m_{\Sigma^+} + m_p - m_n}{\sqrt{3}(m_{\Sigma} - m_{\Lambda})},\]
where $m_{\Sigma^+}$, $m_p$, and $m_n$ are the masses of the baryons given in the index and $m_{\Sigma}$ and $m_{\Lambda}$ are the diagonal entries of $m^{(8\otimes \bar{8})}$ corresponding to $\ket{0,1,0}$ (roughly $\Sigma^0$) and $\ket{0,0,0}$ (roughly $\Lambda^0$) respectively. As the mixing and, thus, $\alpha$ is small, we have $m_{\Sigma} \approx m_{\Sigma^0}$ and $m_{\Lambda} \approx m_{\Lambda^0}$ to good approximation. Furthermore, one can use $\tan (2\alpha)/2 \approx \tan (\alpha)$ for small $\alpha$ to recover the formula of R. H. Dalitz and F. von Hippel for the mixing angle $\alpha$ (cf. \cite{Dalitz1964}; note the different sign convention for $\alpha$).}.\par
To diagonalize $m^{(\sigma\otimes\bar{\sigma})}$ for multiplets $\sigma$ that are not totally symmetric, we make use of the following trick: In physical application, we are only interested in the eigenvalues of $m^{(\sigma\otimes\bar{\sigma})}$ for isospin breaking terms that are way smaller than their $\text{SU}(2)\times\text{U}(1)$-invariant counterparts, i.e., for $\varepsilon_3\ll\varepsilon_8$. This allows us to treat the $D^{(\sigma\otimes\bar{\sigma})}_3$-term as a perturbation of $m^{(\sigma\otimes\bar{\sigma})}$. However, the eigenvectors of the unperturbed part of $m^{(\sigma\otimes\bar{\sigma})}$ are just $\Ket{Y,I,I_3}$. Hence, we obtain for the eigenvalue $m^{(\sigma\otimes\bar{\sigma})}_{Y,I,I_3}$ of $m^{(\sigma\otimes\bar{\sigma})}$ corresponding to $\Ket{Y,I,I_3}$:
\begin{align*}
m^{(\sigma\otimes\bar{\sigma})}_{Y,I,I_3} &= \tilde{m}_0 + m^F_3\cdot I_3 + \tilde{m}^F_8\cdot Y + \tilde{m}^D_8\left(I(I+1) - \frac{Y^2}{4}\right)\\
&\ \ \ + m^D_3\Braket{Y,I,I_3|D^{(\sigma\otimes\bar{\sigma})}_3|Y,I,I_3} + \mathcal{O}\left(\varepsilon_8\cdot\left(\frac{\varepsilon_3}{\varepsilon_8}\right)^2\right),
\end{align*}
where $\tilde{m}_0$, $\tilde{m}^F_8$, and $\tilde{m}^D_8$ are defined like in \autoref{sec:GMO_formula}. If $\sigma$ is an octet $8$, one can calculate:
\begin{gather*}
\Braket{Y,I,I_3|D^{(8\otimes\bar{8})}_3|Y,I,I_3} = I_3Y.
\end{gather*}
Denoting the mass of a hadron in an octet whose hypercharge is $Y$, whose total isospin is $I$, and whose third component of the isospin is $I_3$ by $m(Y,I,I_3)$, we find for the octet:
\begin{align}
m(Y,I,I_3) &= \tilde{m}_0 + m^F_3\cdot I_3 + \tilde{m}^F_8\cdot Y + m^D_3\cdot I_3Y + \tilde{m}^D_8\left(I(I+1) - \frac{Y^2}{4}\right)\nonumber\\
&\ \ \ + \mathcal{O}\left(\varepsilon^2_3\right) + \mathcal{O}\left(\varepsilon_3\varepsilon_8\right) + \mathcal{O}\left(\varepsilon^2_8\right) + \mathcal{O}\left(\varepsilon_8\cdot\left(\frac{\varepsilon_3}{\varepsilon_8}\right)^2\right).\label{eq:iso_octet}
\end{align}
\indent The mass difference between up and down quark is not the only source of isospin symmetry breaking. The electroweak interaction also breaks the isospin symmetry, as, for instance, the up and down quark carry a different electric charge. However, the electroweak interaction also breaks the residual $\text{U}(1)\times\text{U}(1)$-symmetry originating from the quark mass matrix $\mathcal{M}$, since the weak interaction couples quarks of different flavor to each other. Nevertheless, the contribution of the weak interaction to the hadron masses is parametrically suppressed by at least $(M/M_W)^2$, where $M\sim\tilde{m}_0$ is the mass scale of the hadronic multiplet at hand and $M_W$ is the mass of the $W$-bosons. Thus, we can safely assume that for most hadronic multiplets the contribution of the weak interaction to the hadron masses is very small in comparison to the other isospin symmetry breaking effects like up-down-mass difference and electromagnetism. This allows us to neglect all flavor changing currents in the SM and restore the residual $\text{U}(1)\times\text{U}(1)$-flavor symmetry. In principal, the restoration of this symmetry lets us describe electromagnetic contributions to the GMO mass formula in the same way as the contribution arising from the mass difference of the up and down quark. For this, we simply need to incorporate the expansion parameter of the electromagnetic interaction into $\varepsilon_3$. As the electromagnetic coupling constant is directly related to $\sqrt{\alpha}$, we just need to incorporate $\sqrt{\alpha}$ into $\varepsilon_3$. However, the GMO mass formula including isospin symmetry breaking only describes contributions to first order in $\varepsilon_3$. This way, we can only describe electromagnetic contributions at order $\sqrt{\alpha}$. Since we do not expect any electromagnetic contribution at order $\sqrt{\alpha}$, but first at order $\alpha$, we have to think of something else to describe electromagnetic contributions at order $\alpha$.

\subsection*{Electromagnetic Contributions}

Up to this point, we have aimed to be very precise and rigorous with every statement and tried to motivate every assumption we had to make. However, we now want to employ a more heuristic and phenomenological model to describe the corrections to the GMO mass formula arising from the electromagnetic interaction. We will see in \autoref{chap:mass_relations} that the model we present in this section leads to mass relations that are well known in the literature like the Coleman-Glashow mass relation (cf. \cite{coleman-glashow}). For our model, we make the assumption that the electromagnetic interaction inside hadrons can be described in an effective approach: We imagine that we are able to ``integrate out'' the photon field inside a hadron\footnote{We refrain from defining the meaning of the phrase ``integrate out'', as it is not crucial for our reasoning.}. This leaves us with an effective hadronic Hamilton operator $H$. Let us now suppose that $H$ takes the following form:
\begin{align*}
H &= H_\text{QCD} + \alpha\Delta H + \mathcal{O}\left(\alpha^2\right),
\end{align*}
where $H_\text{QCD}$ is the Hamilton operator from the previous section that led us to the GMO mass formula (including isospin symmetry breaking originating from the mass difference between up and down quark) and $\Delta H$ describes the electromagnetic interaction inside hadrons at order $\alpha$.
Again, we identify the mass of a hadron with an eigenvalue of $H$ and calculate the eigenvalues of $H$ in a perturbative treatment. The perturbative treatment of the electromagnetic interaction for the computation of hadron masses is justified, since we expect the electromagnetic interaction to play a secondary role in the formation of hadrons. This time, we take $\alpha$ to be the expansion parameter of the perturbation series. For a perturbative treatment of $H$, we need to know the eigenvalues and -spaces of the unperturbed part, $H_\text{QCD}$, first. The determination of these properties was the concern of the previous sections in this chapter. We have found that we can group hadrons, i.e., eigenstates of $H_\text{QCD}$, into \text{SU}(3)-multiplets $\sigma$ such that every hadron in $\sigma$ is fully determined by three quantum numbers $Y$, $I$, and $I_3$. Usually, we have to mind the degeneracy of the eigenvalues of the unperturbed operator when applying perturbation theory. If $H_\text{QCD}$ was completely flavor symmetric, i.e., invariant under \text{SU}(3)-flavor transformations, all hadrons in a multiplet $\sigma$ would be degenerate. However, the symmetry breaking induced by $\varepsilon_8$ lifts the degeneracy between different isospin multiplets\footnote{Isospin multiplets are the irreducible representations of the isospin symmetry group $\text{SU}(2)\times\text{U}(1)$. Isospin multiplets within a \text{SU}(3)-multiplet $\sigma$ are formed by hadrons which have the same hypercharge $Y$ and total isospin $I$ (cf. \autoref{sec:rel_within_multiplets}).}, i.e., only hadrons with the same $Y$ and $I$ would be degenerate, if the term proportional to $\varepsilon_8$ was the only term breaking the flavor symmetry (cf. \autoref{sec:GMO_formula}). Additionally, the symmetry breaking induced by $\varepsilon_3$ lifts the remaining degeneracy within those isospin multiplets (cf. first part of \autoref{sec:add_con}). Thus, we can take the degeneracy of hadrons in $\sigma$ to be already lifted by the symmetry breaking terms that are included in $H_\text{QCD}$\footnote{On a deeper level, this is not true. We can deduce from experimental data that the contributions arising from the electromagnetic interaction and from $\varepsilon_3$ are in the same order of magnitude, thus, it is not sensible to consider the electromagnetic interaction to be a perturbation of the $\varepsilon_3$-term included in $H_\text{QCD}$. However, one can use the residual $\text{U}(1)\times\text{U}(1)$-flavor symmetry of the electromagnetic interaction to show that the electromagnetic interaction as a perturbation of {${H^0_\text{QCD} + \varepsilon_8\cdot H^8_\text{QCD;8}}$} is diagonal in the basis $\Ket{Y,I,I_3}$ when restricted to the eigenspaces of {${H^0_\text{QCD} + \varepsilon_8\cdot H^8_\text{QCD;8}}$}. This way, we obtain the same results as if we considered the electromagnetic interaction to be a perturbation of $H_\text{QCD}$ including the $\varepsilon_3$-term.}. If we denote the mass (including electromagnetic corrections) of a hadron in $\sigma$ with hypercharge $Y$, total isospin $I$, and third component $I_3$ of the isospin by $m_\alpha(Y,I,I_3)$, we find to first order in perturbation theory:
\begin{align*}
m_\alpha(Y,I,I_3) = m(Y,I,I_3) + \alpha\cdot\Braket{Y,I,I_3|\Delta H|Y,I,I_3} + \mathcal{O}\left(\alpha^2\right) + \mathcal{O}\left(\alpha\varepsilon_3\right) + \mathcal{O}\left(\alpha\varepsilon_8\right),
\end{align*}
where $m(Y,I,I_3)$ is the mass of the hadron following from $H_\text{QCD}$, i.e., without electromagnetic corrections (cf. \autoref{eq:iso_tot_sym} and \autoref{eq:iso_octet}) and $\Ket{Y,I,I_3}$ is the eigenstate of the hadron in $\sigma$.\par
In order to calculate the first order correction of the electromagnetic interaction, we need some information about $\Delta H$. For the following arguments, we suppose that $\Delta H$ is given by:
\begin{align*}
\Delta H &=\sum_{f_1,f_2\in\{\text{u,d,s}\}}\frac{q_{f_1}q_{f_2}}{e^2}C_{f_1f_2},
\end{align*}
where $q_{f_1}$ and $q_{f_2}$ are the electric charges associated with the flavors $f_1$ and $f_2$ and the operators $C_{f_1f_2}$ mediate the electromagnetic interaction inside the hadrons at order $\alpha$. The choice of the form of $\Delta H$ is motivated by the following consideration: If one wants to introduce the electromagnetic interaction of the quarks to the Lagrangian $\mathcal{L}_\text{QCD}$, one has to add the Yang-Mills term $\mathcal{L}_\text{QED}^\text{YM}$ of the photon field and the coupling of the electromagnetic current $J_\text{QED}^\mu$ to the photon field $A_\mu$:
\begin{align*}
\mathcal{L}(x) &= \mathcal{L}_\text{QCD}(x) + \mathcal{L}_\text{QED}^\text{YM}(x) + J_\text{QED}^\mu(x) A_\mu(x)\quad\text{with}\\
J_\text{QED}^\mu(x) &\coloneqq \sum_{f\in\{\text{u,d,s}\}} q_f\bar{\psi}_f(x)\gamma^\mu\psi_f(x),
\end{align*}
where $\psi_q(x) \coloneqq q(x)$ for $q\in\{\text{u, d, s}\}$ with the notation of \autoref{sec:Trafo_QCD} and similar for $\bar{\psi}_q$. Commonly, perturbative QFT calculations involve the computation of matrix elements containing the time-ordered exponential of the interaction terms in the Lagrangian. By expanding the time-ordered exponential, we find an expansion of the matrix elements in the coupling constants, i.e., in $\alpha$ in the case of quantum electrodynamics (QED). Therefore, the terms at order $\alpha$ always involve a product of two electromagnetic currents $J^\mu_\text{QED}$:
\begin{gather*}
J^\mu_\text{QED}(x) J^\nu_\text{QED}(y) = \sum_{f_1, f_2\in\{\text{u,d,s}\}} q_{f_1}q_{f_2}\left(\bar{\psi}_{f_1}(x)\gamma^\mu\psi_{f_1}(x)\right)\left(\bar{\psi}_{f_2}(y)\gamma^\nu\psi_{f_2}(y)\right)
\end{gather*}
If we assume that the product of two electromagnetic currents also occurs in the operator $\Delta H$ describing the electromagnetic interaction inside hadrons at order $\alpha$, it seems plausible that $\Delta H$ takes the given form. Of course, the integration over the space-time variables $x$ and $y$, terms related to or arising from photon fields, and Wilson loops and lines guaranteeing the gauge invariance of $\Delta H$ also need to be considered for a complete description of $\Delta H$.\par
To proceed, we need to make use of the quark model. In the quark model, we think of baryons and mesons as composite particles and imagine the hadrons to be made out of valence and sea quarks (cf. \cite{Povh2014}). The valence quarks of a hadron dictate its properties and quantum numbers. The valence quark content of a baryon in the SM is given by three quarks, while the valence quark content of a meson consists of a quark and an antiquark. With the quark model in mind, we want to argue now that the following formula applies:
\begin{align*}
\alpha\cdot\Braket{Y,I,I_3|\Delta H|Y,I,I_3} = \Delta_\alpha\sum_{(i,j)}\frac{q_iq_j}{e^2} + \mathcal{O}(\alpha\varepsilon_3) + \mathcal{O}(\alpha\varepsilon_8),
\end{align*}
where $\Delta_\alpha$ is a quantity that is constant on $\sigma$, i.e., independent of $Y$, $I$, and $I_3$, $i$ and $j$ denote valence (anti)quarks of the hadron described by $\Ket{Y,I,I_3}$, $(i,j)$ is a pair of different valence (anti)quarks, the sum runs over every pair once, and $q_i$ and $q_j$ are the charges of the valence (anti)quarks $i$ and $j$, respectively.\par
To convince ourselves that this formula is reasonable, we have to compute:
\begin{gather*}
\alpha\cdot\Braket{Y,I,I_3|\Delta H|Y,I,I_3} = \sum_{f_1,f_2\in\{\text{u,d,s}\}}\Braket{Y,I,I_3|\alpha C_{f_1f_2}|Y,I,I_3}\frac{q_{f_1}q_{f_2}}{e^2}
\end{gather*}
For this, it is helpful to consider the electromagnetic current $J^\mu_\text{QED}$ again. $J^\mu_\text{QED}$ is a conserved Noether current. Its conserved Noether charge $Q_\text{QED}$ is just the electric charge operator, i.e, the operator counting the electric charges in a state when applied to it. In that regard, it seems plausible that the expectation value of the operator $\Delta H$ from which we expect to involve the product of two currents $J^\mu_\text{QED}$ is just the sum of some constants $\Delta_{\alpha;\, ij}(Y,I,I_3)$ times charge products $q_iq_j$ of pairs $(i,j)$ of valence and/or sea (anti)quarks in the hadron state:
\begin{align*}
\alpha\cdot\Braket{Y,I,I_3|\Delta H|Y,I,I_3} = \sum_{(i,j)}\Delta_{\alpha;\, ij}(Y,I,I_3)\frac{q_iq_j}{e^2}.
\end{align*}
We assume for the following considerations that the contribution of sea quarks to this sum is negligible. Whether this assumption is satisfied and, if not, how large the contribution of the sea quarks is, poses an interesting question for further investigations. Moreover, it is sensible to assume that the constants $\Delta_{\alpha;\, ij}(Y,I,I_3)$ are independent of $i$, $j$, $Y$, $I$, and $I_3$: 
The \text{SU}(3)-flavor transformations form an approximate symmetry of the hadrons which is only broken by $\varepsilon_3$ and $\varepsilon_8$ neglecting the electromagnetic interaction. The electromagnetic interaction only breaks this approximate symmetry, since the quarks carry different electric charge. However, the constants $\Delta_{\alpha;\, ij}(Y,I,I_3)$ do not involve the electric charge of the quarks. To this end, it seems plausible that the constants $\Delta_{\alpha;\, ij}(Y,I,I_3)$ are independent of the flavor-dependent quantities $i$, $j$, $Y$, $I$, and $I_3$ aside from corrections proportional to products of $\alpha$ and $\varepsilon_{3/8}$:
\begin{gather*}
\Delta_{\alpha;\, ij}(Y,I,I_3) = \Delta_\alpha + \mathcal{O}(\alpha\varepsilon_3) + \mathcal{O}(\alpha\varepsilon_8).
\end{gather*}
This reproduces the initially given formula. 
In total, the mass of a hadron including electromagnetic corrections of order $\alpha$ should be given by:
\begin{align}
m_\alpha(Y,I,I_3) = m(Y,I,I_3) + \Delta_\alpha\sum_{(i,j)}\frac{q_iq_j}{e^2} + \mathcal{O}\left(\alpha^2\right) + \mathcal{O}\left(\alpha\varepsilon_3\right) + \mathcal{O}\left(\alpha\varepsilon_8\right).\label{eq:mass_ele}
\end{align}

\section{Heavy Quark Symmetry}
\label{sec:heavy_quark}

So far, we have only considered light hadrons, i.e., hadrons that are formed out of the three light quarks up, down, and strange as valence quarks. The description of these hadrons led us to the classification of light hadrons into \text{SU}(3)-multiplets $\sigma$ and gave us a mass formula for the light hadrons (cf. \autoref{sec:GMO_formula} and \autoref{sec:add_con}). However, we can only use this mass formula to derive mass relations within a \text{SU}(3)-multiplet $\sigma$ (cf. \autoref{sec:mass_matrix} and \autoref{sec:rel_within_multiplets}): The mass formula is just a parametrization of the hadron masses in a \text{SU}(3)-multiplet $\sigma$ in terms of undetermined parameters like $m_0$, $m^F_3$, $m^F_8$, $m^D_3$, $m^D_8$, or $\Delta_\alpha$. Since for some \text{SU}(3)-multiplets $\sigma$ the number of undetermined parameters is smaller than the number of (degenerate) hadron masses in $\sigma$, the hadron masses in $\sigma$ have to satisfy mass relations to be in agreement with the mass formula. But since we have not found any relation between the undetermined parameters of different hadronic \text{SU}(3)-multiplets $\sigma$ yet, we are unable to formulate mass relations between different multiplets.\par
In this section, we want to incorporate the description of hadrons with heavy quarks like charm and bottom quark as part of their valence quark content into our model for the hadron masses. Furthermore, we will see that we can use a heuristic approach to hadrons containing a heavy quark to formulate mass relations between different $\text{SU}(3)$-multiplets. Let us start by considering a Lagrangian $\mathcal{L}^5_\text{QCD}$ that contains five quark flavors:
\begin{gather*}
\mathcal{L}^5_{\text{QCD}}(\bar{q},q) = \sum\limits_{q\in\{\text{u,d,s,c,b}\}}\bar{q}\left(i\slashed{D} -  m_q\right)q + \mathcal{L}_\text{YM},
\end{gather*}
where we employ the same notation as in \autoref{sec:Trafo_QCD}. In the same way as for $\mathcal{L}_\text{QCD}$ (cf. \autoref{sec:Trafo_QCD}), we can define flavor transformations of the fields $q$. The only difference is that now the group of flavor transformations is given by \text{SU}(5) instead of \text{SU}(3). The flavor transformations of the fields $q$ furnish transformations of the quark mass matrix $\mathcal{M}$ that form a representation of \text{SU}(5). Likewise, the field flavor transformations give rise to transformations of the Lagrangian $\mathcal{L}^5_\text{QCD}$ and the corresponding Hamilton operator $H^5_\text{QCD}$. Similar to the case of three flavors, we can decompose the Hamilton operator $H^5_\text{QCD}$ into a singlet of \text{SU}(5) plus terms that transform like the adjoint representation of \text{SU}(5) under flavor transformations. If all quark masses in the theory are roughly the same, i.e., if the \text{SU}(5)-flavor transformations form an approximate global symmetry of the Lagrangian and if we can treat the non-singlet terms in the Hamilton operator $H^5_\text{QCD}$, which are proportional to the quark mass differences, as a perturbation of the Hamilton operator, one can proceed similarly to \autoref{sec:GMO_formula} to find that one can group the hadrons of this 5-flavors theory into \text{SU}(5)-multiplets and derive a mass formula for the hadrons in a \text{SU}(5)-multiplet.\par
In Nature, however, we observe that the charm and bottom quark are much heavier than up, down, and strange quark: $m_\text{b},m_\text{c}\gg m_\text{u},m_\text{d},m_\text{s}$. Even though this means that the \text{SU}(5)-flavor symmetry is severely broken in Nature and we cannot apply a perturbative treatment to the Hamilton operator $H^5_\text{QCD}$, the \text{SU}(5)-flavor group and its multiplets still provide a classification for hadrons (cf. $\text{SU}(4)$-flavor group in review \textit{105. Charmed Baryons} in \cite{PDG}). To describe states in a \text{SU}(5)-multiplet, we need quantum numbers additional to $Y$, $I$, and $I_3$. We can link two of these additional quantum numbers to the charm $C$ and to the bottomness $B$. Essentially, the charm $C$ and the bottomness $B$ of a hadron specify -- together with other quantum numbers -- how many charm and bottom (anti)quarks are part of the valence quark content of the hadron.\par
The flavor transformations of only the first three quarks u, d, and s form a Lie subgroup of the \text{SU}(5)-flavor transformations which is equivalent to the Lie group \text{SU}(3). Hence, the hadronic \text{SU}(5)-multiplets decompose into a direct sum of multiplets of this \text{SU}(3)-flavor subgroup. The decomposition of a \text{SU}(5)-multiplet can be done in such a way that all hadrons contained in a given \text{SU}(3)-multiplet of this decomposition have the same number of charm and bottom valence (anti)quarks. This allows us to group hadrons into \text{SU}(3)-multiplets where each hadron in a given \text{SU}(3)-multiplet has the same number of charm and bottom valence (anti)quarks. The \text{SU}(3)-flavor group of the quarks u, d, and s can be treated as an approximate global symmetry of the Lagrangian. This allows us to apply the state formalism we introduced in the previous chapters and sections to recover the GMO mass formula including isospin symmetry breaking and electromagnetic corrections for the hadronic \text{SU}(3)-multiplets with fixed number of charm and bottom valence (anti)quarks.\par
So far, we have found that we can apply the mass formula from \autoref{sec:add_con} to hadronic \text{SU}(3)-multiplets with fixed number of charm and bottom valence (anti)quarks. The mass formula from \autoref{sec:add_con} contains undetermined parameters that can, in general, be different for different hadronic \text{SU}(3)-multiplets. In the next step, we want to link the undetermined parameters of a hadronic \text{SU}(3)-multiplet with exactly one charm valence (anti)quark and no bottom valence (anti)quark to the undetermined parameters of the hadronic \text{SU}(3)-multiplet that is obtained from the first multiplet via the exchange of charm and bottom quarks. Before we consider these hadronic \text{SU}(3)-multiplets, it is instructive and helpful to investigate an easily accessible example of a composite particle, namely the hydrogen atom. We think of the hydrogen atom as a composite particle that is formed by a proton and an electron. The mass $m_\text{H}$ of the hydrogen atom is, in a naive picture, just the sum of the proton mass $m_p$, the electron mass $m_e$, and the binding energy $E$:
\begin{gather*}
m_\text{H} = m_p + m_e + E.
\end{gather*}
In non-relativistic quantum mechanics, one can determine the binding energy $E$ of the hydrogen atom by solving the Schr\"odinger equation of the proton-electron-system. Neglecting all additional terms arising from the finestructure and the hyperfinestructure, one obtains:
\begin{gather*}
E = -\mu\frac{\alpha^2}{2n}\quad\text{with}\quad\mu\coloneqq\frac{m_em_p}{m_e+m_p},
\end{gather*}
where $\mu$ is the reduced mass and $n\in\mathbb{N}$ is the principal quantum number. As the mass of the electron is much smaller than the mass of the proton, we can expand the reduced mass $\mu$ in a Taylor series:
\begin{gather*}
\mu = m_e\sum^\infty_{k=0}\left(-\frac{m_e}{m_p}\right)^k.
\end{gather*}
Thus, we can write the mass $m_\text{H}$ of the hydrogen atom as the proton mass plus a power series in $m_e/m_p$:
\begin{gather*}
m_\text{H}(n) = m_p + m_e\left(1 - \frac{\alpha^2}{2n} + \mathcal{O}\left(\frac{m_e}{m_p}\right)\right).
\end{gather*}
Now consider an atom where we have exchanged the proton of the hydrogen atom by a particle with same electric charge, but different mass $m_d>m_e$ like, for instance, the deuteron. The mass $m_\text{D}$ of this atom is given by an expression similar to the one for the mass $m_\text{H}$ of the hydrogen atom; We simply have to replace the proton mass $m_p$ in the expression for $m_\text{H}$ with the new mass $m_d$:
\begin{gather*}
m_\text{D}(n) = m_d + m_e\left(1 - \frac{\alpha^2}{2n} + \mathcal{O}\left(\frac{m_e}{m_d}\right)\right).
\end{gather*}
We can easily see that the mass formulae for $m_\text{H}$ and $m_\text{D}$ give rise to a mass relation that is satisfied to lowest order in $m_e/m_p$ and $m_e/m_d$:
\begin{gather*}
m_\text{H}(n^\prime) - m_\text{H}(n) = m_\text{D}(n^\prime) - m_\text{D}(n) + \mathcal{O}\left(\frac{m_e}{m_p}\right) + \mathcal{O}\left(\frac{m_e}{m_d}\right),
\end{gather*}
where $n$ and $n^\prime$ are natural numbers. We can even say that the corrections to this mass relation have to be of order $m_e(1/m_p - 1/m_d)$, since this relation is trivially true, if $m_p = m_d$:
\begin{gather*}
m_\text{H}(n^\prime) - m_\text{H}(n) = m_\text{D}(n^\prime) - m_\text{D}(n) + \mathcal{O}\left(m_e\left(\frac{1}{m_p} - \frac{1}{m_d}\right)\right).
\end{gather*}
We are able to write down this mass relation, since the binding energy of a hydrogen-like atom exhibits a ``heavy nucleus symmetry'': If the nucleus of a hydrogen-like atom is much heavier than the surrounding electron, the binding energy of this hydrogen-like atom depends very little on the exact mass of the nucleus. This dependence on the mass of the nucleus is suppressed by factors of $m_e/m_N$, where $m_N$ is the mass of the nucleus. Therefore, the mass difference between two excitations of a hydrogen-like atom with a heavy nucleus is approximately symmetric under the exchange of the heavy nucleus, i.e., picks up corrections in the order of $m_e(1/m_N - 1/m_{N^\prime})$ under the exchange of a heavy nucleus $N$ with a heavy nucleus $N^\prime$.\par
The example of the hydrogen-like atoms demonstrated that a $1/m$-expansion allows us to find mass relations. To this end, we should be able to find mass relations between hadrons containing heavy valence quarks like charm or bottom quarks (and a heavy quark symmetry\footnote{For a deeper discussion of heavy quark symmetry, confer \cite{Neubert1994}.}, in general), if the hadron masses can be expanded in $1/m_Q$ where $m_Q$ is the mass of a heavy quark. Indeed, one often finds in heavy quark effective theory (HQET; cf. \cite{Neubert1994} and \cite{Jenkins1996}) that objects like the Lagrangian, fields, operators, and hadron masses are described by a $1/m_Q$-expansion, implying the existence of mass relations between hadrons containing heavy valence quarks in HQET (cf. \cite{Neubert1994}). As the parameter $1/m_Q$ is dimensionful, we need some mass scale $\Lambda$ to form a dimensionless expansion parameter $\Lambda/m_Q$. In HQET, this mass scale is typically $\Lambda_\text{QCD}$ (cf. \cite{Jenkins2008}), the mass scale of QCD obtained from dimensional transmutation.\par
Motivated by HQET, we want to incorporate a heuristic description of hadrons containing a heavy quark into our model of hadron masses. For this, consider the following matrix $S$:
\begin{gather*}
S \coloneqq \begin{pmatrix}1 & 0 & 0 & 0 & 0\\ 0 & 1 & 0 & 0 & 0\\ 0 & 0 & 1 & 0 & 0\\ 0 & 0 & 0 & 0 & 1\\ 0 & 0 & 0 & 1 & 0\end{pmatrix}.
\end{gather*}
If we label the columns from left to right and the rows from top to bottom with u, d, s, c, and b, the matrix $S$ defines a flavor transformation of the fields:
\begin{align*}
q&\xrightarrow{S\in\text{U}(5)}q^\prime \coloneqq \sum\limits_{p\in\{\text{u,d,s,c,b}\}} S_{qp}p,\\
\bar{q}&\xrightarrow{S\in\text{U}(5)}\bar{q}^\prime \coloneqq \sum\limits_{p\in\{\text{u,d,s,c,b}\}} S^\ast_{qp}\bar{p}.
\end{align*}
Note that $S$ is not a \text{SU}(5)-flavor transformation, but a \text{U}(5)-flavor transformation. The transformation given by $S$ only exchanges the charm with the bottom field, so it only exchanges charm and bottom quarks. Suppose that there exists an operator $\hat S$ such that:
\begin{gather*}
H^5_\text{QCD}(q^\prime,\bar{q}^\prime) = \hat{S}^\dagger H^5_\text{QCD}(q,\bar{q})\hat S.
\end{gather*}
We can interpret $\hat S$ as the operator that exchanges charm and bottom quarks when applied to hadronic states in the framework of the state formalism. Now consider a hadronic \text{SU}(3)-multiplet with exactly one charm and no bottom valence (anti)quarks and its bottom counterpart. Denote the hadronic states in the charm \text{SU}(3)-multiplet by $\Ket{Y,I,I_3}_\text{c}$ and the hadronic states in the bottom counterpart by $\Ket{Y,I,I_3}_\text{b}$. Motivated by HQET, we postulate that $\hat S$ links $\Ket{Y,I,I_3}_\text{c}$ and $\Ket{Y,I,I_3}_\text{b}$ in the following way:
\begin{gather*}
\hat S\Ket{Y,I,I_3}_\text{c} = \Ket{Y,I,I_3}_\text{b} + \mathcal{O}\left(\Lambda_\text{QCD}\left(\frac{1}{m_\text{c}} - \frac{1}{m_\text{b}}\right)\right).
\end{gather*}
We chose to use $\mathcal{O}\left(\Lambda_\text{QCD}\left(1/m_\text{c} - 1/m_\text{b}\right)\right)$ instead of $\mathcal{O}\left(\Lambda_\text{QCD}/m_\text{c}\right) + \mathcal{O}\left(\Lambda_\text{QCD}/m_\text{b}\right)$ here, since we want to reflect the fact that there would be an exact global \text{U}(2)-flavor symmetry between charm and bottom quark, if the charm and bottom quark masses were equal (neglecting the electroweak interaction). If this symmetry was exact, the masses of the hadrons corresponding to $\Ket{Y,I,I_3}_\text{c}$ and $\Ket{Y,I,I_3}_\text{b}$ would have to be equal. In this case, we would expect the equation above to have no corrections. However, the \text{U}(2)-flavor symmetry between charm and bottom is broken by the quark mass difference $m_\text{b} - m_\text{c}$, so corrections to the equation above have to scale with $m_\text{b} - m_\text{c}$. This is reflected in $\Lambda_\text{QCD}(1/m_\text{c} - 1/m_\text{b})$, as $\Lambda_\text{QCD}(1/m_\text{c} - 1/m_\text{b}) = \Lambda_\text{QCD}(m_\text{b}-m_\text{c})/m_\text{c}m_\text{b}$. Note that we only would expect the $\text{U}(2)$-flavor symmetry breaking terms to be proportional to $m_\text{b}-m_\text{c}$, if the $\text{U}(2)$-flavor symmetry was still an approximate symmetry, i.e., if the $\text{U}(2)$-flavor symmetry breaking was small enough such that we can treat the symmetry breaking term as a perturbation. In Nature, however, we observe that the symmetry breaking is quite large, as $m_\text{b}\gg m_\text{c}$. To this end, we have to understand the last arguments with a grain of salt. Further notice that we have completely neglected any logarithmic correction or any logarithmic scale dependence originating from quantum loops in the discussion of the corrections, even though they might be quite large.\par
Let us now decompose the Hamilton operator $H^5_\text{QCD}$ into a \text{SU}(3)-singlet plus the 3rd and 8th component of a \text{SU}(3)-octet:
\begin{gather*}
H^5_\text{QCD} = H^{5;\,0}_\text{QCD} + \varepsilon_3\cdot H^{8}_{\text{QCD};\, 3} + \varepsilon_8\cdot H^{8}_{\text{QCD};\, 8},
\end{gather*}
where $\varepsilon_3\cdot H^{8}_{\text{QCD};\, 3} + \varepsilon_8\cdot H^{8}_{\text{QCD};\, 8}$ coincides with the identically named term from \autoref{sec:EFT+H_Pert} and \autoref{sec:GMO_formula} and $H^{5;\,0}_\text{QCD}$ is the collection of the remaining terms. $H^{5;\,0}_\text{QCD}$ is a \text{SU}(3)-singlet under uds-flavor transformations, while $H^8_{\text{QCD};\, 3}$ and $H^8_{\text{QCD};\, 8}$ are the 3rd and 8th component of a \text{SU}(3)-octet under uds-flavor transformations, respectively. Let us denote the mass of the hadron corresponding to $\Ket{Y,I,I_3}_\text{c}$ by $m^\text{c}(Y,I,I_3)$ and the mass of the hadron corresponding to $\Ket{Y,I,I_3}_\text{b}$ by $m^\text{b}(Y,I,I_3)$. Following the state formalism, we find:
\begin{align*}
m^\text{b}(Y,I,I_3) &= \leftidx{_\text{b}}{\Braket{Y,I,I_3|H^{5;\,0}_\text{QCD}(q,\bar{q}) + \varepsilon_3\cdot H^{8}_{\text{QCD};\, 3}(q,\bar{q}) + \varepsilon_8\cdot H^{8}_{\text{QCD};\, 8}(q,\bar{q})|Y,I,I_3}}{_\text{b}}\\
&\ \ \ + \mathcal{O}(\varepsilon_i\varepsilon_j)\\
&= \leftidx{_\text{b}}{\Braket{Y,I,I_3|H^5_\text{QCD}(q,\bar{q})|Y,I,I_3}}{_\text{b}} + \mathcal{O}(\varepsilon_i\varepsilon_j)\\
&= \leftidx{_\text{c}}{\Braket{Y,I,I_3|\hat{S}^\dagger H^5_\text{QCD}(q,\bar{q})\hat S|Y,I,I_3}}{_\text{c}} + \mathcal{O}\left(\Lambda_\text{QCD}\left(\frac{1}{m_\text{c}} - \frac{1}{m_\text{b}}\right)\right) + \mathcal{O}(\varepsilon_i\varepsilon_j)\\
&= \leftidx{_\text{c}}{\Braket{Y,I,I_3|H^5_\text{QCD}(q^\prime,\bar{q}^\prime)|Y,I,I_3}}{_\text{c}} + \mathcal{O}\left(\Lambda_\text{QCD}\left(\frac{1}{m_\text{c}} - \frac{1}{m_\text{b}}\right)\right) + \mathcal{O}(\varepsilon_i\varepsilon_j)\\
&= \leftidx{_\text{c}}{\Braket{Y,I,I_3|H^{5;\,0}_\text{QCD}(q^\prime,\bar{q}^\prime) + \varepsilon_3\cdot H^{8}_{\text{QCD};\, 3}(q,\bar{q}) + \varepsilon_8\cdot H^{8}_{\text{QCD};\, 8}(q,\bar{q})|Y,I,I_3}}{_\text{c}}\\
&\ \ \ + \mathcal{O}\left(\Lambda_\text{QCD}\left(\frac{1}{m_\text{c}} - \frac{1}{m_\text{b}}\right)\right) + \mathcal{O}(\varepsilon_i\varepsilon_j),
\end{align*}
where $\mathcal{O}(\varepsilon_i\varepsilon_j)$ combines all higher order corrections from \autoref{eq:iso_tot_sym} and \autoref{eq:iso_octet}. The last line follows, as the flavor transformation given by $S$ only affects the charm and bottom fields, but only the \text{SU}(3)-singlet $H^{5;\, 0}_\text{QCD}$ contains charm and bottom fields. The mass $m^\text{c}(Y,I,I_3)$ of the charmed hadrons is similarly given by:
\begin{align*}
m^\text{c}(Y,I,I_3) &= \leftidx{_\text{c}}{\Braket{Y,I,I_3|H^{5;\,0}_\text{QCD}(q,\bar{q}) + \varepsilon_3\cdot H^{8}_{\text{QCD};\, 3}(q,\bar{q}) + \varepsilon_8\cdot H^{8}_{\text{QCD};\, 8}(q,\bar{q})|Y,I,I_3}}{_\text{c}}\\
&\ \ \ + \mathcal{O}(\varepsilon_i\varepsilon_j).
\end{align*}
We can see that $m^\text{c}(Y,I,I_3)$ and $m^\text{b}(Y,I,I_3)$ only differ in the \text{SU}(3)-singlet except for higher order corrections. All other terms that appear in the mass formulae are equal. This allows us to parametrize $m^\text{c}(Y,I,I_3)$ and $m^\text{b}(Y,I,I_3)$ for totally symmetric \text{SU}(3)-multiplets in the following way (cf. \autoref{eq:iso_tot_sym}):
\begin{align}
m^\text{c}(Y,I,I_3) &= m^\text{c}_0 + m^F_3\cdot I_3 + \tilde{m}^F_8\cdot Y + \mathcal{O}\left(\varepsilon_i\varepsilon_j\right) + \mathcal{O}\left(\Lambda_\text{QCD}\left(\frac{1}{m_\text{c}} - \frac{1}{m_\text{b}}\right)\right),\label{eq:mass_tot_sym_c}\\
m^\text{b}(Y,I,I_3) &= m^\text{b}_0 + m^F_3\cdot I_3 + \tilde{m}^F_8\cdot Y + \mathcal{O}\left(\varepsilon_i\varepsilon_j\right) + \mathcal{O}\left(\Lambda_\text{QCD}\left(\frac{1}{m_\text{c}} - \frac{1}{m_\text{b}}\right)\right),\label{eq:mass_tot_sym_b}
\end{align}
where $m^\text{c}_0$ and $m^\text{b}_0$ are unrelated parameters. One can write down similar parametrizations for the other \text{SU}(3)-multiplets.\par
If we want to include electromagnetic corrections in the mass parametrization, we have to consider two important points: Firstly, the charm and bottom quark carry different electric charge. Secondly, we were able to parametrize electromagnetic corrections within a hadronic \text{SU}(3)-multiplet in \autoref{sec:add_con}, as the flavors of all valence quarks within the multiplet were part of the approximate global \text{SU}(3)-flavor symmetry. This does not apply to hadronic \text{SU}(3)-multiplets containing heavy quarks. Therefore, we have to introduce two contributions to the mass formula of hadronic \text{SU}(3)-multiplets containing exactly one heavy quark: One term $\Delta^{LL}_\alpha$ that describes the electromagnetic interaction between two light quarks and another term $\Delta^{LH}_\alpha$ that describes the electromagnetic interaction between a light and a heavy quark. Note that $\Delta^{LL}_\alpha$ does not arise for mesons. With these remarks in mind, we can repeat the parametrization of electromagnetic contributions from \autoref{sec:add_con} to find for baryons:
\begin{align}
m^\text{c}_\alpha(Y,I,I_3) &= m^\text{c}(Y,I,I_3) + \Delta^{LH}_\alpha\sum_{j\in\{k,l\}}\frac{q_jq_\text{c}}{e^2} + \Delta^{LL}_\alpha\frac{q_kq_l}{e^2}\label{eq:mass_ele_c}\\
&\ \ \ + \mathcal{O}\left(\alpha\varepsilon_i\right) + \mathcal{O}\left(\Lambda_\text{QCD}\left(\frac{1}{m_\text{c}} - \frac{1}{m_\text{b}}\right)\right),\nonumber\\
m^\text{b}_\alpha(Y,I,I_3) &= m^\text{b}(Y,I,I_3) + \Delta^{LH}_\alpha\sum_{j\in\{k,l\}}\frac{q_jq_\text{b}}{e^2} + \Delta^{LL}_\alpha\frac{q_kq_l}{e^2}\label{eq:mass_ele_b}\\
&\ \ \ + \mathcal{O}\left(\alpha\varepsilon_i\right) + \mathcal{O}\left(\Lambda_\text{QCD}\left(\frac{1}{m_\text{c}} - \frac{1}{m_\text{b}}\right)\right),\nonumber
\end{align}
where we employ a notation similar to \autoref{eq:mass_ele}. $\mathcal{O}\left(\alpha\varepsilon_i\right)$ combines all higher order corrections involving $\alpha$ (cf. \autoref{eq:mass_ele}). $k$ and $l$ denote the light valence quarks of the hadron, $q_k$ and $q_l$ denote their charge. In the case of mesons, we have to slightly modify the formula: There is only one light valence quark for mesons and we have to set $\Delta^{LL}_\alpha = 0$.

\newpage
\chapter{Overview and Discussion of Mass Relations}\label{chap:mass_relations}

In \autoref{chap:hadron_masses} and \autoref{chap:GMO_formula}, we explored how $\text{SU}(3)\rightarrow\text{SU}(2)\times\text{U}(1)\rightarrow\text{U}(1)\times\text{U}(1)$ flavor symmetry breaking together with electromagnetic corrections and heavy quark symmetry enters the mass parametrization of hadrons in multiplets. We want to investigate the mass relations arising from the hadronic mass parametrizations in this chapter. In particular, we examine the mass relations of hadrons within sextets, octets, and decuplets in \autoref{sec:rel_within_multiplets} and the mass relations between charm and bottom antitriplets and sextets in \autoref{sec:rel_between_multiplets}. For each multiplet or pair of charm and bottom multiplet, we present the weight diagram(s) of the multiplet(s), the mass parametrization of the hadrons within the multiplet(s), and the mass relations following from this parametrization. To make the weight diagram(s) and mass relations more accessible, we label each weight in the weight diagram(s) with the corresponding hadron of a prominent example for that multiplet/pair of multiplets. After the presentation of the mass relations, we discuss the order of magnitude of the dominant correction for every mass relation within/between that multiplet/pair of multiplets.

\section{Mass Relations within Multiplets}\label{sec:rel_within_multiplets}

Before we dive into the discussion of the multiplets, let us first consider which multiplets we expect to be realized in Nature and make some remarks about weight diagrams and notations. As already explained, we think of hadrons as composite particles in the quark model. In this picture, we imagine the hadrons to be made out of valence and sea quarks. The valence quarks dictate the flavor structure of the hadron in this model, i.e., tell us how the hadron transforms under flavor transformations. In regard to $\text{SU}(3)$-flavor transformations, the light quarks and antiquarks up, down, and strange transform under the fundamental representations $3$ and $\bar{3}$, respectively. Baryons are considered to be made out of three valence quarks. If the baryon at hand is a light baryon, i.e., if the valence quark content of the baryon only consists of light quarks (u, d, or s), the baryon transforms under the tensor product representation of three fundamental representations $3$. Using Young tableaux, one then finds:
\begin{gather*}
3\otimes 3\otimes 3 = 1\oplus 8\oplus 8\oplus 10.
\end{gather*}
If the baryon consists of one charm or bottom and two light valence quarks, it transforms under:
\begin{gather*}
3\otimes 3 = \bar{3}\oplus 6.
\end{gather*}
From this consideration, we expect the light baryons to form singlets, octets, and decuplets and the baryons containing exactly one charm or bottom valence quark to form antitriplets and sextets. However, light baryon singlets are not yet discovered in Nature\footnote{And we do not expect to find one, as the symmetry and statistics of baryonic states forbid the existence of baryon singlets.}.\par
We can determine the multiplets formed by mesons in a similar way: The valence quark content of mesons consists of a quark and an antiquark. Therefore, light mesons, i.e., mesons only containing light valence quarks transform under:
\begin{gather*}
3\otimes \bar{3} = 1\oplus 8,
\end{gather*}
while mesons containing exactly one charm/bottom valence quark or antiquark simply transform under $\bar{3}$ or $3$, respectively. This means that light mesons form singlets and octets, while mesons containing exactly one charm/bottom valence quark or antiquark form antitriplets or triplets, respectively. Note that the hadronic (anti)triplets do not contain enough hadrons to form mass relations within the (anti)triplet.\par
When we introduced weights in \autoref{sec:GMO_formula}, we stated that the collection of all weights of one \text{SU}(3)-multiplet uniquely characterizes that multiplet. Hence, it is sufficient to consider just the weights if one wishes to describe the multiplets of $\text{SU}(3)$. Commonly, the weights of a \text{SU}(3)-multiplet are presented in graphical form, i.e., by a weight diagram. A weight diagram of a \text{SU}(3)-multiplet is a two-dimensional coordinate system where the axes correspond to the components of the weights and each weight is indicated by a dot at the appropriate position. In the following sections, we rescale the axes of the weight diagrams we display such that the axes of the weight diagram coincide with the hypercharge $Y$ and the third isospin component $I_3$. Note that the rescaling of the axes breaks the rotational symmetry of the weight diagrams (the weight diagrams of \text{SU}(3)-multiplets are invariant under rotations by $\frac{2\pi}{3}$; cf. \autoref{sec:GMO_formula} and \cite{Lichtenberg}).\par
Hadrons correspond to vectors in the multiplet, thus, each hadron in a multiplet has a weight. However, multiple hadrons in a multiplet might have the same weight. If the multiplicity of a weight in a multiplet is larger than one, i.e, if there are multiple linearly independent vectors and, thus, hadrons in the multiplet which correspond to that weight, we add a dot near that weight for every additional linearly independent vector/hadron to indicate the multiplicity in the weight diagram.\par
As {${\text{SU}(2)\times\text{U}(1)}$} is a Lie subgroup of $\text{SU}(3)$, we can group basis vectors of a $\text{SU}(3)$-multiplet and, thus, the corresponding weights in a weight diagram into {${\text{SU}(2)\times\text{U}(1)}$}-multiplets. All vectors with same hypercharge $Y$ and total isospin $I$ form a {${\text{SU}(2)\times\text{U}(1)}$-}multiplet. The trivial representation of {${\text{SU}(2)\times\text{U}(1)}$} has a hypercharge and total isospin of $0$. In the following weight diagrams, we connect weights corresponding to the same {${\text{SU}(2)\times\text{U}(1)}$-}multiplet with red lines, if the {${\text{SU}(2)\times\text{U}(1)}$-}multiplet consists of more than one weight.\par
Now consider the group of flavor transformations between the up and strange quark, denoted by {${\text{SU}(2)_\text{us}\times\text{U}(1)}$}, and the group of flavor transformations between the down and strange quark, denoted by {${\text{SU}(2)_\text{ds}\times\text{U}(1)}$}:
\begin{align*}
\text{SU}(2)_\text{us}\times\text{U}(1) &\coloneqq \left\{\begin{pmatrix}e^{i\alpha}A_{11} & 0 & e^{i\alpha}A_{12}\\ 0 & e^{-2i\alpha} & 0\\ e^{i\alpha}A_{21} & 0 & e^{i\alpha}A_{22}\end{pmatrix}\middle| \alpha\in\mathbb{R};\, \begin{pmatrix}A_{11} & A_{12}\\ A_{21} & A_{22}\end{pmatrix}\in\text{SU}(2)\right\},\\
\text{SU}(2)_\text{ds}\times\text{U}(1) &\coloneqq \left\{\begin{pmatrix}e^{-2i\alpha} & 0\\ 0 & e^{i\alpha}A\end{pmatrix}\middle| \alpha\in\mathbb{R};\, A\in\text{SU}(2)\right\}.
\end{align*}
As the groups {${\text{SU}(2)_\text{us}\times\text{U}(1)}$} and {${\text{SU}(2)_\text{ds}\times\text{U}(1)}$} are also Lie subgroups of $\text{SU}(3)$, we can likewise group the weights into {${\text{SU}(2)_\text{us}\times\text{U}(1)}$}- and {${\text{SU}(2)_\text{ds}\times\text{U}(1)}$}-multiplets, respectively. If needed, we connect weights corresponding to the same {${\text{SU}(2)_\text{us}\times\text{U}(1)}$}-multiplet with green lines and weights corresponding to the same {${\text{SU}(2)_\text{ds}\times\text{U}(1)}$}-multiplet with blue lines. A recipe for the construction of weight diagrams of $\text{SU}(3)$ and for grouping the weights of such a weight diagram into {${\text{SU}(2)\times\text{U}(1)}$-},\linebreak {${\text{SU}(2)_\text{us}\times\text{U}(1)}$-}, or {${\text{SU}(2)_\text{ds}\times\text{U}(1)}$-}multiplets can be found in \cite{Lichtenberg}.\par
$\text{U}(1)\times\text{U}(1)$ is also a Lie subgroup of $\text{SU}(3)$, so $\text{SU}(3)$-multiplets decompose into $\text{U}(1)\times\text{U}(1)$-multiplets. However, $\text{U}(1)\times\text{U}(1)$ is an Abelian Lie group, hence, all multiplets are one-dimensional. All vectors in a $\text{U}(1)\times\text{U}(1)$-multiplet are eigenvectors of $I^{(\sigma\otimes\bar{\sigma})}_3$ and $Y^{(\sigma\otimes\bar{\sigma})}$, therefore, every weight in a weight diagram corresponds to a $\text{U}(1)\times\text{U}(1)$-multiplet. The trivial representation of $\text{U}(1)\times\text{U}(1)$ has a hypercharge and third isospin component of $0$.\par
Lastly, we need to clarify some details regarding the (expansion) parameters $\varepsilon_3$, $\varepsilon_8$, $\alpha$, and $\varepsilon_\text{cb}\coloneqq\Lambda_\text{QCD}\left(1/m_\text{c} - 1/m_\text{b}\right)$ we used throughout this work before we can begin the discussion of the mass relations. $\varepsilon_3$ and $\varepsilon_8$ have mass dimension 1, while $\alpha$ and $\varepsilon_\text{cb}$ are dimensionless. A dimensionful quantity as an expansion parameter has the advantage that it allows us to at least estimate how large the first order contribution is, but it does not immediately tell us by which factor the higher order contributions are suppressed. Conversely, a dimensionless quantity as an expansion parameter does not allow us to estimate the first order contribution, but can be used as a suppression factor for the higher order contributions. For instance, we expect the first order contribution of $\varepsilon_8$ to a hadron mass to be of the order of \SI{100}{MeV}, as the current quark mass difference between strange and up/down quark is of that order (cf. comments on quark masses in \autoref{sec:Trafo_QCD} and review \textit{66. Quark Masses} in \cite{PDG}). Likewise, the first order contribution of $\varepsilon_3$ to a hadron mass should be in the order of up-down current quark mass difference, i.e., in the order of \SI{1}{MeV} (cf. comments on quark masses in \autoref{sec:Trafo_QCD} and review \textit{66. Quark Masses} in \cite{PDG}). However, $\varepsilon_3$ and $\varepsilon_8$ per se do not indicate by which factor higher order contributions are suppressed. To obtain the suppression factors of higher order terms of $\varepsilon_3$ and $\varepsilon_8$, we need to know the (unperturbed) singlet contribution $m_0$ to the mass in a hadron multiplet. As we expanded the hadron masses at $m_0$, higher order contributions are suppressed by powers of $\varepsilon_{3/8}/m_0$. In the following sections, we will use $\varepsilon_{3/8}$ both in the sense of a dimensionful and dimensionless parameter, i.e., we will use $\varepsilon_{3/8}$ to denote both $\varepsilon_{3/8}$ and $\varepsilon_{3/8}/m_0$. For $m_0\sim\SI{1}{GeV}$, the order of magnitude of the dimensionless parameters $\varepsilon_3$, $\varepsilon_8$, $\alpha$, and $\varepsilon_\text{cb}$ is given in \autoref{tab:exp_param}. The parameters are ordered from highest to lowest. The values used for this estimate can be found in \cite{PDG}.

\begin{table}[t!]
\centering
\caption{Order of magnitude of several expansion parameters for hadrons with a mass of roughly \SI{1}{GeV}.}
\begin{tabular}{|c|c|}
\hline
Expansion or suppression parameter & Order of magnitude\\\hline
$\varepsilon_\text{cb}\coloneqq\Lambda_\text{QCD}\left(\frac{1}{m_\text{c}} - \frac{1}{m_\text{b}}\right)$ & $\mathcal{O}\left(10\%\right)$\\
$\varepsilon_8$ & $\mathcal{O}\left(10\%\right)$\\
$\alpha$ & $\mathcal{O}\left(1\%\right)$\\
$\varepsilon_3$ & $\mathcal{O}\left(1\%\right)$ to $\mathcal{O}\left(0.1\%\right)$\\\hline
\end{tabular}
\label{tab:exp_param}
\end{table}
Note that the hadron masses only enter linearly in all following mass formulae and relations. Aside from a few exceptions, however, the following formulae and relations are also valid, if one replaces the hadron masses with their squares. For a deeper discussion, confer \autoref{chap:hadron_masses} and \autoref{chap:data}.

\subsection*{Sextet}

An example for a baryonic sextet is the lowest-energy charm sextet with\footnote{Note that the quantum numbers $J^P$ we present in this work are not measured directly for most hadrons, but assumed based on quark model predictions (cf. \cite{PDG}).} $J^P = 1/2^+$. Its weight diagram is shown in \autoref{fig:c_sextet}.
\begin{figure}[htpb]
\centering
\includegraphics[width=\textwidth]{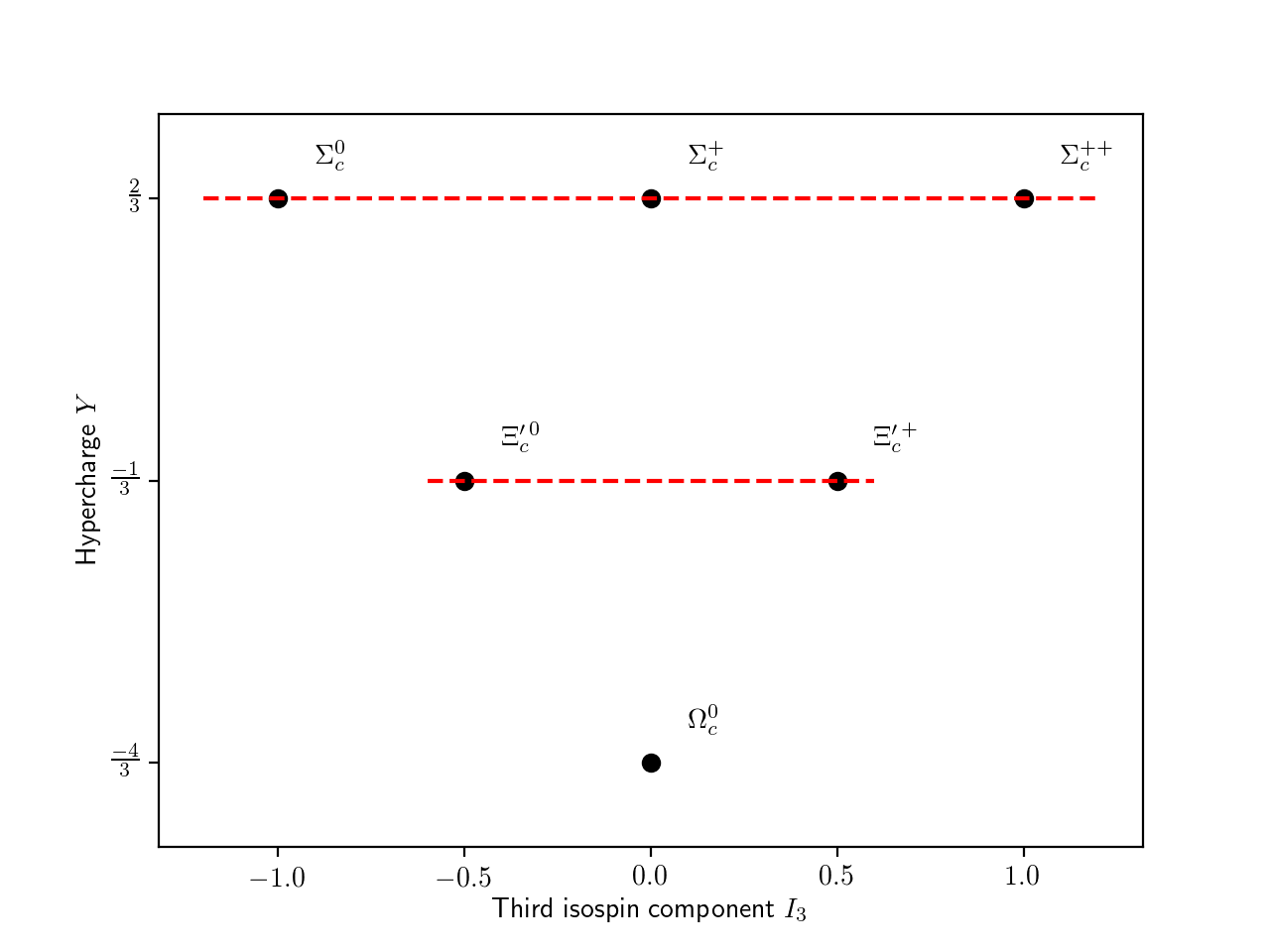}
\caption{The weight diagram of a sextet. For every weight, the name of the corresponding baryon from the charm sextet with $J^P = 1/2^+$ is included. The red lines visualize isospin multiplets of $\text{SU}(2)\times\text{U}(1)$. All weights not connected by red lines are one-dimensional isospin multiplets.}
\label{fig:c_sextet}
\end{figure}
Using \autoref{eq:mass_tot_sym_c} and \autoref{eq:mass_ele_c}, we find for the masses in the sextet:
\begin{alignat*}{4}
&m^\text{c}_\alpha\left(\frac{2}{3},1,1\right)&&\equiv m_{\Sigma^{++}_\text{c}} &&= m^\text{c}_0 + m^F_3 &&+ \frac{2}{3}\tilde{m}^F_8 + \frac{8}{9}\Delta^{LH}_\alpha + \frac{4}{9}\Delta^{LL}_\alpha,\\
&m^\text{c}_\alpha\left(\frac{2}{3},1,0\right)&&\equiv m_{\Sigma^{+}_\text{c}} &&= m^\text{c}_0 &&+ \frac{2}{3}\tilde{m}^F_8 + \frac{2}{9}\Delta^{LH}_\alpha - \frac{2}{9}\Delta^{LL}_\alpha,\\
&m^\text{c}_\alpha\left(\frac{2}{3},1,-1\right)&&\equiv m_{\Sigma^{0}_\text{c}} &&= m^\text{c}_0 - m^F_3 &&+ \frac{2}{3}\tilde{m}^F_8 - \frac{4}{9}\Delta^{LH}_\alpha + \frac{1}{9}\Delta^{LL}_\alpha,\\
&m^\text{c}_\alpha\left(-\frac{1}{3},\frac{1}{2},\frac{1}{2}\right)&&\equiv m_{\Xi^{\prime\, +}_\text{c}} &&= m^\text{c}_0 + \frac{1}{2}m^F_3 &&- \frac{1}{3}\tilde{m}^F_8 + \frac{2}{9}\Delta^{LH}_\alpha - \frac{2}{9}\Delta^{LL}_\alpha,\\
&m^\text{c}_\alpha\left(-\frac{1}{3},\frac{1}{2},-\frac{1}{2}\right)&&\equiv m_{\Xi^{\prime\, 0}_\text{c}} &&= m^\text{c}_0 - \frac{1}{2}m^F_3 &&- \frac{1}{3}\tilde{m}^F_8 - \frac{4}{9}\Delta^{LH}_\alpha + \frac{1}{9}\Delta^{LL}_\alpha,\\
&m^\text{c}_\alpha\left(-\frac{4}{3},0,0\right)&&\equiv m_{\Omega^{0}_\text{c}} &&= m^\text{c}_0 &&- \frac{4}{3}\tilde{m}^F_8 - \frac{4}{9}\Delta^{LH}_\alpha + \frac{1}{9}\Delta^{LL}_\alpha,
\end{alignat*}
where we omitted all ``$\mathcal{O}$'' for the sake of clarity. This mass parametrization allows us to find two inequivalent mass relations:
\begin{align}
m_{\Sigma^+_\text{c}} - m_{\Sigma^0_\text{c}} &= m_{\Xi^{\prime\, +}_\text{c}} - m_{\Xi^{\prime\, 0}_\text{c}} + \mathcal{O}\left(\alpha\varepsilon_8\right) + \mathcal{O}\left(\varepsilon_3\varepsilon_8\right),\label{eq:sextet_c_iso_bre}\\
m_{\Sigma^0_\text{c}} - m_{\Xi^{\prime\, 0}_\text{c}} &= m_{\Xi^{\prime\, 0}_\text{c}} - m_{\Omega^0_\text{c}} + \mathcal{O}\left(\varepsilon^2_8\right).\label{eq:sextet_c_GMO_equal_spacing}
\end{align}
This time, we have included the order of magnitude of the dominant correction(s) in the mass relations. Of course, analogous mass relations apply to all baryonic charm and bottom sextets. \autoref{eq:sextet_c_GMO_equal_spacing} is just the equal spacing rule for sextets following from the GMO mass formula (cf. \autoref{eq:GMO_mass_formula}). \autoref{eq:sextet_c_iso_bre} relates the mass splittings of different isospin multiplets in the sextet. The particular form of this mass relation is actually rather interesting. It allows us to derive \autoref{eq:sextet_c_iso_bre} and the order of magnitude of its dominant corrections without the knowledge of a mass parametrization. For this, suppose that the $\text{SU}(2)\times\text{U}(1)$-isospin symmetry was exact. Then, \autoref{eq:sextet_c_iso_bre} would have to be exact, as all baryons in one isospin multiplet would have to have the same mass. One can show this analogously to the case of $\text{SU}(3)$, for which we showed that all hadrons in one $\text{SU}(3)$-multiplet would have to have the same mass, if $\text{SU}(3)$ was an exact symmetry (cf. \autoref{sec:GMO_formula}). However, only the mass and charge difference between the up and down quark break the isospin symmetry (neglecting weak interaction), meaning that only $\varepsilon_3$ and $\alpha$ break the isospin symmetry. Hence, every correction to \autoref{eq:sextet_c_iso_bre} has to be proportional to $\varepsilon_3$ or $\alpha$. Furthermore, \autoref{eq:sextet_c_iso_bre} would also be exact, if the $\text{SU}(2)_\text{ds}\times\text{U}(1)$-flavor transformations of the down and strange quark were an exact symmetry. We can see this by rewriting \autoref{eq:sextet_c_iso_bre}:
\begin{gather*}
m_{\Sigma^+_\text{c}} - m_{\Xi^{\prime\, +}_\text{c}} = m_{\Sigma^0_\text{c}} - m_{\Xi^{\prime\, 0}_\text{c}}.
\end{gather*}
This equation is exactly satisfied for exact $\text{SU}(2)_\text{ds}\times\text{U}(1)$-symmetry, as $\Sigma^+_\text{c}$ and $\Xi^{\prime\, +}_\text{c}$ and likewise $\Sigma^0_\text{c}$ and $\Xi^{\prime\, 0}_\text{c}$ have the same mass for exact $\text{SU}(2)_\text{ds}\times\text{U}(1)$-symmetry, since they are contained in the same $\text{SU}(2)_\text{ds}\times\text{U}(1)$-multiplet (cf. \autoref{fig:sextet_ds}).
\begin{figure}[b!]
\centering
\includegraphics[width=0.7\textwidth]{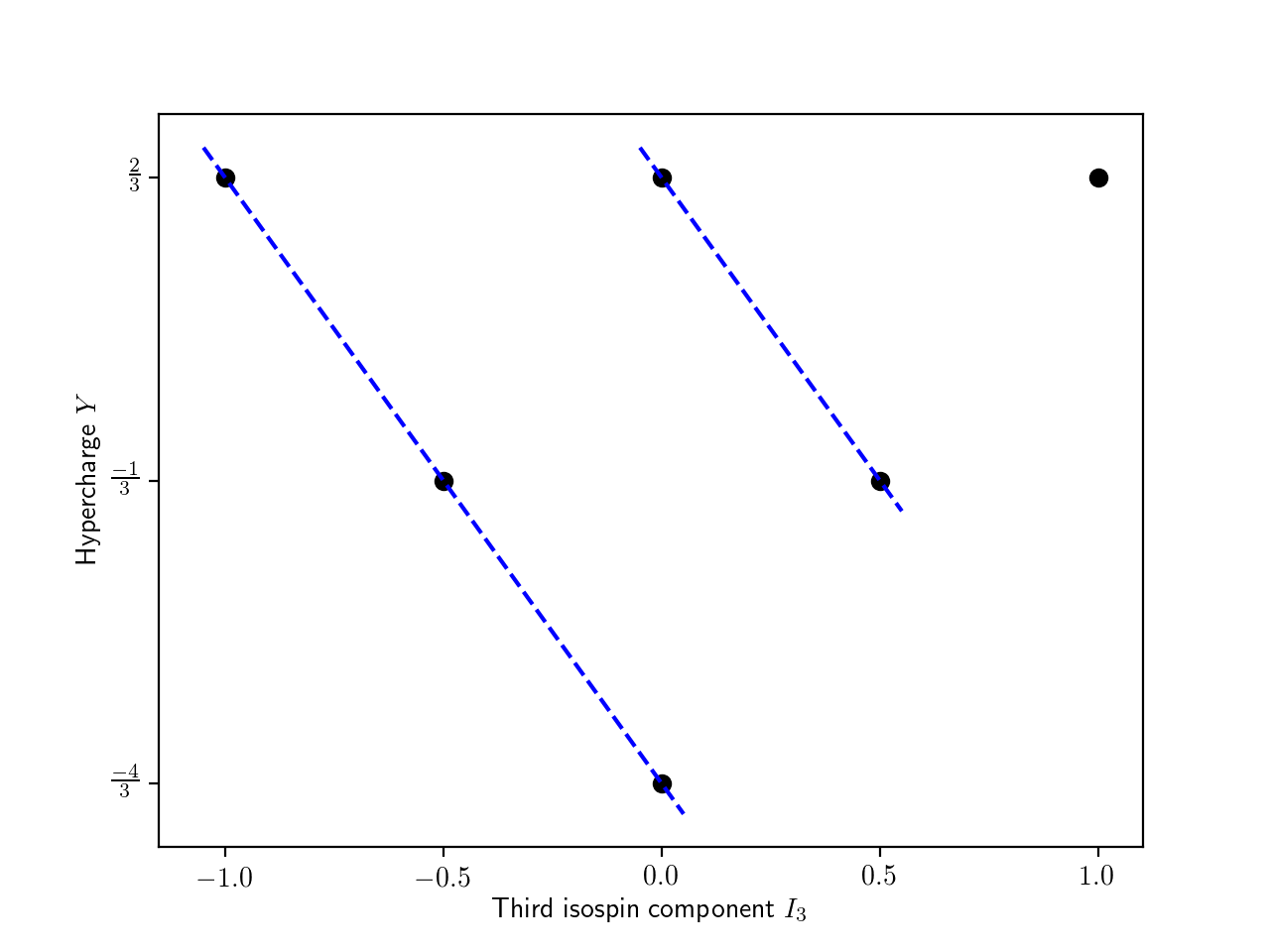}
\caption{The weight diagram of a sextet. The blue lines visualize the multiplets of $\text{SU}(2)_\text{ds}\times\text{U}(1)$. All weights not connected by blue lines are one-dimensional multiplets.}
\label{fig:sextet_ds}
\end{figure}
However, the $\text{SU}(2)_\text{ds}\times\text{U}(1)$-symmetry is only broken by the mass difference between down and strange quark, so roughly by $\varepsilon_8$. Hence, the corrections to \autoref{eq:sextet_c_iso_bre} have to be proportional to $\varepsilon_8$. In total, this means that every correction to \autoref{eq:sextet_c_iso_bre} has to be proportional to $\alpha\varepsilon_8$ or $\varepsilon_3\varepsilon_8$.\par
It is easy to see that the dominant correction to \autoref{eq:sextet_c_GMO_equal_spacing} is of order $\varepsilon^2_8$: Neglecting the weak interaction, the only parameters that give rise to corrections to \autoref{eq:mass_tot_sym_c} and \autoref{eq:mass_ele_c} are $\varepsilon_3$, $\varepsilon_8$, $\alpha$, and $\varepsilon_\text{cb}$. $\varepsilon_\text{cb}$ and $\varepsilon_8$ induce the largest corrections (cf. \autoref{tab:exp_param}). However, \autoref{eq:sextet_c_GMO_equal_spacing} does not contain any bottom hadrons, so corrections proportional to $\varepsilon_\text{cb}$ do not occur. As terms in the order of $\varepsilon_8$, $\alpha$, and $\varepsilon_3$ are respected by \autoref{eq:mass_ele_c} and, thus, by \autoref{eq:sextet_c_GMO_equal_spacing}, the dominant correction is of order $\varepsilon^2_8$.

\subsection*{Octet}

The $J^P = 1/2^+$ baryon octet is the lightest and probably one of the most known baryon multiplets. Its weight diagram is displayed in \autoref{fig:baryon_octet}.
\begin{figure}[htpb]
\centering
\includegraphics[width=\textwidth]{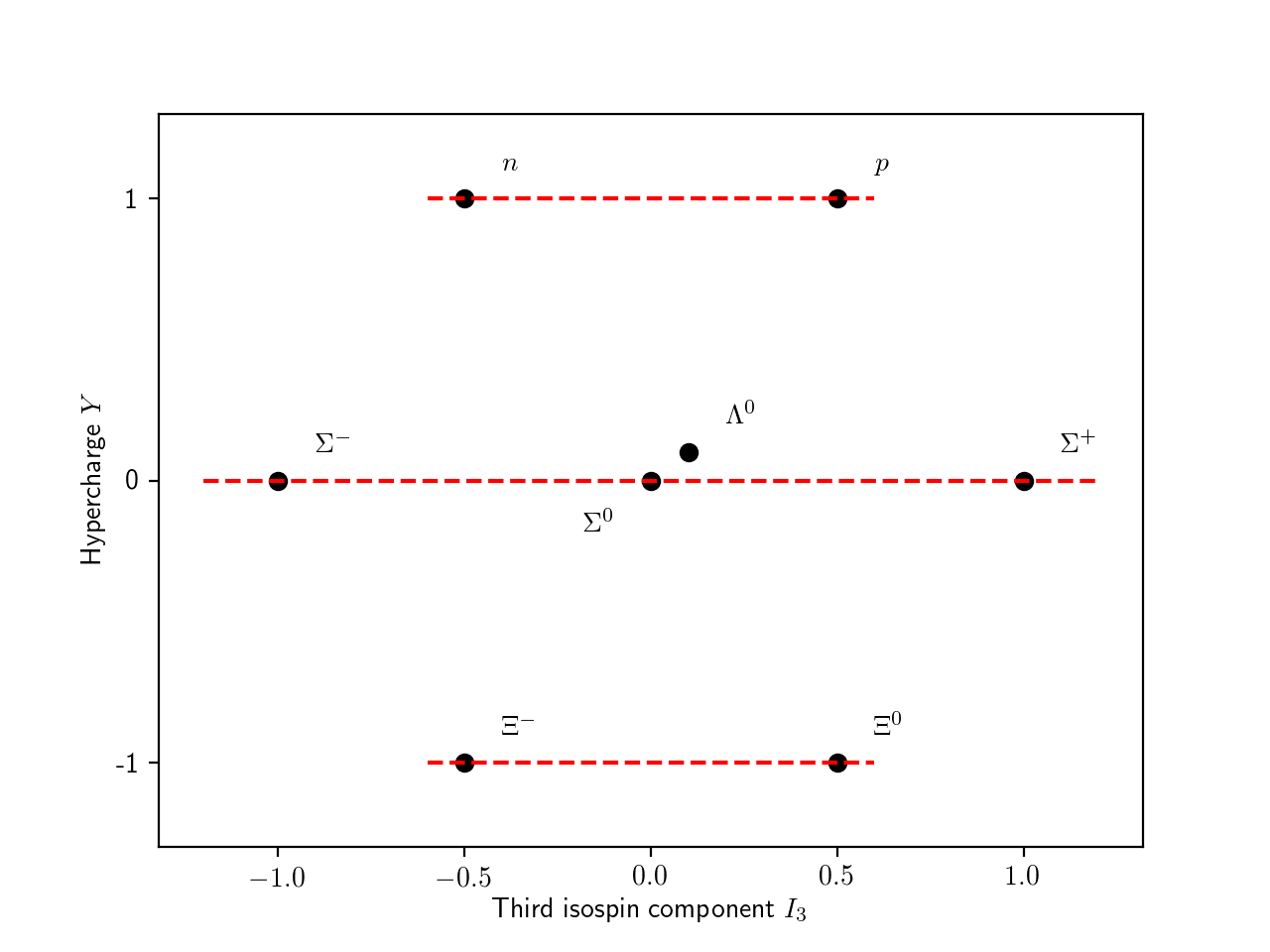}
\caption{The weight diagram of an octet. For every weight, the name of the corresponding baryon from the $J^P = 1/2^+$ baryon octet is included. The red lines visualize isospin multiplets of $\text{SU}(2)\times\text{U}(1)$. All weights not connected by red lines are isospin singlets.}
\label{fig:baryon_octet}
\end{figure}
Note that both $\Sigma^0$ as well as $\Lambda^0$ have a hypercharge and third isospin component of $0$. The difference between the two particles is that $\Lambda^0$ also has a total isospin of $0$ making it a singlet of $\text{SU}(2)\times\text{U}(1)$, while $\Sigma^0$ has a total isospin of $1$ forming together with $\Sigma^+$ and $\Sigma^-$ an isospin multiplet. We can sum this up in a very elegant way: The octet has two $\text{U}(1)\times\text{U}(1)$-singlets, $\Sigma^0$ and $\Lambda^0$, but only one $\text{SU}(2)\times\text{U}(1)$-isospin singlet, $\Lambda^0$. Using \autoref{eq:iso_octet} and \autoref{eq:mass_ele}, we find for the masses in the octet:
\begin{alignat*}{6}
&m_\alpha\left(1,\frac{1}{2},\frac{1}{2}\right)&&\equiv m_{p} &&= \tilde{m}_0 + \frac{1}{2}m^F_3 &&+ \tilde{m}^F_8 + \frac{1}{2}m^D_3 &&+ \frac{1}{2}\tilde{m}^D_8,&&\\
&m_\alpha\left(1,\frac{1}{2},-\frac{1}{2}\right)&&\equiv m_{n} &&= \tilde{m}_0 - \frac{1}{2}m^F_3 &&+ \tilde{m}^F_8 - \frac{1}{2}m^D_3 &&+ \frac{1}{2}\tilde{m}^D_8 &&- \frac{1}{3}\Delta_\alpha,\\
&m_\alpha\left(0,1,1\right)&&\equiv m_{\Sigma^{+}} &&= \tilde{m}_0 + m^F_3 && &&+ 2\tilde{m}^D_8,&&\\
&m_\alpha\left(0,1,0\right)&&\equiv m_{\Sigma^{0}} &&= \tilde{m}_0 && &&+ 2\tilde{m}^D_8 &&- \frac{1}{3}\Delta_\alpha,\\
&m_\alpha\left(0,1,-1\right)&&\equiv m_{\Sigma^{-}} &&= \tilde{m}_0 - m^F_3 && &&+ 2\tilde{m}^D_8 &&+ \frac{1}{3}\Delta_\alpha,\\
&m_\alpha\left(0,0,0\right)&&\equiv m_{\Lambda^{0}} &&= \tilde{m}_0 && && &&- \frac{1}{3}\Delta_\alpha,\\
&m_\alpha\left(-1,\frac{1}{2},\frac{1}{2}\right)&&\equiv m_{\Xi^{0}} &&= \tilde{m}_0 + \frac{1}{2}m^F_3 &&- \tilde{m}^F_8 - \frac{1}{2}m^D_3 &&+ \frac{1}{2}\tilde{m}^D_8 &&- \frac{1}{3}\Delta_\alpha,\\
&m_\alpha\left(-1,\frac{1}{2},-\frac{1}{2}\right)&&\equiv m_{\Xi^{-}} &&= \tilde{m}_0 - \frac{1}{2}m^F_3 &&- \tilde{m}^F_8 + \frac{1}{2}m^D_3 &&+ \frac{1}{2}\tilde{m}^D_8 &&+ \frac{1}{3}\Delta_\alpha,
\end{alignat*}
where we omitted all ``$\mathcal{O}$'' for the sake of clarity. This mass parametrization allows us to find two inequivalent mass relations:
\begin{align}
m_{p} - m_{n} + m_{\Xi^0} - m_{\Xi^-} &= m_{\Sigma^+} - m_{\Sigma^-} + \mathcal{O}\left(\alpha\varepsilon_8\right) + \mathcal{O}\left(\varepsilon_3\varepsilon_8\right),\label{eq:Coleman-Glashow}\\
m_{p} + m_{n} + m_{\Xi^0} + m_{\Xi^-} &= 3m_{\Lambda^0} + m_{\Sigma^+} + m_{\Sigma^-} - m_{\Sigma^0} + \mathcal{O}\left(\varepsilon^2_8\right).\label{eq:Gell-Mann--Okubo}
\end{align}
This time, we have included the order of magnitude of the dominant correction(s) in the mass relations. Of course, analogous mass relations apply to all baryonic octets. For mesonic octets, especially for the pseudoscalar meson octet, one has to mind the octet-singlet-mixing ($\eta$-$\eta^\prime$-mixing for the pseudoscalar meson octet) and the power of the meson masses.
\autoref{eq:Gell-Mann--Okubo} corresponds to the original Gell-Mann--Okubo mass relation (cf. \cite{Gell-Mann1961}). Often, exact $\text{SU}(2)\times\text{U}(1)$-isospin symmetry is assumed such that the relation can be presented in the following form:
\begin{gather*}
2(m_N + m_\Xi) = 3m_{\Lambda^0} + m_{\Sigma},
\end{gather*}
where $m_N$ is the mass of the $p$-$n$-isospin multiplet, $m_\Xi$ is the mass of the $\Xi$-isospin multiplet, and $m_{\Sigma}$ is the mass of the $\Sigma$-multiplet. In the same way as for \autoref{eq:sextet_c_GMO_equal_spacing}, we find that the dominant correction to \autoref{eq:Gell-Mann--Okubo} is of the order $\varepsilon^2_8$.\par
\autoref{eq:Coleman-Glashow} is the famous and very precise Coleman-Glashow mass relation (cf. \cite{coleman-glashow}). Like for \autoref{eq:sextet_c_iso_bre}, we can derive the Coleman-Glashow mass relation and the order of magnitude of its dominant corrections from purely group theoretical considerations: If any of the flavor transformation groups {${\text{SU}(2)\times\text{U}(1)}$,} {${\text{SU}(2)_\text{us}\times\text{U}(1)}$,} or {${\text{SU}(2)_\text{ds}\times\text{U}(1)}$} was an exact symmetry, all baryons in one {${\text{SU}(2)\times\text{U}(1)}$\text{-,}}\linebreak {${\text{SU}(2)_\text{us}\times\text{U}(1)}$\text{-,}} or $\text{SU}(2)_\text{ds}\times\text{U}(1)$-multiplet would have to have the same mass, respectively. Thus, considering the combination of masses that occur in \autoref{eq:Coleman-Glashow} and the different submultiplets in the octet (cf. \autoref{fig:baryon_octet} and \autoref{fig:baryon_octet_us_ds}), we easily find that the Coleman-Glashow mass relation would be exact, if any of the flavor transformation groups $\text{SU}(2)\times\text{U}(1)$, $\text{SU}(2)_\text{us}\times\text{U}(1)$, or $\text{SU}(2)_\text{ds}\times\text{U}(1)$ was an exact symmetry.
\begin{figure}[t!]
\hfill
\subfigure{\includegraphics[width=0.45\textwidth]{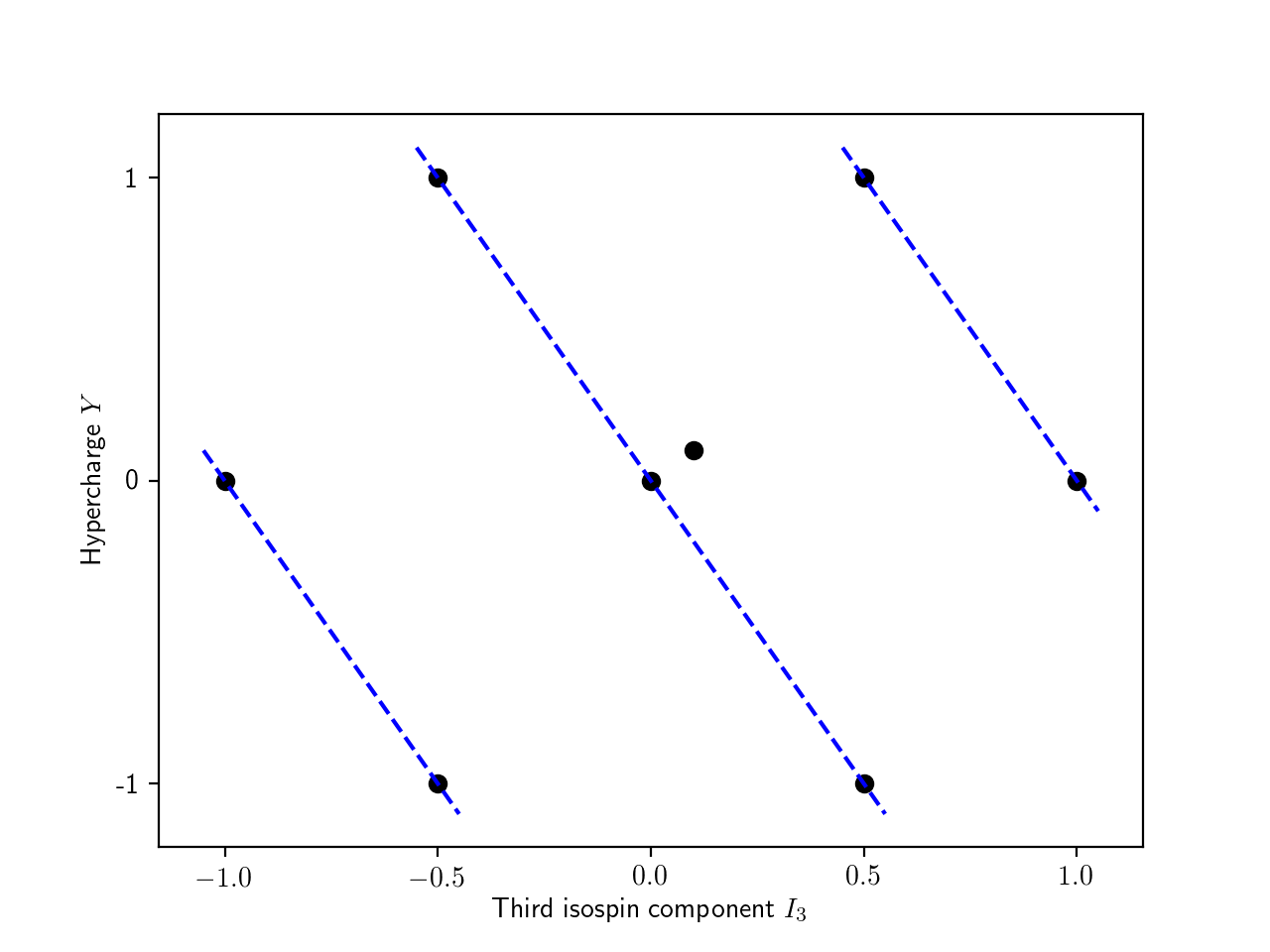}}
\hfill
\subfigure{\includegraphics[width=0.45\textwidth]{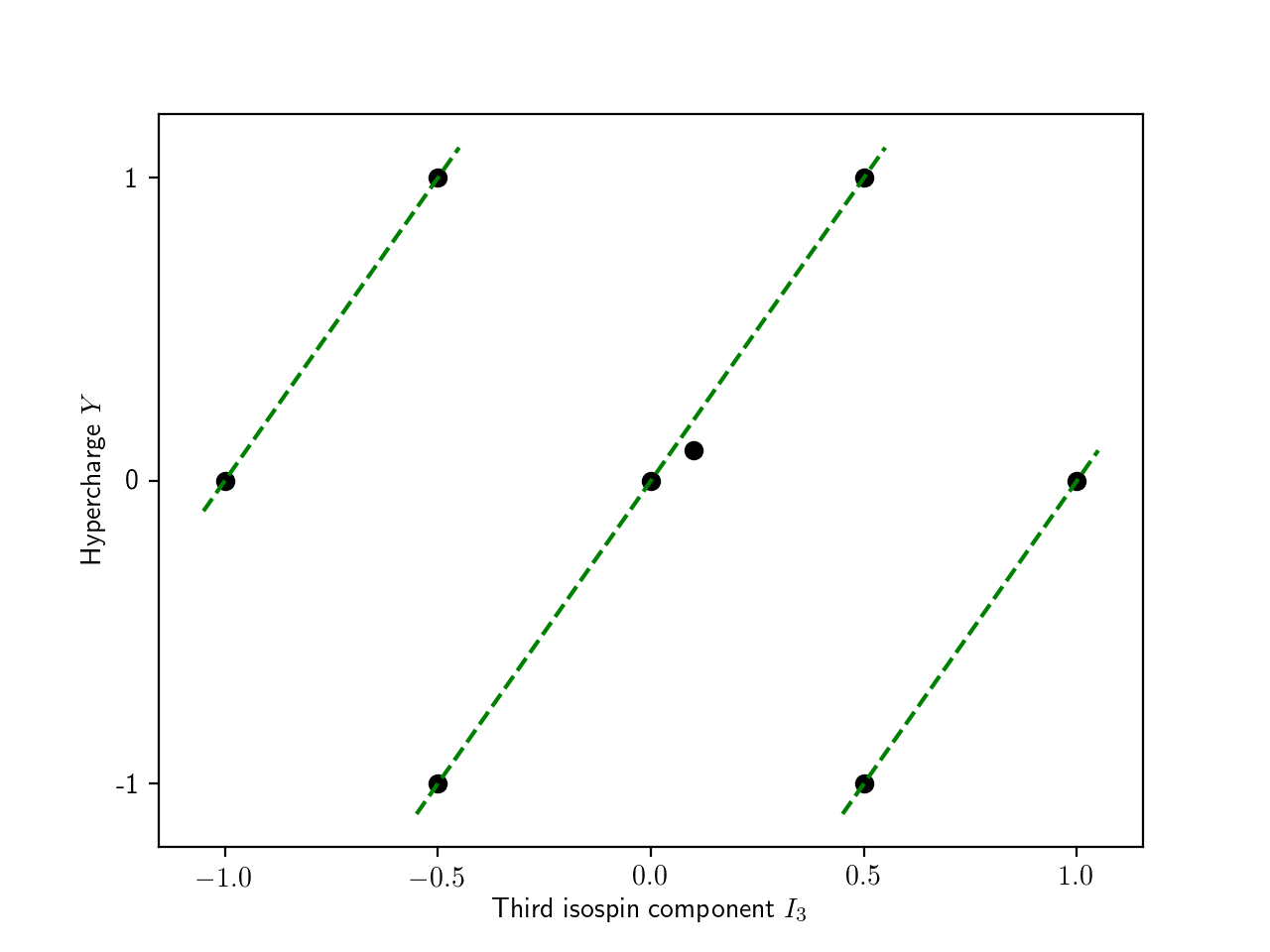}}
\hfill
\caption{Two weight diagrams of an octet. The blue lines visualize the multiplets of {${\text{SU}(2)_\text{ds}\times\text{U}(1)}$}, while the green lines visualize the multiplets of {${\text{SU}(2)_\text{us}\times\text{U}(1)}$}. All weights not connected by blue lines or green lines are singlets. Note that neither $\Sigma^0$ nor $\Lambda^0$ are part of a {${\text{SU}(2)_\text{us}\times\text{U}(1)}$-} or {${\text{SU}(2)_\text{ds}\times\text{U}(1)}$-}multiplet, but only a mixture of $\Sigma^0$ and $\Lambda^0$.}
\label{fig:baryon_octet_us_ds}
\end{figure}
However, all these symmetries are broken, but the $\text{SU}(2)\times\text{U}(1)$-symmetry is only broken by $\varepsilon_3$ and $\alpha$, while the $\text{SU}(2)_\text{ds}\times\text{U}(1)$-symmetry is only broken by $\varepsilon_8$. As for \autoref{eq:sextet_c_iso_bre}, this means that every correction to the Coleman-Glashow mass relation has to be proportional to $\alpha\varepsilon_8$ or $\varepsilon_3\varepsilon_8$.

\subsection*{Decuplet}

The $J^P = 3/2^+$ baryon decuplet is a textbook example for the application of the GMO mass formula, as Gell-Mann was able to predict the $\Omega^-$-particle contained in the $J^P = 3/2^+$ baryon decuplet and its mass using the equal spacing rules within the decuplet, before the $\Omega^-$-particle was discovered (cf. \cite{Zee2016} and \cite{Langacker2017}). Its weight diagram is displayed in \autoref{fig:baryon_decuplet}.
\begin{figure}[htpb]
\centering
\includegraphics[width=\textwidth]{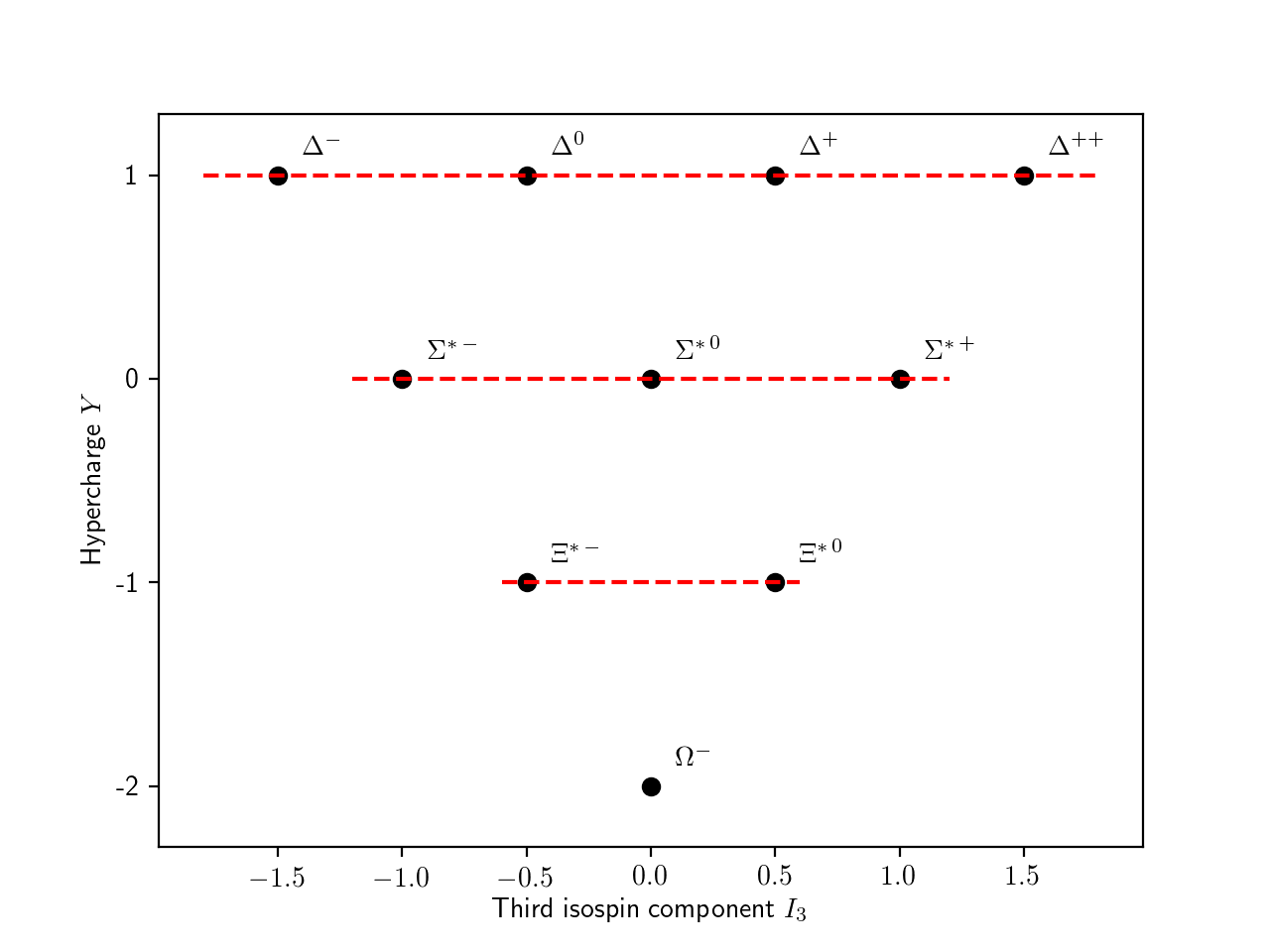}
\caption{The weight diagram of a decuplet. For every weight, the name of the corresponding baryon from the $J^P = 3/2^+$ baryon decuplet is included. The red lines visualize isospin multiplets of $\text{SU}(2)\times\text{U}(1)$. All weights not connected by red lines are one-dimensional isospin multiplets.}
\label{fig:baryon_decuplet}
\end{figure}
Using \autoref{eq:iso_tot_sym} and \autoref{eq:mass_ele}, we find for the masses in the decuplet:
\begin{alignat*}{6}
&m_\alpha\left(1,\frac{3}{2},\frac{3}{2}\right)&&\equiv m_{\Delta^{++}} &&= m_0 &&+ \frac{3}{2}m^F_3 &&+ \tilde{m}^F_8 &&+ \frac{4}{3}\Delta_\alpha,\\
&m_\alpha\left(1,\frac{3}{2},\frac{1}{2}\right)&&\equiv m_{\Delta^{+}} &&= m_0 &&+ \frac{1}{2}m^F_3 &&+ \tilde{m}^F_8,&&\\
&m_\alpha\left(1,\frac{3}{2},-\frac{1}{2}\right)&&\equiv m_{\Delta^{0}} &&= m_0 &&- \frac{1}{2}m^F_3 &&+ \tilde{m}^F_8 &&- \frac{1}{3}\Delta_\alpha,\\
&m_\alpha\left(1,\frac{3}{2},-\frac{3}{2}\right)&&\equiv m_{\Delta^{-}} &&= m_0 &&- \frac{3}{2}m^F_3 &&+ \tilde{m}^F_8 &&+ \frac{1}{3}\Delta_\alpha,\\
&m_\alpha\left(0,1,1\right)&&\equiv m_{\Sigma^{\ast\, +}} &&= m_0 &&+ m^F_3,&& &&\\
&m_\alpha\left(0,1,0\right)&&\equiv m_{\Sigma^{\ast\, 0}} &&= m_0 && && &&- \frac{1}{3}\Delta_\alpha,\\
&m_\alpha\left(0,1,-1\right)&&\equiv m_{\Sigma^{\ast\, -}} &&= m_0 &&- m^F_3 && &&+ \frac{1}{3}\Delta_\alpha,\\
&m_\alpha\left(-1,\frac{1}{2},\frac{1}{2}\right)&&\equiv m_{\Xi^{\ast\, 0}} &&= m_0 &&+ \frac{1}{2}m^F_3 &&- \tilde{m}^F_8 &&- \frac{1}{3}\Delta_\alpha,\\
&m_\alpha\left(-1,\frac{1}{2},-\frac{1}{2}\right)&&\equiv m_{\Xi^{\ast\, -}} &&= m_0 &&- \frac{1}{2}m^F_3 &&- \tilde{m}^F_8 &&+ \frac{1}{3}\Delta_\alpha,\\
&m_\alpha\left(-2,0,0\right)&&\equiv m_{\Omega^{-}} &&= m_0 && &&- 2\tilde{m}^F_8 &&+ \frac{1}{3}\Delta_\alpha,
\end{alignat*}
where we omitted all ``$\mathcal{O}$'' for the sake of clarity. This mass parametrization allows us to find seven mass relations:
\begin{gather}\small
m_{\Sigma^{\ast\, +}} - m_{\Sigma^{\ast\, -}} = m_{\Delta^{+}} - m_{\Delta^{0}} + m_{\Xi^{\ast\, 0}} - m_{\Xi^{\ast\, -}} + \mathcal{O}\left(\alpha\varepsilon_8\right) + \mathcal{O}\left(\varepsilon_3\varepsilon_8\right),\label{eq:Coleman-Glashow_decuplet}\\
m_{\Delta^{-}} = m_{\Delta^{++}} + 3\left( m_{\Delta^{0}} - m_{\Delta^{+}}\right) + \mathcal{O}\left(\alpha\varepsilon_8\right) + \mathcal{O}\left(\varepsilon_3\varepsilon_8\right),\label{eq:Delta-}\\
m_{\Delta^{++}} + m_{\Delta^{0}} - 2m_{\Delta^{+}} = m_{\Sigma^{\ast\, +}} + m_{\Sigma^{\ast\, -}} - 2m_{\Sigma^{\ast\, 0}} + \mathcal{O}\left(\alpha\varepsilon_8\right) + \mathcal{O}\left(\varepsilon_3\varepsilon_8\right),\label{eq:iso1_decuplet}\\
m_{\Sigma^{\ast\, -}} - m_{\Sigma^{\ast\, +}} = m_{\Delta^{++}} + 3m_{\Delta^{0}} - 4m_{\Delta^{+}} + \mathcal{O}\left(\alpha\varepsilon_8\right) + \mathcal{O}\left(\varepsilon_3\varepsilon_8\right),\label{eq:iso2_decuplet}\\
m_{\Delta^{+}} - m_{\Sigma^{\ast\, +}} = m_{\Sigma^{\ast\, -}} - m_{\Xi^{\ast\, -}} + \mathcal{O}\left(\varepsilon^2_8\right),\label{eq:equal_spacing1_decuplet}\\
m_{\Sigma^{\ast\, -}} - m_{\Xi^{\ast\, -}} = m_{\Xi^{\ast\, -}} - m_{\Omega^{-}} + \mathcal{O}\left(\varepsilon^2_8\right),\label{eq:equal_spacing2_decuplet}\\
4m_{\Delta^{++}} - 6\left(m_{\Delta^{+}} - m_{\Delta^{0}}\right) - 4m_{\Omega^{-}} = 6\left(m_{\Sigma^{\ast\, +}} + m_{\Sigma^{\ast\, -}}\right) - 6\left(m_{\Xi^{\ast\, 0}} + m_{\Xi^{\ast\, -}}\right)\label{eq:better_GMO_decuplet}\\
+ \mathcal{O}\left(\varepsilon^3_8\right) + \mathcal{O}\left(\alpha\varepsilon_8\right) + \mathcal{O}\left(\varepsilon_3\varepsilon_8\right).\nonumber
\end{gather}
This time, we have included the order of magnitude of the dominant correction(s) in the mass relations. Of course, analogous mass relations apply to all baryonic decuplets. In contrast to the case of the sextet and octet, only the first six mass relations are inequivalent in the sense that none of the first six mass relations follows from each other. The seventh mass relation, however, can be formed out of the first six relations.\par
\autoref{eq:Coleman-Glashow_decuplet} is the companion piece to the Coleman-Glashow mass relation in the decuplet. It and its dominant corrections can be derived in the same way as for the Coleman-Glashow mass relation of the octet (cf. the previous section ``Octet'', \autoref{fig:baryon_decuplet}, and \autoref{fig:baryon_decuplet_us_ds}).\par
\begin{figure}[htbp]
\hfill
\subfigure{\includegraphics[width=0.45\textwidth]{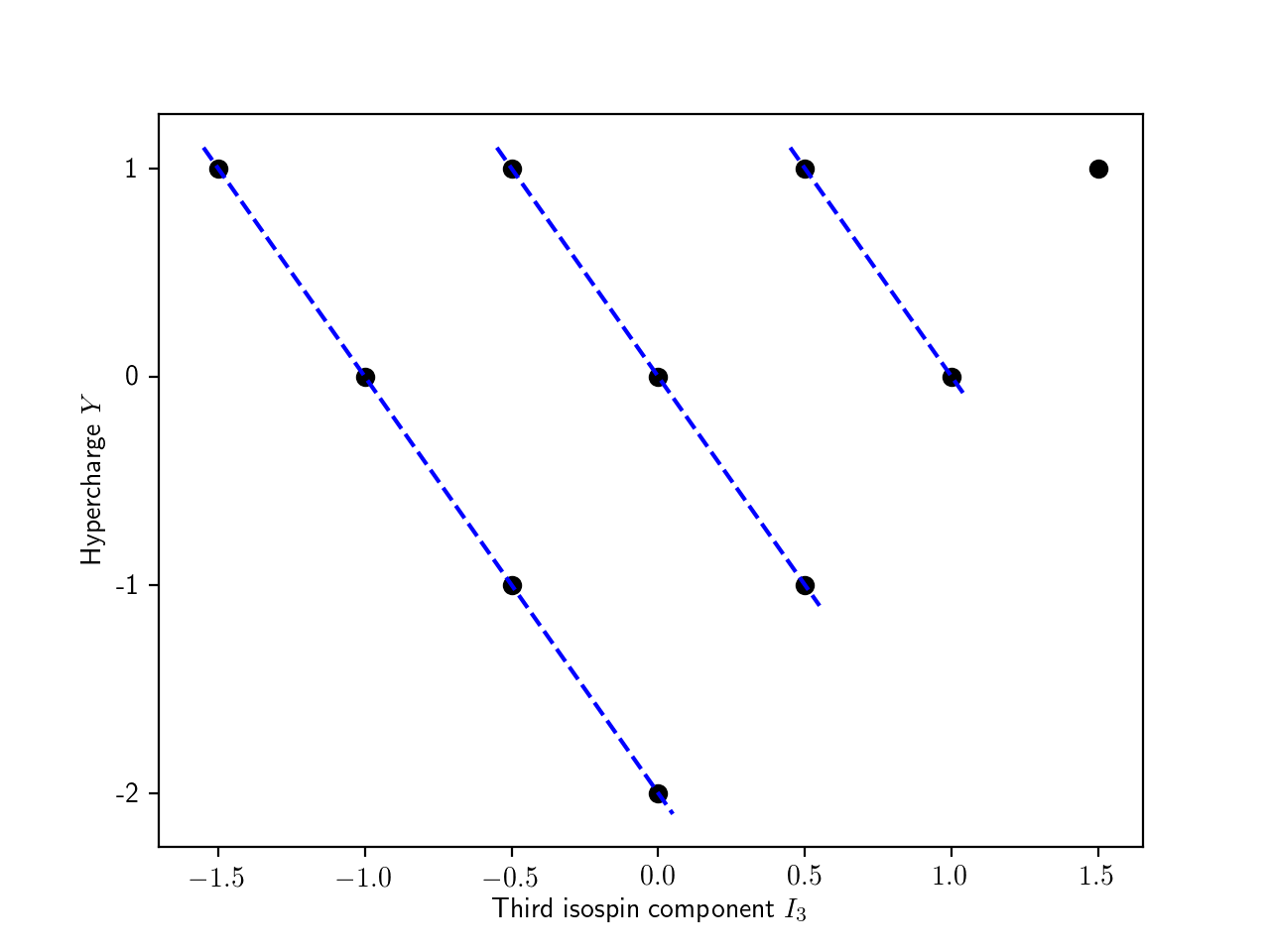}}
\hfill
\subfigure{\includegraphics[width=0.45\textwidth]{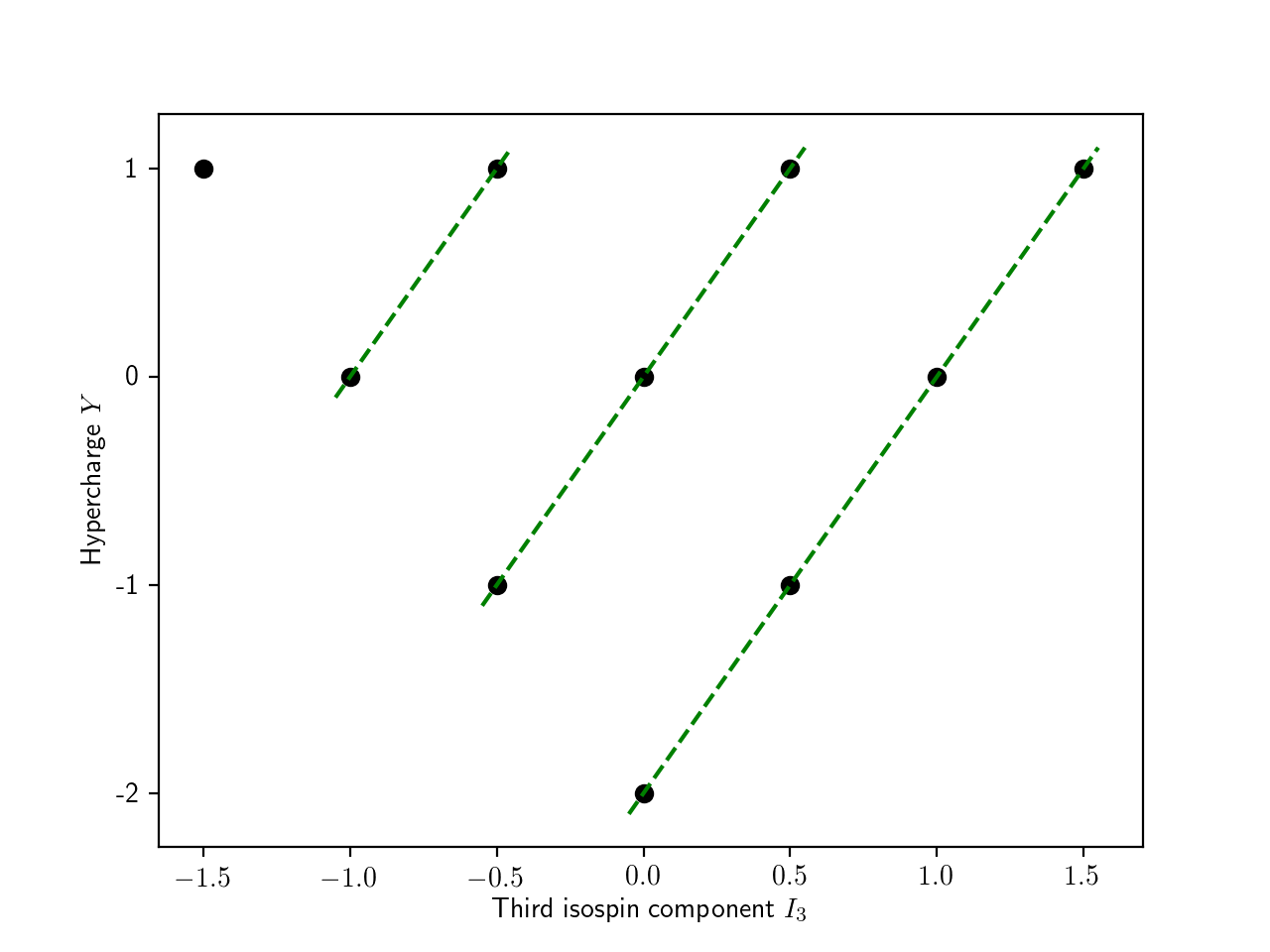}}
\hfill
\caption{Two weight diagrams of a decuplet. The blue lines visualize the multiplets of $\text{SU}(2)_\text{ds}\times\text{U}(1)$, while the green lines visualize the multiplets of $\text{SU}(2)_\text{us}\times\text{U}(1)$. All weights not connected by blue lines or green lines are one-dimensional multiplets.}
\label{fig:baryon_decuplet_us_ds}
\end{figure}
The dominant corrections to \autoref{eq:Delta-}, \autoref{eq:iso1_decuplet}, and \autoref{eq:iso2_decuplet} are in the order of $\alpha\varepsilon_8$ and $\varepsilon_3\varepsilon_8$. As all of these relations are exact in the limit of exact isospin symmetry, every correction to these relations has to be proportional to $\alpha$ or $\varepsilon_3$. However, the mass parametrization we used to derive these relations already includes the first order contributions of $\alpha$ and $\varepsilon_3$, hence, the dominant corrections are in the order of $\alpha\varepsilon_8$ and $\varepsilon_3\varepsilon_8$. At this point, it should be noted that it is also possible to derive \autoref{eq:Coleman-Glashow_decuplet}, \autoref{eq:Delta-}, and \autoref{eq:iso1_decuplet} from quark model calculations (cf. \cite{Ishida1966} and \cite{Gal1967}) and baryonic chiral perturbation theory (cf. \cite{Lebed1994}).\par
\autoref{eq:equal_spacing1_decuplet} and \autoref{eq:equal_spacing2_decuplet} are the equal spacing rules for the decuplet following from the GMO mass formula (cf. \autoref{eq:GMO_mass_formula}). Their dominant correction is in the order of $\varepsilon^2_8$. We can show this in the same way as for the equal spacing rule of the sextet.\par
\autoref{eq:better_GMO_decuplet} seems to be redundant, as this mass relation follows from the other six decuplet mass relations. However, this relation is more precise than the equal spacing rules, since it holds true to second order in $\text{SU}(3)\rightarrow\text{SU}(2)\times\text{U}(1)$-flavor symmetry breaking (for the derivation of \autoref{eq:better_GMO_decuplet} and its dominant corrections, cf. \cite{Okubo1963} and \cite{Lebed1994}), thus, the dominant corrections to \autoref{eq:better_GMO_decuplet} are in the order of $\varepsilon^3_8$, $\alpha\varepsilon_8$, and $\varepsilon_3\varepsilon_8$.

\section{Mass Relations between Multiplets}\label{sec:rel_between_multiplets}

In this section, we want to take a closer look at pairs of charm and bottom multiplets. For each pair of charm and bottom multiplets, heavy quark symmetry allows us to formulate mass relations that involve hadrons from both the charm and bottom multiplet. This is interesting for multiple reasons: First off, all mass relations we have found in \autoref{sec:rel_within_multiplets} only relate hadrons in the same multiplet, so the mass relations we introduce in this section expand our description of hadron masses. Secondly, considering several multiplets to form mass relations gives us the possibility to use hadronic multiplets that do not contain enough hadrons to give rise to a mass relation within them. This applies, in particular, to hadronic charm and bottom (anti)triplets. Lastly, the mass relations following from heavy quark symmetry provide us with a way to investigate mesonic mass relations that are not spoiled by octet-singlet-mixing in contrast to the mass relations within mesonic octets.\par
The discussions of the charm and bottom multiplets and the mass relations between them are structured very similar to \autoref{sec:rel_within_multiplets}. All remarks at the beginning of \autoref{sec:rel_within_multiplets} apply in very similar fashion to the following discussions.

\subsection*{Baryonic Charm and Bottom Antitriplets}

The lightest charm and bottom baryons form antitriplets with $J^P = 1/2^+$. Their weight diagrams are displayed in \autoref{fig:c_b_baryon_triplets}.
\begin{figure}
\centering
\subfigure{\includegraphics[width=0.85\textwidth]{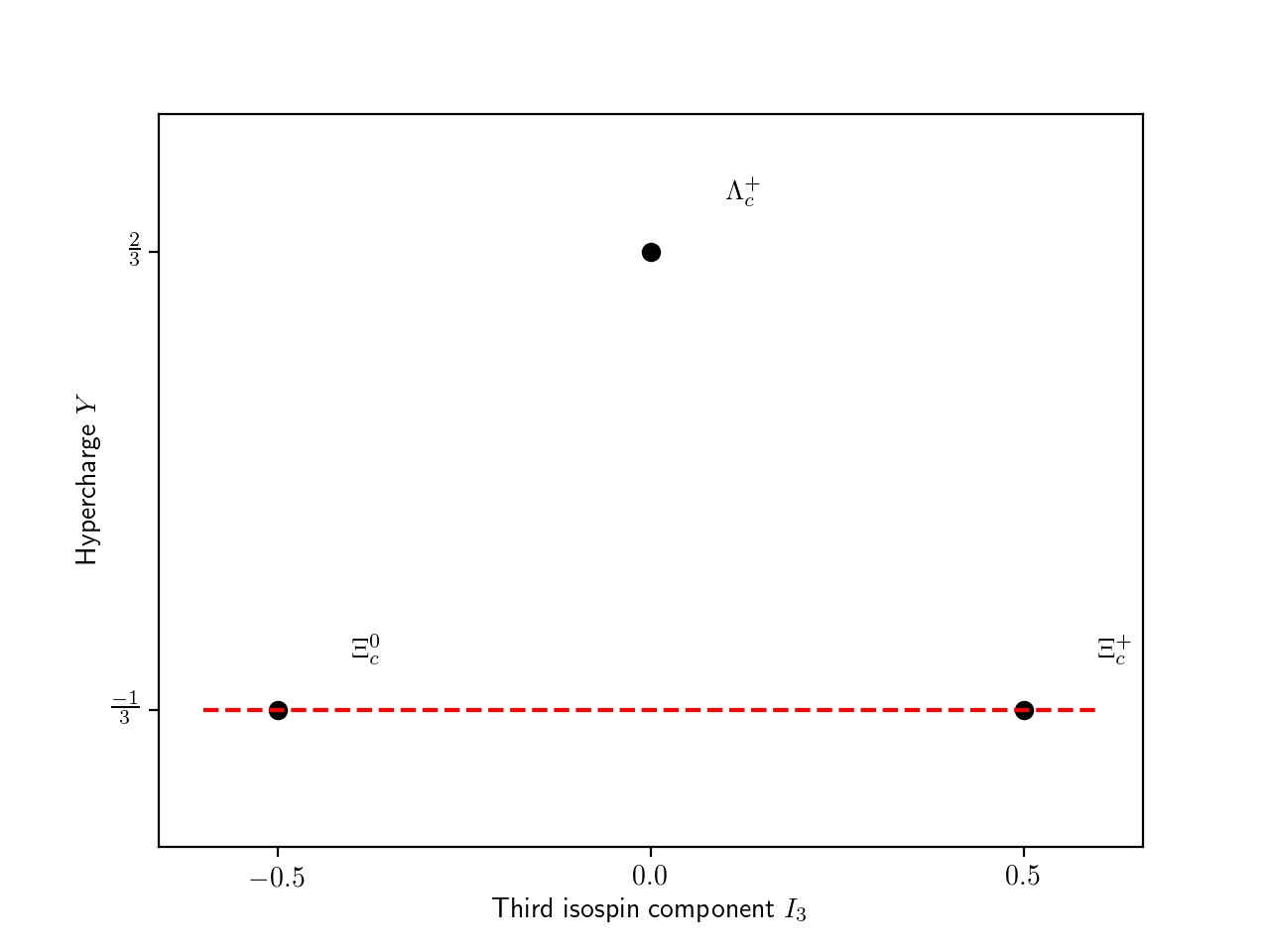}}
\vspace{0.1cm}
\subfigure{\includegraphics[width=0.85\textwidth]{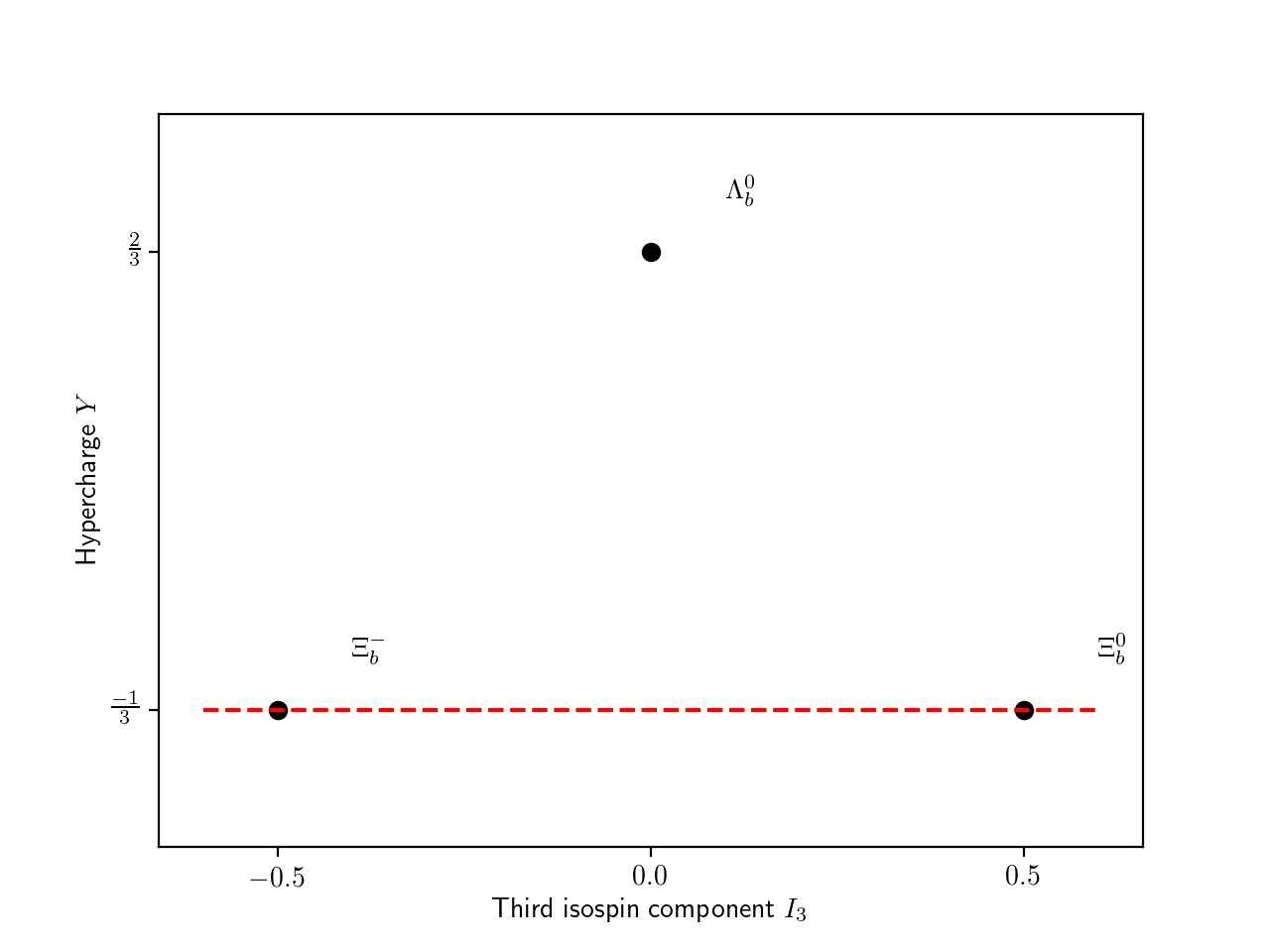}}
\caption{Two weight diagrams of an antitriplet. For every weight, the name of the corresponding baryon from the charm or bottom antitriplet with $J^P = 1/2^+$ is included in the upper or lower weight diagram, respectively. The red lines visualize isospin multiplets of $\text{SU}(2)\times\text{U}(1)$. All weights not connected by red lines are one-dimensional isospin multiplets.}
\label{fig:c_b_baryon_triplets}
\end{figure}
Using \autoref{eq:mass_tot_sym_c}, \autoref{eq:mass_tot_sym_b}, \autoref{eq:mass_ele_c}, and \autoref{eq:mass_ele_b}, we find for the masses in the charm and bottom antitriplets:
\begin{alignat*}{7}
&m^\text{c}_\alpha\left(\frac{2}{3}, 0, 0\right) &&\equiv m_{\Lambda^{+}_\text{c}} &&= m^\text{c}_0 && &&+\frac{2}{3}\tilde{m}^F_8 &&+ \frac{2}{9}\Delta^{LH}_\alpha &&- \frac{2}{9}\Delta^{LL}_\alpha,\\
&m^\text{c}_\alpha\left(-\frac{1}{3}, \frac{1}{2}, \frac{1}{2}\right) &&\equiv m_{\Xi^{+}_\text{c}} &&= m^\text{c}_0 &&+\frac{1}{2}m^F_3 &&-\frac{1}{3}\tilde{m}^F_8 &&+ \frac{2}{9}\Delta^{LH}_\alpha &&- \frac{2}{9}\Delta^{LL}_\alpha,\\
&m^\text{c}_\alpha\left(-\frac{1}{3}, \frac{1}{2}, -\frac{1}{2}\right) &&\equiv m_{\Xi^{0}_\text{c}} &&= m^\text{c}_0 &&-\frac{1}{2}m^F_3 &&-\frac{1}{3}\tilde{m}^F_8 &&- \frac{4}{9}\Delta^{LH}_\alpha &&+ \frac{1}{9}\Delta^{LL}_\alpha,\\
&m^\text{b}_\alpha\left(\frac{2}{3}, 0, 0\right) &&\equiv m_{\Lambda^{0}_\text{b}} &&= m^\text{b}_0 && &&+\frac{2}{3}\tilde{m}^F_8 &&- \frac{1}{9}\Delta^{LH}_\alpha &&- \frac{2}{9}\Delta^{LL}_\alpha,\\
&m^\text{b}_\alpha\left(-\frac{1}{3}, \frac{1}{2}, \frac{1}{2}\right) &&\equiv m_{\Xi^{0}_\text{b}} &&= m^\text{b}_0 &&+\frac{1}{2}m^F_3 &&-\frac{1}{3}\tilde{m}^F_8 &&- \frac{1}{9}\Delta^{LH}_\alpha &&- \frac{2}{9}\Delta^{LL}_\alpha,\\
&m^\text{b}_\alpha\left(-\frac{1}{3}, \frac{1}{2}, -\frac{1}{2}\right) &&\equiv m_{\Xi^{-}_\text{b}} &&= m^\text{b}_0 &&-\frac{1}{2}m^F_3 &&-\frac{1}{3}\tilde{m}^F_8 &&+ \frac{2}{9}\Delta^{LH}_\alpha &&+ \frac{1}{9}\Delta^{LL}_\alpha,
\end{alignat*}
where we omitted all ``$\mathcal{O}$'' for the sake of clarity. This mass parametrization allows us to formulate one mass relation:
\begin{gather}
m_{\Lambda^{+}_\text{c}} - m_{\Xi^{+}_\text{c}} = m_{\Lambda^{0}_\text{b}} - m_{\Xi^{0}_\text{b}} + \mathcal{O}\left(\varepsilon_\text{cb}\varepsilon_8\right).\label{eq:bary_c_b_tri}
\end{gather}
This time, we have included the order of magnitude of the dominant correction in the mass relation. Of course, an analogous mass relation applies to all baryonic charm and bottom antitriplets. \autoref{eq:bary_c_b_tri} states that the mass splittings between the isospin multiplets in the charm and bottom antitriplets are equal. The dominant correction to \autoref{eq:bary_c_b_tri} is in the order of $\varepsilon_\text{cb}\varepsilon_8$. This can be shown by the following consideration: $\Lambda^+_\text{c}$ and $\Xi^+_\text{c}$ as well as $\Lambda^0_\text{b}$ and $\Xi^0_\text{b}$ form a multiplet of $\text{SU}(2)_\text{ds}\times\text{U}(1)$. Therefore, their masses would be equal and \autoref{eq:bary_c_b_tri} would be exact, if $\text{SU}(2)_\text{ds}\times\text{U}(1)$ was an exact symmetry. Neglecting the weak interaction, this symmetry is only broken by the mass difference of the down and strange quark, so roughly by $\varepsilon_8$. Thus, every correction to \autoref{eq:bary_c_b_tri} has to be proportional to $\varepsilon_8$. Following the line of reasoning from \autoref{sec:heavy_quark}, \autoref{eq:bary_c_b_tri} only picks up corrections proportional to $\varepsilon_\text{cb}$ or to a product of at least two of the parameters $\varepsilon_3$, $\varepsilon_8$, and $\alpha$. As every correction has to be proportional to $\varepsilon_8$, corrections proportional to $\varepsilon_\text{cb}$ have to be proportional to $\varepsilon_\text{cb}\varepsilon_8$ as well. This leaves us with $\varepsilon_\text{cb}\varepsilon_8$ and $\varepsilon^2_8$ as the largest corrections to \autoref{eq:bary_c_b_tri}. However, we have an exact $\text{SU}(2)_\text{cb}\times\text{U}(1)$-flavor symmetry between charm and bottom quark in the limit of $m_\text{c} = m_\text{b}$ and $\alpha = 0$. In this limit, the corresponding charm and bottom baryons of the charm and bottom antitriplet have to have the same mass. This implies that \autoref{eq:bary_c_b_tri} is exact in this limit. Nevertheless, corrections proportional to $\varepsilon^2_8$ do not vanish in this limit which means that they cannot be present in the first place. Therefore, the dominant correction to \autoref{eq:bary_c_b_tri} is in the order of $\varepsilon_\text{cb}\varepsilon_8$.

\subsection*{Mesonic Charm and Bottom Antitriplets}

Mesonic multiplets include both triplets and antitriplets. However, we can restrict ourselves to considering just triplets or antitriplets, since for every mesonic triplet there is a mesonic antitriplet that contains the antiparticles of the mesons in the triplet. As CPT-invariance dictates that particles and their corresponding antiparticles have to have the same mass, the mass relations of triplets also apply to antitriplets and vice versa. In reference to the baryonic case, we restrict ourselves to antitriplets.\par
The lightest charm and bottom mesons form (anti)triplets with $J^P = 0^-$. Their weight diagrams are displayed in \autoref{fig:c_b_meson_triplets}.
\begin{figure}
\centering
\subfigure{\includegraphics[width=0.85\textwidth]{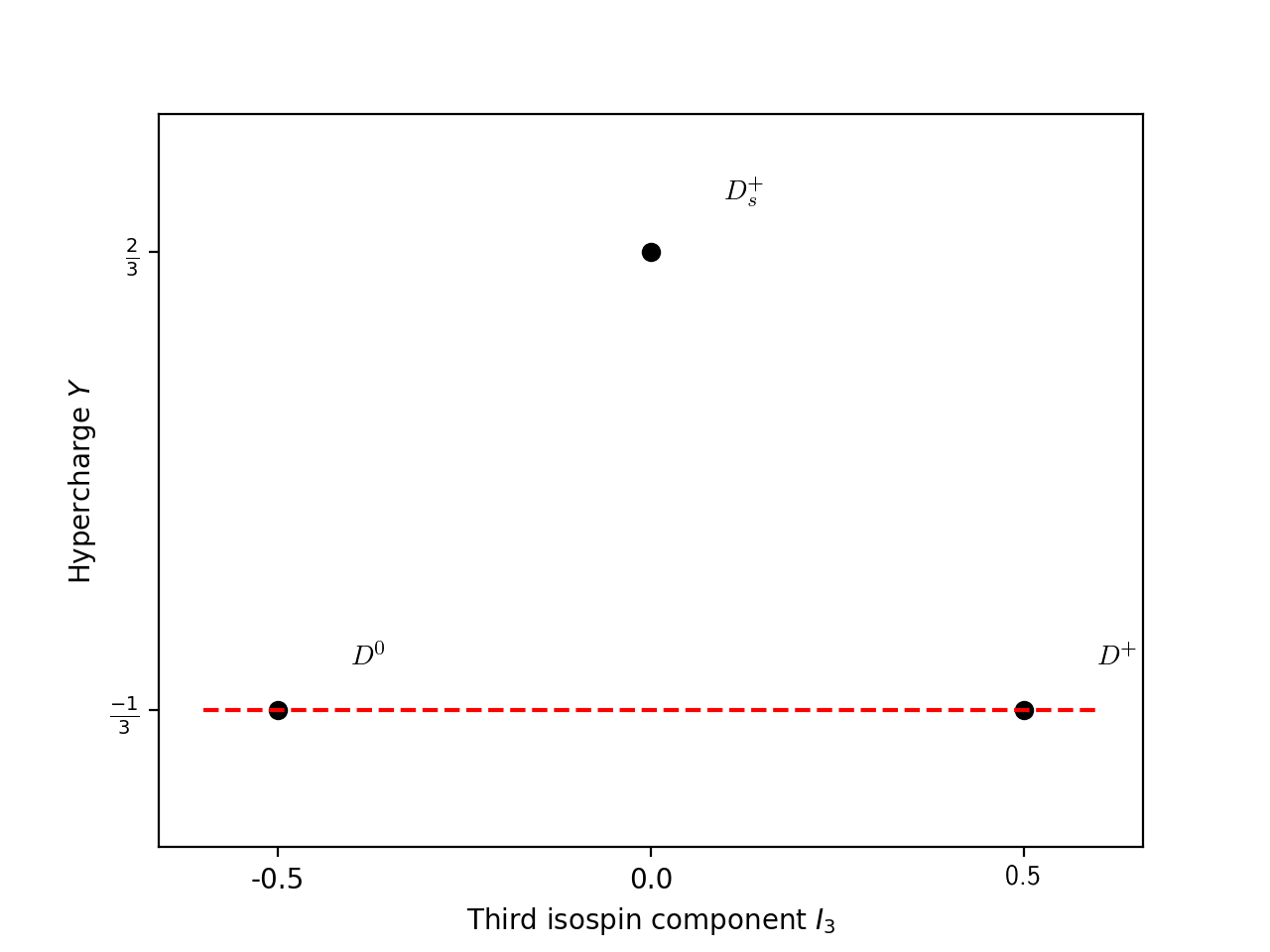}}
\vspace{0.1cm}
\subfigure{\includegraphics[width=0.85\textwidth]{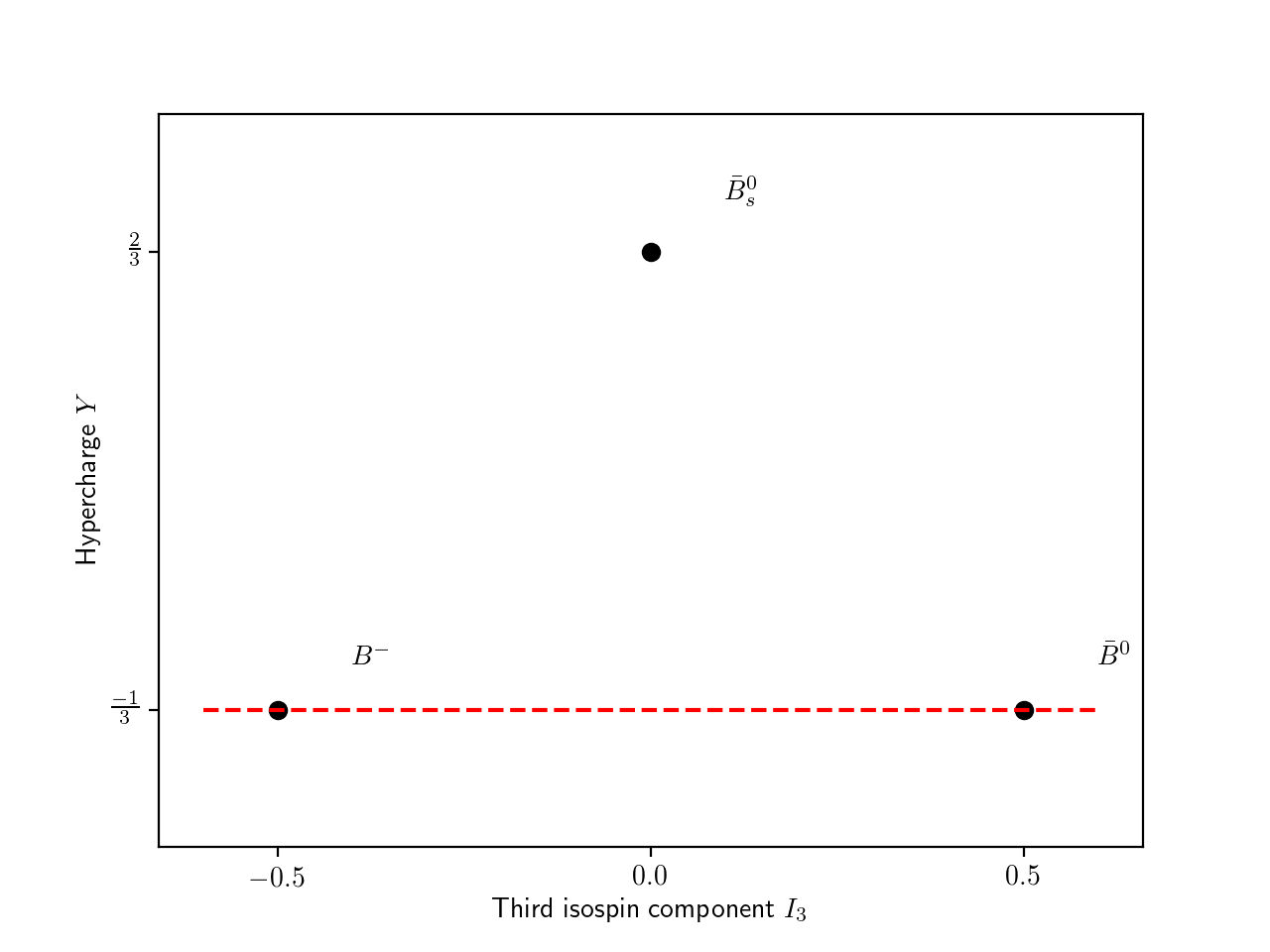}}
\caption{Two weight diagrams of an antitriplet. For every weight, the name of the corresponding meson from the charm or bottom antitriplet with $J^P = 0^-$ is included in the upper or lower weight diagram, respectively. The red lines visualize isospin multiplets of $\text{SU}(2)\times\text{U}(1)$. All weights not connected by red lines are one-dimensional isospin multiplets.}
\label{fig:c_b_meson_triplets}
\end{figure}
Using \autoref{eq:mass_tot_sym_c}, \autoref{eq:mass_tot_sym_b}, \autoref{eq:mass_ele_c}, and \autoref{eq:mass_ele_b}, we find for the masses in the charm and bottom antitriplets:
\begin{alignat*}{6}
&m^\text{c}_\alpha\left(\frac{2}{3}, 0, 0\right) &&\equiv m_{D^{+}_\text{s}} &&= m^\text{c}_0 && &&+\frac{2}{3}\tilde{m}^F_8 &&+ \frac{2}{9}\Delta^{LH}_\alpha,\\
&m^\text{c}_\alpha\left(-\frac{1}{3}, \frac{1}{2}, \frac{1}{2}\right) &&\equiv m_{D^{+}} &&= m^\text{c}_0 &&+\frac{1}{2}m^F_3 &&-\frac{1}{3}\tilde{m}^F_8 &&+ \frac{2}{9}\Delta^{LH}_\alpha,\\
&m^\text{c}_\alpha\left(-\frac{1}{3}, \frac{1}{2}, -\frac{1}{2}\right) &&\equiv m_{D^{0}} &&= m^\text{c}_0 &&-\frac{1}{2}m^F_3 &&-\frac{1}{3}\tilde{m}^F_8 &&- \frac{4}{9}\Delta^{LH}_\alpha,\\
&m^\text{b}_\alpha\left(\frac{2}{3}, 0, 0\right) &&\equiv m_{\bar{B}^{0}_\text{s}} &&= m^\text{b}_0 && &&+\frac{2}{3}\tilde{m}^F_8 &&- \frac{1}{9}\Delta^{LH}_\alpha,\\
&m^\text{b}_\alpha\left(-\frac{1}{3}, \frac{1}{2}, \frac{1}{2}\right) &&\equiv m_{\bar{B}^{0}} &&= m^\text{b}_0 &&+\frac{1}{2}m^F_3 &&-\frac{1}{3}\tilde{m}^F_8 &&- \frac{1}{9}\Delta^{LH}_\alpha,\\
&m^\text{b}_\alpha\left(-\frac{1}{3}, \frac{1}{2}, -\frac{1}{2}\right) &&\equiv m_{B^{-}} &&= m^\text{b}_0 &&-\frac{1}{2}m^F_3 &&-\frac{1}{3}\tilde{m}^F_8 &&+ \frac{2}{9}\Delta^{LH}_\alpha,
\end{alignat*}
where we omitted all ``$\mathcal{O}$'' for the sake of clarity. This mass parametrization allows us to formulate one mass relation:
\begin{gather}
m_{D^{+}_\text{s}} - m_{D^{+}} = m_{\bar{B}^{0}_\text{s}} - m_{\bar{B}^{0}} + \mathcal{O}\left(\varepsilon_\text{cb}\varepsilon_8\right).\label{eq:meso_c_b_tri}
\end{gather}
This time, we have included the order of magnitude of the dominant corrections in the mass relation. Of course, an analogous mass relation applies to all mesonic charm and bottom (anti)triplets. \autoref{eq:meso_c_b_tri} is the mesonic companion piece to \autoref{eq:bary_c_b_tri}. Indeed, its dominant corrections can be obtained in the same way. Relations like \autoref{eq:meso_c_b_tri} are a typical result of heavy quark symmetry (cf. \cite{Neubert1994}).

\subsection*{Charm and Bottom Sextets}

The lightest charm and bottom sextets are characterized by $J^P = 1/2^+$. Their weight diagrams are displayed in \autoref{fig:c_b_sextets}.
\begin{figure}
\centering
\subfigure{\includegraphics[width=0.85\textwidth]{c_sextet.png}}
\vspace{0.1cm}
\subfigure{\includegraphics[width=0.85\textwidth]{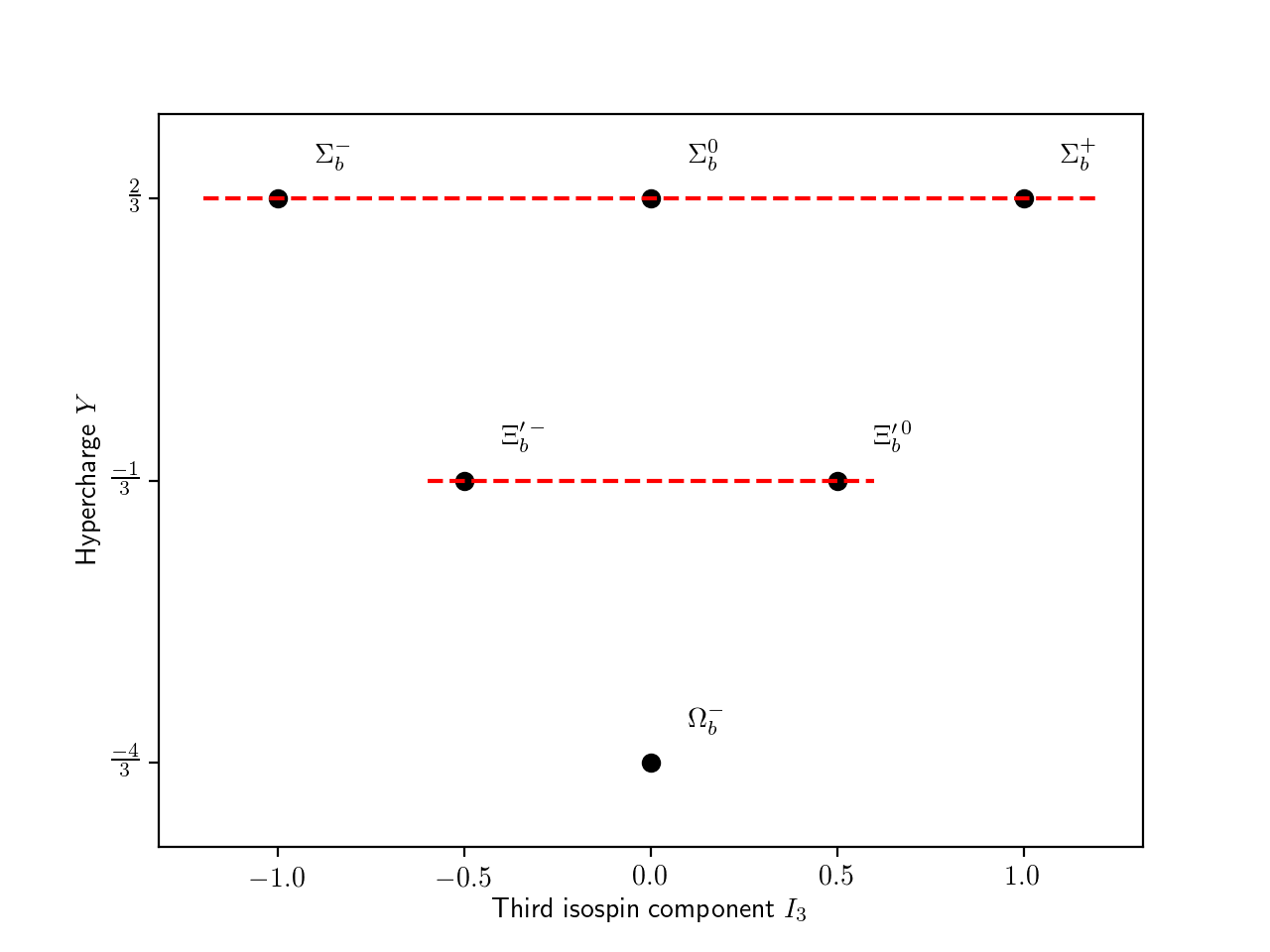}}
\caption{Two weight diagrams of a sextet. For every weight, the name of the corresponding baryon from the charm or bottom sextet with $J^P = 1/2^+$ is included in the upper or lower weight diagram, respectively. The red lines visualize isospin multiplets of $\text{SU}(2)\times\text{U}(1)$. All weights not connected by red lines are one-dimensional isospin multiplets.}
\label{fig:c_b_sextets}
\end{figure}
Using \autoref{eq:mass_tot_sym_c}, \autoref{eq:mass_tot_sym_b}, \autoref{eq:mass_ele_c}, and \autoref{eq:mass_ele_b}, we find for the masses in the charm and bottom sextets:
\begin{alignat*}{4}
&m^\text{c}_\alpha\left(\frac{2}{3},1,1\right)&&\equiv m_{\Sigma^{++}_\text{c}} &&= m^\text{c}_0 + m^F_3 &&+ \frac{2}{3}\tilde{m}^F_8 + \frac{8}{9}\Delta^{LH}_\alpha + \frac{4}{9}\Delta^{LL}_\alpha,\\
&m^\text{c}_\alpha\left(\frac{2}{3},1,0\right)&&\equiv m_{\Sigma^{+}_\text{c}} &&= m^\text{c}_0 &&+ \frac{2}{3}\tilde{m}^F_8 + \frac{2}{9}\Delta^{LH}_\alpha - \frac{2}{9}\Delta^{LL}_\alpha,\\
&m^\text{c}_\alpha\left(\frac{2}{3},1,-1\right)&&\equiv m_{\Sigma^{0}_\text{c}} &&= m^\text{c}_0 - m^F_3 &&+ \frac{2}{3}\tilde{m}^F_8 - \frac{4}{9}\Delta^{LH}_\alpha + \frac{1}{9}\Delta^{LL}_\alpha,\\
&m^\text{c}_\alpha\left(-\frac{1}{3},\frac{1}{2},\frac{1}{2}\right)&&\equiv m_{\Xi^{\prime\, +}_\text{c}} &&= m^\text{c}_0 + \frac{1}{2}m^F_3 &&- \frac{1}{3}\tilde{m}^F_8 + \frac{2}{9}\Delta^{LH}_\alpha - \frac{2}{9}\Delta^{LL}_\alpha,\\
&m^\text{c}_\alpha\left(-\frac{1}{3},\frac{1}{2},-\frac{1}{2}\right)&&\equiv m_{\Xi^{\prime\, 0}_\text{c}} &&= m^\text{c}_0 - \frac{1}{2}m^F_3 &&- \frac{1}{3}\tilde{m}^F_8 - \frac{4}{9}\Delta^{LH}_\alpha + \frac{1}{9}\Delta^{LL}_\alpha,\\
&m^\text{c}_\alpha\left(-\frac{4}{3},0,0\right)&&\equiv m_{\Omega^{0}_\text{c}} &&= m^\text{c}_0 &&- \frac{4}{3}\tilde{m}^F_8 - \frac{4}{9}\Delta^{LH}_\alpha + \frac{1}{9}\Delta^{LL}_\alpha,\\
&m^\text{b}_\alpha\left(\frac{2}{3},1,1\right)&&\equiv m_{\Sigma^{+}_\text{b}} &&= m^\text{b}_0 + m^F_3 &&+ \frac{2}{3}\tilde{m}^F_8 - \frac{4}{9}\Delta^{LH}_\alpha + \frac{4}{9}\Delta^{LL}_\alpha,\\
&m^\text{b}_\alpha\left(\frac{2}{3},1,0\right)&&\equiv m_{\Sigma^{0}_\text{b}} &&= m^\text{b}_0 &&+ \frac{2}{3}\tilde{m}^F_8 - \frac{1}{9}\Delta^{LH}_\alpha - \frac{2}{9}\Delta^{LL}_\alpha,\\
&m^\text{b}_\alpha\left(\frac{2}{3},1,-1\right)&&\equiv m_{\Sigma^{-}_\text{b}} &&= m^\text{b}_0 - m^F_3 &&+ \frac{2}{3}\tilde{m}^F_8 + \frac{2}{9}\Delta^{LH}_\alpha + \frac{1}{9}\Delta^{LL}_\alpha,\\
&m^\text{b}_\alpha\left(-\frac{1}{3},\frac{1}{2},\frac{1}{2}\right)&&\equiv m_{\Xi^{\prime\, 0}_\text{b}} &&= m^\text{b}_0 + \frac{1}{2}m^F_3 &&- \frac{1}{3}\tilde{m}^F_8 - \frac{1}{9}\Delta^{LH}_\alpha - \frac{2}{9}\Delta^{LL}_\alpha,\\
&m^\text{b}_\alpha\left(-\frac{1}{3},\frac{1}{2},-\frac{1}{2}\right)&&\equiv m_{\Xi^{\prime\, -}_\text{b}} &&= m^\text{b}_0 - \frac{1}{2}m^F_3 &&- \frac{1}{3}\tilde{m}^F_8 + \frac{2}{9}\Delta^{LH}_\alpha + \frac{1}{9}\Delta^{LL}_\alpha,\\
&m^\text{b}_\alpha\left(-\frac{4}{3},0,0\right)&&\equiv m_{\Omega^{-}_\text{b}} &&= m^\text{b}_0 &&- \frac{4}{3}\tilde{m}^F_8 + \frac{2}{9}\Delta^{LH}_\alpha + \frac{1}{9}\Delta^{LL}_\alpha,
\end{alignat*}
where we omitted all ``$\mathcal{O}$'' for the sake of clarity. This mass parametrization allows us to find ten mass relations:
\begin{gather}
m_{\Sigma^+_\text{c}} - m_{\Sigma^0_\text{c}} = m_{\Xi^{\prime\, +}_\text{c}} - m_{\Xi^{\prime\, 0}_\text{c}} + \mathcal{O}\left(\alpha\varepsilon_8\right) + \mathcal{O}\left(\varepsilon_3\varepsilon_8\right),\label{eq:sextet_c_iso_bre2}\\
m_{\Sigma^0_\text{b}} - m_{\Sigma^-_\text{b}} = m_{\Xi^{\prime\, 0}_\text{b}} - m_{\Xi^{\prime\, -}_\text{b}} + \mathcal{O}\left(\alpha\varepsilon_8\right) + \mathcal{O}\left(\varepsilon_3\varepsilon_8\right),\label{eq:sextet_b_iso_bre}\\
m_{\Sigma^0_\text{c}} - m_{\Xi^{\prime\, 0}_\text{c}} = m_{\Xi^{\prime\, 0}_\text{c}} - m_{\Omega^0_\text{c}} + \mathcal{O}\left(\varepsilon^2_8\right),\label{eq:sextet_c_GMO_equal_spacing2}\\
m_{\Sigma^-_\text{b}} - m_{\Xi^{\prime\, -}_\text{b}} = m_{\Xi^{\prime\, -}_\text{b}} - m_{\Omega^-_\text{b}} + \mathcal{O}\left(\varepsilon^2_8\right),\label{eq:sextet_b_GMO_equal_spacing}\\
m_{\Sigma^{++}_\text{c}} + m_{\Sigma^{0}_\text{c}} - 2m_{\Sigma^{+}_\text{c}} = m_{\Sigma^{+}_\text{b}} + m_{\Sigma^{-}_\text{b}} - 2m_{\Sigma^{0}_\text{b}} + \mathcal{O}\left(\alpha\varepsilon_\text{cb}\right) + \mathcal{O}\left(\alpha\varepsilon_8\right),\label{eq:sextet_c_b_sigma}\\
m_{\Sigma^0_\text{c}} - m_{\Xi^{\prime\, 0}_\text{c}} = m_{\Sigma^-_\text{b}} - m_{\Xi^{\prime\, -}_\text{b}} + \mathcal{O}\left(\varepsilon_\text{cb}\varepsilon_8\right),\label{eq:sextet_c_b_spacing}\\
m_{\Sigma^+_\text{c}} - m_{\Sigma^0_\text{c}} + m_{\Xi^{\prime\, 0}_\text{c}} - m_{\Xi^{\prime\, +}_\text{c}} = m_{\Sigma^0_\text{b}} - m_{\Sigma^-_\text{b}} + m_{\Xi^{\prime\, -}_\text{b}} - m_{\Xi^{\prime\, 0}_\text{b}} + \mathcal{O}\left(\alpha\varepsilon_8\right),\label{eq:sextet_c_b_very_precise}\\
2m_{\Xi^{\prime\, 0}_\text{c}} - m_{\Sigma^0_\text{c}} - m_{\Omega^0_\text{c}} = 2m_{\Xi^{\prime\, -}_\text{b}} - m_{\Sigma^-_\text{b}} - m_{\Omega^-_\text{b}} + \mathcal{O}\left(\varepsilon_\text{cb}\varepsilon^2_8\right),\label{eq:sextet_c_b_precise}\\
m_{\Sigma^{0}_\text{b}} = m_{\Xi^{\prime\, +}_\text{c}} - m_{\Xi^{\prime\, 0}_\text{c}} + \frac{1}{2}\left(m_{\Sigma^{+}_\text{b}} + m_{\Sigma^{-}_\text{b}} + m_{\Sigma^{0}_\text{c}} - m_{\Sigma^{++}_\text{c}}\right)\label{eq:sextet_c_b_sigma_b}\\
+ \mathcal{O}\left(\alpha\varepsilon_\text{cb}\right) + \mathcal{O}\left(\varepsilon_3\varepsilon_\text{cb}\right) + \mathcal{O}\left(\alpha\varepsilon_8\right) + \mathcal{O}\left(\varepsilon_3\varepsilon_8\right),\nonumber\\
m_{\Xi^{\prime\, 0}_\text{b}} = m_{\Xi^{\prime\, -}_\text{b}} + m_{\Xi^{\prime\, +}_\text{c}} - m_{\Xi^{\prime\, 0}_\text{c}} + \frac{1}{2}\left(m_{\Sigma^{+}_\text{b}} - m_{\Sigma^{-}_\text{b}} + m_{\Sigma^{0}_\text{c}} - m_{\Sigma^{++}_\text{c}}\right)\label{eq:sextet_c_b_xi}\\
+ \mathcal{O}\left(\alpha\varepsilon_\text{cb}\right) + \mathcal{O}\left(\varepsilon_3\varepsilon_\text{cb}\right) + \mathcal{O}\left(\alpha\varepsilon_8\right).\nonumber
\end{gather}
This time, we have included the order of magnitude of the dominant correction(s) in the mass relations. Of course, analogous mass relations apply to all pairs of baryonic charm and bottom sextets. Only the first six relations are inequivalent in the sense that none of the first six mass relations follows from each other. The seventh and eighth mass relation (\autoref{eq:sextet_c_b_very_precise} and \autoref{eq:sextet_c_b_precise}) are more precise variants of the first two mass relations (\autoref{eq:sextet_c_iso_bre2} and \autoref{eq:sextet_b_iso_bre}) and of the following two mass relations (\autoref{eq:sextet_c_GMO_equal_spacing2} and \autoref{eq:sextet_b_GMO_equal_spacing}), respectively. Experimentally, some baryon masses in some sextets are not measured yet, therefore, we will use the last two mass relations (\autoref{eq:sextet_c_b_sigma_b} and \autoref{eq:sextet_c_b_xi}) to calculate the masses of the missing baryons in \autoref{sec:mass_predictions}.\par
The first mass relation (\autoref{eq:sextet_c_iso_bre2}) is just \autoref{eq:sextet_c_iso_bre} from \autoref{sec:rel_within_multiplets} and listed here for completeness' sake. How to obtain the dominant corrections to this mass relation is explained in \autoref{sec:rel_within_multiplets}. The second mass relation (\autoref{eq:sextet_b_iso_bre}) is just the bottom companion piece to the first mass relation. Similar to the first two mass relations, the third mass relation (\autoref{eq:sextet_c_GMO_equal_spacing2}) is just \autoref{eq:sextet_c_GMO_equal_spacing} from \autoref{sec:rel_within_multiplets} and the fourth mass relation (\autoref{eq:sextet_b_GMO_equal_spacing}) is the bottom companion piece to the third mass relation.\par
The fifth and sixth mass relation (\autoref{eq:sextet_c_b_sigma} and \autoref{eq:sextet_c_b_spacing}) involve masses of both the charm and bottom sextet. To determine the dominant corrections to these mass relations, we note that both mass relations are exact in the limit of $m_\text{c} = m_\text{b}$ and $\alpha = 0$, since, in this limit, the mass of a baryon in the charm sextet is equal to the mass of the corresponding baryon in the bottom sextet. This means that every correction to \autoref{eq:sextet_c_b_sigma} and \autoref{eq:sextet_c_b_spacing} has to be proportional to $\varepsilon_\text{cb}$ or $\alpha$. Moreover, we find that \autoref{eq:sextet_c_b_sigma} would also be exact, if the $\text{SU}(2)\times\text{U}(1)$-isospin symmetry was exact. As this symmetry is only broken by $\varepsilon_3$ and $\alpha$, every correction to \autoref{eq:sextet_c_b_sigma} has to be proportional to $\varepsilon_3$ or $\alpha$ as well. Corrections in the order of $\alpha$ cannot occur, as the given mass parametrization already takes such correction into account. This implies that the dominant corrections to \autoref{eq:sextet_c_b_sigma} have to be in the order of $\alpha\varepsilon_\text{cb}$, $\alpha\varepsilon_8$, or $\varepsilon_3\varepsilon_\text{cb}$. However, a correction in the order of $\varepsilon_3\varepsilon_\text{cb}$ cannot occur, because neither the combination of all charm baryons nor the combination of all bottom baryons in \autoref{eq:sextet_c_b_sigma}, i.e., neither the left-hand side nor the right-hand side of \autoref{eq:sextet_c_b_sigma} contains any terms proportional to a single power of $\varepsilon_3$ (cf. \autoref{eq:iso_tot_sym}):
\begin{gather*}
m_{\Sigma^{++}_\text{c}} + m_{\Sigma^{0}_\text{c}} - 2m_{\Sigma^{+}_\text{c}} = 0 + \mathcal{O}\left(\varepsilon^2_3\right) + \mathcal{O}\left(\alpha\right),\\
m_{\Sigma^{+}_\text{b}} + m_{\Sigma^{-}_\text{b}} - 2m_{\Sigma^{0}_\text{b}} = 0 + \mathcal{O}\left(\varepsilon^2_3\right) + \mathcal{O}\left(\alpha\right).
\end{gather*}
Thus, the dominant corrections to \autoref{eq:sextet_c_b_sigma} are in the order of\linebreak {${\alpha\varepsilon_\text{cb}}$} and {${\alpha\varepsilon_8}$}.\par
The sixth mass relation (\autoref{eq:sextet_c_b_spacing}) is not only exact in the limit of $m_\text{c} = m_\text{b}$ and $\alpha = 0$, but also in the limit of exact $\text{SU}(2)_\text{ds}\times\text{U}(1)$-flavor symmetry (cf. \autoref{fig:sextet_ds}). This means that every correction to \autoref{eq:sextet_c_b_spacing} has to be proportional to $\varepsilon_8$ and proportional to $\varepsilon_\text{cb}$ or $\alpha$. Hence, the dominant correction to \autoref{eq:sextet_c_b_spacing} is in the order of $\varepsilon_\text{cb}\varepsilon_8$.\par
We obtain the seventh and eighth mass relation (\autoref{eq:sextet_c_b_very_precise} and \autoref{eq:sextet_c_b_precise}) by subtracting \autoref{eq:sextet_b_iso_bre} from \autoref{eq:sextet_c_iso_bre2} and by subtracting \autoref{eq:sextet_b_GMO_equal_spacing} from \autoref{eq:sextet_c_GMO_equal_spacing2}, respectively, and rewriting the results. This implies that both \autoref{eq:sextet_c_b_very_precise} and \autoref{eq:sextet_c_b_precise} are exact in the limit of exact heavy quark symmetry, i.e., in the limit of $m_\text{c}=m_\text{b}$ and $\alpha=0$. Thus, every correction to \autoref{eq:sextet_c_b_very_precise} and \autoref{eq:sextet_c_b_precise} has to be proportional to $\varepsilon_\text{cb}$ or $\alpha$. Furthermore, if the $\text{SU}(2)\times\text{U}(1)$-isospin symmetry or the $\text{SU}(2)_\text{ds}\times\text{U}(1)$-flavor symmetry was exact, \autoref{eq:sextet_c_b_very_precise} would also be exact. We can deduce from this that the dominant correction to \autoref{eq:sextet_c_b_very_precise} is in the order of $\alpha\varepsilon_8$. For \autoref{eq:sextet_c_b_precise}, we note that it is exact in the limit of exact $\text{SU}(2)_\text{ds}\times\text{U}(1)$-flavor symmetry and that both the left-hand and right-hand side of \autoref{eq:sextet_c_b_precise} only pick up corrections proportional to $\varepsilon^2_8$ (cf. \autoref{eq:sextet_c_GMO_equal_spacing2} and \autoref{eq:sextet_b_GMO_equal_spacing}):
\begin{gather*}
2m_{\Xi^{\prime\, 0}_\text{c}} - m_{\Sigma^{0}_\text{c}} - m_{\Omega^{0}_\text{c}} = 0 + \mathcal{O}\left(\varepsilon^2_8\right),\\
2m_{\Xi^{\prime\, -}_\text{b}} - m_{\Sigma^{-}_\text{b}} - m_{\Omega^{-}_\text{b}} = 0 + \mathcal{O}\left(\varepsilon^2_8\right).
\end{gather*}
This implies that the dominant correction to \autoref{eq:sextet_c_b_precise} is in the order of $\varepsilon_\text{cb}\varepsilon^2_8$.\par
The last two mass relations (\autoref{eq:sextet_c_b_sigma_b} and \autoref{eq:sextet_c_b_xi}) follow from all previous sextet mass relations. Both relations are exact in the limit of exact $\text{SU}(2)\times\text{U}(1)$-isospin symmetry, so corrections to both relations have to be proportional to $\varepsilon_3$ or $\alpha$. The contributions in the order of $\varepsilon_3$ and $\alpha$ are already taken into account by the mass parametrization, so the dominant corrections are given by products of $\varepsilon_3$ or $\alpha$ with the remaining parameters\footnote{\autoref{eq:sextet_c_b_xi} is exact for $m_\text{c} = m_\text{b}$ and $\alpha=0$, thus the correction $\varepsilon_3\varepsilon_8$ cannot occur.}.

\newpage
\chapter{Testing Mass Relations on Experimental Data}\label{chap:data}

In the previous chapters, we discussed the symmetry properties of hadron masses regarding global flavor transformations in different approaches to find mass parametrizations and relations. In the course of this process, especially in \autoref{sec:EFT+H_Pert}, the question arose whether these mass relations apply to the hadron masses or their squares. To answer this question, we referred to the experimentally determined values for the hadron masses. However, we have refrained from showing any numbers or results so far. Therefore, we want to test the hadronic mass relations from \autoref{chap:mass_relations} on experimental data in this chapter.\par
To compare the mass relations from \autoref{chap:mass_relations} with experimental values, we first have to say how hadron masses can be accessed experimentally. For this, we will motivate a general definition of particle masses in the framework of S-matrix theory, the pole mass. In the course of this process, we will see that we will be faced with several difficulties regarding the definition and determination of hadron masses. The discussion of mass determination and pole mass is the concern of \autoref{sec:polemass}.\par
In \autoref{sec:mass_testing}, we will discuss to which extent the mass relations from \autoref{chap:mass_relations} are satisfied for both linear and quadratic hadron masses. The aim of this section is two-fold: On the one hand, we want to convince ourselves that the mass relations from \autoref{chap:mass_relations} actually apply up to their dominant correction(s) and that the assumptions which guided us to these mass relations are at least somewhat justified, as they lead to reasonable results. On the other hand, we want to back up our claims concerning the relation between linear and quadratic mass relations we made throughout this work, in particular at the end of \autoref{sec:EFT+H_Pert}.\par
Lastly, we want to use the mass relations from \autoref{chap:mass_relations} and empirical observations to group yet unassigned hadrons into multiplets and to predict the mass of hadrons missing within almost complete multiplets in \autoref{sec:mass_predictions}.

\section{Pole Mass and the Experimental Accessibility of Particle Masses}\label{sec:polemass}

The mass of a point-like particle in a non-quantized, relativistic theory is an intrinsic property of the particle, i.e., independent of the choice of the inertial frame and invariant under all Poincar\'{e} transformations. Following the laws of special relativity, the mass $m$ of a point-like particle with four-momentum $p^\mu$ can be obtained by using the Minkowski metric: $p^2 = m^2$. Thus, it is sufficient to measure the four-momentum of a particle to determine its mass in the framework of such a theory. If we try to apply this method of mass determination to hadrons, we are faced with two problems: On the one hand, most hadrons we are interested in are very short-lived, so it is often not viable in experiments to create a hadron and measure its four-momentum. On the other hand, such an approach neglects the very nature of subatomic particles, namely that they are subject to the laws of the quantum regime.\par
Therefore, a more sophisticated ansatz is needed to describe the mass of particles. This ansatz should factor in both special relativity as well as the laws of the quantum regime. To extent special relativity to quantum theories, we have to incorporate Poincar\'{e} transformations into a quantized theory. We can do this by requiring that the Poincar\'{e} transformations act via a representation on a set of states. We then identify particle states with irreducible representations of the Poincar\'{e} group (cf. \cite{Weinberg1995}). Irreducible representations of the Poincar\'{e} group are characterized by eigenvalues of Casimir operators which are constant on irreducible representations. One Casimir operator of the Poincar\'{e} group is the quadratic Casimir operator $\hat P^\mu\hat P_\mu$ of the translation operators $\hat P^\mu$. Its eigenvalue, denoted by $m^2$, defines the squared mass of the particle.\par
Even though this approach is a more sophisticated, it neither gives us an experimental prescription to determine the mass of a particle nor is it clear whether unstable, composite particles are described by the same formalism. Therefore, we are interested in a more general definition of particle masses. To guide our intuition, consider the following Lagrangian of a free quantized scalar field $\Phi$:
\begin{gather*}
\mathcal{L} = (\partial_\mu\Phi)^\dagger(\partial^\mu\Phi) - m^2\Phi^\dagger\Phi,
\end{gather*}
where $m$ is a real positive parameter. If one requires the fields $\Phi$ and $\Phi^\dagger$ to satisfy the Euler-Lagrange equations following from $\mathcal{L}$ and canonical commutation relations, we can interpret the fields $\Phi$ and $\Phi^\dagger$ -- their Fourier modes to be precise -- as operators annihilating and creating states which we can identify as particle states with mass $m$. Again, we understand particle states as vectors ``living'' in irreducible representations of the Poincar\'{e} group and the mass as the eigenvalue of $\hat P^\mu\hat P_\mu$. Thus, the mass $m$ of a particle in a free theory is directly connected to the coefficients of quadratic field terms in the Lagrangian. Now, consider the propagator $G(p)$ of $\Phi$:
\begin{gather*}
G(p) = \frac{i}{p^2-m^2}.
\end{gather*}
We see that $G$ has a pole at $p^2 = m^2$. This illustrates that particle masses in a QFT are in some way related to poles.\par
Still, we have refrained from giving an experimental prescription for measuring hadron masses. To account for this, consider now a simplistic model for an interacting theory:
\begin{gather*}
\mathcal{L} = (\partial_\mu\Phi)^\dagger(\partial^\mu\Phi) - m^2\Phi^\dagger\Phi + \overline{\Psi}\left(i\slashed{\partial} - M\right)\Psi + g\Phi\overline{\Psi}\Psi,
\end{gather*}
where $\Phi$ is a scalar field, $\Psi$ is a fermionic field, and $m$, $M$, and $g$ are real parameters. Now imagine we make a scattering experiment $\bar{f}f\to\bar{f}f$ within this theory, where we denote the (anti)fermions of this theory by $(\bar{f})f$. Experimentally, we would be able to determine the cross section of this reaction. Theoretically, the cross section for the reaction $\bar{f}f\to\bar{f}f$ is given by a phase space integral of the modulus squared of the transition amplitude $\mathcal{M}(\bar{f}f\to\bar{f}f)$. At tree level, $\mathcal{M}(\bar{f}f\to\bar{f}f)$ is given by the following Feynman diagram:\\
\begin{minipage}{0.49\textwidth}
\centering
\begin{tikzpicture}
  \begin{feynman}
    \vertex (a);
    \vertex [below left=of a] (b) {\(f\)};
    \vertex [above left=of a] (c) {\(\bar{f}\)};
    \vertex [right=of a] (d);
    \vertex [above right=of d] (e) {\(\bar{f}\)};
    \vertex [below right=of d] (f) {\(f\)};

    \diagram* {
      (b) -- [fermion, momentum' = \(p_1\)] (a) -- [fermion, rmomentum'=\(p_2\)] (c),
      (a) -- [scalar] (d),
      (d) -- [anti fermion] (e),
      (d) -- [fermion] (f),
    };
  \end{feynman}
\end{tikzpicture}
\end{minipage}
\begin{minipage}{0.49\textwidth}
\begin{flalign*}
\propto\frac{i}{(p_1+p_2)^2 - m^2}.
\end{flalign*}
\end{minipage}\\
In terms of Mandelstam variables, $(p_1+p_2)^2$ is the square $s$ of the center-of-mass energy. Thus, the cross section of the reaction $\bar{f}f\to\bar{f}f$ has -- at least at tree level -- a pole in $s$ at $s = m^2$. We would like to deduce from this result that cross sections have poles at particle masses -- as we will formulate later on --, since we have seen that the parameter $m$ corresponds to the mass of the particle associated with $\Phi$ in the case of a free field. However, we have to be cautious: The parameter $m$ is a bare parameter in the case of an interacting theory. This means that the parameter $m$ itself is not well defined and renormalized parameters need to be introduced to describe physical quantities. The definition of these parameters depends heavily on the chosen renormalization scheme. This means that the renormalized parameter $m$ might not bear much semblance, if any, to any pole of the cross section. In this regard, we can only understand the previous considerations as a motivation.\par
Nevertheless,
in the S-matrix formalism, particles are commonly related to s-poles of the scattering amplitude and particle masses are defined via the corresponding pole positions (cf. review \textit{48. Resonances} in \cite{PDG}). The notion of mass following this approach is known as \textit{pole mass}. There are two conventions for parametrizing a pole $s_R$ of the S-matrix:
\begin{align*}
s_R &= M_s^2 - iM_s\Gamma_s,\\
w_R :&= \sqrt{s_R} = M_w - i\frac{\Gamma_w}{2},
\end{align*}
whereby $M_{s/w}$ and $\Gamma_{s/w}$ define the pole mass and width of a particle, respectively, in the $s/w$-plane with $w\coloneqq\sqrt{s}$. Usually, the second parametrization is used and referred to as pole mass and decay width (cf. review \textit{48. Resonances} in \cite{PDG}). In both cases, the imaginary part is chosen to be negative which is always possible since $s_R^*$ is a pole, if $s_R$ is a pole (cf. review \textit{48. Resonances} in \cite{PDG}). The parametrizations can be converted into each other by:
\begin{align}
M_s &= \sqrt{M_w^2 - \left(\frac{\Gamma_w}{2}\right)^2},\label{eq:s-w-relation}\\
\Gamma_s &= \frac{M_w}{M_s}\Gamma_w.\nonumber
\end{align}
Stable particles do not decay, thus, $\Gamma_{s/w}$ equals zero and $s_R$ is real. In this case, both definitions of mass and width coincide. For a lot of hadrons, a similar statement applies: By expanding \autoref{eq:s-w-relation}, we see that the difference of $M_s$ and $M_w$ is proportional to $(\Gamma_w/M_w)^2$. This means that the difference between $M_s$ and $M_w$ is very small in comparison to $M_s$ and $M_w$ for $\Gamma_w\ll M_w$. 
If the experimental uncertainties of $M_s$ and $M_w$ are larger than their difference, it does not really matter which convention for the pole mass one uses. This applies to most hadrons we consider in the following sections. One should note that under certain circumstances, usually for very broad resonances, the relation between the imaginary part of the pole and the decay width breaks down (cf. review \textit{48. Resonances} in \cite{PDG}). In these cases, the decay width is related to the residue of the S-matrix pole (cf. review \textit{48. Resonances} in \cite{PDG}).\par
Most hadrons are only experimentally accessible as resonances in formation experiments or as subchannel resonances together with spectator particles in scattering experiments (cf. review \textit{48. Resonances} in \cite{PDG}). Thus, most data on hadron masses come from those experiments. To extract information about hadron masses from the scattering cross section, one needs to describe the scattering amplitude $\mathcal{M}$ in the vicinity of the resonance. For simplicity, a decomposition of $\mathcal{M}$ into a part containing the pole and a background amplitude is often performed (cf. review \textit{48. Resonances} in \cite{PDG}):
\begin{gather*}
\mathcal{M} = \mathcal{M}_{\text{pole}} + \mathcal{M}_\text{B}.
\end{gather*}
Obviously, this decomposition is not unique without further restrictions. The common choice for the background amplitude $\mathcal{M}_\text{B}$ is either to take it to be constant near the pole or to omit $\mathcal{M}_\text{B}$ entirely. For instance, $\mathcal{M}$ for a single scalar resonance with small decay width neglecting the background $\mathcal{M}_\text{B}$ can be parametrized near the pole by (cf. review \textit{48. Resonances} in \cite{PDG}):
\begin{gather*}
\mathcal{M} = \frac{\mathcal{M}_0}{s - M^2_\text{BW} + i\sqrt{s}\Gamma_\text{BW}},
\end{gather*}
where $M_\text{BW}$ and $\Gamma_\text{BW}$ are the Breit-Wigner (BW) parameters of mass and width. Assuming this parametrization is exact, the Breit-Wigner parameters are directly linked to the pole position (for instance, in the $w$-plane) via:
\begin{align*}
M_w &= \sqrt{M_\text{BW}^2 - \left(\frac{\Gamma_\text{BW}}{2}\right)^2},\\
\Gamma_w &= \Gamma_\text{BW}.
\end{align*}
If $\Gamma_\text{BW}$ is much smaller than $M_\text{BW}$, $\sqrt{s}$ can be replaced by $M_\text{BW}$:
\begin{gather*}
\mathcal{M} = \frac{\mathcal{M}_0}{s - M^2_\text{BW} + iM_\text{BW}\Gamma_\text{BW}}.
\end{gather*}
In this case, the Breit-Wigner parameters coincide with the $M_s$-$\Gamma_s$-convention.\par
In the context of scattering experiments, the expression ``small decay width'' which was used laxly in the discussion of the Breit-Wigner parametrization refers to the width in relation to other resonances and thresholds nearby. A decay width is small if the width is much smaller than the difference between pole position and other resonances/thresholds, meaning e.g. ${M_s\Gamma_s\ll|s_\text{Res./Thres.} - M^2_s|}$ (cf. review \textit{48. Resonances} in \cite{PDG}). For some strongly decaying resonances, this relation is not satisfied. The $\Delta^+(1232)$-resonance with a width of about $\SI{130}{MeV}$ (taken from \cite{PDG}), for instance, in the photoreaction $p+\gamma$ is located closely to the $N+\pi$ threshold around approximately $\SI{1080}{MeV}$.\par
In general, the Breit-Wigner parametrization is a rather crude approximation and analysis method for resonances. The interaction of spins and angular momenta, the vicinity of other resonances and thresholds, the non-negligible size of decay widths and dominant background effects, among others, call for more sophisticated analysis procedures to obtain the physical pole positions (cf. review \textit{48. Resonances} in \cite{PDG}). This task which is partially the concern of current research is quite troublesome.
As stated in review \textit{98. $N$ and $\Delta$ Resonances} in \textit{The Review of Particle Physics} of the \textit{Particle Data Group} (cf. \cite{PDG}), ``the accurate determination of pole parameters from the analysis of data on the real energy axis is not necessarily simple, or even straightforward. It requires the implementation of the correct analytic structure of the relevant (often coupled) channels.'' Some methods for the experimental determination of pole parameters are faced with ``almost unsurmountable difficulties.'' (Quotes are taken from review \textit{98. $N$ and $\Delta$ Resonances} in \cite{PDG}.)\par
Next to these more technical problems, we are also faced with a theoretical problem on a deeper level: We still have to explain how the notion of pole mass is related to the notion of hadron mass we employed in the previous chapters. For the state formalism, this means to formulate a relation between the eigenvalues of the Hamilton operator and the poles of the scattering matrix. As it stands now, I am unable to give a complete and rigorous answer to this question. However, one might formulate the wrong question, if one asks how to relate the eigenvalues of the Hamilton operator to the poles of the scattering matrix: Historically, the \text{SU}(3)-structure present in the hadronic sector was first recognized for the positions of resonances related to hadrons, i.e., for the notion of pole mass. In this sense, we should start by assuming approximate \text{SU}(3)-symmetry for the pole masses instead of the Lagrangian, if we want to follow empirical observations very closely. We do not need to consider the relation between the eigenvalues of the Hamilton operator and the poles of the scattering matrix in this case. Obviously, we are faced with different complications in such an approach like, for instance, the description of electromagnetic contributions and heavy quark symmetry. Nevertheless, we operate on the assumption 
that the mass formulae and relations from \autoref{chap:GMO_formula} and \autoref{chap:mass_relations} apply to pole masses for the time being.\par
The difficulties to define and determine the mass of a particle should be kept in mind when discussing the results of mass relations on experimental data and their predictions.

\section{Verification of Mass Relations}\label{sec:mass_testing}

We want to test the mass relations from \autoref{chap:mass_relations} for both linear and quadratic hadron masses in this section. Our goal is to verify that the mass relations given in \autoref{chap:mass_relations} apply within their range of validity, i.e., are correct up to the given corrections. Moreover, we want to back up our claims from \autoref{sec:EFT+H_Pert}: In this section, we stated that, in general, both linear and quadratic mass relations apply to both baryons and mesons. If they do not, we can argue that the distinction we have to make does not originate from the distinction in baryons and mesons.\par
There are multiple ways to check the mass formulae and relations from \autoref{chap:GMO_formula} and \autoref{chap:mass_relations} and to decide whether and in which cases linear mass formulae and relations are preferable to their quadratic counterparts or vice versa. One method is to determine the free parameters in the linear and quadratic mass formulae by fitting the formulae to the measured hadron masses using the $\chi^2$-method. One could then compare $\chi^2$ of the linear and quadratic mass formulae to find which mass formula is favored by the observed masses. Performing such a fit would also give an estimation for the free parameters in the mass formulae. However, there are a lot of drawbacks to such an approach. First off, the mass formulae from \autoref{chap:GMO_formula} pick up different corrections of different scale. However, a fit is not sensitive to every correction and dominated by the largest one. Moreover, the number of hadron masses in a multiplet or in a pair of multiplets exceeds the number of undetermined parameters in the corresponding mass formulae by at most 6. Considering that some multiplets are incomplete, i.e., that some experimental values for the hadron masses are missing, we lack a sufficient number of data points for the fit-method to be meaningful. For the same reasons, a statistical analysis is, in general, not advisable for testing the implications of the theory of global flavor symmetry breaking and, in particular, the state formalism.\par
Another more refined method
is to determine how ``good'' or ``bad'' the mass relations from \autoref{chap:mass_relations} are satisfied for linear and quadratic hadron masses. In contrast to the previous method, different mass relations are sensitive to different corrections. Additionally, the method of checking mass relations give us more control over the particles involved in the analysis: If the mass of a hadron in a multiplet is not measured, we might be able to find mass relations that do not involve this hadron. But in order to apply this method, we have to elaborate on what we mean by ``determine how `good' or `bad' the mass relations from \autoref{chap:mass_relations} are satisfied for linear and quadratic hadron masses''. Let us start this discussion by clarifying what we mean by ``linear'' and ``quadratic'' mass relations: We can write all mass relations from \autoref{chap:mass_relations} as:
\begin{gather*}
\sum_x k_x m_x = 0,
\end{gather*}
where $x$ denotes a hadron, the sum runs over all hadrons in a multiplet or in a pair of charm and bottom multiplets, $k_x$ is a rational number, and $m_x$ is the mass of the hadron $x$. This form can be achieved by, for instance, subtracting the right-hand side from the left-hand side of a mass relation from the previous chapter. Since such a mass relation is a linear combination of hadron masses, we call it linear mass relation. For every linear mass relation, there is a quadratic version of this mass relation which is obtained by replacing the hadron masses with the squared hadron masses:
\begin{gather*}
\sum_x k_x m^2_x = 0.
\end{gather*}
This means that we can describe linear and quadratic mass relations by:
\begin{gather*}
\sum_x k_x m^n_x = 0,
\end{gather*}
where the exponent $n\in\{1,2\}$ distinguishes between the linear and the quadratic version.\par
Next, we have to figure out how we can check the linear and quadratic mass relations. One of the easiest ways to do this is to simply insert the experimentally determined values for the hadron masses $m_x$ into the left-hand side of the linear and quadratic mass relations given in the form above and to see how much the calculated values deviate from zero. However, this procedure is not helpful in any way, as it suffers from two major problems: On the one hand, the values for the linear and quadratic mass relations have different units: The value obtained from the linear mass relation has a mass dimension of 1, while the value of the quadratic version has mass dimension 2. This makes the values incomparable. On the other hand, the values given by this method are not invariant under rescaling the mass relation, i.e., under multiplying the mass relation with a constant. This is especially troublesome, since the mass relations we found in \autoref{chap:mass_relations} are to some extent arbitrary. For instance, we could have used 1/4 times \autoref{eq:Gell-Mann--Okubo} instead of \autoref{eq:Gell-Mann--Okubo} which is done sometimes (cf. \cite{Gell-Mann1961}). Of course, this implies that the previously suggested procedure itself is arbitrary.\par
We can avoid these issues by employing a more sophisticated method. For this, consider a mass relation given in the following form:
\begin{gather*}
\sum_x k_x m^n_x = 0.
\end{gather*}
If this mass relation is not trivial, there is at least one hadron $y$ for which the coefficient $k_y$ does not vanish. Solving this mass relation for $m_y$ yields:
\begin{gather}
m_y = \sqrt[n]{\sum_{x\neq y}\frac{-k_x}{k_y}m^n_x}.\label{eq:prediction}
\end{gather}
Let us now denote the experimentally determined value for $m_y$ by $m^\text{exp}$ and the value for $m_y$ calculated from the remaining hadrons using \autoref{eq:prediction} by $m^\text{cal}$. We can then compute the following expression:
\begin{gather}
m^\text{cal} - m^\text{exp} = \sqrt[n]{\sum_{x\neq y}\frac{-k_x}{k_y}m^n_x} - m_y.\label{eq:cal-exp}
\end{gather}
If the mass relation was exact, the expression would be zero. Therefore, the deviation of this expression from zero tells us how strongly the mass relation is broken. Clearly, \autoref{eq:cal-exp} is invariant under rescaling the initial mass relation: If we were to multiply the entire mass relation with a constant, $k_x$ and $k_y$ would be multiplied with that constant as well. As only their ratio enters \autoref{eq:cal-exp}, $m^\text{cal}-m^\text{exp}$ remains unchanged. Furthermore, the value obtained from \autoref{eq:cal-exp} has mass dimension 1 for both linear and quadratic mass relations allowing us to compare both values. It becomes clear that the comparison of the linear ($n=1$) values with their quadratic ($n=2$) counterpart is meaningful, if we try to find a physical interpretation for the comparison. The physical relevance of \autoref{eq:cal-exp} is quite obvious: $m^\text{cal}$ is just the mass of the hadron $y$ predicted by the mass relation and the other hadron masses, thus, $m^\text{cal} - m^\text{exp}$ simply states how accurate the prediction for the mass of the hadron $y$ is. Comparing the linear and quadratic value for $m^\text{cal} - m^\text{exp}$ then tells us which mass relation is more accurate making $m^\text{cal} - m^\text{exp}$ a useful measure for comparing linear with quadratic mass relations.\par
To employ this method, we need to say which hadron $y$ we choose for computing $m^\text{cal}-m^\text{exp}$ for an arbitrary mass relation. In principle, there is no reason to favor one hadron over another (except for mesonic octets because of octet-singlet-mixing). We only have to be cautious not to employ systematics such that the selection of the hadron $y$ itself is biased and favors linear or quadratic mass relations. This would be the case, for instance, if we chose the hadron $y$ for every mass relation such that $m^\text{cal}-m^\text{exp}$ attains the smallest-possible value for the linear mass relation. For the following calculations of $m^\text{cal}-m^\text{exp}$ (except for the mesonic octets), we choose to solve the mass relations for the highest mass appearing in the relation, i.e, choose the hadron $y$ to be the hadron with the highest mass (and $k_y\neq 0$) in the multiplet or pair of multiplets.\par
Before we dive into the comparison of linear and quadratic mass relations, we should first tend to the problem the chosen comparison method faces: Even though $m^\text{cal}-m^\text{exp}$ allows us to compare a linear mass relation with its quadratic counterpart, one might struggle to use $m^\text{cal} - m^\text{exp}$ to compare the linear (or quadratic) versions of different mass relations with each other. The reason for this is illustrated by the following example: Consider the lowest-energy baryon octet and charm sextet with $J^P = 1/2^+$. Let us compute the values of $m^\text{cal}-m^\text{exp}$ corresponding to the linear versions of \autoref{eq:Gell-Mann--Okubo} and \autoref{eq:sextet_c_GMO_equal_spacing} for these multiplets, where we solve for the highest mass in each multiplet, i.e., for $y = \Xi^-$ and $y = \Omega^0_\text{c}$ in the case of the octet and sextet, respectively. We find for the values of $m^\text{cal}-m^\text{exp}$ (cf. \autoref{eq:cal-exp}):
\begin{gather*}
3m_{\Lambda^0} + m_{\Sigma^+} + m_{\Sigma^-} - m_{\Sigma^0} - m_{p} - m_{n} - m_{\Xi^0} - m_{\Xi^-} = (26.8\pm 0.2)~\si{MeV},\\
2m_{\Xi^{\prime\, 0}_\text{c}} - m_{\Sigma^0_\text{c}} - m_{\Omega^0_\text{c}} = (9.4\pm 2.0)~\si{MeV}.
\end{gather*}
The given uncertainties follow from the uncertainties of the hadron masses via Gaussian error propagation. The values for the hadron masses and their uncertainties are taken from \cite{PDG}. We can see that in this case the value $m^\text{cal}-m^\text{exp}$ of the octet is three times larger than the corresponding value of the charm sextet. However, the observation we made depends heavily on the hadrons $y$ we solve for, since $m^\text{cal}-m^\text{exp}$ depends heavily on the hadron $y$. If we use $y = \Lambda^0$ and $y = \Omega^0_\text{c}$ for the determination of $m^\text{cal} - m^\text{exp}$ of the octet and sextet, respectively, we obtain:
\begin{gather*}
\frac{1}{3}\left(m_{\Sigma^0} + m_{p} + m_{n} + m_{\Xi^0} + m_{\Xi^-} - m_{\Sigma^+} - m_{\Sigma^-}\right) - m_{\Lambda^0} = (-8.9\pm 0.1)~\si{MeV},\\
2m_{\Xi^{\prime\, 0}_\text{c}} - m_{\Sigma^0_\text{c}} - m_{\Omega^0_\text{c}} = (9.4\pm 2.0)~\si{MeV}.
\end{gather*}
The absolute values of both $m^\text{cal} - m^\text{exp}$ now agree within their range of uncertainty and no significant difference can be observed.\par
As we have just seen, the dependence of $m^\text{cal} - m^\text{exp}$ on the hadron $y$ we solve for makes it difficult to compare $m^\text{cal} - m^\text{exp}$ for different mass relations. Indeed, other quantities were proposed to circumvent this problem. In \cite{Jenkins1995}, for instance, the \textit{experimental accuracy}\footnote{This expression is used in the paper \cite{Jenkins1995} itself.} $q$ is used as a quality measure for mass relations. The experimental accuracy $q$ of a mass relation
\begin{gather*}
\sum_x k_x m_x = 0
\end{gather*}
is in this paper defined to be:
\begin{gather*}
q\coloneqq \frac{\left|\sum_x k_x m_x\right|}{\frac{1}{2}\sum_x |k_x|m_x}.
\end{gather*}
Similar to $m^\text{cal} - m^\text{exp}$, $q$ is invariant under rescaling the mass relation, but in contrast to $m^\text{cal} - m^\text{exp}$ we do not have to single out a special hadron like $y$. One might also find appealing that $q$ is a dimensionless quantity, while $m^\text{cal} - m^\text{exp}$ has mass dimension 1. Nevertheless, we are able to relate $q$ and $m^\text{cal}-m^\text{exp}$. For $n=1$, we have:
\begin{gather*}
m^\text{cal} - m^\text{exp} = \sum_{x\neq y}\frac{-k_x}{k_y}m_x - m_y.
\end{gather*}
If the mass relation at hand only contains hadron masses from one $\text{SU}(3)$-multiplet which is not a mesonic octet, the hadron masses $m_x$ deviate very little from the average $M$ of the masses in the multiplet. Using this, we find:
\begin{gather*}
q = \frac{\left|\sum_x k_x m_x\right|}{\frac{1}{2}\sum_x |k_x|m_x} = \frac{\left|\sum_{x\neq y} \frac{-k_x}{k_y} m_x - m_y\right|}{\frac{1}{2}\left(\sum_{x\neq y} |\frac{k_x}{k_y}|m_x + m_y\right)}\approx \frac{|m^\text{cal}-m^\text{exp}|}{N\frac{M}{2}},
\end{gather*}
where $N$ is a normalization factor:
\begin{gather}
N\coloneqq \sum_{x\neq y}\left|\frac{k_x}{k_y}\right| + 1 = \sum_{x}\left|\frac{k_x}{k_y}\right|.\label{eq:normalization}
\end{gather}
The reason why we favor $m^\text{cal}-m^\text{exp}$ over $q$ in this work becomes clear when we try to extend $q$ to quadratic mass relations. The natural generalization of $q$ to both linear and quadratic mass relations is to replace the hadron masses in the original definition by the $n$-th power of the hadron masses:
\begin{gather*}
q\coloneqq \frac{\left|\sum_x k_x m^n_x\right|}{\frac{1}{2}\sum_x |k_x|m^n_x}.
\end{gather*}
This quantity $q$ is likewise invariant under rescaling of the mass relation and dimensionless. Let us try to relate $q$ to $m^\text{cal}-m^\text{exp}$ again. $m^\text{cal}-m^\text{exp}$ is for $n=2$ given by:
\begin{gather*}
m^\text{cal}-m^\text{exp} = \sqrt{\sum_{x\neq y}\frac{-k_x}{k_y}m^2_x} - m_y
\end{gather*}
Under the same assumptions as before, we find:
\begin{align*}
q &= \frac{\left|\sum_{x\neq y} \frac{-k_x}{k_y} m^2_x - m^2_y\right|}{\frac{1}{2}\left(\sum_{x\neq y} |\frac{k_x}{k_y}|m^2_x + m^2_y\right)} \approx \frac{\left|\sqrt{\sum_{x\neq y} \frac{-k_x}{k_y} m^2_x} - m_y\right|\cdot\left|\sqrt{\sum_{x\neq y} \frac{-k_x}{k_y} m^2_x} + m_y\right|}{N\frac{M^2}{2}}\\
&\approx \frac{|m^\text{cal} - m^\text{exp}|\cdot 2m_y}{N\frac{M^2}{2}}\approx 2\frac{|m^\text{cal} - m^\text{exp}|}{N\frac{M}{2}}.
\end{align*}
This allows us to sum up the relation between $q$ and $m^\text{cal}-m^\text{exp}$ for $n\in\{1,2\}$ in the following way:
\begin{gather*}
q\approx n\frac{|m^\text{cal} - m^\text{exp}|}{N\frac{M}{2}}.
\end{gather*}
Now one can see why we favor $m^\text{cal}-m^\text{exp}$ over $q$ for our purposes. If we chose $q$ as a quality measure for comparing linear with quadratic mass relations, we would find undesirable results: Since the relation between $q$ and $m^\text{cal}-m^\text{exp}$ is linear in $n$, it may occur that $q$ favors a linear mass relation over a quadratic one, even though the mass prediction of the quadratic relation is more accurate than the one of the linear relation. To prevent such unwanted results, we use $m^\text{cal}-m^\text{exp}$ instead of $q$ as a quality measure for the comparison of linear and quadratic mass relations.\par
To be able to compare the linear or quadratic versions of different mass relations with each other, we also provide the quantity $(m^\text{cal}-m^\text{exp})/N$ in the following sections. This quantity is also (mostly) independent of the choice of the hadron $y$, as $q$ is independent of that choice and $|m^\text{cal}-m^\text{exp}|/N$ and $q$ only differ by a factor of roughly $\frac{M}{2n}$. The normalization factors $N$ used for our analysis are given in \autoref{tab:normalization_factors}. If one wishes to retrieve $q$ from $(m^\text{cal}-m^\text{exp})/N$, one simply has to take the absolute value of $(m^\text{cal}-m^\text{exp})/N$ and divide by $\frac{M}{2n}$ (cf. \autoref{tab:mass_scales}).\par
\begin{table}[!htb]
\small
\centering
\caption{Normalization factors $N$ for hadronic mass relations. All mass relations aside from the ones of the mesonic octets are solved for the hadron with the highest mass, i.e., hadron $y$ is chosen to be the hadron with the highest mass. The value in brackets is only used for the mesonic octets.}
\begin{tabular}{|c|c|}
\hline
Mass relation & Normalization factor $N$\\
\hline\hline
\autoref{eq:Coleman-Glashow} & 6\\
\hline
\autoref{eq:Gell-Mann--Okubo} & 10 (10/3)\\
\hline
\autoref{eq:Coleman-Glashow_decuplet} & 6\\
\hline
\autoref{eq:Delta-} & 8\\
\hline
\autoref{eq:iso1_decuplet} & 8\\
\hline
\autoref{eq:iso2_decuplet} & 10\\
\hline
\autoref{eq:equal_spacing1_decuplet} & 4\\
\hline
\autoref{eq:equal_spacing2_decuplet} & 4\\
\hline
\autoref{eq:better_GMO_decuplet} & 11\\
\hline
\autoref{eq:bary_c_b_tri} & 4\\
\hline
\autoref{eq:meso_c_b_tri} & 4\\
\hline
\autoref{eq:sextet_c_iso_bre2} & 4\\
\hline
\autoref{eq:sextet_b_iso_bre} & 4\\
\hline
\autoref{eq:sextet_c_GMO_equal_spacing2} & 4\\
\hline
\autoref{eq:sextet_b_GMO_equal_spacing} & 4\\
\hline
\autoref{eq:sextet_c_b_sigma} & 8\\
\hline
\autoref{eq:sextet_c_b_spacing} & 4\\
\hline
\autoref{eq:sextet_c_b_very_precise} & 8\\
\hline
\autoref{eq:sextet_c_b_precise} & 8\\
\hline
\autoref{eq:sextet_c_b_sigma_b} & 5\\
\hline
\autoref{eq:sextet_c_b_xi} & 6\\
\hline
\end{tabular}
\label{tab:normalization_factors}
\end{table}
Let us now turn to the actual analysis of the mass relations. We begin with an overview over the hadronic multiplets we evaluate in this section. For our analysis, we choose to only use hadronic multiplets which satisfy three criteria:
\begin{itemize}
\item There has to be evidence for every multiplet that the assignment of hadrons to the multiplet is correct.
\item Enough hadron masses in each multiplet have to be measured to evaluate at least one mass relation.
\item The uncertainties of the hadron masses in each multiplet have to be small enough to make meaningful statements.
\end{itemize}
\newpage
\noindent These criteria leave us with 18 hadronic multiplets we use for our analysis:\par
The lowest-energy pseudoscalar and vector meson octet ($J^P = 0^-$ and $J^P = 1^-$), the lowest-energy baron octet ($J^P = 1/2^+$), the lowest-energy baryon decuplet ($J^P = 3/2^+$), the lowest-energy pair of charm and bottom baryon antitriplets ($J^P = 1/2^+$), four pairs of charm and bottom meson (anti)triplets ($J^P = 0^-$, $J^P = 1^-$, $J^P = 1^+$, and $J^P = 2^+$), and two pairs of charm and bottom baryon sextets ($J^P = 1/2^+$ and $J^P = 3/2^+$).\par
A tabular summary of this list together with the mass scales of the multiplets is given by \autoref{tab:mass_scales}.
\begin{table}[t!]
\small
\centering
\caption{Number of multiplets used for the analysis of mass relations together with a crude estimation for their mass average $M$. The last three entries in the table denote mesonic multiplets, the rest is baryonic. The charm and bottom multiplets come and are analyzed in pairs.}
\begin{tabular}{|c|c|c|}
\hline
\# & Hadronic multiplet & Mass average $M$ (roughly)\\
\hline\hline
1 & Baryon octet & \SI{1}{GeV}\\
\hline
1 & Baryon decuplet & \SI{1.4}{GeV}\\
\hline
1 & Charm baryon antitriplet & \SI{2.4}{GeV}\\
1 & Bottom baryon antitriplet & \SI{5.7}{GeV}\\
\hline
2 & Charm sextet & \SI{2.5}{GeV}\\
2 & Bottom sextet & \SI{6}{GeV}\\
\hline\hline
2 & Meson octet & \SI{0.4}{GeV} and \SI{0.8}{GeV}\\
\hline
4 & Charm meson (anti)triplet & \SI{2}{GeV} to \SI{2.5}{GeV}\\
4 & Bottom meson (anti)triplet & \SI{5.3}{GeV} to \SI{5.8}{GeV}\\
\hline
\end{tabular}
\label{tab:mass_scales}
\end{table}
The assignment of hadrons to these multiplets used for our analysis is based on the assignments and quantum numbers provided and favored by the \textit{Particle Data Group} (cf. \cite{PDG}), but is also based on the results from \cite{Faustov_HM} and \cite{Faustov_HB}. The experimentally determined values for the hadron masses and their uncertainties are taken from \cite{PDG} and from other references\footnote{\cite{Aaij2012}, \cite{Aaij2013}, \cite{Aaij2014}, \cite{Aaij2015}, \cite{Aaij2017}, \cite{Aaij2018}, \cite{Aaij2019}, \cite{Abe04}, \cite{Aubert2008}, \cite{Bernicha1995}, \cite{Ganenko1979}, \cite{Gridnev2006}, \cite{Hanstein1996}, \cite{Kato2016}, \cite{Li2018}, \cite{Lichtenberg1974}, \cite{Mohr2016}}\addtocounter{footnote}{-1}\addtocounter{Hfootnote}{-1}. If the experimental value for the mass of a hadron is given in \cite{PDG} and in one of the other references\footnotemark, the value given by \cite{PDG} is used. If \cite{PDG} gives two values for the mass of a hadron, a mass fit and a mass average (cf. \cite{PDG} for the meaning of ``mass fit'' and ``mass average''), the mass fit is used in instead of the mass average.\par
We present the results of our analysis in the form of tables (cf., for instance, \autoref{table:mass_relations_octet_decuplet} and \autoref{table:mass_relations_octet_decuplet_rescaled_2}). Each table has four columns: The first column specifies the mass relation by referencing the corresponding mass relation from \autoref{chap:mass_relations} and the hadronic multiplet or pair of charm and bottom multiplets the mass relation is applied to. The second column presents the value for $m^\text{cal}-m^\text{exp}$ or $(m^\text{cal}-m^\text{exp})/N$ together with its experimental uncertainty. This uncertainty is obtained from the experimental uncertainties\footnote{If the experimental uncertainty of the hadron mass is asymmetric, the larger uncertainty is used for the Gaussian error propagation. If the uncertainty is split up into a systematic and statistical uncertainty, the square root of the sum of the squared uncertainties is used.} of the hadron masses via Gaussian error propagation. The third column indicates whether the linear or quadratic version of the mass relation was used for the calculation of the second column by presenting the exponent $n$. In the last column, the order of the dominant correction(s) to the mass relation at hand is listed, where ``$\mathcal{O}$'' is dropped and $\varepsilon_\text{cb}\coloneqq \Lambda_\text{QCD}(1/m_\text{c} - 1/m_\text{b})$ is used for the sake of clarity.\par
First, consider the results for the baryon octet and decuplet ($J^P = 1/2^+$ and $J^P = 3/2^+$), shown in \autoref{table:mass_relations_octet_decuplet} and \autoref{table:mass_relations_octet_decuplet_rescaled_2}.
\begin{table}[t!]
\small
\centering
\caption{Linear vs. quadratic mass relations in the lowest-energy baryon octet ($J^P = 1/2^+$) and decuplet ($J^P = 3/2^+$).}
\begin{tabular}{|c||c|c|c|}
\hline
Mass relation & $m^{\text{cal}}-m^{\text{exp}}$ in \si{MeV} & Exponent & Correction(s)\\
\hline\hline
\autoref{eq:Coleman-Glashow} & $-0.06 \pm 0.23$ & 1 & $\alpha\varepsilon_8;\ \varepsilon_3\varepsilon_8$\\
baryon occtet ($J^P = 1/2^+$) & $-0.46 \pm 0.22$ & 2 & $\alpha\varepsilon_8;\ \varepsilon_3\varepsilon_8$\\
\hline
\autoref{eq:Gell-Mann--Okubo} & $26.8 \pm 0.2$ & 1 & $\varepsilon^2_8$\\
baryon occtet ($J^P = 1/2^+$) & $-30.1 \pm 0.2$ & 2 & $\varepsilon^2_8$\\
\hline
\autoref{eq:Coleman-Glashow_decuplet} & $2.7 \pm 1.7$ & 1 & $\alpha\varepsilon_8;\ \varepsilon_3\varepsilon_8$\\
decuplet ($J^P = 3/2^+$) & $2.0 \pm 1.4$ & 2 & $\alpha\varepsilon_8;\ \varepsilon_3\varepsilon_8$\\
\hline
\autoref{eq:Coleman-Glashow_decuplet} & $-1.0 \pm 2.2$ & 1 & $\alpha\varepsilon_8;\ \varepsilon_3\varepsilon_8$\\
decuplet ($J^P = 3/2^+$; pole) & $-0.9 \pm 2.0$ & 2 & $\alpha\varepsilon_8;\ \varepsilon_3\varepsilon_8$\\
\hline
\autoref{eq:iso1_decuplet} & $-8.5 \pm 3.5$ & 1 & $\alpha\varepsilon_8;\ \varepsilon_3\varepsilon_8$\\
decuplet ($J^P = 3/2^+$) & $-7.8 \pm 3.3$ & 2 & $\alpha\varepsilon_8;\ \varepsilon_3\varepsilon_8$\\
\hline
\autoref{eq:iso2_decuplet} & $-13.2 \pm 5.7$ & 1 & $\alpha\varepsilon_8;\ \varepsilon_3\varepsilon_8$\\
decuplet ($J^P = 3/2^+$) & $-12.3 \pm 5.1$ & 2 & $\alpha\varepsilon_8;\ \varepsilon_3\varepsilon_8$\\
\hline
\autoref{eq:iso2_decuplet} & $4.1 \pm 4.7$ & 1 & $\alpha\varepsilon_8;\ \varepsilon_3\varepsilon_8$\\
decuplet ($J^P = 3/2^+$; pole) & $3.1 \pm 4.2$ & 2 & $\alpha\varepsilon_8;\ \varepsilon_3\varepsilon_8$\\
\hline
\autoref{eq:equal_spacing1_decuplet} & $0.1 \pm 1.6$ & 1 & $\varepsilon^2_8$\\
decuplet ($J^P = 3/2^+$) & $-14.6 \pm 1.4$ & 2 & $\varepsilon^2_8$\\
\hline
\autoref{eq:equal_spacing1_decuplet} & $16.6 \pm 2.1$ & 1 & $\varepsilon^2_8$\\
decuplet ($J^P = 3/2^+$; pole) & $-2.1 \pm 1.9$ & 2 & $\varepsilon^2_8$\\
\hline
\autoref{eq:equal_spacing2_decuplet} & $10.3 \pm 1.3$ & 1 & $\varepsilon^2_8$\\
decuplet ($J^P = 3/2^+$) & $-2.7 \pm 1.2$ & 2 & $\varepsilon^2_8$\\
\hline
\autoref{eq:equal_spacing2_decuplet} & $13.4 \pm 2.4$ & 1 & $\varepsilon^2_8$\\
decuplet ($J^P = 3/2^+$; pole) & $-0.3 \pm 2.2$ & 2 & $\varepsilon^2_8$\\
\hline
\autoref{eq:better_GMO_decuplet} & $1.0 \pm 2.6$ & 1 & $\varepsilon^3_8;\ \alpha\varepsilon_8;\ \varepsilon_3\varepsilon_8$\\
decuplet ($J^P = 3/2^+$) & $3.2 \pm 2.0$ & 2 & $\varepsilon^3_8;\ \alpha\varepsilon_8;\ \varepsilon_3\varepsilon_8$\\
\hline
\autoref{eq:better_GMO_decuplet} & $-0.6 \pm 3.3$ & 1 & $\varepsilon^3_8;\ \alpha\varepsilon_8;\ \varepsilon_3\varepsilon_8$\\
decuplet ($J^P = 3/2^+$; pole) & $2.9 \pm 2.7$ & 2 & $\varepsilon^3_8;\ \alpha\varepsilon_8;\ \varepsilon_3\varepsilon_8$\\
\hline
\end{tabular}
\label{table:mass_relations_octet_decuplet}
\end{table}
We can directly see that both the linear and the quadratic version of the Coleman-Glashow mass relation (\autoref{eq:Coleman-Glashow}) in the baryon octet are outstandingly well satisfied. Both versions are violated by a value of $m^\text{cal}-m^\text{exp}$ below \SI{1}{MeV}. With a mass scale of roughly \SI{1}{GeV} for the baryon octet, the precision of their predictions is better than 0.1\%. Even though the value of $m^\text{cal}-m^\text{exp}$ for the quadratic version is about 8 times larger than the corresponding linear value, the difference between both values is not significant in consideration of their uncertainties, as the values agree within two $2\sigma$. Judging from \autoref{tab:exp_param}, the dominant corrections to the Coleman-Glashow relation should be in the order of \SI{1}{MeV}. This seems plausible, although it is difficult to make a definite statement, since the uncertainties are also of that order.\par
The GMO mass relation in the baryon octet (\autoref{eq:Gell-Mann--Okubo}) exhibits a different behavior. As expected, both the linear as well as the quadratic GMO mass relation are much stronger violated than the Coleman-Glashow relation. Using \autoref{tab:exp_param} again, we can estimate that the dominant correction to the GMO mass relation should be in the order of \SI{10}{MeV}. With values between \SI{25}{MeV} and \SI{30}{MeV} for $m^\text{cal}-m^\text{exp}$, this seems to be reasonable. This time, however, the difference of \SI{3}{MeV} to \SI{4}{MeV} between the linear and quadratic version of the mass relation is significant. One might be tempted to deduce from this that only the linear GMO mass relation applies to the baryon octet. Nonetheless, we only expect the GMO mass relation to be satisfied up to a correction that is in the order of \SI{10}{MeV}. In consideration of such a correction, a difference of \SI{3}{MeV} to \SI{4}{MeV} does not make the quadratic GMO relation inapplicable, even though the prediction of the linear mass relation is more precise.\par
To understand the results of the baryon decuplet, we have to note some important remarks. First off, the uncertainties of the baryon masses in the decuplet are much larger than the ones of the octet. This makes it more difficult to obtain meaningful results from the available data. Secondly, the mass of the $\Delta^-$-baryon is not measured yet which means that we cannot check \autoref{eq:Delta-}. Nevertheless, this mass relation can be used for the determination of the $\Delta^-$-mass (cf. \autoref{sec:mass_predictions}). Lastly, we have to note that the $\Delta$- and $\Sigma^\ast$-resonances are not only rather broad compared to the other baryons in the decuplet and to the octet baryons, but also in the sense of \autoref{sec:polemass} (cf. ``small decay width''). This is problematic for two reasons: On the one hand, there are two conventions $M_s$ and $M_w$ for defining the pole mass (cf. \autoref{sec:polemass}). We have seen that the mass conventions differ by terms in the order of $(\Gamma_{w}/M_{w})^2$. As we are not able to say to which definition of the pole mass the mass relations from \autoref{chap:mass_relations} apply, we have to expect to pick up corrections of the order of $(\Gamma_{w}/M_{w})^2$. This means that the relations involving $\Delta$-masses face additional corrections in the order of 1\% of the $\Delta$-masses or roughly \SI{10}{MeV}, while the additional corrections to $\Sigma^\ast$-mass relations are in the order of \SI{1}{MeV}. On the other hand, the accurate determination of the pole positions is more challenging for broader resonances (cf. \autoref{sec:polemass}). In particular, the parameters of a Breit-Wigner fit to a resonance do not represent the pole position accurately. However, there are still ways to retrieve the pole mass from the resonance (cf. \cite{Bernicha1995}, \cite{Hanstein1996}, and \cite{Lichtenberg1974}; the pole masses for the decuplet are taken from these papers as well). Nonetheless, the pole mass is not known for every baryon in the decuplet. The pole masses for $\Delta^-$, $\Sigma^{\ast\, 0}$, and $\Omega^-$ are missing, even though we can take the Breit-Wigner mass of the $\Omega^-$-baryon to be its pole mass, as the $\Omega^-$-resonance is rather narrow. In order to use the available data to full extent, we present $m^\text{cal}-m^\text{exp}$ and $(m^\text{cal}-m^\text{exp})/N$ for both Breit-Wigner and pole masses in the decuplet. To distinguish between relations using Breit-Wigner masses and relations using pole masses in \autoref{table:mass_relations_octet_decuplet} and \autoref{table:mass_relations_octet_decuplet_rescaled_2}, we indicate the pole mass relations with ``pole''.\par
Considering the values of $m^\text{cal}-m^\text{exp}$ for the baryon decuplet, we observe that the pole mass relations are -- aside from two exceptions, namely the linear versions of \autoref{eq:equal_spacing1_decuplet} and \autoref{eq:equal_spacing2_decuplet} -- better satisfied than their Breit-Wigner counterpart. Since we operate on the assumption that the mass relations apply to the pole masses (cf. \autoref{sec:polemass}), this matches our expectation. Moreover, we can see that every quadratic mass prediction -- aside from \autoref{eq:equal_spacing1_decuplet} involving Breit-Wigner masses -- is at least as precise as its linear counterpart within the range of uncertainty. We even find that the equal spacing rules (\autoref{eq:equal_spacing1_decuplet} and \autoref{eq:equal_spacing2_decuplet}) are -- aside from one exception\footnote{\autoref{eq:equal_spacing1_decuplet} involving Breit-Wigner masses. The observation that \autoref{eq:equal_spacing1_decuplet} seems to be an exception may be contributed to the fact that it is the only equal spacing rule involving the problematic $\Delta$-resonance.} -- significantly more precise than their linear counterparts. Furthermore, we observe that the linear and quadratic values of $m^\text{cal}-m^\text{exp}$ for the decuplet mass relations whose dominant correction is in the order of $\varepsilon^2_8$ fluctuate between roughly \SI{1}{MeV} and \SI{15}{MeV}. As corrections associated with $\varepsilon^2_8$ should be in the order of \SI{10}{MeV}, it seems plausible that the dominant correction to these mass relations is indeed\footnote{One might argue that the dominant correction to these mass relations appears to be smaller than $\varepsilon^2_8$, judging from the experimental data, but the experimental data lacks the precision to make a definite statement.} $\varepsilon^2_8$. However, we have to be cautious with our observations for the decuplet because of the problems explained above.\par
\begin{table}[!htb]
\small
\centering
\caption{Linear vs. quadratic mass relations (normalized) in the lowest-energy baryon octet ($J^P = 1/2^+$) and decuplet ($J^P = 3/2^+$).}
\begin{tabular}{|c||c|c|c|}
\hline
Mass relation & $(m^{\text{cal}}-m^{\text{exp}})/N$ in \si{MeV} & Exponent & Correction(s)\\
\hline\hline
\autoref{eq:Coleman-Glashow} & $-0.01 \pm 0.04$ & 1 & $\alpha\varepsilon_8;\ \varepsilon_3\varepsilon_8$\\
baryon occtet ($J^P = 1/2^+$) & $-0.08 \pm 0.04$ & 2 & $\alpha\varepsilon_8;\ \varepsilon_3\varepsilon_8$\\
\hline
\autoref{eq:Gell-Mann--Okubo} & $2.68 \pm 0.02$ & 1 & $\varepsilon^2_8$\\
baryon occtet ($J^P = 1/2^+$) & $-3.01 \pm 0.02$ & 2 & $\varepsilon^2_8$\\
\hline
\autoref{eq:Coleman-Glashow_decuplet} & $0.5 \pm 0.3$ & 1 & $\alpha\varepsilon_8;\ \varepsilon_3\varepsilon_8$\\
decuplet ($J^P = 3/2^+$) & $0.3 \pm 0.2$ & 2 & $\alpha\varepsilon_8;\ \varepsilon_3\varepsilon_8$\\
\hline
\autoref{eq:Coleman-Glashow_decuplet} & $-0.2 \pm 0.4$ & 1 & $\alpha\varepsilon_8;\ \varepsilon_3\varepsilon_8$\\
decuplet ($J^P = 3/2^+$; pole) & $-0.2 \pm 0.3$ & 2 & $\alpha\varepsilon_8;\ \varepsilon_3\varepsilon_8$\\
\hline
\autoref{eq:iso1_decuplet} & $-1.1 \pm 0.4$ & 1 & $\alpha\varepsilon_8;\ \varepsilon_3\varepsilon_8$\\
decuplet ($J^P = 3/2^+$) & $-1.0 \pm 0.4$ & 2 & $\alpha\varepsilon_8;\ \varepsilon_3\varepsilon_8$\\
\hline
\autoref{eq:iso2_decuplet} & $-1.3 \pm 0.6$ & 1 & $\alpha\varepsilon_8;\ \varepsilon_3\varepsilon_8$\\
decuplet ($J^P = 3/2^+$) & $-1.2 \pm 0.5$ & 2 & $\alpha\varepsilon_8;\ \varepsilon_3\varepsilon_8$\\
\hline
\autoref{eq:iso2_decuplet} & $0.4 \pm 0.5$ & 1 & $\alpha\varepsilon_8;\ \varepsilon_3\varepsilon_8$\\
decuplet ($J^P = 3/2^+$; pole) & $0.3 \pm 0.4$ & 2 & $\alpha\varepsilon_8;\ \varepsilon_3\varepsilon_8$\\
\hline
\autoref{eq:equal_spacing1_decuplet} & $0.02 \pm 0.41$ & 1 & $\varepsilon^2_8$\\
decuplet ($J^P = 3/2^+$) & $-3.66 \pm 0.35$ & 2 & $\varepsilon^2_8$\\
\hline
\autoref{eq:equal_spacing1_decuplet} & $4.15 \pm 0.51$ & 1 & $\varepsilon^2_8$\\
decuplet ($J^P = 3/2^+$; pole) & $-0.54 \pm 0.46$ & 2 & $\varepsilon^2_8$\\
\hline
\autoref{eq:equal_spacing2_decuplet} & $2.59 \pm 0.33$ & 1 & $\varepsilon^2_8$\\
decuplet ($J^P = 3/2^+$) & $-0.67 \pm 0.3$ & 2 & $\varepsilon^2_8$\\
\hline
\autoref{eq:equal_spacing2_decuplet} & $3.34 \pm 0.61$ & 1 & $\varepsilon^2_8$\\
decuplet ($J^P = 3/2^+$; pole) & $-0.08 \pm 0.55$ & 2 & $\varepsilon^2_8$\\
\hline
\autoref{eq:better_GMO_decuplet} & $0.1 \pm 0.23$ & 1 & $\varepsilon^3_8;\ \alpha\varepsilon_8;\ \varepsilon_3\varepsilon_8$\\
decuplet ($J^P = 3/2^+$) & $0.3 \pm 0.18$ & 2 & $\varepsilon^3_8;\ \alpha\varepsilon_8;\ \varepsilon_3\varepsilon_8$\\
\hline
\autoref{eq:better_GMO_decuplet} & $-0.06 \pm 0.3$ & 1 & $\varepsilon^3_8;\ \alpha\varepsilon_8;\ \varepsilon_3\varepsilon_8$\\
decuplet ($J^P = 3/2^+$; pole) & $0.27 \pm 0.25$ & 2 & $\varepsilon^3_8;\ \alpha\varepsilon_8;\ \varepsilon_3\varepsilon_8$\\
\hline
\end{tabular}
\label{table:mass_relations_octet_decuplet_rescaled_2}
\end{table}
In principle, \autoref{table:mass_relations_octet_decuplet_rescaled_2} shows the same features as \autoref{table:mass_relations_octet_decuplet}, just scaled down. The only additional noteworthy aspect is that the values of $(m^\text{cal}-m^\text{exp})/N$ for mass relations whose dominant corrections are in the same order of magnitude seem to be more compatible with each other than the corresponding values of $m^\text{cal}-m^\text{exp}$, but, again, we have to be cautious with our observations for the decuplet.\par
Next, consider the charm and bottom sextets ($J^P = 1/2^+$ and $J^P = 3/2^+$). Let us start by evaluating the results for the mass relations that only involve baryons from one sextet (cf. \autoref{table:mass_relations_within_heavy_quarks} and \autoref{table:mass_relations_within_heavy_quarks_rescaled_2}). As we can see, every quadratic value of $m^\text{cal}-m^\text{exp}$ and $(m^\text{cal}-m^\text{exp})/N$ is smaller than its linear counterpart. In the case of the GMO equal spacing rules (\autoref{eq:sextet_c_GMO_equal_spacing2} and \autoref{eq:sextet_b_GMO_equal_spacing}), the difference is actually significant. This observation is particularly interesting in consideration of Feynman's distinction\footnote{With this expression, we denote the statement that baryons should be subject to linear mass relations, while mesons should satisfy quadratic mass relations.} of mass relations: Feynman's distinction predicts that sextet mass relations as baryonic mass relations have to be linear. As the experimental data clearly favor quadratic over linear mass relations, Feynman's distinction is in disagreement with the experimental data for sextets. Even though there is strong evidence that quadratic relations are more precise than linear ones for sextets, this does not necessarily mean that linear mass relations do not apply within their range of validity, i.e., do not apply up to their dominant correction. Indeed, the given results seem to indicate that both linear and quadratic mass relations apply within their range of validity, although we have to understand this statement with a grain of salt, since we can only estimate the size of the dominant corrections.\par
\begin{table}[!htb]
\small
\centering
\caption{Linear vs. quadratic mass relations within baryonic charm and bottom sextets.}
\begin{tabular}{|c||c|c|c|}
\hline
Mass relation & $m^{\text{cal}}-m^{\text{exp}}$ in \si{MeV} & Exponent & Correction(s)\\
\hline\hline
\autoref{eq:sextet_c_iso_bre2} & $0.05 \pm 0.82$ & 1 & $\alpha\varepsilon_8;\ \varepsilon_3\varepsilon_8$\\
charm sextet ($J^P = 1/2^+$) & $0.009 \pm 0.814$ & 2 & $\alpha\varepsilon_8;\ \varepsilon_3\varepsilon_8$\\
\hline
\autoref{eq:sextet_c_iso_bre2} & $0.17 \pm 2.33$ & 1 & $\alpha\varepsilon_8;\ \varepsilon_3\varepsilon_8$\\
charm sextet ($J^P = 3/2^+$) & $0.12 \pm 2.22$ & 2 & $\alpha\varepsilon_8;\ \varepsilon_3\varepsilon_8$\\
\hline
\autoref{eq:sextet_c_GMO_equal_spacing2} & $9.4 \pm 2.0$ & 1 & $\varepsilon^2_8$\\
charm sextet ($J^P = 1/2^+$) & $3.6 \pm 2.0$ & 2 & $\varepsilon^2_8$\\
\hline
\autoref{eq:sextet_c_GMO_equal_spacing2} & $8.4 \pm 2.1$ & 1 & $\varepsilon^2_8$\\
charm sextet ($J^P = 3/2^+$) & $2.5 \pm 2.0$ & 2 & $\varepsilon^2_8$\\
\hline
\autoref{eq:sextet_b_GMO_equal_spacing} & $8.3 \pm 1.7$ & 1 & $\varepsilon^2_8$\\
bottom sextet ($J^P = 1/2^+$) & $5.9 \pm 1.7$ & 2 & $\varepsilon^2_8$\\
\hline
\end{tabular}
\label{table:mass_relations_within_heavy_quarks}
\end{table}
\begin{table}[!htb]
\small
\centering
\caption{Linear vs. quadratic mass relations (normalized) within baryonic charm and bottom sextets.}
\begin{tabular}{|c||c|c|c|}
\hline
Mass relation & $(m^{\text{cal}}-m^{\text{exp}})/N$ in \si{MeV} & Exponent & Correction(s)\\
\hline\hline
\autoref{eq:sextet_c_iso_bre2} & $0.01 \pm 0.21$ & 1 & $\alpha\varepsilon_8;\ \varepsilon_3\varepsilon_8$\\
charm sextet ($J^P = 1/2^+$) & $0.002 \pm 0.203$ & 2 & $\alpha\varepsilon_8;\ \varepsilon_3\varepsilon_8$\\
\hline
\autoref{eq:sextet_c_iso_bre2} & $0.04 \pm 0.58$ & 1 & $\alpha\varepsilon_8;\ \varepsilon_3\varepsilon_8$\\
charm sextet ($J^P = 3/2^+$) & $0.03 \pm 0.56$ & 2 & $\alpha\varepsilon_8;\ \varepsilon_3\varepsilon_8$\\
\hline
\autoref{eq:sextet_c_GMO_equal_spacing2} & $2.4 \pm 0.5$ & 1 & $\varepsilon^2_8$\\
charm sextet ($J^P = 1/2^+$) & $0.9 \pm 0.5$ & 2 & $\varepsilon^2_8$\\
\hline
\autoref{eq:sextet_c_GMO_equal_spacing2} & $2.1 \pm 0.5$ & 1 & $\varepsilon^2_8$\\
charm sextet ($J^P = 3/2^+$) & $0.6 \pm 0.5$ & 2 & $\varepsilon^2_8$\\
\hline
\autoref{eq:sextet_b_GMO_equal_spacing} & $2.1 \pm 0.4$ & 1 & $\varepsilon^2_8$\\
bottom sextet ($J^P = 1/2^+$) & $1.5 \pm 0.4$ & 2 & $\varepsilon^2_8$\\
\hline
\end{tabular}
\label{table:mass_relations_within_heavy_quarks_rescaled_2}
\end{table}
It is also noteworthy that the mass relations whose dominant corrections are in the order of $\alpha\varepsilon_8$ and $\varepsilon_3\varepsilon_8$ are significantly more precise than the mass relations whose dominant correction is of the order of $\varepsilon^2_8$, matching our expectation.\par
As for the baryon octet and decuplet, \autoref{table:mass_relations_within_heavy_quarks_rescaled_2} shows very similar features to \autoref{table:mass_relations_within_heavy_quarks}. The only noteworthy difference is that the values of $(m^\text{cal}-m^\text{exp})/N$ for mass relations with the same dominant correction(s) (cf. \autoref{table:mass_relations_octet_decuplet_rescaled_2} and \autoref{table:mass_relations_within_heavy_quarks_rescaled_2}) seem to be more compatible with each other than the corresponding values of $m^\text{cal}-m^\text{exp}$ (cf. \autoref{table:mass_relations_octet_decuplet} and \autoref{table:mass_relations_within_heavy_quarks}). Before we move on, we should note that we can only check one mass relation for the bottom sextets ($J^P = 1/2^+$ and $J^P = 3/2^+$). The reason for this is that the bottom sextets are incomplete: In the bottom sextet with $J^P = 1/2^+$, the masses of $\Sigma^0_\text{b}$ and $\Xi^0_\text{b}$ are not measured yet, while the masses of $\Sigma^{\ast\, 0}_\text{b}$ and the counterpart to $\Omega^-_\text{b}$ (we will call this particle $\Omega^-_\text{b}(6070)$ in \autoref{sec:mass_predictions}) are missing for the bottom sextet with $J^P = 3/2^+$. We will use the mass relations from \autoref{chap:mass_relations} to predict the masses of these baryons in \autoref{sec:mass_predictions}.\par
So far, we have only considered mass relations that involve hadrons from only one multiplet. Thus, we now want to check mass relations that involve hadrons from a pair of charm and bottom multiplets. The results for these mass relations are displayed in \autoref{table:mass_relations_between_heavy_quarks} and \autoref{table:mass_relations_between_heavy_quarks_rescaled_2}.\par
\begin{table}[!htb]
\small
\centering
\caption{Linear vs. quadratic mass relations between pairs of hadronic charm and bottom triplets and sextets. Assignments marked with ($\ast$) are based on \cite{Faustov_HM}.}
\hspace{-0.6cm}\begin{tabular}{|c||c|c|c|}
\hline
Mass relation & $m^{\text{cal}}-m^{\text{exp}}$ in \si{MeV} & Exponent & Correction(s)\\
\hline\hline
\autoref{eq:sextet_c_b_xi} & $-0.05 \pm 1.0$ & 1 & $\alpha\varepsilon_{\text{cb}};\ \varepsilon_3\varepsilon_{\text{cb}};\ \alpha\varepsilon_8$\\
c and b sextet ($J^P = 3/2^+$) & $-0.52 \pm 0.94$ & 2 & $\alpha;\ \varepsilon_3$\\
\hline
\autoref{eq:sextet_c_b_precise} & $-1.1 \pm 2.6$ & 1 & $\varepsilon_{\text{cb}}\varepsilon^2_8$\\
c and b sextet ($J^P = 1/2^+$) & $4.3 \pm 1.9$ & 2 & $\varepsilon^2_8$\\
\hline
\autoref{eq:sextet_c_b_spacing} & $6.1 \pm 0.6$ & 1 & $\varepsilon_{\text{cb}}\varepsilon_8$\\
c and b sextet ($J^P = 1/2^+$) & $-65.3 \pm 0.4$ & 2 & $\varepsilon_8$\\
\hline
\autoref{eq:sextet_c_b_spacing} & $7.3 \pm 0.4$ & 1 & $\varepsilon_{\text{cb}}\varepsilon_8$\\
c and b sextet ($J^P = 3/2^+$) & $-64.3 \pm 0.3$ & 2 & $\varepsilon_8$\\
\hline
\autoref{eq:bary_c_b_tri} & $9.2 \pm 0.6$ & 1 & $\varepsilon_{\text{cb}}\varepsilon_8$\\
baryonic c and b triplet ($J^P = 1/2^+$) & $-96.1 \pm 0.5$ & 2 & $\varepsilon_8$\\
\hline
\autoref{eq:meso_c_b_tri} & $11.4 \pm 0.2$ & 1 & $\varepsilon_{\text{cb}}\varepsilon_8$\\
mesonic c and b triplet ($J^P = 0^-$) & $-51.5 \pm 0.2$ & 2 & $\varepsilon_8$\\
\hline
\autoref{eq:meso_c_b_tri} & $11.2 \pm 1.9$ & 1 & $\varepsilon_{\text{cb}}\varepsilon_8$\\
mesonic c and b triplet ($J^P = 1^-$) ($\ast$) & $-51.4 \pm 1.8$ & 2 & $\varepsilon_8$\\
\hline
\autoref{eq:meso_c_b_tri} & $9.3 \pm 2.7$ & 1 & $\varepsilon_{\text{cb}}\varepsilon_8$\\
mesonic c and b triplet ($J^P = 1^+$) ($\ast$) & $-54.4 \pm 1.6$ & 2 & $\varepsilon_8$\\
\hline
\autoref{eq:meso_c_b_tri} & $3.3 \pm 1.7$ & 1 & $\varepsilon_{\text{cb}}\varepsilon_8$\\
mesonic c and b triplet ($J^P = 2^+$) & $-55.0 \pm 1.0$ & 2 & $\varepsilon_8$\\
\hline
\end{tabular}
\label{table:mass_relations_between_heavy_quarks}
\end{table}
\begin{table}[!htb]
\small
\centering
\caption{Linear vs. quadratic mass relations (normalized) between pairs of hadronic charm and bottom triplets and sextets. Assignments marked with ($\ast$) are based on \cite{Faustov_HM}.}
\hspace{-1cm}\begin{tabular}{|c||c|c|c|}
\hline
Mass relation & $(m^{\text{cal}}-m^{\text{exp}})/N$ in \si{MeV} & Exponent & Correction(s)\\
\hline\hline
\autoref{eq:sextet_c_b_xi} & $-0.01 \pm 0.17$ & 1 & $\alpha\varepsilon_{\text{cb}};\ \varepsilon_3\varepsilon_{\text{cb}};\ \alpha\varepsilon_8$\\
c and b sextet ($J^P = 3/2^+$) & $-0.09 \pm 0.16$ & 2 & $\alpha;\ \varepsilon_3$\\
\hline
\autoref{eq:sextet_c_b_precise} & $-0.1 \pm 0.3$ & 1 & $\varepsilon_{\text{cb}}\varepsilon^2_8$\\
c and b sextet ($J^P = 1/2^+$) & $0.5 \pm 0.2$ & 2 & $\varepsilon^2_8$\\
\hline
\autoref{eq:sextet_c_b_spacing} & $1.5 \pm 0.1$ & 1 & $\varepsilon_{\text{cb}}\varepsilon_8$\\
c and b sextet ($J^P = 1/2^+$) & $-16.3 \pm 0.1$ & 2 & $\varepsilon_8$\\
\hline
\autoref{eq:sextet_c_b_spacing} & $1.8 \pm 0.1$ & 1 & $\varepsilon_{\text{cb}}\varepsilon_8$\\
c and b sextet ($J^P = 3/2^+$) & $-16.1 \pm 0.1$ & 2 & $\varepsilon_8$\\
\hline
\autoref{eq:bary_c_b_tri} & $2.3 \pm 0.1$ & 1 & $\varepsilon_{\text{cb}}\varepsilon_8$\\
baryonic c and b triplet ($J^P = 1/2^+$) & $-24.0 \pm 0.1$ & 2 & $\varepsilon_8$\\
\hline
\autoref{eq:meso_c_b_tri} & $2.9 \pm 0.1$ & 1 & $\varepsilon_{\text{cb}}\varepsilon_8$\\
mesonic c and b triplet ($J^P = 0^-$) & $-12.9 \pm 0.1$ & 2 & $\varepsilon_8$\\
\hline
\autoref{eq:meso_c_b_tri} & $2.8 \pm 0.5$ & 1 & $\varepsilon_{\text{cb}}\varepsilon_8$\\
mesonic c and b triplet ($J^P = 1^-$) ($\ast$) & $-12.8 \pm 0.5$ & 2 & $\varepsilon_8$\\
\hline
\autoref{eq:meso_c_b_tri} & $2.3 \pm 0.7$ & 1 & $\varepsilon_{\text{cb}}\varepsilon_8$\\
mesonic c and b triplet ($J^P = 1^+$) ($\ast$) & $-13.6 \pm 0.4$ & 2 & $\varepsilon_8$\\
\hline
\autoref{eq:meso_c_b_tri} & $0.8 \pm 0.4$ & 1 & $\varepsilon_{\text{cb}}\varepsilon_8$\\
mesonic c and b triplet ($J^P = 2^+$) & $-13.8 \pm 0.2$ & 2 & $\varepsilon_8$\\
\hline
\end{tabular}
\label{table:mass_relations_between_heavy_quarks_rescaled_2}
\end{table}
Immediately, we observe that the values of $m^\text{cal}-m^\text{exp}$ for the quadratic mass relations are all about 5 to 10 times larger than their linear counterparts. Naively, one might think that this behavior contradicts the predictions of the state formalism, however, it actually confirms the state formalism. To see this, we have to remind ourselves how we derived the mass relations connecting charm and bottom multiplets. In \autoref{sec:heavy_quark}, we identified the hadron masses with the eigenvalues of the Hamiltonian $H^5_\text{QCD}$ and determined them in a perturbative expansion:
\begin{gather*}
H^5_\text{QCD} = H^{5;\,0}_\text{QCD} + \varepsilon_3\cdot H^{8}_{\text{QCD};\, 3} + \varepsilon_8\cdot H^{8}_{\text{QCD};\, 8}.
\end{gather*}
The operators $H^{8}_{\text{QCD};\, 3}$ and $H^{8}_{\text{QCD};\, 8}$ which we treat as a perturbation are independent of the charm and bottom quark fields and masses. Therefore, only $H^{5;\, 0}_\text{QCD}$ changes under the exchange of charm and bottom quarks. We deduced from this that the first order contributions of the flavor symmetry breaking to the hadron masses are the same for pairs of charm and bottom multiplets, while the singlet mass term $m^0$ which directly corresponds to $H^{5;\, 0}_\text{QCD}$ changes. This allowed us to formulate mass relations involving hadrons from pairs of charm and bottom multiplets. However, that kind of reasoning only applies to linear mass relations. If we want to formulate quadratic mass relations, we have to consider the square of the Hamiltonian:
\begin{gather*}
\left(H^5_\text{QCD}\right)^2 = \left(H^{5;\,0}_\text{QCD}\right)^2 + \varepsilon_3\cdot H^{5;\,0}_\text{QCD} H^{8}_{\text{QCD};\, 3} + \varepsilon_8\cdot H^{5;\,0}_\text{QCD} H^{8}_{\text{QCD};\, 8} + \mathcal{O}\left(\varepsilon_i\varepsilon_j\right).
\end{gather*}
Now, the perturbation of the squared Hamilton operator varies under the exchange of charm and bottom quark. This implies that quadratic mass relations connecting charm and bottom multiplets do not apply with the same precision as linear mass relations: If the dominant corrections to a linear mass relation are suppressed by $\varepsilon_\text{cd}$ originating from heavy quark symmetry, the quadratic version of that mass relation does not have this suppression $\varepsilon_\text{cb}$, as heavy quark symmetry breaks down in the quadratic case. The dominant corrections to the quadratic mass relation are then given by the dominant corrections to the linear mass relation where we have to drop the factor $\varepsilon_\text{cb}$. This description matches the observed behavior: The values of $m^\text{cal}-m^\text{exp}$ for the quadratic mass relations are about 5 to 10 times higher than the corresponding linear values which coincides rather well with $\varepsilon^{-1}_\text{cb}$ (cf. \autoref{tab:exp_param}).\par
To this end, we can say that mass relations involving particles from different multiplets clearly favor linear over quadratic mass relations, but still agree with the state formalism. It is noteworthy that this preference of linear mass relations is reflected by baryons as well as mesons as one would expect following the state formalism.\par
Lastly, let us turn to the mesonic octets. We have pointed out multiple times throughout this thesis that the isospin singlet in a mesonic octet mixes with a meson which forms a \text{SU}(3)-singlet. While the GMO mass relation (cf. \autoref{eq:Gell-Mann--Okubo}) for mesonic octets is affected by this mixing, the Coleman-Glashow mass relation (cf. \autoref{eq:Coleman-Glashow}) is not, since it does not involve the isospin singlet. However, the Coleman-Glashow mass relation is trivially zero for mesonic octets. The reason for this is simple: For every meson in a mesonic octet, the corresponding antiparticle is contained in the same octet as well. In the weight diagram of a mesonic octet, a particle is diametrically opposed to its antiparticle. Thus, the Coleman-Glashow mass relation for mesonic octets only involves differences of particle and antiparticle masses. But since we have not observed CPT violation\footnote{The CPT invariance of most QFTs follows from the CPT theorem.} yet, we assume that particles and antiparticles have the same mass which is in agreement with all experiments so far. Hence, we find that the Coleman-Glashow mass relation is exactly zero for mesonic octets.\par
This leaves us with only the GMO mass relation for mesonic octets. As stated, this mass relation is affected by octet-singlet-mixing. Commonly, one takes the quadratic GMO mass relation for the mesonic octets to be exact in order to determine the mixing angle between the isospin singlet of the mesonic octet and the \text{SU}(3)-singlet meson. As we are aiming to compare linear and quadratic mass relations, this is not a viable approach. Nevertheless, there is still a way to analyze the mesonic octets: Prior in this section, we explained that we have to choose a hadron $y$ for whose mass we solve the mass relation at hand. If we take $y$ to be the isospin singlet for mesonic octets, the mass prediction $m^\text{cal}$ only involves particles which are not affected by mixing. Now, $m^\text{cal}$ does not predict the mass of an observable particle like $\eta$ or $\eta^\prime$, but the mass of a mixture of particles. Even though we do not know the mass of that mixture and, consequently, cannot determine the precision of the predictions $m^\text{cal}$, we still can compare the linear and quadratic versions of the mass prediction $m^\text{cal}$ with each other to determine how much the predictions $m^\text{cal}$ deviate: If the deviation of the predictions $m^\text{cal}$ is large in comparison to the meson masses, both mass relations cannot apply simultaneously, at least not with a high precision. If the predictions $m^\text{cal}$ are very similar, there is no reason to favor one mass relation over the other. In this regard, $m^\text{cal}$ is now the quantity of interest. Nevertheless, the values shown in \autoref{table:mass_relations_meson_octets} and \autoref{table:mass_relations_meson_octets_rescaled_2} are still $m^\text{cal}-m^\text{exp}$ and $(m^\text{cal}-m^\text{exp})/N$. For both linear and quadratic GMO mass relations, $m^\text{exp}$ is chosen to be the mass of $\eta$ in the case of the pseudoscalar meson octet and the mass of $\omega$ in the case of the vector meson octet. This choice corresponds to a naive picture of mesons in which we neglect mixing effects. Thus, $m^\text{cal}-m^\text{exp}$ tells us now how strongly the ``naive'' GMO mass relations are violated and, if the actual GMO mass relation is applicable, how large the mixing is. But since we have only shifted $m^\text{cal}$ by a constant, we can still take the difference of the linear and quadratic value for $m^\text{cal}-m^\text{exp}$ to obtain the corresponding difference of the values of $m^\text{cal}$.\par
\begin{table}[b!]
\small
\centering
\caption{Linear vs. quadratic mass relations within the lowest-energy pseudoscalar and vector meson octet ($J^P = 0^-$ and $J^P = 1^-$). The mass relations are solved for the mass of the isospin singlet instead of the highest mass in the octet.}
\begin{tabular}{|c||c|c|c|}
\hline
Mass relation & $m^{\text{cal}}-m^{\text{exp}}$ in \si{MeV} & Exponent & Correction\\
\hline\hline
\autoref{eq:Gell-Mann--Okubo} & $64.94 \pm 0.02$ & 1 & $\varepsilon^2_8$\\
meson occtet ($J^P = 0^-$) & $18.39 \pm 0.02$ & 2 & $\varepsilon^2_8$\\
\hline
\autoref{eq:Gell-Mann--Okubo} & $150.5 \pm 0.3$ & 1 & $\varepsilon^2_8$\\
meson occtet ($J^P = 1^-$) & $147.1 \pm 0.2$ & 2 & $\varepsilon^2_8$\\
\hline
\end{tabular}
\label{table:mass_relations_meson_octets}
\end{table}
\begin{table}[!htb]
\small
\centering
\caption{Linear vs. quadratic mass relations (normalized) within the lowest-energy pseudoscalar and vector meson octet ($J^P = 0^-$ and $J^P = 1^-$). The mass relations are solved for the mass of the isospin singlet instead of the highest mass in the octet.}
\begin{tabular}{|c||c|c|c|}
\hline
Mass relation & $(m^{\text{cal}}-m^{\text{exp}})/N$ in \si{MeV} & Exponent & Correction\\
\hline\hline
\autoref{eq:Gell-Mann--Okubo} & $19.48 \pm 0.01$ & 1 & $\varepsilon^2_8$\\
meson occtet ($J^P = 0^-$) & $5.52 \pm 0.01$ & 2 & $\varepsilon^2_8$\\
\hline
\autoref{eq:Gell-Mann--Okubo} & $45.2 \pm 0.1$ & 1 & $\varepsilon^2_8$\\
meson occtet ($J^P = 1^-$) & $44.1 \pm 0.1$ & 2 & $\varepsilon^2_8$\\
\hline
\end{tabular}
\label{table:mass_relations_meson_octets_rescaled_2}
\end{table}
\autoref{table:mass_relations_meson_octets} depicts an interesting behavior of the meson masses: All values for $m^\text{cal}-m^\text{exp}$ -- aside from the quadratic case in the pseudoscalar meson octet -- show a violation of the ``naive'' GMO mass relation about 15\% or higher with respect to their mass scale $M$ (cf. \autoref{tab:mass_scales}). Only the violation of the ``naive'' quadratic GMO mass relation for the pseudoscalar meson octet is below 5\%. Furthermore, we see that the difference between the linear and quadratic value of $m^\text{cal}-m^\text{exp}$ is about 10\% for the pseudoscalar meson octet and below 1\% for the vector meson octet (in regard to their respective mass scales $M$). We can interpret this in the following way: The vector meson octet is subject to the state formalism, as the flavor symmetry breaking in this octet is rather small and can be treated as a perturbation. Therefore, the linear as well as the quadratic GMO mass relation apply and give similar results in the vector meson octet. In the pseudoscalar meson octet, however, the state formalism breaks down, since the flavor symmetry breaking is not small anymore. Nevertheless, the quadratic GMO mass relation for the pseudoscalar meson octet can still be derived in the framework of chiral perturbation theory (cf. \cite{Scherer2011}) and, thus, applies. The linear GMO mass relation cannot be obtained as demonstrated in \autoref{sec:EFT+H_Pert} from the quadratic GMO mass relation in the pseudoscalar meson octet, because the symmetry breaking is not small. The mixing in the pseudoscalar meson octet, however, is relatively small which explains why the ``naive'' quadratic GMO mass relation applies relatively well to the pseudoscalar meson octet.\par
This interpretation is supported by the mixing angles $\theta_\text{lin}$ and $\theta_\text{quad}$ one can calculate by taking the linear or quadratic GMO mass relations to be exact, respectively (cf. Ch. 6 of \cite{Oneda1985}). For the pseudoscalar meson octet, one finds $|\theta^\text{P}_\text{lin}|\approx 23\degree$ and $|\theta^\text{P}_\text{quad}|\approx 10\degree$, while we have $|\theta^\text{V}_\text{lin}|\approx 36\degree$ and $|\theta^\text{V}_\text{quad}|\approx 39\degree$ for the vector meson octet. The mixing angles show that the mixing obtained from the linear GMO mass relation deviates a lot from the quadratic mixing in the case of the pseudoscalar meson octet, while the difference is rather small in the case of the vector meson octet. To this end, the mixing angles match our interpretation. The mixing angles of other meson octets exhibit a similar behavior as the vector meson octet (cf. Ch. 6 of \cite{Oneda1985}).\par
In conclusion, we have seen that the mass relations from \autoref{chap:mass_relations} apply within their range of validity, i.e., are correct up to their dominant correction(s). Generally, this is true for both versions of mass relations -- linear as well as quadratic. Moreover, we were able to reject Feynman's distinction of baryons and mesons into linear and quadratic mass relations for three different reasons: Firstly, every time we had to make a distinction between the two versions of mass relations we could argue that this distinction does not need to arise from the distinction of baryons and mesons into different mass relations: We had to utilize the linear version of mass relations involving both charm and bottom multiplets, as heavy quark symmetry breaks down for quadratic relations, and we had to use the quadratic GMO mass relation for the pseudoscalar meson octet, since it is predicted by chiral perturbation theory and not equivalent to its linear version because of the large symmetry breaking. Secondly, linear as well as quadratic mass relations apply within the same range of validity for a lot of multiplets. In this sense, Feynman's distinction is not necessary and, thus, artificial. Lastly, we found that the mass relations which favor the linear or quadratic version do not always match Feynman's distinction: The mass relations within baryon sextets favor the quadratic version, even though they are baryons, and the mass relations involving both charm and bottom multiplets favor the linear version for both baryons and mesons.

\section{Multiplet Assignments and Mass Predictions}\label{sec:mass_predictions}

The goal of this section is to use the mass relations from \autoref{chap:mass_relations} to make predictions for hadrons. There are two kinds of predictions we want to make: We want to assign known and measured resonances to multiplets and/or to pairs of charm and bottom multiplets and we want to make predictions for yet undetermined hadron masses.\par
Let us begin with the assignments. The idea for this is to start with an ``educated guess'' for the assignment of yet ungrouped hadrons to multiplets. In the next step, we check the assignment by examining the mass relations this assignment implies: If the mass relations are satisfied within their range of validity, the assignment is more likely to be true. If not, this is evidence for the assignment being false. To find this ``educated guess'', we use the quantum numbers provided and favored by the \textit{Particle Data Group} (cf. \cite{PDG}) and results from \cite{Faustov_HM} and \cite{Faustov_HB}. On top of that, we make use of an empirical observation: Consider the mass splittings between the isospin multiplets of hadronic (anti)triplets which are listed in \autoref{table:mass_splitting_baryon_triplet} and \autoref{table:mass_splitting_meson_triplet}. The first column of each table shows the hadrons which we assign to the same (possible) (anti)triplet, the second column displays the mass difference of the hadrons in the first column, and the third column contains the favored value for $J^P$. The assignment of the hadrons to (anti)triplets is based on results provided by \cite{PDG}, if not marked by $(\ast)$, and on \cite{Faustov_HM} and \cite{Faustov_HB}, otherwise. The hadron masses and their uncertainties are taken from the same references as in \autoref{sec:mass_testing}. Again, all uncertainties in this section are obtained via Gaussian error propagation.\par
\begin{table}[!htb]
\small
\centering
\caption{Mass splittings of baryonic charm and bottom antitriplets. Assignments marked with ($\ast$) are based on \cite{Faustov_HB}. If the mass of a particle is not measured, but the mass of a particle in the same isospin multiplet is known, this mass is used instead.}
\begin{tabular}{|c||c|c|c|}
\hline
Baryon splitting & Mass splitting in \si{MeV} & $J^P$\\
\hline\hline
$\Xi_c^+ - \Lambda_c^+$ & $181.5 \pm 0.2$ & $1/2^+$\\
\hline
$\Xi_c(2790) - \Lambda_c^+(2595)$ & $200.2 \pm 0.6$ & $1/2^-$\\
\hline
$\Xi_c(2815) - \Lambda_c^+(2625)$ & $188.6 \pm 0.3$ & $3/2^-$\\
\hline
$\Xi_c(2970) - \Lambda_c^+(2765)$ & $202.8 \pm 2.5$ & $1/2^+$ ($\ast$)\\
\hline
$\Xi_c(3055) - \Lambda_c^+(2860)$ & $199.8 \pm 6.0$ & $3/2^+$ ($\ast$)\\
\hline
$\Xi_c(3080) - \Lambda_c^+(2880)$ & $195.6 \pm 0.5$ & $5/2^+$ ($\ast$)\\
\hline
$\Xi_c(3123) - \Lambda_c^+(2940)$ & $183.3 \pm 2.0$ & $3/2^-$?\\
\hline
$\Xi_b^0 - \Lambda_b^0$ & $172.3 \pm 0.5$ & $1/2^+$\\
\hline
\end{tabular}
\label{table:mass_splitting_baryon_triplet}
\end{table}
\begin{table}[!htb]
\small
\centering
\caption{Mass splittings of mesonic charm and bottom (anti)triplets. Assignments marked with ($\ast$) are based on \cite{Faustov_HM}. If the mass of a particle is not measured, but the mass of a particle in the same isospin multiplet is known, this mass is used instead.}
\begin{tabular}{|c||c|c|c|}
\hline
Meson splitting & Mass splitting in \si{MeV} & $J^P$\\
\hline\hline
$D_s^+ - D^+$ & $98.7 \pm 0.1$ & $0^-$\\
\hline
$D_s^{\ast +} - D^{\ast +}(2010)$ & $101.9 \pm 0.4$ & $1^-$ ($\ast$)\\
\hline
$D_{s1}^+(2536) - D_1^+(2420)$ & $111.9 \pm 2.4$ & $1^+$ ($\ast$)\\
\hline
$D_{s2}^{\ast +}(2573) - D_2^{\ast +}(2460)$ & $103.7 \pm 1.5$ & $2^+$\\
\hline
$D_{s0}^{\ast +}(2317) - D_0^{\ast +}(2300)$ & $31.2 \pm 7.0$ & $0^+$\\
\hline
$D_{s1}^+(2460) - D_1^+(2430)$ & $32.5 \pm 36.0$ & $1^+$\\
\hline
$D_{s1}^{\ast +}(2700) - D_J^{\ast}(2600)$ & $85.3 \pm 12.6$ & $1^-$ ($\ast$)\\
\hline
$D_{s1}^{\ast +}(2860) - D^+(2740)$ & $122.0 \pm 29.4$ & $1^-$ ($\ast$)\\
\hline
$D_{s3}^{\ast +}(2860) - D_3^{\ast}(2750)$ & $97.0 \pm 7.8$ & $3^-$\\
\hline
$B_s^0 - B^0$ & $87.2 \pm 0.2$ & $0^-$\\
\hline
$B_s^{\ast 0} - B^{\ast}$ & $90.7 \pm 1.8$ & $1^-$\\
\hline
$B_{s1}^0(5830) - B_1^0(5721)$ & $102.6 \pm 1.3$ & $1^+$\\
\hline
$B_{s2}^{\ast 0}(5840) - B_2^{\ast 0}(5747)$ & $100.4 \pm 0.7$ & $2^+$\\
\hline
\end{tabular}
\label{table:mass_splitting_meson_triplet}
\end{table}
Considering the mass splittings listed in these tables, it seems like the mass splittings depend little on $J^P$ and are mainly determined by the type of (anti)triplet: While the mass splittings of baryonic antitriplets range from roughly \SI{170}{MeV} to \SI{200}{MeV}, the mass splittings of mesonic (anti)triplets lie mostly\footnote{The most striking exceptions are the (anti)triplets with $J^P = 0^+$ and $J^P = 1^+$. These (anti)triplets are quite odd, as they also disagree with the predictions of \cite{Faustov_HM}.} between \SI{85}{MeV} and \SI{125}{MeV}. It is also noteworthy that the mass splittings of bottom (anti)triplets are always smaller than their charm counterparts. Usually, the difference between charm and bottom mass splitting is about \SI{10}{MeV}.\par
If this behavior generalizes to all or at least many hadronic multiplets, it gives us an additional tool for assigning hadrons to multiplets. Based on that assumption, we assign the hadrons $\Lambda^+_\text{c}(2940)$ and $\Xi_\text{c}(3123)$ to the same antitriplet with $J^P = 3/2^-$\footnote{The evidence for the existence of the $\Xi_\text{c}(3123)$ particle is rather weak (cf. \cite{PDG}) which is why the value for $J^P$ of the corresponding antitriplet is marked with a question mark in \autoref{table:mass_splitting_baryon_triplet}.}. The value of $J^P$ for that multiplet is based on $J^P$ of $\Lambda^+_\text{c}(2940)$ favored by \cite{PDG}. Unfortunately, we cannot check this assignment with the help of a mass relation, since we lack the corresponding bottom baryons.\par
The mass splittings of charm and bottom sextets behave very similar to what we have observed for the hadronic (anti)triplets (cf. \autoref{table:mass_splitting_sextet}).\par
\begin{table}[t!]
\small
\centering
\caption{Mass splittings of baryonic charm and bottom sextets. Assignments marked with ($\ast$) are based on \cite{Faustov_HB}. If the mass of a particle is not measured, but the mass of a particle in the same isospin multiplet is known, this mass is used instead.}
\begin{tabular}{|c||c|c|c|}
\hline
Baryon splitting & Mass splitting in \si{MeV} & $J^P$\\
\hline\hline
$\Xi_c^{\prime 0} - \Sigma_c^0(2455)$ & $125.4 \pm 0.5$ & $1/2^+$\\
\hline
$\Xi_c^{0}(2645) - \Sigma_c^0(2520)$ & $127.9 \pm 0.3$ & $3/2^+$\\
\hline
$\Xi_c(2930) - \Sigma_c(2800)$ & $123.7 \pm 8.6$ & $1/2^-$, $3/2^-$, or $5/2^-$ ($\ast$)\\
\hline
$\Xi_b^{\prime -} - \Sigma_b^-$ & $119.4 \pm 0.3$ & $1/2^+$\\
\hline
$\Xi_b^{-}(5955) - \Sigma_b^{\ast -}$ & $120.6 \pm 0.3$ & $3/2^+$\\
\hline
$\Xi_b(6227) - \Sigma_b(6097)$ & $128.9 \pm 2.7$ & $1/2^-$, $3/2^-$, or $5/2^-$ ($\ast$)\\
\hline
$\Omega_c^{0} - \Xi_c^{\prime 0}$ & $116.0 \pm 1.8$ & $1/2^+$\\
\hline
$\Omega_c^{0}(2770) - \Xi_c^{0}(2645)$ & $119.5 \pm 2.0$ & $3/2^+$\\
\hline
$\Omega_c^{0}(3050) - \Xi_c(2930)$ & $120.5 \pm 5.0$ & $1/2^-$, $3/2^-$, or $5/2^-$ ($\ast$)\\
\hline
$\Omega_b^{-} - \Xi_b^{\prime -}$ & $111.1 \pm 1.7$ & $1/2^+$\\
\hline
\end{tabular}
\label{table:mass_splitting_sextet}
\end{table}
Based on that and \cite{Faustov_HB}, we assign the hadrons $\Sigma_\text{c}(2800)$, $\Xi_\text{c}(2930)$, $\Omega^0_\text{c}(3050)$, $\Sigma_\text{b}(6097)$, and $\Xi_\text{b}(6227)$ to the same pair of charm and bottom sextets. The values\footnote{In \cite{Faustov_HB}, three pairs of charm and bottom sextets are predicted to be very close together. This is the reason why three values for $J^P$ are given for the newly assigned pair of sextets in \autoref{table:mass_splitting_sextet}.} of $J^P$ for this pair are based on \cite{Faustov_HB}. This time, there are mass relations we can use to test the new assignment. The mass relations we can check for this assignment are satisfied to the extent we expect them to be valid (cf. \autoref{table:mass_assignments} and \autoref{table:mass_assignments_rescaled_2}), although it is hard to tell because of the large uncertainties.\par
\begin{table}[t]
\small
\centering
\caption{Linear vs. quadratic mass relations within and between the presumptive charm and bottom sextets. If the mass of a particle is not measured, but the mass of a particle in the same isospin multiplet is known, this mass is used instead.}
\begin{tabular}{|c||c|c|c|}
\hline
Mass relation & $m^{\text{cal}}-m^{\text{exp}}$ in \si{MeV} & Exponent & Correction\\
\hline\hline
\autoref{eq:sextet_c_GMO_equal_spacing2} & $3.2 \pm 12.2$ & 1 & $\varepsilon^2_8$\\
presumptive charm sextet & $-1.8 \pm 11.6$ & 2 & $\varepsilon^2_8$\\
\hline
\autoref{eq:sextet_c_b_spacing} & $-5.2 \pm 9.0$ & 1 & $\varepsilon_{\text{cb}}\varepsilon_8$\\
presumptive c and b sextet & $-71.0 \pm 4.8$ & 2 & $\varepsilon_8$\\
\hline
\end{tabular}
\label{table:mass_assignments}
\end{table}
\begin{table}[hbtp]
\small
\centering
\caption{Linear vs. quadratic mass relations (normalized) within and between the presumptive charm and bottom sextets. If the mass of a particle is not measured, but the mass of a particle in the same isospin multiplet is known, this mass is used instead.}
\begin{tabular}{|c||c|c|c|}
\hline
Mass relation & $(m^{\text{cal}}-m^{\text{exp}})/N$ in \si{MeV} & Exponent & Correction\\
\hline\hline
\autoref{eq:sextet_c_GMO_equal_spacing2} & $0.8 \pm 3.1$ & 1 & $\varepsilon^2_8$\\
presumptive charm sextet & $-0.5 \pm 2.9$ & 2 & $\varepsilon^2_8$\\
\hline
\autoref{eq:sextet_c_b_spacing} & $-1.3 \pm 2.3$ & 1 & $\varepsilon_{\text{cb}}\varepsilon_8$\\
presumptive c and b sextet & $-17.7 \pm 1.2$ & 2 & $\varepsilon_8$\\
\hline
\end{tabular}
\label{table:mass_assignments_rescaled_2}
\end{table}
\clearpage
Let us now turn to the mass predictions. How to use mass relations to make predictions for hadron masses is pretty straightforward: If we know the masses of at least a few hadrons in a multiplet, we can simply apply \autoref{eq:prediction}. This way, we can predict the masses of the hadrons which are missing to complete the multiplets from \autoref{sec:mass_testing}. Moreover, we can give an estimate for the mass of $\Omega^-_\text{b}(6350)$ which denotes the counterpart to the $\Omega^-_\text{b}$-baryon in the newly assigned pair of charm and bottom sextets. Furthermore, we can predict the masses of $\Xi^0_\text{b}(6112)$ and $\Xi^0_\text{b}(6109)$ which denote hadrons in the charm-bottom pairs $\Lambda^+_\text{c}(2595)$-$\Xi_\text{c}(2790)$-$\Lambda^0_\text{b}(5912)$ and $\Lambda^+_\text{c}(2625)$-$\Xi_\text{c}(2815)$-$\Lambda^0_\text{b}(5920)$ with $J^P = 1/2^-$ and $J^P = 3/2^-$, respectively. The predictions are displayed in \autoref{table:mass_predictions}. \autoref{table:mass_predictions} is organized similarly to the tables in \autoref{sec:mass_testing}. The uncertainties for the mass predictions $m^\text{cal}$ are obtained from the experimental uncertainties via Gaussian error propagation and do not include any theory or model error.\par
\begin{table}[!htb]
\small
\centering
\caption{Linear and quadratic mass predictions of various baryons. Quantum numbers $J^P$ marked with ($\ast$) are based on \cite{Faustov_HB} and the quantum numbers of other particles in the multiplet. If the mass of a particle is not measured, but the mass of a particle in the same isospin multiplet is known, this mass is used instead. However, this comment only applies to the calculation of $\Omega_b^-(6350)$.}
\hspace{-1.5cm}\begin{tabular}{|c||c|c|c|c|}
\hline
Hadron & $m^\text{cal}$ in \si{MeV} & Exponent & Mass relation & Correction(s)\\
\hline\hline
$\Delta^-(1232)$ & $1226.0 \pm 4.3$ & 1 & \autoref{eq:Delta-} & $\alpha\varepsilon_8;\ \varepsilon_3\varepsilon_8$\\
$J^P = 3/2^+$; BW mass & $1226.0 \pm 4.3$ & 2 & \autoref{eq:Delta-} & $\alpha\varepsilon_8;\ \varepsilon_3\varepsilon_8$\\
\hline
$\Delta^-(1232)$ & $1219.1 \pm 3.6$ & 1 & \autoref{eq:Delta-} & $\alpha\varepsilon_8;\ \varepsilon_3\varepsilon_8$\\
$J^P = 3/2^+$; pole mass & $1219.1 \pm 3.6$ & 2 & \autoref{eq:Delta-} & $\alpha\varepsilon_8;\ \varepsilon_3\varepsilon_8$\\
\hline
$\Sigma^{\ast 0}(1385)$ & $1379.2 \pm 1.3$ & 1 & \autoref{eq:iso1_decuplet} & $\alpha\varepsilon_8;\ \varepsilon_3\varepsilon_8$\\
$J^P = 3/2^+$; pole mass & $1379.4 \pm 1.2$ & 2 & \autoref{eq:iso1_decuplet} & $\alpha\varepsilon_8;\ \varepsilon_3\varepsilon_8$\\
\hline
$\Sigma_b^0$ & $5812.1 \pm 0.5$ & 1 & \autoref{eq:sextet_c_b_sigma} & $\alpha\varepsilon_\text{cb};\ \alpha\varepsilon_8$\\
$J^P = 1/2^+$ & $5812.7 \pm 0.3$ & 2 & \autoref{eq:sextet_c_b_sigma} & $\alpha$\\
\hline
$\Sigma_b^0$ & $5812.2 \pm 0.7$ & 1 & \autoref{eq:sextet_c_b_sigma_b} & $\alpha\varepsilon_\text{cb};\ \varepsilon_3\varepsilon_\text{cb};\, \alpha\varepsilon_8;\ \varepsilon_3\varepsilon_8$\\
$J^P = 1/2^+$ & $5812.7 \pm 0.4$ & 2 & \autoref{eq:sextet_c_b_sigma_b} & $\alpha;\, \varepsilon_3$\\
\hline
$\Xi_b^0$ & $5931.6 \pm 0.7$ & 1 & \autoref{eq:sextet_c_b_xi} & $\alpha\varepsilon_\text{cb};\,\varepsilon_3\varepsilon_\text{cb};\, \alpha\varepsilon_8$\\
$J^P = 1/2^+$ & $5932.1 \pm 0.4$ & 2 & \autoref{eq:sextet_c_b_xi} & $\alpha;\,\varepsilon_3$\\
\hline
$\Sigma_b^{\ast 0}$ & $5831.7 \pm 1.0$ & 1 & \autoref{eq:sextet_b_iso_bre} & $\alpha\varepsilon_8;\ \varepsilon_3\varepsilon_8$\\
$J^P = 3/2^+$ & $5831.6 \pm 1.0$ & 2 & \autoref{eq:sextet_b_iso_bre} & $\alpha\varepsilon_8;\ \varepsilon_3\varepsilon_8$\\
\hline
$\Sigma_b^{\ast 0}$ & $5831.6 \pm 2.3$ & 1 & \autoref{eq:sextet_c_b_sigma} & $\alpha\varepsilon_\text{cb};\ \alpha\varepsilon_8$\\
$J^P = 3/2^+$ & $5832.1 \pm 1.0$ & 2 & \autoref{eq:sextet_c_b_sigma} & $\alpha$\\
\hline
$\Sigma_b^{\ast 0}$ & $5831.5 \pm 2.5$ & 1 & \autoref{eq:sextet_c_b_very_precise} & $\alpha\varepsilon_8$\\
$J^P = 3/2^+$ & $5831.6 \pm 1.4$ & 2 & \autoref{eq:sextet_c_b_very_precise} & $\alpha\varepsilon_8;\ \varepsilon_3\varepsilon_8$\\
\hline
$\Sigma_b^{\ast 0}$ & $5831.8 \pm 0.4$ & 1 & \autoref{eq:sextet_c_b_sigma_b} & $\alpha\varepsilon_\text{cb};\ \varepsilon_3\varepsilon_\text{cb};\, \alpha\varepsilon_8;\ \varepsilon_3\varepsilon_8$\\
$J^P = 3/2^+$ & $5832.2 \pm 0.3$ & 2 & \autoref{eq:sextet_c_b_sigma_b} & $\alpha;\, \varepsilon_3$\\
\hline
$\Omega_b^-(6070)$ & $6075.9 \pm 0.4$ & 1 & \autoref{eq:sextet_b_GMO_equal_spacing} & $\varepsilon^2_8$\\
$J^P = 3/2^+$ & $6073.5 \pm 0.4$ & 2 & \autoref{eq:sextet_b_GMO_equal_spacing} & $\varepsilon^2_8$\\
\hline
$\Omega_b^-(6070)$ & $6067.5 \pm 2.1$ & 1 & \autoref{eq:sextet_c_b_precise} & $\varepsilon_\text{cb}\varepsilon^2_8$\\
$J^P = 3/2^+$ & $6072.4 \pm 1.0$ & 2 & \autoref{eq:sextet_c_b_precise} & $\varepsilon^2_8$\\
\hline
$\Omega_b^-(6350)$ & $6355.8 \pm 4.4$ & 1 & \autoref{eq:sextet_b_GMO_equal_spacing} & $\varepsilon^2_8$\\
$J^P = 1/2^-$, $3/2^-$, or $5/2^-$ ($\ast$) & $6353.2 \pm 4.3$ & 2 & \autoref{eq:sextet_b_GMO_equal_spacing} & $\varepsilon^2_8$\\
\hline
$\Omega_b^-(6350)$ & $6352.6 \pm 13.0$ & 1 & \autoref{eq:sextet_c_b_precise} & $\varepsilon_\text{cb}\varepsilon^2_8$\\
$J^P = 1/2^-$, $3/2^-$, or $5/2^-$ ($\ast$) & $6354.1 \pm 7.0$ & 2 & \autoref{eq:sextet_c_b_precise} & $\varepsilon^2_8$\\
\hline
$\Xi_b^0(6112)$ & $6112.4 \pm 0.6$ & 1 & \autoref{eq:bary_c_b_tri} & $\varepsilon_\text{cb}\varepsilon_8$\\
$J^P = 1/2^-$ ($\ast$) & $6002.7 \pm 0.3$ & 2 & \autoref{eq:bary_c_b_tri} & $\varepsilon_8$\\
\hline
$\Xi_b^0(6109)$ & $6108.5 \pm 0.3$ & 1 & \autoref{eq:bary_c_b_tri} & $\varepsilon_\text{cb}\varepsilon_8$\\
$J^P = 3/2^-$ ($\ast$) & $6006.0 \pm 0.2$ & 2 & \autoref{eq:bary_c_b_tri} & $\varepsilon_8$\\
\hline
\end{tabular}
\label{table:mass_predictions}
\end{table}
One might be confused that the mass of $\Xi^0_\text{b}(6112)$ is larger than the mass of $\Xi^0_\text{b}(6109)$, even though the order is reversed for the corresponding $\Lambda$-hadrons, i.e., the mass of $\Lambda^0_\text{b}(5912)$ is smaller than the mass of $\Lambda^0_\text{b}(5920)$. However, this behavior is also predicted by other authors (cf. \cite{Thakkar2017}).

\newpage
\chapter*{Summary}
\addcontentsline{toc}{chapter}{Summary}
\markboth{}{Summary}

The initial question of this work revolved around Feynman's distinction, i.e., the distinction of baryons and mesons into linear and quadratic GMO mass relations, respectively. As formulated in the introduction, we aimed to resolve the discrepancy between Feynman's distinction and the symmetry between fermion (baryon) and boson (meson) masses in a supersymmetrical world. To accomplish this task, we wanted to answer the question whether Feynman's distinction is a real physical distinction, merely artificial, or maybe even false. In the course of this thesis, we have found a clear answer to this question: While Feynman's distinction is not necessarily false, it is most certainly artificial. We have seen that this result holds true on a theoretical and experimental level.\par
From a theoretical perspective, we have considered two descriptions of hadron masses, the EFT approach and the state formalism. The EFT approach in which hadrons are identified with fields in an effective Lagrangian seemed to exhibit Feynman's distinction naturally, while Feynman's distinction did not arise in the state formalism that describes hadrons as eigenstates of the Hamilton operator. We were able to understand this difference between the EFT approach and the state formalism by considering the $\text{SU}(3)$-flavor symmetry breaking that led us to the formulation of the GMO mass relations in the first place: If the flavor symmetry breaking is small such that it can be treated as a perturbation and no heavy quark symmetry is involved, both linear and quadratic mass relations are valid to first order in flavor symmetry breaking. As the state formalism is a perturbative description of hadron masses and, thus, only applicable, if the symmetry breaking is small, every mass relation predicted by the state formalism -- omitting relations involving both charm and bottom hadrons -- is valid in both its linear and quadratic form. To this end, Feynman's distinction is artificial, because it is simply not necessary to introduce the distinction.\par
In spite of this result, we have seen that there are still multiplets where the results following from the different versions of mass relations -- linear and quadratic -- clearly differ and one version is most likely favored over the other. Examples for these multiplets include the pseudoscalar meson octet and the pairs of charm and bottom antitriplets and sextets. Nevertheless, we were able to explain the patterns exhibited by those multiplets:\par
The behavior of the pseudoscalar meson octet can be understood by considering the size of the symmetry breaking. The mesons in the pseudoscalar meson octet are not very heavy in comparison to the mass splitting induced by the flavor symmetry breaking. Hence, the symmetry breaking cannot be treated as a perturbation in this case and the different versions of the GMO mass relation in the pseudoscalar meson octet cannot be satisfied simultaneously. As the quadratic GMO mass relation of the pseudoscalar meson octet can be calculated in chiral perturbation theory (cf. \cite{Scherer2011}), the quadratic version is applicable, while the linear one is not.\par
The behavior of the pairs of charm and bottom multiplets is a direct consequence of the state formalism: Only the linear version is applicable for mass relations involving charm and bottom hadrons, because the heavy quark symmetry only holds true for linear mass relations and breaks down for quadratic ones. The interesting aspect of this result is that it applies to baryonic as well as mesonic pairs of charm and bottom multiplets.\par
We have argued in this thesis that these cases cannot be seen as a confirmation of Feynman's distinction, even though a distinction of mass relations is necessary in those cases. The found patterns simply do not match Feynman's distinction: The pseudoscalar meson octet favors the quadratic GMO mass relation over the linear one, but the heavier meson octets like the vector meson octet do not feature this preference. The baryonic pairs of charm and bottom multiplets clearly favor linear mass relations involving charm and bottom hadron masses, but so do the mesonic pairs.\par
Concerning the part of this work related to experimental data, we were able to support the presented discussion of Feynman's distinction with the most recent data on hadron masses. In the analysis of experimental data, we addressed several issues regarding the applicability of the mass relations to pole masses, the method for comparing mass relations, and the experimental data itself.\par
In addition to the discussion of Feynman's distinction, we have also incorporated the effects of isospin symmetry breaking, electromagnetic interaction, and heavy quark symmetry into the state formalism. This allowed us to obtain strongly satisfied and well known mass relations like the Coleman-Glashow relation (cf. \cite{coleman-glashow}). The validity of these mass relations could also be confirmed by the analysis of experimental data.\par
At the end of the thesis, we used the mass relations we derived to make predictions for the masses of yet undiscovered hadrons.

\newpage
\begin{appendix}
\chapter{Motivation for the State 
Formalism}\label{app:stateform}

In \autoref{sec:polemass}, the mass of a particle or resonance is defined via poles of scattering amplitudes with respect to the Mandelstam variable $s$. However, it is not clear at all what the relation of these poles and the parameters in the Lagrangian is and how to calculate the position of the poles. Therefore, in order to make any statement about hadron masses, we have to make some assumptions on the relation of the Lagrangian and the hadron masses. In \autoref{sec:EFT+H_Pert}, two approaches to this problem are given. For the first one, one assumes that the hadrons can be described as fields in an EFT and that the symmetry structure of $\mathcal{L}_\text{QCD}$ ``carries over" to the EFT Lagrangian to first order in flavor symmetry breaking.\par
The second approach, the state formalism, which is used throughout my thesis is based on three assumptions:
\begin{itemize}
 \item[1)] For every hadron $a$, there exists an eigenstate $\Ket{a}$ with $\Braket{a|a} = 1$ of the Hamilton operator $H$ from which the vacuum energy is already subtracted such that the mass $m_a$ of $a$ is given by
 \begin{gather*}
  m_a = \Bra{a}H \Ket{a}.
 \end{gather*}
 \item[2)] The subspace $V$ of the physical states which is spanned by the states $\Ket{a}$ from 1), i.e., {${V:= \overline{\text{Span}\left\{\Ket{a}\mid a\text{ hadron}\right\}}}$}, is a Hilbert space.
 \item[3)] There is a unitary representation $D^{(\rho)}:V\rightarrow V$ of \text{SU}(3) on $V$ such that the following equation holds for every $A\in\text{SU}(3)$:
 \begin{gather*}
  \Bra{a} D^{(\rho)}(A)^\dagger\circ H\left(\bar{q}_{\text{L/R}},\, q_{\text{L/R}}\right)\circ D^{(\rho)}(A) \Ket{b} = \Bra{a}H\left(\bar{q}^{\, \prime}_{\text{L/R}},\, q^\prime_{\text{L/R}}\right) \Ket{b}\ \forall\Ket{a},\Ket{b}\in V
 \end{gather*}
 where $q_{\text{L/R}}$ are the left/right-handed fields of the light quarks $(q\in\{\text{u, d, s}\})$ and $q^\prime_{\text{L/R}}\coloneqq \sum\limits_{\tilde{q}\in\{\text{u,d,s}\}}A_{q \tilde{q}}\cdot \tilde{q}_{\text{L/R}}$.
\end{itemize}
I cannot prove these assumptions in the case of $\mathcal{L}_\text{QCD}$. However, we can consider the case of a non-interacting theory to see how we can understand and motivate the assumptions 1) to 3).\par
Let us consider a theory of three free spin-$\frac{1}{2}$ particles with fields $q$ and masses $m_q$, $q\in\{\text{u, d, s}\}$:
\begin{align*}
\mathcal{L} &= \sum\limits_{q\in\{\text{u,d,s}\}}\bar{q}(i\slashed{\partial} - m_q)q\\
&= \sum\limits_{q\in\{\text{u,d,s}\}}\bar{q}_Li\slashed{\partial}q_L + \bar{q}_Ri\slashed{\partial}q_R - \bar{q}_Lm_qq_R - \bar{q}_Rm_qq_L.
\end{align*}
Following \cite{Peskin}, we can express the field $q$ as
\begin{gather*}
q(x) = \int\frac{d^3p}{(2\pi)^3}\frac{1}{\sqrt{2E_{p;q}}}\sum_{s=1,2}\left(a^s_{p;q}u_q^s(\vec p)e^{-ip\cdot x} + b^{s\dagger}_{p;q}v^s_q(\vec p)e^{ip\cdot x}\right)
\end{gather*}
with energy $p_0 \equiv E_{p;q}\coloneqq \sqrt{m_q^2 + {\vec p}^{\, 2}}$, annihilation and creation operators $a^s_{p,q}$, $a^{s\dagger}_{p;q}$, $b^s_{p,q}$, and $b^{s\dagger}_{p;q}$ of particles and antiparticles satisfying
\begin{gather*}
\left\{a^{r}_{p;q},a^{s\dagger}_{p^\prime;q^\prime}\right\} = \left\{b^{r}_{p;q},b^{s\dagger}_{p^\prime;q^\prime}\right\} = (2\pi)^3\delta^{(3)}(\vec p - {\vec p}^{\,\prime})\delta^{rs} \delta_{q q^\prime}\\
\text{+ all other anticommutators of $a$ and $b$ vanish,}
\end{gather*}
and spinors $u^s_q(\vec p)$ and $v^s_q(\vec p)$ normalized to $\bar{u}^r_q\left(\vec p\right)u^s_q\left(\vec p\right) = -\bar{v}^r_q\left(\vec p\right)v^s_q\left(\vec p\right) = 2m_q\delta^{rs}$ and satisfying the free Dirac equation.
The Hamilton operator $H$ is then given by (cf. \cite{Peskin}):
\begin{gather*}
H = \sum_{q\in\{\text{u,d,s}\}}\int d^3x\,\bar{q}\left(-i\vec\gamma\cdot\vec\nabla + m_q\right)q.
\end{gather*}
Subtracting vacuum energy, one obtains:
\begin{gather*}
H = \sum_{q\in\{\text{u,d,s}\}}\int\frac{d^3p}{(2\pi)^3}\sum_{s=1,2}E_{p;q}\left(a^{s\dagger}_{p;q}a^{s}_{p;q} + b^{s\dagger}_{p;q}b^{s}_{p;q}\right).
\end{gather*}
Let us now define the particle state $\Ket{\vec p;s,q}\coloneqq a^{s\dagger}_{p;q}\Ket{0}$ with $\Ket{0}$ being the vacuum state. The state $\Ket{\vec p;s,q}$ is an energy eigenstate with eigenvalue $E_{p;q}$. One maybe tempted to identify $\Ket{0;s,q}$ with a state from 1), however, it does not have the proper normalization as $\Braket{0;s,q|0;s,q} = (2\pi)^3\delta^{(3)}(0)$. In order to fix this, we consider wave packets now and define for $\varepsilon>0$:
\begin{gather*}
\Ket{\varepsilon;s,q}\coloneqq \frac{1}{\sqrt[4]{(2\pi)^9\varepsilon^3}}\int d^3p\ e^{-\frac{{\vec p}^{\, 2}}{4\varepsilon}}\Ket{\vec p;s,q}.
\end{gather*}
The state $\Ket{\varepsilon;s,q}$ is normalized to 1:
\begin{align*}
\Braket{\varepsilon;s,q|\varepsilon;s,q} &= \frac{1}{(2\pi)^3}(2\pi\varepsilon)^{-\frac{3}{2}}\int d^3p\int d^3p^\prime\ e^{-\frac{{\vec p}^{\, 2} + {\vec p}^{\,\prime 2}}{4\varepsilon}}\Braket{\vec p;s,q|{\vec p}^{\,\prime};s,q}\\
&= (2\pi\varepsilon)^{-\frac{3}{2}}\int d^3p\int d^3p^\prime\ e^{-\frac{{\vec p}^{\, 2} + {\vec p}^{\,\prime 2}}{4\varepsilon}}\delta^{(3)}\left(\vec p - {\vec p}^{\,\prime}\right)\\
&= (2\pi\varepsilon)^{-\frac{3}{2}}\int d^3p\ e^{-\frac{{\vec p}^{\, 2}}{2\varepsilon}}\\
&= (2\pi\varepsilon)^{-\frac{3}{2}}\left(\int\limits^{\infty}_{-\infty}dp\ e^{-\frac{p^2}{2\varepsilon}}\right)^3\\
&= (2\pi\varepsilon)^{-\frac{3}{2}}(2\pi\varepsilon)^{\frac{3}{2}} = 1.
\end{align*}
Let us calculate the energy of this state:
\begin{align*}
\Braket{\varepsilon;s,q|H|\varepsilon;s,q} &= \sum_{\tilde{q}\in\{\text{u,d,s}\}}\int\frac{d^3p}{(2\pi)^3}\sum_{r=1,2}E_{p;\tilde{q}}\Braket{\varepsilon;s,q|a^{r\dagger}_{p;\tilde{q}}a^{r}_{p;\tilde{q}}|\varepsilon;s,q}\\
&= \frac{1}{(2\pi)^6}(2\pi\varepsilon)^{-\frac{3}{2}}\int d^3p\int d^3k \int d^3k^\prime\sum_{\tilde{q}\in\{\text{u,d,s}\}}\sum_{r=1,2}E_{p;\tilde{q}}\, e^{-\frac{{\vec k}^{\, 2} + {\vec k}^{\,\prime 2}}{4\varepsilon}}\\
&\qquad\times\Braket{\vec k;s,q|a^{r\dagger}_{p;\tilde{q}}a^{r}_{p;\tilde{q}}|{\vec k}^{\,\prime};s,q}\\
&= \frac{1}{(2\pi)^6}(2\pi\varepsilon)^{-\frac{3}{2}}\int d^3p\int d^3k \int d^3k^\prime\sum_{\tilde{q}\in\{\text{u,d,s}\}}\sum_{r=1,2}E_{p;\tilde{q}}\, e^{-\frac{{\vec k}^{\, 2} + {\vec k}^{\,\prime 2}}{4\varepsilon}}\\
&\qquad\times\left\{a^{s}_{k;q},a^{r\dagger}_{p;\tilde{q}}\right\}\left\{a^{r}_{p;\tilde{q}},a^{s\dagger}_{k^\prime;q}\right\}\\
&= \frac{1}{(2\pi)^6}(2\pi\varepsilon)^{-\frac{3}{2}}\int d^3p\int d^3k \int d^3k^\prime\sum_{\tilde{q}\in\{\text{u,d,s}\}}\sum_{r=1,2}E_{p;\tilde{q}}\, e^{-\frac{{\vec k}^{\, 2} + {\vec k}^{\,\prime 2}}{4\varepsilon}}\\
&\qquad\times\left(2\pi\right)^6\,\delta^{rs}\,\delta_{q\tilde{q}}\,\delta^{(3)}\left(\vec k -\vec p\right)\delta^{(3)}\left(\vec p - {\vec k}^{\,\prime}\right)\\
&= (2\pi\varepsilon)^{-\frac{3}{2}}\int d^3p\, E_{p;q}\, e^{-\frac{{\vec p}^{\, 2}}{2\varepsilon}}\\
&= \frac{4\pi}{\left(2\pi\varepsilon\right)^{\frac{3}{2}}}\int\limits^{\infty}_0 dp\, p^2\sqrt{m^2_q + p^2}\, e^{-\frac{p^2}{2\varepsilon}}\\
&= \frac{4\pi}{\left(2\pi\varepsilon\right)^{\frac{3}{2}}}\int\limits^{\infty}_0 d \left(\sqrt{2\varepsilon}y\right)\, \left(\sqrt{2\varepsilon}y\right)^2\sqrt{m^2_q + \left(\sqrt{2\varepsilon}y\right)^2}\, e^{-y^2}\\
&= \frac{4}{\sqrt{\pi}}\int\limits^{\infty}_0 dy\, y^2\sqrt{m^2_q + 2\varepsilon y^2}\, e^{-y^2}\\
&= m_q\frac{4}{\sqrt{\pi}}\int\limits^{\infty}_0 dy\, e^{-y^2}\, y^2\, \sqrt{1 + \frac{2\varepsilon}{m^2_q}y^2}.
\end{align*}
With the help of mathematics and the definition $z\coloneqq \frac{2\varepsilon}{m^2_q}$, we can rewrite this expression in terms of the modified Bessel function $K_1$ of the second kind:
\begin{gather*}
\Braket{\varepsilon;s,q|H|\varepsilon;s,q} = m_q\frac{e^{\frac{1}{2z}}K_1\left(\frac{1}{2z}\right)}{\sqrt{\pi z}}.
\end{gather*}
If we consider the limit $\varepsilon\rightarrow 0^+$, $z$ also tends to $0$ and with $\lim\limits_{z\rightarrow 0^+}\frac{e^{\frac{1}{2z}}K_1\left(\frac{1}{2z}\right)}{\sqrt{\pi z}} = 1$ we obtain:
\begin{gather*}
\lim\limits_{\varepsilon\rightarrow 0^+}\Braket{\varepsilon;s,q|H|\varepsilon;s,q} = m_q.
\end{gather*}
Furthermore, a similar calculation shows:
\begin{align*}
\Braket{\varepsilon;s,q|H^2|\varepsilon;s,q} &= \frac{4}{\sqrt{\pi}}\int\limits^{\infty}_0 dy\, y^2\left(m^2_q + 2\varepsilon y^2\right)\, e^{-y^2}\\
&= m^2_q + 3\varepsilon.
\end{align*}
Therefore, we find:
\begin{gather*}
\lim\limits_{\varepsilon\rightarrow 0^+}\left(\Delta H^2(\varepsilon;s,q)\right) = \lim\limits_{\varepsilon\rightarrow 0^+}\left(\Braket{\varepsilon;s,q|H^2|\varepsilon;s,q} - \Braket{\varepsilon;s,q|H|\varepsilon;s,q}^2\right) = 0.
\end{gather*}
In this sense, the state $\Ket{\varepsilon;s,q}$ ``tends'' for $\varepsilon\rightarrow 0^+$ to a ``state'' whose energy expectation value is the mass $m_q$ and whose energy variance $\Delta H^2$ is zero. If the variance $\Delta A^2$ of any state for a given operator $A$ is zero, the state is an eigenstate of $A$. To this end, we can say that the normalized state $\Ket{\varepsilon;s,q}$ ``tends'' for $\varepsilon\rightarrow 0^+$ to an ``eigenstate'' of $H$ with ``eigenvalue'' $m_q$. These energy ``eigenstates'' are exactly the states we are looking for to fulfill 1). Note, however, that it is neither clear whether $\Ket{\varepsilon;s,q}$ converges in a mathematical sense for $\varepsilon\rightarrow 0^+$ nor obvious that the limit of $\Ket{\varepsilon;s,q}$ exhibits the described properties, if $\Ket{\varepsilon;s,q}$ converges. We are going to ignore this for now. Similarly, we can define anti-particle states $\Ket{\varepsilon; s, \bar{q}}$ by replacing $\Ket{\vec p; s, q}$ with $\Ket{\vec p; s, \bar{q}}\coloneqq b^{s\dagger}_{p;q}\Ket{0}$ in the definition of $\Ket{\varepsilon; s, q}$ and obtain the same results for anti-particles.\par
The subspace of these energy ``eigenstates'' which we need for 2) is then given by:
\begin{gather*}
V_\varepsilon \coloneqq \text{Span}\left\{\Ket{\varepsilon;s,q}\mid s\in\{1,2\};\ q\in\{\text{u,d,s,}\bar{\text{u}},\bar{\text{d}},\bar{\text{s}}\}\right\}.
\end{gather*}
$V_\varepsilon$ is a finite-dimensional complex vector space and, therefore, trivially a Hilbert space satisfying 2).\par
In order to satisfy 3), we have to find a representation $D^{(\rho)}:V_\varepsilon\rightarrow V_\varepsilon$ of \text{SU}(3) on $V_\varepsilon$. Let us define:
\begin{align*}
D^{(\rho)}(A)\Ket{\varepsilon;s,q}&\coloneqq \sum\limits_{p\in\{\text{u,d,s}\}} A_{pq}\Ket{\varepsilon;s,p}\quad\forall A\in\text{SU}(3)\,\forall s\in\{1,2\}\,\forall q\in\{\text{u,d,s}\},\\
D^{(\rho)}(A)\Ket{\varepsilon;s,\bar{q}}&\coloneqq \sum\limits_{p\in\{\text{u,d,s}\}} A^\ast_{pq}\Ket{\varepsilon;s,\bar{p}}\quad\forall A\in\text{SU}(3)\,\forall s\in\{1,2\}\,\forall q\in\{\text{u,d,s}\}.
\end{align*}
One can easily check that $D^{(\rho)}$ defined this way is a unitary representation. With this definition of $D^{(\rho)}$, we find:
\begin{align*}
&\Braket{\varepsilon;s,p|D^{(\rho)}(A)^\dagger\circ H\circ D^{(\rho)}(A)|\varepsilon;t,l}\\
= &\Braket{\varepsilon;s,p|D^{(\rho)}(A)^\dagger\left(\sum\limits_{q\in\{\text{u,d,s}\}}\int\frac{d^3k}{(2\pi)^3}\sum\limits_{r=1,2}E_{k;q}a^{r\dagger}_{k;q}a^r_{k;q}\right)D^{(\rho)}(A)|\varepsilon;t,l}\\
= &\left(\sum\limits_{\tilde{p}\in\{\text{u,d,s}\}} A_{\tilde{p}p} \Ket{\varepsilon;s,\tilde{p}}\right)^\dagger \left(\sum\limits_{q\in\{\text{u,d,s}\}}\int\frac{d^3k}{(2\pi)^3}\sum\limits_{r=1,2}E_{k;q}a^{r\dagger}_{k;q}a^r_{k;q}\right) \left(\sum\limits_{\tilde{l}\in\{\text{u,d,s}\}} A_{\tilde{l}l} \Ket{\varepsilon;t,\tilde{l}}\right)\\
= &\sum\limits_{\tilde{l},\tilde{p},q\in\{\text{u,d,s}\}} A^\ast_{\tilde{p}p} A_{\tilde{l}l} \int\frac{d^3k}{(2\pi)^3}\sum\limits_{r=1,2} E_{k;q}\Braket{\varepsilon;s,\tilde{p}|a^{r\dagger}_{k;q}a^r_{k;q}|\varepsilon;t,\tilde{l}}\\
= &\sum\limits_{\tilde{l},\tilde{p},q\in\{\text{u,d,s}\}} A^\ast_{\tilde{p}p} A_{\tilde{l}l} \int\frac{d^3k}{(2\pi)^3}\sum\limits_{r=1,2} E_{k;q}\ (2\pi)^3\ (2\pi\varepsilon)^{-\frac{3}{2}}\ \delta^{rs}\,\delta^{rt}\,\delta_{q\tilde{p}}\,\delta_{q\tilde{l}}\ e^{-\frac{{\vec k}^{\, 2}}{2\varepsilon}}\\
= &\sum\limits_{q\in\{\text{u,d,s}\}} A^\ast_{qp}A_{ql} \int d^3k\ \frac{e^{-\frac{{\vec k}^{\, 2}}{2\varepsilon}}}{(2\pi\varepsilon)^{\frac{3}{2}}}\ E_{k;q}\,\delta^{st},
\end{align*}
where we used the following relation:
\begin{align*}
&\Braket{\varepsilon;s,\tilde{p}|a^{r\dagger}_{k;q}a^r_{k;q}|\varepsilon;t,\tilde{l}}\\
= &\, (2\pi)^{-3}\, (2\pi\varepsilon)^{-\frac{3}{2}}\int d^3k^\prime\int d^3k^{\prime\prime} e^{-\frac{{\vec k}^{\, \prime 2} + {\vec k}^{\, \prime\prime 2}}{4\varepsilon}} \Braket{{\vec k}^{\, \prime}; s, \tilde{p}|a^{r\dagger}_{k;q}a^{r}_{k;q}|{\vec k}^{\, \prime\prime}; t, \tilde{l}}\\
= &\, (2\pi)^{-3}\, (2\pi\varepsilon)^{-\frac{3}{2}}\int d^3k^\prime\int d^3k^{\prime\prime} e^{-\frac{{\vec k}^{\, \prime 2} + {\vec k}^{\, \prime\prime 2}}{4\varepsilon}} \left\{a^{s}_{k^\prime; \tilde{p}}, a^{r\dagger}_{k;q}\right\} \left\{a^{r}_{k; q}, a^{t\dagger}_{k^{\prime\prime};\tilde{l}}\right\}\\
= &\, (2\pi)^{-3}\, (2\pi\varepsilon)^{-\frac{3}{2}}\int d^3k^\prime\int d^3k^{\prime\prime} e^{-\frac{{\vec k}^{\, \prime 2} + {\vec k}^{\, \prime\prime 2}}{4\varepsilon}} (2\pi)^6\delta^{sr}\delta^{rt}\delta_{\tilde{p}q}\delta_{q\tilde{l}}\delta^{(3)}\left({\vec k}^{\, \prime} - \vec k\right)\delta^{(3)}\left(\vec k - {\vec k}^{\, \prime\prime}\right)\\
= &\, (2\pi)^3\, (2\pi\varepsilon)^{-\frac{3}{2}}\ \delta^{rs}\,\delta^{rt}\,\delta_{q\tilde{p}}\,\delta_{q\tilde{l}}\ e^{-\frac{{\vec k}^{\, 2}}{2\varepsilon}}.
\end{align*}
This result corresponds to the left-hand side of the equation form 3). In order to calculate the right-hand side, we have to calculate $H(\bar{q}^{\, \prime}, q^\prime)$ where\linebreak $q^\prime = \sum\limits_{\tilde{l}\in\{\text{u,d,s}\}} A_{q\tilde{l}}\,  \tilde{l}$. $H(\bar{q},q)$ is the normal ordered Hamilton operator:
\begin{align*}
H(\bar{q},q) &= \mathopen{:} \sum_{q\in\{\text{u,d,s}\}}\int d^3x\,\bar{q}\left(-i\vec\gamma\cdot\vec\nabla + m_q\right)q \mathclose{:}\\
&= \sum_{q\in\{\text{u,d,s}\}}\int\frac{d^3p}{(2\pi)^3}\sum_{s=1,2}E_{p;q}\left(a^{s\dagger}_{p;q}a^{s}_{p;q} + b^{s\dagger}_{p;q}b^{s}_{p;q}\right).
\end{align*}
$H(\bar{q}^{\, \prime}, q^\prime)$ is, therefore, given by:
\begin{align*}
H(\bar{q}^{\, \prime}, q^\prime) &= \mathopen{:} \sum_{q\in\{\text{u,d,s}\}}\int d^3x\,\bar{q}^{\, \prime}\left(-i\vec\gamma\cdot\vec\nabla + m_q\right)q^\prime \mathclose{:}\\
&= \mathopen{:} \sum_{q\in\{\text{u,d,s}\}}\int d^3x\,\left(\sum\limits_{\tilde{p}\in\{\text{u,d,s}\}} A^\ast_{q\tilde{p}}\,\bar{\tilde{p}}\right)\left(-i\vec\gamma\cdot\vec\nabla + m_q\right)\left(\sum\limits_{\tilde{l}\in\{\text{u,d,s}\}} A_{q\tilde{l}}\, \tilde{l}\right) \mathclose{:}\\
&= \sum\limits_{\tilde{l},\tilde{p},q\in\{\text{u,d,s}\}} A^\ast_{q\tilde{p}} A_{q\tilde{l}}\int d^3x\ \mathopen{:}\bar{\tilde{p}}\left(-i\vec\gamma\cdot\vec\nabla + m_q\right)\tilde{l}\mathclose{:}.
\end{align*}
The Fourier decomposition of $\tilde{l}(x)$ and $\bar{\tilde{p}}(x)$ then gives:
\begin{align*}
&\left(-i\vec\gamma\cdot\vec\nabla + m_q\right)\tilde{l}(x)\\
=& \int\frac{d^3k^\prime}{(2\pi)^3}\, \frac{1}{\sqrt{2 E_{k^\prime;\tilde{l}}}}\\
&\times\sum\limits_{s=1,2}\left[a^s_{k^\prime;\tilde{l}}\left({\vec k}^{\, \prime}\cdot\vec\gamma + m_q\right)u^s_{\tilde{l}}({\vec k}^{\, \prime})e^{-ik^\prime\cdot x} + b^{s\dagger}_{k^\prime;\tilde{l}}\left(-{\vec k}^{\, \prime}\cdot\vec\gamma + m_q\right)v^s_{\tilde{l}}({\vec k}^{\, \prime})e^{ik^\prime\cdot x}\right],\\
\bar{\tilde{p}}(x) =& \int\frac{d^3k}{(2\pi)^3}\frac{1}{\sqrt{2E_{k;\tilde{p}}}}\sum_{r=1,2}\left(a^{r\dagger}_{k;\tilde{p}}\bar{u}_{\tilde{p}}^r(\vec k)e^{ik\cdot x} + b^{r}_{k;\tilde{p}}\bar{v}^r_{\tilde{p}}(\vec k)e^{-ik\cdot x}\right).
\end{align*}
Inserting this in $\mathopen{:}\bar{\tilde{p}}\left(-i\vec\gamma\cdot\vec\nabla + m_q\right)\tilde{l}\mathclose{:}$ from $H(\bar{q}^{\, \prime}, q^\prime)$ yields:
\begin{align*}
&\mathopen{:}\bar{\tilde{p}}\left(-i\vec\gamma\cdot\vec\nabla + m_q\right)\tilde{l}\mathclose{:}\\
=& \int\frac{d^3k}{(2\pi)^3}\int\frac{d^3k^\prime}{(2\pi)^3}\, \frac{1}{2\sqrt{E_{k;\tilde{p}} E_{k^\prime;\tilde{l}}}}\\
&\times\sum\limits_{r,s = 1,2}\left[a^{r\dagger}_{k;\tilde{p}}a^s_{k^\prime;\tilde{l}}\bar{u}^r_{\tilde{p}}(\vec k)\left({\vec k}^{\, \prime}\cdot\vec\gamma + m_q\right)u^s_{\tilde{l}}({\vec k}^{\, \prime}) e^{i(k-k^\prime)\cdot x}\ +\ \text{terms with }b\right].
\end{align*}
Using $\int d^3x\ e^{i(k-k^\prime)\cdot x} = (2\pi)^3 \delta^{(3)}({\vec k}^{\, \prime} - \vec k)e^{i(E_{k;\tilde{p}} - E_{k^\prime ;\tilde{l}})t}$, we obtain:
\begin{align*}
&\int d^3x\ \mathopen{:}\bar{\tilde{p}}\left(-i\vec\gamma\cdot\vec\nabla + m_q\right)\tilde{l}\mathclose{:}\\
=& \int\frac{d^3k}{(2\pi)^3}\, \frac{1}{2\sqrt{E_{k;\tilde{p}} E_{k;\tilde{l}}}}\\
&\times\sum\limits_{r,s = 1,2}\left[a^{r\dagger}_{k;\tilde{p}}a^s_{k;\tilde{l}}\bar{u}^r_{\tilde{p}}(\vec k)\left(\vec k\cdot\vec\gamma + m_q\right)u^s_{\tilde{l}}(\vec k) e^{i(E_{k;\tilde{p}} - E_{k;\tilde{l}})t}\ +\ \text{terms with }b\right].
\end{align*}
This implies that the operator $H(\bar{q}^{\, \prime}, q^\prime)$ has, in general, a non-trivial time dependence, while the left-hand side of the equation from 3) is time-independent. However, we only need assumption 3) to be true at one point in time. Therefore, we can choose $t=0$ and find for the right-hand side of the equation from 3):
\begin{align*}
& \Braket{\varepsilon;s,p|H(\bar{q}^{\, \prime}, q^\prime)|\varepsilon;t,l}\\
=& \sum\limits_{\tilde{l},\tilde{p},q\in\{\text{u,d,s}\}} A^\ast_{q\tilde{p}} A_{q\tilde{l}}\int\frac{d^3k}{(2\pi)^3}\, \frac{1}{2\sqrt{E_{k;\tilde{p}} E_{k;\tilde{l}}}}\\
&\times\sum\limits_{r,\tilde{r} = 1,2} \Braket{\varepsilon;s,p|a^{r\dagger}_{k;\tilde{p}}a^{\tilde{r}}_{k;\tilde{l}}|\varepsilon;t,l}\bar{u}^r_{\tilde{p}}(\vec k)\left(\vec k\cdot\vec\gamma + m_q\right)u^{\tilde{r}}_{\tilde{l}}(\vec k)\\
=& \sum\limits_{\tilde{l},\tilde{p},q\in\{\text{u,d,s}\}} A^\ast_{q\tilde{p}} A_{q\tilde{l}}\int\frac{d^3k}{(2\pi)^3}\, \frac{1}{2\sqrt{E_{k;\tilde{p}} E_{k;\tilde{l}}}}\\
&\times\sum\limits_{r,\tilde{r} = 1,2} (2\pi)^3\, (2\pi\varepsilon)^{-\frac{3}{2}}\ \delta^{rs}\,\delta^{\tilde{r}t}\,\delta_{p\tilde{p}}\,\delta_{l\tilde{l}}\ e^{-\frac{{\vec k}^{\, 2}}{2\varepsilon}}\bar{u}^r_{\tilde{p}}(\vec k)\left(\vec k\cdot\vec\gamma + m_q\right)u^{\tilde{r}}_{\tilde{l}}(\vec k)\\
=& \sum\limits_{q\in\{\text{u,d,s}\}} A^\ast_{qp} A_{ql}\int d^3k\, \frac{e^{-\frac{{\vec k}^{\, 2}}{2\varepsilon}}}{(2\pi\varepsilon)^{\frac{3}{2}}}\, \frac{1}{2\sqrt{E_{k;p} E_{k;l}}} \bar{u}^s_{p}(\vec k)\left(\vec k\cdot\vec\gamma + m_q\right)u^{t}_{l}(\vec k)\\
\eqqcolon& \sum\limits_{q\in\{\text{u,d,s}\}} A^\ast_{qp} A_{ql}\int d^3k\, \frac{e^{-\frac{{\vec k}^{\, 2}}{2\varepsilon}}}{(2\pi\varepsilon)^{\frac{3}{2}}}\, \tilde{E}(\vec k;q,p,l,s,t).
\end{align*}
%
%
With this, we have written the left-hand and right-hand side of the equation from 3) in a very similar form. They only differ in the integrand of the 3-momentum integration. As we found that assumption 1) is fulfilled in the case of $\varepsilon\rightarrow 0^+$, we expect assumption 3) to be valid in the same limit. To show this, we note that the $\varepsilon$-dependence of both integrands is given by multiplication with:
\begin{gather*}
\frac{e^{-\frac{{\vec k}^{\, 2}}{2\varepsilon}}}{(2\pi\varepsilon)^{\frac{3}{2}}} = \prod\limits^3_{i=1} \frac{e^{-\frac{k^2_i}{2\varepsilon}}}{\sqrt{2\pi\varepsilon}}.
\end{gather*}
This is a nascent delta function known as the heat kernel. In the limit of $\varepsilon\rightarrow 0^+$, the heat kernel can be replaced by a delta distribution of $\vec k$. With this, one obtains:
\begin{align*}
&\lim\limits_{\varepsilon\rightarrow 0^+}\left(\Braket{\varepsilon;s,p|D^{(\rho)}(A)^\dagger\circ H\circ D^{(\rho)}(A)|\varepsilon;t,l}\right)\\
=& \sum\limits_{q\in\{\text{u,d,s}\}} A^\ast_{qp}A_{ql} \int d^3k\, \delta^{(3)}(\vec k)\, E_{k;q}\,\delta^{st}\\
=& \sum\limits_{q\in\{\text{u,d,s}\}} A^\ast_{qp}A_{ql}\, m_q\,\delta^{st},
\end{align*}
\begin{align*}
&\lim\limits_{\varepsilon\rightarrow 0^+}\left(\Braket{\varepsilon;s,p|H(\bar{q}^{\, \prime}, q^\prime)|\varepsilon;t,l}\right)\\
=& \sum\limits_{q\in\{\text{u,d,s}\}} A^\ast_{qp} A_{ql}\int d^3k\, \delta^{(3)}(\vec k)\, \tilde{E}(\vec k;q,p,l,s,t)\\
=& \sum\limits_{q\in\{\text{u,d,s}\}} A^\ast_{qp} A_{ql}\, \tilde{E}(0;q,p,l,s,t)\\
=& \sum\limits_{q\in\{\text{u,d,s}\}} A^\ast_{qp} A_{ql}\, \frac{1}{2\sqrt{m_p m_l}}m_q\, \bar{u}^s_{p}(0)u^{t}_{l}(0)\\
=& \sum\limits_{q\in\{\text{u,d,s}\}} A^\ast_{qp} A_{ql}\, \frac{1}{2\sqrt{m_p m_l}}m_q\, 2\sqrt{m_p m_l}\, \delta^{st}\\
=& \sum\limits_{q\in\{\text{u,d,s}\}} A^\ast_{qp}A_{ql}\, m_q\,\delta^{st}\\
=& \lim\limits_{\varepsilon\rightarrow 0^+}\left(\Braket{\varepsilon;s,p|D^{(\rho)}(A)^\dagger\circ H\circ D^{(\rho)}(A)|\varepsilon;t,l}\right),
\end{align*}
where used that the spinors are given by (cf. \cite{Peskin})
\begin{gather*}
\bar{u}^s_p(0) = \sqrt{m_p}(\xi^{s\dagger}\ \xi^{s\dagger})\text{ and }u^{t}_{l}(0) = \sqrt{m_l}\begin{pmatrix}\xi^t\\\xi^t\end{pmatrix} \text{ with } \xi^{s\dagger}\xi^t = \delta^{st}
\end{gather*}
in the Weyl representation. In the sense of the limit $\varepsilon\rightarrow 0^+$, we have shown assumption 3) for the particle states $\Ket{\varepsilon;s,q}$. For the other states in $V_\varepsilon$, we can proceed in similar fashion to show assumption 3).\par
At this point, it is helpful to recapitulate what we have done. For $\varepsilon>0$, we have defined normalized states $\Ket{\varepsilon;s,q}$ and $\Ket{\varepsilon;s,\bar{q}}$. As $\varepsilon$ tends to $0$, these states exhibit the very same properties as states satisfying assumption 1), 2), and 3). However, as stated earlier, it is neither clear that the states converge for $\varepsilon\rightarrow 0^+$ nor obvious that the limit, if it exists, has these desired properties. To this end, we have to understand the assumptions 1), 2), and 3) in a different way. For instance, we have to understand 1) as:
\begin{itemize}
 \item[1)] For every hadron $a$, there exists a parameter $\varepsilon\geq0$ and a state $\Ket{\varepsilon;a}$ with $\Braket{\varepsilon;a|\varepsilon;a} = 1$ such that the mass $m_a$ of $a$ is given by
 \begin{align*}
 m_a + \mathcal{O}(\varepsilon) &= \Braket{\varepsilon;a|H|\varepsilon;a}\text{ with}\\
 m^2_a + \mathcal{O}(\varepsilon) &= \Braket{\varepsilon;a|H^2|\varepsilon;a}.
 \end{align*}
\end{itemize}
If $\varepsilon$ is 0, we recover the original formulation of 1). In this sense, we may view $\varepsilon$ as a parameter that states how ``strongly'' the assumptions 1), 2), and 3) are violated. In the case of the free fields, we have seen that we can choose $\varepsilon$ to be arbitrarily small. However, it is not clear that this is the case for all theories and, in particular, QCD to which we apply the assumptions 1), 2), and 3) in this work. Nevertheless, the assumptions 1), 2), and 3) are used throughout this thesis as they were presented in the beginning of this section, since these assumptions are only used to calculate relations of the hadron masses in a perturbative approach. If $\varepsilon$ is indeed greater than 0, we might pick up corrections of order $\varepsilon$ in the mass relations.\par
This is a rather interesting point, especially if we try to interpret $\varepsilon$. The $\varepsilon$ we introduced for free fields can be considered to be the momentum fluctuation as:
\begin{gather*}
\Braket{\varepsilon;s,q|\Delta P_i^2|\varepsilon;s,q} = \Braket{\varepsilon;s,q|P_i^2|\varepsilon;s,q} = \varepsilon.
\end{gather*}
Therefore, assuming 1), 2), and 3) in their original form might only be a valid approximation if the hadron at hand can be seen as a non-relativistic system.

\chapter{Clebsch-Gordan Series of $D(p,q)\otimes D(1,1)$}\label{app:Young}

As we have seen in \autoref{sec:GMO_formula}, the Clebsch-Gordan series of $\sigma\otimes 8$ for finite-dimensional multiplets $\sigma$ can be used to calculate the number of octets in $\sigma\otimes\bar{\sigma}$. Therefore, these Clebsch-Gordan series are of great importance for the derivation of the GMO mass relations. We want to compute these series now by using Young tableaux. Young tableaux were introduced in \autoref{sec:GMO_formula}. The rules on how to use Young tableaux to compute the Clebsch-Gordan series of tensor product representations are not presented in this thesis, but can commonly be found in literature, for instance in \cite{Lichtenberg} and in review \textit{46. $\text{SU}(n)$ Multiplets and Young Diagrams} in \cite{PDG}. For these calculations, we denote $8$ with $D(1,1)$ and $\sigma$ with $D(p,q)$. This notation was also introduced in \autoref{sec:GMO_formula}.\par
Before we start with the actual computations, we can derive one relation that simplifies the following calculations: Suppose that we have found the Clebsch-Gordan series of $D(p,q)\otimes D(1,1)$:
\begin{gather*}
D(p,q)\otimes D(1,1) = \bigoplus_{P,Q} n_{D(P,Q)}(D(p,q)\otimes D(1,1))\ D(P,Q),
\end{gather*}
where $n_{D(P,Q)}(D(p,q)\otimes D(1,1))$ denotes the multiplicity of $D(P,Q)$ in {${D(p,q)\otimes D(1,1)}$}. We then have:
\begin{align*}
D(q,p)\otimes D(1,1) &= \overline{D(p,q)}\otimes\overline{D(1,1)} = \overline{D(p,q)\otimes D(1,1)}\\
&= \overline{\bigoplus_{P,Q} n_{D(P,Q)}(D(p,q)\otimes D(1,1))\ D(P,Q)}\\
&= \bigoplus_{P,Q} n_{D(P,Q)}(D(p,q)\otimes D(1,1))\  \overline{D(P,Q)}\\
&= \bigoplus_{P,Q} n_{D(P,Q)}(D(p,q)\otimes D(1,1))\ D(Q,P).
\end{align*}
This means that the Clebsch-Gordan series of {${D(p,q)\otimes D(1,1)}$} directly implies the Clebsch-Gordan series of {${D(q,p)\otimes D(1,1)}$}. Let us now turn to the computations:\\\\
\textbf{Singlet (}$p=0$\textbf{ and }$q=0$\textbf{)}
\begin{align*}\ytableausetup{boxsize = 1.3em}
D(0,0)\otimes D(1,1) = \emptyset\otimes\ydiagram{2,1} = \ydiagram{2,1} = D(1,1)
\end{align*}
\newpage
\noindent\textbf{Totally symmetric multiplets (}$p\geq 1$\textbf{ and }$q=0$\textbf{ or }$p=0$\textbf{ and }$q\geq 1$\textbf{)}
\begin{align*}
D(1,0)\otimes D(1,1) &= \ydiagram{1}\otimes\ydiagram{2,1} = \ydiagram{1}\otimes\ytableaushort{aa,b} = \left(\ytableaushort{\, a}\oplus\ytableaushort{\,,a}\right)\otimes\ytableaushort{a,b}\\
&= \left(\ytableaushort{\, a,a}\oplus\ytableaushort{\, aa}\right)\otimes\ytableaushort{b}\\
&= \ytableaushort{\, a,a,b}\oplus\ytableaushort{\, a,ab}\oplus\ytableaushort{\, aa,b}\\
&= \ydiagram{1}\oplus\ydiagram{2,2}\oplus\ydiagram{3,1}\\
&= D(1,0)\oplus D(0,2)\oplus D(2,1)\\\\
D(p,0)\otimes D(1,1) &= \begin{ytableau}\, & \none[\dots] & \,\end{ytableau}\otimes\ydiagram{2,1}= \begin{ytableau}\, & \none[\dots] & \,\end{ytableau}\otimes\ytableaushort{aa,b}\\
&= \left(\begin{ytableau}\, & \none[\dots] & \, & a\end{ytableau}\oplus\begin{ytableau}\, & \, & \none[\dots] & \,\\ a\end{ytableau}\right)\otimes\ytableaushort{a,b}\\
&= \left(\begin{ytableau}\, & \none[\dots] & \, & a & a\end{ytableau}\oplus\begin{ytableau}\, & \, & \none[\dots] & \, & a\\ a\end{ytableau}\oplus\begin{ytableau}\, & \, & \, & \none[\dots] & \,\\ a & a\end{ytableau}\right)\otimes\ytableaushort{b}\\
&= \begin{ytableau}\, & \, & \none[\dots] & \, & a & a\\ b\end{ytableau}\oplus\begin{ytableau}\, & \, & \, & \none[\dots] & \, & a\\ a & b\end{ytableau}\oplus\begin{ytableau}\, & \, & \none[\dots] & \, & a\\ a\\ b\end{ytableau}\\ &\ \ \ \oplus\begin{ytableau}\, & \, & \, & \none[\dots] & \,\\ a & a \\ b\end{ytableau}\\
&= \begin{ytableau}\, & \, & \none[\dots] & \, & \, & \,\\ \,\end{ytableau}\oplus\begin{ytableau}\, & \, & \, & \none[\dots] & \, & \,\\ \, & \,\end{ytableau}\oplus\begin{ytableau}\, & \, & \none[\dots] & \, & \,\\ \,\\ \,\end{ytableau}\\ &\ \ \ \oplus\begin{ytableau}\, & \, & \, & \none[\dots] & \,\\ \, & \, \\ \,\end{ytableau}\\
&= \begin{ytableau}\, & \, & \none[\dots] & \, & \, & \,\\ \,\end{ytableau}\oplus\begin{ytableau}\, & \, & \, & \none[\dots] & \, & \,\\ \, & \,\end{ytableau}\oplus\begin{ytableau}\, & \none[\dots] & \, & \,\end{ytableau}\\ &\ \ \ \oplus\begin{ytableau}\, & \, & \none[\dots] & \,\\ \,\end{ytableau}\\
&= D(p+1,1)\oplus D(p-1,2)\oplus D(p,0)\oplus D(p-2,1),\quad p\geq 2
\end{align*}
The Clebsch-Gordan series of $D(0,q)\otimes D(1,1)$ follows directly from the Clebsch-Gordan series of $D(p,0)\otimes D(1,1)$:
\begin{align*}
D(0,1)\otimes D(1,1) &= D(0,1)\oplus D(2,0)\oplus D(1,2)\\
D(0,q)\otimes D(1,1) &= D(1,q+1)\oplus D(2,q-1)\oplus D(0,q)\oplus D(1,q-2),\quad q\geq 2
\end{align*}\\
\textbf{Remaining multiplets (}$p \geq 1$\textbf{ and }$q \geq 1$\textbf{)}
\begin{align*}
D(1,1)\otimes D(1,1) &= \ydiagram{2,1}\otimes\ydiagram{2,1} = \ydiagram{2,1}\otimes\ytableaushort{aa,b}\\
&= \left(\ytableaushort{\,\,a,\,}\oplus\ytableaushort{\,\,,\,a}\oplus\ytableaushort{\,\,,\,,a}\right)\otimes\ytableaushort{a,b}\\
&= \left(\ytableaushort{\,\,aa,\,}\oplus\ytableaushort{\,\,a,\,a}\oplus\ytableaushort{\,\,a,\,,a}\oplus\ytableaushort{\,\,,\,a,a}\right)\otimes\ytableaushort{b}\\
&= \ytableaushort{\,\,aa,\,b}\oplus\ytableaushort{\,\,aa,\,,b}\oplus\ytableaushort{\,\,a,\,ab}\oplus\ytableaushort{\,\,a,\,a,b}\oplus\ytableaushort{\,\,a,\,b,a}\\ &\ \ \ \oplus\ytableaushort{\,\,,\,a,ab}\\
&= \ydiagram{4,2}\oplus\ydiagram{4,1,1}\oplus\ydiagram{3,3}\oplus\ydiagram{3,2,1}\oplus\ydiagram{3,2,1}\\ &\ \ \ \oplus\ydiagram{2,2,2}\\
&= \emptyset\oplus\ydiagram{2,1}\oplus\ydiagram{2,1}\oplus\ydiagram{3}\oplus\ydiagram{3,3}\oplus\ydiagram{4,2}\\
&= D(0,0)\oplus D(1,1)\oplus D(1,1)\oplus D(3,0)\oplus D(0,3)\oplus D(2,2)
\end{align*}
\begin{align*}
D(p,1)\otimes D(1,1) &= \begin{ytableau}\, & \, & \none[\dots] & \,\\ \,\end{ytableau}\otimes\ydiagram{2,1} = \begin{ytableau}\, & \, & \none[\dots] & \,\\ \,\end{ytableau}\otimes\ytableaushort{aa,b}\\
&= \left(\begin{ytableau}\, & \, & \none[\dots] & \, & a\\ \,\end{ytableau}\oplus\begin{ytableau}\, & \, & \, & \none[\dots] & \,\\ \, & a\end{ytableau}\oplus\begin{ytableau}\, & \, & \none[\dots] & \,\\ \,\\ a\end{ytableau}\right)\otimes\ytableaushort{a,b}\\
&= \left(\begin{ytableau}\, & \, & \none[\dots] & \, & a & a\\ \,\end{ytableau}\oplus\begin{ytableau}\, & \, & \, & \none[\dots] & \, & a\\ \, & a\end{ytableau}\oplus\begin{ytableau}\, & \, & \none[\dots] & \, & a\\ \,\\ a\end{ytableau}\right.\\ &\ \ \ \left.\oplus\begin{ytableau}\, & \, & \, & \, & \none[\dots] & \,\\ \, & a & a\end{ytableau}\oplus\begin{ytableau}\, & \, & \, & \none[\dots] & \,\\ \, & a\\ a\end{ytableau}\right)\otimes\ytableaushort{b}\\
&= \begin{ytableau}\, & \, & \, & \none[\dots] & \, & a & a\\ \, & b\end{ytableau}\oplus\begin{ytableau}\, & \, & \none[\dots] & \, & a & a\\ \,\\ b\end{ytableau}\oplus\begin{ytableau}\, & \, & \, & \, & \none[\dots] & \, & a\\ \, & a & b\end{ytableau}\\ &\ \ \ \oplus\begin{ytableau}\, & \, & \, & \none[\dots] & \, & a\\ \, & a\\ b\end{ytableau}\oplus\begin{ytableau}\, & \, & \, & \none[\dots] & \, & a\\ \, & b\\ a\end{ytableau}\oplus\begin{ytableau}\, & \, & \, & \, & \none[\dots] & \,\\ \, & a & a\\ b\end{ytableau}\\ &\ \ \ \oplus\begin{ytableau}\, & \, & \, & \none[\dots] & \,\\ \, & a\\ a & b\end{ytableau}\\
&= \begin{ytableau}\, & \, & \, & \none[\dots] & \, & \, & \,\\ \, & \,\end{ytableau}\oplus\begin{ytableau}\, & \, & \none[\dots] & \, & \, & \,\\ \,\\ \,\end{ytableau}\oplus\begin{ytableau}\, & \, & \, & \, & \none[\dots] & \, & \,\\ \, & \, & \,\end{ytableau}\\ &\ \ \ \oplus\begin{ytableau}\, & \, & \, & \none[\dots] & \, & \,\\ \, & \,\\ \,\end{ytableau}\oplus\begin{ytableau}\, & \, & \, & \none[\dots] & \, & \,\\ \, & \,\\ \,\end{ytableau}\oplus\begin{ytableau}\, & \, & \, & \, & \none[\dots] & \,\\ \, & \, & \,\\ \,\end{ytableau}\\ &\ \ \ \oplus\begin{ytableau}\, & \, & \, & \none[\dots] & \,\\ \, & \,\\ \, & \,\end{ytableau}\\
&= \begin{ytableau}\, & \, & \, & \none[\dots] & \, & \, & \,\\ \, & \,\end{ytableau}\oplus\begin{ytableau}\, & \none[\dots] & \, & \, & \,\end{ytableau}\oplus\begin{ytableau}\, & \, & \, & \, & \none[\dots] & \, & \,\\ \, & \, & \,\end{ytableau}\\ &\ \ \ \oplus\begin{ytableau}\, & \, & \none[\dots] & \, & \,\\ \,\end{ytableau}\oplus\begin{ytableau}\, & \, & \none[\dots] & \, & \,\\ \,\end{ytableau}\oplus\begin{ytableau}\, & \, & \, & \none[\dots] & \,\\ \, & \,\end{ytableau}\oplus\begin{ytableau}\, & \none[\dots] & \,\end{ytableau}\\
&= D(p+1,2)\oplus D(p+2,0)\oplus D(p-1,3)\\ &\ \ \ \oplus D(p,1)\oplus D(p,1)\oplus D(p-2,2)\oplus D(p-1,0),\quad p\geq 2\\
\Rightarrow D(1,q)\otimes D(1,1) &= D(2,q+1)\oplus D(0,q+2)\oplus D(3,q-1)\\ &\ \ \ \oplus D(1,q)\oplus D(1,q)\oplus D(2,q-2)\oplus D(0,q-1),\quad q\geq 2
\end{align*}
\begin{align*}
\hspace{-1.5cm} D(p,q)\otimes D(1,1) &= \begin{ytableau}\, & \none[\dots] & \, & \, & \none[\dots] & \,\\ \, & \none[\dots] & \,\end{ytableau}\otimes\ydiagram{2,1} = \begin{ytableau}\, & \none[\dots] & \, & \, & \none[\dots] & \,\\ \, & \none[\dots] & \,\end{ytableau}\otimes\ytableaushort{aa,b}\\
&= \left(\begin{ytableau}\, & \none[\dots] & \, & \, & \none[\dots] & \, & a\\ \, & \none[\dots] & \,\end{ytableau}\oplus\begin{ytableau}\, & \none[\dots] & \, & \, & \, & \none[\dots] & \,\\ \, & \none[\dots] & \, & a\end{ytableau}\oplus\begin{ytableau}\, & \, & \none[\dots] & \, & \, & \none[\dots] & \,\\ \, & \, & \none[\dots] & \,\\ a\end{ytableau}\right)\\ &\ \ \ \otimes\ytableaushort{a,b}\\
&= \left(\begin{ytableau}\, & \none[\dots] & \, & \, & \none[\dots] & \, & a & a\\ \, & \none[\dots] & \,\end{ytableau}\oplus\begin{ytableau}\, & \none[\dots] & \, & \, & \, & \none[\dots] & \, & a\\ \, & \none[\dots] & \, & a\end{ytableau}\oplus\begin{ytableau}\, & \, & \none[\dots] & \, & \, & \none[\dots] & \, & a\\ \, & \, & \none[\dots] & \,\\ a\end{ytableau}\right.\\ &\ \ \ \left.\oplus\begin{ytableau}\, & \none[\dots] & \, & \, & \, & \, & \none[\dots] & \,\\ \, & \none[\dots] & \, & a & a\end{ytableau}\oplus\begin{ytableau}\, & \, & \none[\dots] & \, & \, & \, & \none[\dots] & \,\\ \, & \, & \none[\dots] & \, & a\\ a\end{ytableau}\oplus\begin{ytableau}\, & \, & \, & \none[\dots] & \, & \, & \none[\dots] & \,\\ \, & \, & \, & \none[\dots] & \,\\ a & a\end{ytableau}\right)\\ &\ \ \ \otimes\ytableaushort{b}\\
&= \begin{ytableau}\, & \none[\dots] & \, & \, & \, & \none[\dots] & \, & a & a\\ \, & \none[\dots] & \, & b\end{ytableau}\oplus\begin{ytableau}\, & \, & \none[\dots] & \, & \, & \none[\dots] & \, & a & a\\ \, & \, & \none[\dots] & \,\\ b\end{ytableau}\\ &\ \ \ \oplus\begin{ytableau}\, & \none[\dots] & \, & \, & \, & \, & \none[\dots] & \, & a\\ \, & \none[\dots] & \, & a & b\end{ytableau}\oplus\begin{ytableau}\, & \, & \none[\dots] & \, & \, & \, & \none[\dots] & \, & a\\ \, & \, & \none[\dots] & \, & a\\ b\end{ytableau}\\ &\ \ \ \oplus\begin{ytableau}\, & \, & \none[\dots] & \, & \, & \, & \none[\dots] & \, & a\\ \, & \, & \none[\dots] & \, & b\\ a\end{ytableau}\oplus\begin{ytableau}\, & \, & \, & \none[\dots] & \, & \, & \none[\dots] & \, & a\\ \, & \, & \, & \none[\dots] & \,\\ a & b\end{ytableau}\\ &\ \ \ \oplus\begin{ytableau}\, & \, & \none[\dots] & \, & \, & \, & \, & \none[\dots] & \,\\ \, & \, & \none[\dots] & \, & a & a\\ b\end{ytableau}\oplus\begin{ytableau}\, & \, & \, & \none[\dots] & \, & \, & \, & \none[\dots] & \,\\ \, & \, & \, & \none[\dots] & \, & a\\ a & b\end{ytableau}
\end{align*}
\begin{align*}
\Rightarrow D(p,q)\otimes D(1,1) &= \begin{ytableau}\, & \none[\dots] & \, & \, & \, & \none[\dots] & \, & \, & \,\\ \, & \none[\dots] & \, & \,\end{ytableau}\oplus\begin{ytableau}\, & \, & \none[\dots] & \, & \, & \none[\dots] & \, & \, & \,\\ \, & \, & \none[\dots] & \,\\ \,\end{ytableau}\\ &\ \ \ \oplus\begin{ytableau}\, & \none[\dots] & \, & \, & \, & \, & \none[\dots] & \, & \,\\ \, & \none[\dots] & \, & \, & \,\end{ytableau}\oplus\begin{ytableau}\, & \, & \none[\dots] & \, & \, & \, & \none[\dots] & \, & \,\\ \, & \, & \none[\dots] & \, & \,\\ \,\end{ytableau}\\ &\ \ \ \oplus\begin{ytableau}\, & \, & \none[\dots] & \, & \, & \, & \none[\dots] & \, & \,\\ \, & \, & \none[\dots] & \, & \,\\ \,\end{ytableau}\oplus\begin{ytableau}\, & \, & \, & \none[\dots] & \, & \, & \none[\dots] & \, & \,\\ \, & \, & \, & \none[\dots] & \,\\ \, & \,\end{ytableau}\\ &\ \ \ \oplus\begin{ytableau}\, & \, & \none[\dots] & \, & \, & \, & \, & \none[\dots] & \,\\ \, & \, & \none[\dots] & \, & \, & \,\\ \,\end{ytableau}\oplus\begin{ytableau}\, & \, & \, & \none[\dots] & \, & \, & \, & \none[\dots] & \,\\ \, & \, & \, & \none[\dots] & \, & \,\\ \, & \,\end{ytableau}\\
&= \begin{ytableau}\, & \none[\dots] & \, & \, & \, & \none[\dots] & \, & \, & \,\\ \, & \none[\dots] & \, & \,\end{ytableau}\oplus\begin{ytableau}\, & \none[\dots] & \, & \, & \none[\dots] & \, & \, & \,\\ \, & \none[\dots] & \,\end{ytableau}\\ &\ \ \ \oplus\begin{ytableau}\, & \none[\dots] & \, & \, & \, & \, & \none[\dots] & \, & \,\\ \, & \none[\dots] & \, & \, & \,\end{ytableau}\oplus\begin{ytableau}\, & \none[\dots] & \, & \, & \, & \none[\dots] & \, & \,\\ \, & \none[\dots] & \, & \,\end{ytableau}\\ &\ \ \ \oplus\begin{ytableau}\, & \none[\dots] & \, & \, & \, & \none[\dots] & \, & \,\\ \, & \none[\dots] & \, & \,\end{ytableau}\oplus\begin{ytableau}\, & \none[\dots] & \, & \, & \none[\dots] & \, & \,\\ \, & \none[\dots] & \,\end{ytableau}\\ &\ \ \ \oplus\begin{ytableau}\, & \none[\dots] & \, & \, & \, & \, & \none[\dots] & \,\\ \, & \none[\dots] & \, & \, & \,\end{ytableau}\oplus\begin{ytableau}\, & \none[\dots] & \, & \, & \, & \none[\dots] & \,\\ \, & \none[\dots] & \, & \,\end{ytableau}\\
&= D(p+1,q+1)\oplus D(p+2,q-1)\oplus D(p-1,q+2)\oplus D(p,q)\oplus D(p,q)\\ &\ \ \ \oplus D(p+1,q-2)\oplus D(p-2,q+1)\oplus D(p-1,q-1),\quad p,q\geq 2
\end{align*}

\chapter{Properties of $F^{(\sigma\otimes\bar{\sigma})}_k$ and $D^{(\sigma\otimes\bar{\sigma})}_k$}\label{app:F-and_D-symbols}

In \autoref{sec:GMO_formula}, we found that there are at most two octets in the representation $\sigma\otimes\bar{\sigma}$ of \text{SU}(3) for multiplets $\sigma$ of \text{SU}(3). We parametrized these octets by introducing the matrices $F^{(\sigma\otimes\bar{\sigma})}_k$ and $D^{(\sigma\otimes\bar{\sigma})}_k$ as elements of the vector space the representation $\sigma\otimes\bar{\sigma}$ acts on. However, we skipped some technical and rather lengthy calculations in the process of showing that these matrices actually form the octets in $\sigma\otimes\bar{\sigma}$. In this part of the appendix, we want to perform these computations. Some of the calculations can be found or follow computations in \cite{Lichtenberg}. Throughout this part, $\sigma$ always denotes a non-trivial, complex, finite-dimensional, and unitary multiplet of \text{SU}(3). Before we begin, let us quickly summarize the most important results and formulae given in \autoref{sec:GMO_formula}:
\begin{gather}
D^{(\sigma\otimes\bar{\sigma})}(A)\circ DD^{(\sigma)}\vert_\mathbb{1} = DD^{(\sigma)}\vert_\mathbb{1}\circ D^{(8)}(A)\quad\forall A\in\text{SU}(3),\label{eq:sigma8}\\
F^{(\sigma\otimes\bar{\sigma})}_k\coloneqq DD^{(\sigma)}\vert_\mathbb{1}(F_k)\quad\forall k\in\{1,\ldots, 8\}\label{eq:Fdef},\\
D^{(\sigma\otimes\bar{\sigma})}_k\coloneqq \frac{2}{3}\sum^{8}_{l,m=1}d_{klm} F^{(\sigma\otimes\bar{\sigma})}_l F^{(\sigma\otimes\bar{\sigma})}_m\quad\forall k\in\{1,\ldots, 8\},\\
[F_k,F_l] = \sum^{8}_{m=1}if_{klm} F_m\quad\forall k,l\in\{1,\ldots, 8\},\label{eq:com_F}\\
[F^{(\sigma\otimes\bar{\sigma})}_k,F^{(\sigma\otimes\bar{\sigma})}_l] = \sum^{8}_{m=1}if_{klm} F^{(\sigma\otimes\bar{\sigma})}_m\quad\forall k,l\in\{1,\ldots, 8\},\label{eq:com_F_sigma}\\
\text{Tr}(F_kF_l) = \frac{\delta_{kl}}{2}\quad\forall k,l\in\{1,\ldots, 8\},\label{eq:trace}\\
f_{klm} = -2i\text{Tr}\left([F_k,F_l]F_m\right)\quad\forall k,l,m\in\{1,\ldots, 8\},\label{eq:fklm}\\
d_{klm} = 2\text{Tr}\left(\{F_k,F_l\}F_m\right)\quad\forall k,l,m\in\{1,\ldots, 8\}.\label{eq:dklm}
\end{gather}
For definitions and notations, confer \autoref{sec:GMO_formula}.\par
Let us begin with the quadratic Casimir operator $C^{(\sigma\otimes\bar{\sigma})}$:
\begin{gather*}
C^{(\sigma\otimes\bar{\sigma})}\coloneqq \sum^8_{k,l=1}\delta_{kl}F^{(\sigma\otimes\bar{\sigma})}_kF^{(\sigma\otimes\bar{\sigma})}_l.
\end{gather*}
In \autoref{sec:GMO_formula}, we stated that the quadratic Casimir operator commutes with $F^{(\sigma\otimes\bar{\sigma})}_k$ for every $k\in\{1,\ldots, 8\}$. This can be shown with a straightforward calculation using \autoref{eq:com_F_sigma}:
\begin{align*}
\left[C^{(\sigma\otimes\bar{\sigma})}, F^{(\sigma\otimes\bar{\sigma})}_k\right] &= \sum^8_{l,m=1}\delta_{lm}\left[F^{(\sigma\otimes\bar{\sigma})}_lF^{(\sigma\otimes\bar{\sigma})}_m, F^{(\sigma\otimes\bar{\sigma})}_k\right]\\
&= \sum^8_{l,m=1}\delta_{lm}\left(F^{(\sigma\otimes\bar{\sigma})}_l\left[F^{(\sigma\otimes\bar{\sigma})}_m,F^{(\sigma\otimes\bar{\sigma})}_k\right] + \left[F^{(\sigma\otimes\bar{\sigma})}_l,F^{(\sigma\otimes\bar{\sigma})}_k\right]F^{(\sigma\otimes\bar{\sigma})}_m\right)\\
&= \sum^8_{l,m,n=1}\delta_{lm}\left(if_{mkn}F^{(\sigma\otimes\bar{\sigma})}_lF^{(\sigma\otimes\bar{\sigma})}_n + if_{lkn}F^{(\sigma\otimes\bar{\sigma})}_nF^{(\sigma\otimes\bar{\sigma})}_m\right)\\
&= \sum^8_{l,n=1}if_{lkn}\left\{F^{(\sigma\otimes\bar{\sigma})}_l,F^{(\sigma\otimes\bar{\sigma})}_n\right\}\\
&= 0\quad\forall k\in\{1,\ldots,8\}
\end{align*}
The last line follows from the fact that $f_{lkn}$ is antisymmetric in $l$ and $n$ -- this follows directly from \autoref{eq:fklm} -- and $\left\{F^{(\sigma\otimes\bar{\sigma})}_l,F^{(\sigma\otimes\bar{\sigma})}_n\right\}$ is symmetric in $l$ and $n$. A contraction of an antisymmetric with a symmetric tensor always yields zero.\par
In \autoref{sec:GMO_formula}, we also stated that the quadratic Casimir operator $C^{(\sigma\otimes\bar{\sigma})}$ is a constant times the identity. The property that an operator commutes with all generators of a connected, simple, and compact Lie group is sufficient to show that the operator has to be constant for multiplets, nevertheless, we will show explicitly that $C^{(\sigma\otimes\bar{\sigma})} = c^{(\sigma\otimes\bar{\sigma})}\cdot\mathbb{1}$ for some $c^{(\sigma\otimes\bar{\sigma})}\in\mathbb{R}\backslash\{0\}$. In order to do this, we will show that $C^{(\sigma\otimes\bar{\sigma})}$ is a singlet under $\text{SU}(3)$. First, consider the transformation of $F_k\equiv F^{(3\otimes\bar{3})}_k$ under $A\in\text{SU}(3)$ (cf. \autoref{sec:GMO_formula}):
\begin{gather*}
\sum^{8}_{l=1}\left(D^{(8)}(A)\right)_{kl}F_l \coloneqq D^{(8)}(A)(F_k) = D^{(3\otimes\bar{3})}(A)(F_k) = AF_kA^\dagger\quad\forall k\in\{1,\ldots,8\},
\end{gather*}
where $\left(D^{(8)}(A)\right)_{kl}$ are the coefficients mediating the transformation of $F_k$. As the matrices $F_k$ are Hermitian, the coefficients $\left(D^{(8)}(A)\right)_{kl}$ are real:
\begin{align*}
\sum^{8}_{l=1}\left(D^{(8)}(A)\right)^\ast_{kl}F_l &= \left(\sum^{8}_{l=1}\left(D^{(8)}(A)\right)_{kl}F_l\right)^\dagger = \left(AF_kA^\dagger\right)^\dagger = AF_kA^\dagger\\
&= \sum^{8}_{l=1}\left(D^{(8)}(A)\right)_{kl}F_l\quad\forall k\in\{1,\ldots,8\}\\
\Rightarrow \left(D^{(8)}(A)\right)^\ast_{kl} &= \left(D^{(8)}(A)\right)_{kl}\quad\forall k,l\in\{1,\ldots, 8\}.
\end{align*}
Using the cyclicity of the trace and \autoref{eq:trace}, we also find that $D^{(8)}(A)$ is unitary:
\begin{align*}
\delta_{kl} &= 2\text{Tr}\left(F_kF_l\right) = 2\text{Tr}\left(F_kA^\dagger AF_lA^\dagger A\right) = 2\text{Tr}\left(AF_kA^\dagger AF_lA^\dagger\right)\\
&= \sum^8_{m,n=1}\left(D^{(8)}(A)\right)_{km}\left(D^{(8)}(A)\right)_{ln}2\text{Tr}\left(F_mF_n\right)\\
&= \sum^8_{m,n=1}\left(D^{(8)}(A)\right)_{km}\left(D^{(8)}(A)\right)_{ln}\delta_{mn}\\
&= \sum^8_{m=1}\left(D^{(8)}(A)\right)_{km}\left(D^{(8)}(A)^\text{T}\right)_{ml}\quad\forall k,l\in\{1,\ldots,8\}\\
\Rightarrow& \left(D^{(8)}(A)^{-1}\right)_{kl} = \left(D^{(8)}(A)^\text{T}\right)_{kl} = \left(D^{(8)}(A)^\dagger\right)_{kl}\quad\forall k,l\in\{1,\ldots,8\}.
\end{align*}
The fact that the coefficients $\left(D^{(8)}(A)\right)_{kl}$ are real was used in the last line.\par
Using \autoref{eq:sigma8} and \autoref{eq:Fdef}, it is easy to see that the coefficients $\left(D^{(8)}(A)\right)_{kl}$ also mediate the transformation of $F^{(\sigma\otimes\bar{\sigma})}_k$ under $A\in\text{SU}(3)$:
\begin{align*}
D^{(\sigma\otimes\bar{\sigma})}(A)\left(F^{(\sigma\otimes\bar{\sigma})}_k\right) &= \left(DD^{(\sigma)}\vert_\mathbb{1}\circ D^{(8)}(A)\right)\left(F_k\right)\\
&= \sum^{8}_{l=1} \left(D^{(8)}(A)\right)_{kl} F^{(\sigma\otimes\bar{\sigma})}_l\quad\forall k\in\{1,\ldots,8\}.
\end{align*}
Now consider the transformation of the quadratic Casimir $C^{(\sigma\otimes\bar{\sigma})}$ under $A\in\text{SU}(3)$:
\begin{align*}
D^{(\sigma\otimes\bar{\sigma})}(A)\left(C^{(\sigma\otimes\bar{\sigma})}\right) &=D^{(\sigma)}(A)\left(\sum^8_{k,l=1}\delta_{kl}F^{(\sigma\otimes\bar{\sigma})}_kF^{(\sigma\otimes\bar{\sigma})}_l\right)D^{(\sigma)}(A)^\dagger\\
&=\sum^8_{k,l=1}\delta_{kl} D^{(\sigma)}(A)F^{(\sigma\otimes\bar{\sigma})}_k D^{(\sigma)}(A)^\dagger D^{(\sigma)}(A)F^{(\sigma\otimes\bar{\sigma})}_l D^{(\sigma)}(A)^\dagger\\
&= \sum^8_{k,l=1}\delta_{kl} D^{(\sigma\otimes\bar{\sigma})}(A)\left(F^{(\sigma\otimes\bar{\sigma})}_k\right) D^{(\sigma\otimes\bar{\sigma})}(A)\left(F^{(\sigma\otimes\bar{\sigma})}_l\right)\\
&= \sum^8_{k,l,m,n=1}\delta_{kl}\left(D^{(8)}(A)\right)_{km}\left(D^{(8)}(A)\right)_{ln} F^{(\sigma\otimes\bar{\sigma})}_m F^{(\sigma\otimes\bar{\sigma})}_n\\
&= \sum^8_{k,l,m,n=1}\delta_{kl}\left(D^{(8)}(A)^\text{T}\right)_{mk}\left(D^{(8)}(A)^\text{T}\right)_{nl} F^{(\sigma\otimes\bar{\sigma})}_m F^{(\sigma\otimes\bar{\sigma})}_n\\
&= \sum^8_{k,l,m,n=1}\delta_{kl}\left(D^{(8)}(A^{-1})\right)_{mk}\left(D^{(8)}(A^{-1})\right)_{nl} F^{(\sigma\otimes\bar{\sigma})}_m F^{(\sigma\otimes\bar{\sigma})}_n\\
&= \sum^8_{m,n=1}\delta_{mn}F^{(\sigma\otimes\bar{\sigma})}_mF^{(\sigma\otimes\bar{\sigma})}_n\\
&= C^{(\sigma\otimes\bar{\sigma})}.
\end{align*}
This shows that $C^{(\sigma\otimes\bar{\sigma})}$ is a singlet in $\sigma\otimes\bar{\sigma}$. In \autoref{sec:GMO_formula}, we showed that there is exactly one singlet in $\sigma\otimes\bar{\sigma}$, namely the multiples of the identity. This means that $C^{(\sigma\otimes\bar{\sigma})} = c^{(\sigma\otimes\bar{\sigma})}\cdot\mathbb{1}$ for some $c^{(\sigma\otimes\bar{\sigma})}\in\mathbb{C}$. Furthermore, $C^{(\sigma\otimes\bar{\sigma})}$ is an Hermitian matrix, as it is the sum of squares of Hermitian matrices. This enforces $c^{(\sigma\otimes\bar{\sigma})}$ to be real. Moreover, $c^{(\sigma\otimes\bar{\sigma})}$ cannot be zero, since if it was zero, we would find:
\begin{gather*}
0 = v^\dagger C^{(\sigma\otimes\bar{\sigma})}(v) = \sum^8_{k=1} \left|F^{(\sigma\otimes\bar{\sigma})}_k(v)\right|^2\quad\forall v\in V,
\end{gather*}
where $V$ is the vector space $\sigma$ acts on. However, this equation can only be satisfied, if all $F^{(\sigma\otimes\bar{\sigma})}_k$ are zero. This contradicts the statement that the $F^{(\sigma\otimes\bar{\sigma})}_k$ form an octet. Therefore, we find $C^{(\sigma\otimes\bar{\sigma})} = c^{(\sigma\otimes\bar{\sigma})}\cdot\mathbb{1}$ for some $c^{(\sigma\otimes\bar{\sigma})}\in\mathbb{R}\backslash\{0\}$.\par
Let us now turn to the operators $\tilde{F}^{(\sigma\otimes\bar{\sigma})}_k$:
\begin{gather*}
\tilde{F}^{(\sigma\otimes\bar{\sigma})}_k\coloneqq \sum^{8}_{l,m=1}f_{klm}F^{(\sigma\otimes\bar{\sigma})}_lF^{(\sigma\otimes\bar{\sigma})}_m\quad\forall k\in\{1,\ldots,8\}.
\end{gather*}
In \autoref{sec:GMO_formula}, we found that these operators are elements of the octet spanned by $F^{(\sigma\otimes\bar{\sigma})}_k$. We also claimed that these operators are linearly independent. We can convince ourselves that this is true by showing that the matrices $\tilde{F}^{(\sigma\otimes\bar{\sigma})}_k$ form an invariant subspace of the octet. For this, consider the transformation of the structure constants $f_{klm}$ under $A\in\text{SU}(3)$. With \autoref{eq:fklm}, we obtain:
\begin{align*}
f_{klm} &= -2i\text{Tr}\left(\left[F_k,F_l\right]F_m\right) = -2i\text{Tr}\left(\left[F_kA^\dagger A,F_lA^\dagger A\right]F_mA^\dagger A\right)\\
&= -2i\text{Tr}\left(\left[AF_kA^\dagger,AF_lA^\dagger\right]AF_mA^\dagger\right)\\
&= -2i\sum^8_{n,r,s=1}\left(D^{(8)}(A)\right)_{kn}\left(D^{(8)}(A)\right)_{lr}\left(D^{(8)}(A)\right)_{ms}\text{Tr}\left(\left[F_n,F_r\right]F_s\right)\\
&= \sum^8_{n,r,s=1}\left(D^{(8)}(A)\right)_{kn}\left(D^{(8)}(A)\right)_{lr}\left(D^{(8)}(A)\right)_{ms}f_{nrs}\quad\forall k,l,m\in\{1,\ldots,8\}.
\end{align*}
Using the fact that $D^{(8)}(A)$ is unitary and real, we can rewrite this as:
\begin{gather*}
\sum^8_{k=1}\left(D^{(8)}(A)\right)_{kn}f_{klm} = \sum^8_{r,s=1}\left(D^{(8)}(A)\right)_{lr}\left(D^{(8)}(A)\right)_{ms}f_{nrs}\quad\forall l,m,n\in\{1,\ldots,8\}.
\end{gather*}
Now consider the transformation of $\tilde{F}^{(\sigma\otimes\bar{\sigma})}_k$ under $A\in\text{SU}(3)$:
\begin{align*}
D^{(\sigma\otimes\bar{\sigma})}(A)\left(\tilde{F}^{(\sigma\otimes\bar{\sigma})}_k\right) &= \sum^{8}_{l,m=1}f_{klm}D^{(\sigma\otimes\bar{\sigma})}(A)\left(F^{(\sigma\otimes\bar{\sigma})}_l\right) D^{(\sigma\otimes\bar{\sigma})}(A)\left(F^{(\sigma\otimes\bar{\sigma})}_m\right)\\
&= \sum^{8}_{l,m,n,r=1}f_{klm}\left(D^{(8)}(A)\right)_{ln}\left(D^{(8)}(A)\right)_{mr}F^{(\sigma\otimes\bar{\sigma})}_nF^{(\sigma\otimes\bar{\sigma})}_r\\
&= \sum^{8}_{l,m,n,r=1}\left(D^{(8)}(A^{-1})\right)_{nl}\left(D^{(8)}(A^{-1})\right)_{rm}f_{klm}F^{(\sigma\otimes\bar{\sigma})}_nF^{(\sigma\otimes\bar{\sigma})}_r\\
&= \sum^8_{n,r,s=1}\left(D^{(8)}(A^{-1})\right)_{sk}f_{snr}F^{(\sigma\otimes\bar{\sigma})}_nF^{(\sigma\otimes\bar{\sigma})}_r\\
&= \sum^8_{s=1}\left(D^{(8)}(A)\right)_{ks}\tilde{F}^{(\sigma\otimes\bar{\sigma})}_s.
\end{align*}
As the matrices $\tilde{F}^{(\sigma\otimes\bar{\sigma})}_k$ are only mapped into each other under $A\in\text{SU}(3)$, they span an invariant subspace of the octet spanned by $F_k$. However, the octet is a multiplet, thus, the space spanned by $\tilde{F}^{(\sigma\otimes\bar{\sigma})}_k$ is either $\{0\}$ or the entire octet. The invariant space cannot be $\{0\}$, as if it was $\{0\}$, all $\tilde{F}^{(\sigma\otimes\bar{\sigma})}_k$ would have to be zero. Nevertheless, we find for $\tilde{F}^{(\sigma\otimes\bar{\sigma})}_1$ with the help of the structure constants $f_{klm}$, which can be found, for instance, in \cite{Lichtenberg}:
\begin{gather*}
\tilde{F}^{(\sigma\otimes\bar{\sigma})}_1 = \frac{3i}{2}F^{(\sigma\otimes\bar{\sigma})}_1.
\end{gather*}
So if the span of the matrices $\tilde{F}^{(\sigma\otimes\bar{\sigma})}_k$ was $\{0\}$, $F^{(\sigma\otimes\bar{\sigma})}_1$ would be zero which is a contradiction, as the matrices $F^{(\sigma\otimes\bar{\sigma})}_k$ provide a basis for an octet in $\sigma\otimes\bar{\sigma}$. This implies that the matrices $\tilde{F}^{(\sigma\otimes\bar{\sigma})}_k$ span the entire octet spanned by $F^{(\sigma\otimes\bar{\sigma})}_k$. Since there are 8 matrices $\tilde{F}^{(\sigma\otimes\bar{\sigma})}_k$, they have to be linearly independent concluding the proof.\par
In \autoref{sec:GMO_formula}, we aimed to compute the commutator of $F^{(\sigma\otimes\bar{\sigma})}_k$ and $D^{(\sigma\otimes\bar{\sigma})}_l$. In order to do this, we introduced the following formula for $f_{klm}$ and $d_{klm}$:
\begin{gather*}
\sum^8_{m=1}f_{klm}d_{nrm} = -\sum^8_{m=1}\left(f_{nlm}d_{krm} +f_{rlm}d_{nkm}\right)\quad\forall k,l,n,r\in\{1,\ldots,8\}.
\end{gather*}
However, we skipped the derivation of this formula. Thus, we now want to proof it. The proof is actually just a straightforward calculation using \autoref{eq:com_F}, \autoref{eq:dklm}, and the cyclicity of the trace. Consider the following expression:
\begin{align*}
\frac{i}{2}\sum^8_{m=1}f_{klm}d_{nrm} &= i\sum^8_{m=1}f_{klm}\text{Tr}\left(\{F_n,F_r\}F_m\right) = \text{Tr}\left(\{F_n,F_r\}\sum^8_{m=1}if_{klm}F_m\right)\\
&= \text{Tr}\left(\{F_n,F_r\}\left[F_k,F_l\right]\right)\\
&= \text{Tr}\left(F_nF_rF_kF_l + F_rF_nF_kF_l - F_nF_rF_lF_k - F_rF_nF_lF_k\right)\\
&= \text{Tr}\left(F_rF_kF_lF_n + F_nF_kF_lF_r - F_kF_nF_rF_l - F_kF_rF_nF_l\right)\\
&= \text{Tr}\left(F_rF_kF_lF_n + F_nF_kF_lF_r - F_kF_nF_rF_l - F_kF_rF_nF_l\right.\\
&\ \ \ \ \ \,\left.+ F_kF_rF_lF_n - F_nF_kF_rF_l + F_kF_nF_lF_r - F_rF_kF_nF_l\right)\\
&= \text{Tr}\left(F_kF_rF_lF_n + F_rF_kF_lF_n + F_nF_kF_lF_r + F_kF_nF_lF_r\right.\\
&\ \ \ \ \ \,\left. - F_kF_rF_nF_l - F_rF_kF_nF_l - F_nF_kF_rF_l - F_kF_nF_rF_l\right)\\
&= \text{Tr}\left(\{F_k,F_r\}[F_l,F_n] + \{F_n,F_k\}[F_l,F_r]\right)\\
&= -\text{Tr}\left(\{F_k,F_r\}[F_n,F_l] + \{F_n,F_k\}[F_r,F_l]\right)\\
&= -i\sum^8_{m=1}\left(f_{nlm}\text{Tr}\left(\{F_k,F_r\}F_m\right) + f_{rlm}\text{Tr}\left(\{F_n,F_k\}F_m\right)\right)\\
&= \frac{-i}{2}\sum^8_{m=1}\left(f_{nlm}d_{krm} +f_{rlm}d_{nkm}\right)\quad\forall k,l,n,r\in\{1,\ldots,8\}.
\end{align*}
We also postponed the computation of the commutator of $F^{(\sigma\otimes\bar{\sigma})}_k$ and $D^{(\sigma\otimes\bar{\sigma})}_l$ itself to the appendix. For this, apply \autoref{eq:com_F_sigma} and the previous formula:
\begin{align*}
\left[F^{(\sigma\otimes\bar{\sigma})}_k,D^{(\sigma\otimes\bar{\sigma})}_l\right] &= \frac{2}{3}\sum^8_{m,n=1}d_{lmn}\left[F^{(\sigma\otimes\bar{\sigma})}_k,F^{(\sigma\otimes\bar{\sigma})}_mF^{(\sigma\otimes\bar{\sigma})}_n\right]\\
&= \frac{2}{3}\sum^8_{m,n=1}d_{lmn}\left(\left[F^{(\sigma\otimes\bar{\sigma})}_k,F^{(\sigma\otimes\bar{\sigma})}_m\right]F^{(\sigma\otimes\bar{\sigma})}_n + F^{(\sigma\otimes\bar{\sigma})}_m\left[F^{(\sigma\otimes\bar{\sigma})}_k,F^{(\sigma\otimes\bar{\sigma})}_n\right]\right)\\
&= \frac{2}{3}\sum^8_{m,n,r=1}d_{lmn}\left(if_{kmr}F^{(\sigma\otimes\bar{\sigma})}_rF^{(\sigma\otimes\bar{\sigma})}_n + if_{knr}F^{(\sigma\otimes\bar{\sigma})}_mF^{(\sigma\otimes\bar{\sigma})}_r\right)\\
&= \frac{2i}{3}\sum^8_{m,n,r=1}\left(d_{lmn}f_{kmr}F^{(\sigma\otimes\bar{\sigma})}_rF^{(\sigma\otimes\bar{\sigma})}_n + d_{lrm}f_{kmn}F^{(\sigma\otimes\bar{\sigma})}_rF^{(\sigma\otimes\bar{\sigma})}_n\right)\\
&= \frac{2i}{3}\sum^8_{m,n,r=1}\left(d_{lmn}f_{kmr} + d_{lrm}f_{kmn}\right)F^{(\sigma\otimes\bar{\sigma})}_rF^{(\sigma\otimes\bar{\sigma})}_n\\
&= \frac{2i}{3}\sum^8_{m,n,r=1}\left(f_{rkm}d_{lnm} + f_{nkm}d_{rlm}\right)F^{(\sigma\otimes\bar{\sigma})}_rF^{(\sigma\otimes\bar{\sigma})}_n\\
&= -\frac{2i}{3}\sum^8_{m,n,r=1}f_{lkm}d_{rnm}F^{(\sigma\otimes\bar{\sigma})}_rF^{(\sigma\otimes\bar{\sigma})}_n\\
&= \sum^8_{m=1}if_{klm}\left(\frac{2}{3}\sum^8_{n,r=1} d_{mrn}F^{(\sigma\otimes\bar{\sigma})}_rF^{(\sigma\otimes\bar{\sigma})}_n\right)\\
&= \sum^8_{m=1}if_{klm}D^{(\sigma\otimes\bar{\sigma})}_m\quad\forall k,l\in\{1,\ldots,8\}
\end{align*}
Lastly, we want to calculate the transformation of $D^{(\sigma\otimes\bar{\sigma})}_k$ under $A\in\text{SU}(3)$. This computation can be performed analogously to the calculation for the transformation behavior of the matrices $\tilde{F}^{(\sigma\otimes\bar{\sigma})}_k$. First, consider the transformation of $d_{klm}$ under $A\in\text{SU}(3)$. With \autoref{eq:dklm}, we obtain:
\begin{align*}
d_{klm} &= 2\text{Tr}\left(\left\{F_k,F_l\right\}F_m\right) = 2\text{Tr}\left(\left\{F_kA^\dagger A,F_lA^\dagger A\right\}F_mA^\dagger A\right)\\
&= 2\text{Tr}\left(\left\{AF_kA^\dagger,AF_lA^\dagger\right\}AF_mA^\dagger\right)\\
&= \sum^8_{n,r,s=1}\left(D^{(8)}(A)\right)_{kn}\left(D^{(8)}(A)\right)_{lr}\left(D^{(8)}(A)\right)_{ms}2\text{Tr}\left(\left\{F_n,F_r\right\}F_s\right)\\
&= \sum^8_{n,r,s=1}\left(D^{(8)}(A)\right)_{kn}\left(D^{(8)}(A)\right)_{lr}\left(D^{(8)}(A)\right)_{ms}d_{nrs}\quad\forall k,l,m\in\{1,\ldots,8\}.
\end{align*}
Using the fact that $D^{(8)}(A)$ is unitary and real, we can rewrite this as:
\begin{gather*}
\sum^8_{k=1}\left(D^{(8)}(A)\right)_{kn}d_{klm} = \sum^8_{r,s=1}\left(D^{(8)}(A)\right)_{lr}\left(D^{(8)}(A)\right)_{ms}d_{nrs}\quad\forall l,m,n\in\{1,\ldots,8\}.
\end{gather*}
Now consider the transformation of $D^{(\sigma\otimes\bar{\sigma})}_k$ under $A\in\text{SU}(3)$:
\begin{align*}
D^{(\sigma\otimes\bar{\sigma})}(A)\left(D^{(\sigma\otimes\bar{\sigma})}_k\right) &= \frac{2}{3}\sum^{8}_{l,m=1}d_{klm}D^{(\sigma\otimes\bar{\sigma})}(A)\left(F^{(\sigma\otimes\bar{\sigma})}_l\right) D^{(\sigma\otimes\bar{\sigma})}(A)\left(F^{(\sigma\otimes\bar{\sigma})}_m\right)\\
&= \frac{2}{3}\sum^{8}_{l,m,n,r=1}d_{klm}\left(D^{(8)}(A)\right)_{ln}\left(D^{(8)}(A)\right)_{mr}F^{(\sigma\otimes\bar{\sigma})}_nF^{(\sigma\otimes\bar{\sigma})}_r\\
&= \frac{2}{3}\sum^{8}_{l,m,n,r=1}\left(D^{(8)}(A^{-1})\right)_{nl}\left(D^{(8)}(A^{-1})\right)_{rm}d_{klm}F^{(\sigma\otimes\bar{\sigma})}_nF^{(\sigma\otimes\bar{\sigma})}_r\\
&= \frac{2}{3}\sum^8_{n,r,s=1}\left(D^{(8)}(A^{-1})\right)_{sk}d_{snr}F^{(\sigma\otimes\bar{\sigma})}_nF^{(\sigma\otimes\bar{\sigma})}_r\\
&= \sum^8_{s=1}\left(D^{(8)}(A)\right)_{ks}D^{(\sigma\otimes\bar{\sigma})}_s.
\end{align*}

\end{appendix}

\newpage
\pagenumbering{Roman}

\providecommand{\href}[2]{#2}\begingroup\raggedright\endgroup

\newpage
\chapter*{Acknowledgments}
\addcontentsline{toc}{chapter}{Acknowledgments}
\markboth{}{Acknowledgments}

I thank Professor Hubert Spiesberger and Professor Jens Erler for supervision of my master's thesis, insightful conversations, and helpful comments. Furthermore, I would like to thank the members of THEP and, in particular, the members of the ``AG Spiesberger'' for their kind welcome, Professor a. D. J\"urgen K\"orner for his helpful literature recommendations, Christian P. M. Schneider for (often) enlightening conversations and proofreading my thesis, my former roommates Florian Stuhlmann and Alexander Segner for proofreading and technical support, and my family and friends for moral support.

\end{document}